\newif\iflong
\newif\ifshort
\newif\ifarxiv
\tikzstyle{nodeW} = [draw=green!60!black, circle, fill=green!60!black, minimum size=1ex, inner sep=1pt, text centered, align=center]
\tikzstyle{nodeU} = [draw=blue!65!black, circle, fill=blue!65!black, minimum size=1ex, inner sep=1pt, text centered, align=center]
\tikzstyle{nodeRecA} = [draw=red, rectangle, minimum size=0.8ex, inner sep=0.8pt,fill=red]
\tikzstyle{nodeRecB} = [draw=gray, rectangle, minimum size=0.8ex, inner sep=1.2pt,fill=gray]
\tikzstyle{nn} = [draw=blue, circle, inner sep=1.2pt,fill=blue]
\definecolor{myRed}{RGB}{204,0,0}
\definecolor{myGreen}{RGB}{217,240,217}
\definecolor{myBlue}{RGB}{217,217,240}
\newcommand{\todoE}[1]{\todo[linecolor=pink, backgroundcolor=red!10]{E: #1}}
\newcommand{\newE}[1]{#1}
\newcommand{\newH}[1]{#1}
\newtheorem{theorem}{Theorem}
\newtheorem{claim}{Claim}[theorem]
\newtheorem{obs}{Observation}
\newtheorem{corollary}{Corollary}
\crefname{table}{Table}{Tables}
\crefname{figure}{Figure}{Figures}
\crefname{theorem}{Theorem}{Theorems}
\crefname{corollary}{Corollary}{Corollaries}
\crefname{observation}{Observation}{Observations}
\crefname{obs}{Observation}{Observations}
\crefname{lemma}{Lemma}{Lemmas}
\crefname{example}{Example}{Examples}
\crefname{reduction}{Reduction}{Reductions}
\crefname{construction}{Construction}{Constructions}
\crefname{subsection}{Subsection}{Subsections}
\crefname{section}{Section}{Sections}
\crefname{claim}{Claim}{Claims}
\crefname{clm}{Claim}{Claims}
\crefname{algorithm}{Algorithm}{Algorithm}
\crefname{definition}{Definition}{Definitions}
\newcommand{\mytitle}{Optimal Seat Arrangement: What Are the Hard and Easy Cases?}
\title{\mytitle}
\newcommand{\appendixtitle}{Supplementary Material: \mytitle}
\author{
    Esra Ceylan$^1$
   \and
   Jiehua Chen$^1$ \And
   Sanjukta Roy$^2$ \\
   \affiliations
   $^1$TU Wien, Austria\\
   $^2$ Pennsylvania State University\\
   \emails
   e.ceylan.96@hotmail.com,
   jiehua.chen@tuwien.ac.at,
   sanjukta@psu.edu
}
\newcommand{\nonisolated}{k}
\newcommand{\maxoutdeg}{\Delta^{\!+}}
\newcommand{\W}[1][1]{{\color{purple}\textsf{w#1}h}}
\newcommand{\Wh}[1][1]{{\normalfont\textsf{W[#1]}-hard}}
\newcommand{\fpt}{{\normalfont\textsf{FPT}}}
\newcommand{\xp}{{\normalfont\textsf{XP}}}
\newcommand{\sfpt}{{\color{teal}\textsf{fpt}}}
\newcommand{\fptf}{fixed-parameter tractable}
\newcommand{\poly}{{\color{green!60!black}\textsf{P}}}
\newcommand{\NP}{{\color{red!90!black}\textsf{nph}}}
\newcommand{\NPB}{\textsf{NP}}
\newcommand{\PP}{\textsf{P}}
\newcommand{\NPh}{{\normalfont\textsf{NP}-hard}}
\newcommand{\NPc}{\textsf{NP}-complete}
\newcommand{\Oh}{\mathcal{O}}
\newcommand{\tableBodl}{\small$\clubsuit$}
\newcommand{\strictremark}{\textsuperscript{\tiny$\diamondsuit$}}
\newcommand{\R}{\mathbb{R}}
\newcommand{\agents}{P}
\newcommand{\sat}[1][p]{\mathsf{sat}_{#1}}
\newcommand{\util}[2][p]{\mathsf{util}_{#1}(#2)}
\newcommand{\wel}[1][\sigma]{\mathsf{wel}(#1)}
\newcommand{\egal}[1][\sigma]{\mathsf{egal}(#1)}
\newcommand{\swap}[2]{\sigma_{[#1 \leftrightarrow #2]}}
\newcommand{\Neigh}[1]{N_{#1}}
\newcommand{\Nout}[1]{N^+_{#1}}
\newcommand{\Nin}[1]{N^-_{#1}}
\newcommand{\Noutpos}[1]{N^+_{#1^+}}
\newcommand{\degr}[1]{\delta_{#1}}
\newcommand{\prefgraph}{\mathcal{F}}
\newcommand{\posprefgraph}{\mathcal{F}^+}
\newcommand{\negprefgraph}{\mathcal{F}^-}
\newcommand{\pathgraph}{\mathcal{P}}
\newcommand{\graphminus}{-}
\newcommand{\myemph}[1]{{\color{green!40!black}#1}}
\newcommand{\taskprob}[3]{

  \noindent
  \begin{minipage}{0.9\linewidth}
    \centering
    \begin{tabular}{p{.15\linewidth}@{\;}p{.8\linewidth}}
      \multicolumn{2}{l}{\textsc{#1}}\\
      \textbf{Input:} & #2\\
      \textbf{Task:} & #3
    \end{tabular}
  \end{minipage}
    \smallskip
}
\newcommand{\decprob}[3]{

  \noindent
  \begin{minipage}{0.9\linewidth}
    \centering
    \begin{tabular}{p{0.18\linewidth}@{\;}p{.8\linewidth}}
      \multicolumn{2}{l}{\textsc{#1}}\\
      \textbf{Input:} & #2\\
      \textbf{Question:} & #3
    \end{tabular}
  \end{minipage}
    \smallskip
}
\newcommand{\IS}{\textsc{Independent Set}\xspace}
\newcommand{\Clique}{\textsc{Clique}\xspace}
\newcommand{\HamPath}{\textsc{Hamiltonian Path}\xspace}
\newcommand{\HamCycle}{\textsc{Hamiltonian Cycle}\xspace}
\newcommand{\ExactCover}{\textsc{Exact Cover by 3-Sets}\xspace}
\newcommand{\PathPartition}{\textsc{$P_3$-Partition}\xspace}
\newcommand{\ESRthree}{\textsc{$3$-Exchange Stable Roommates}\xspace}
\newcommand{\DkS}{\textsc{Densest $h$-Subgraph}\xspace}
\newcommand{\MWAf}{\textsc{Max Welfare Arrangement}\xspace}
\newcommand{\MUAf}{\textsc{Maxmin Utility Arrangement}\xspace}
\newcommand{\EFAf}{\textsc{Envy-Free Arrangement}\xspace}
\newcommand{\ESAf}{\textsc{Exchange-Stable Arrangement}\xspace}
\newcommand{\efArr}{envy-free arrangement\xspace}
\newcommand{\esArr}{exchange-stable arrangement\xspace}
\newcommand{\MWA}{\textsc{MWA}\xspace}
\newcommand{\MUA}{\textsc{MUA}\xspace}
\newcommand{\EFA}{\textsc{EFA}\xspace}
\newcommand{\ESA}{\textsc{ESA}\xspace}
\newcommand{\clique}{\small clique}
\newcommand{\nor}{\small no res.}
\newcommand{\pcycle}{\small{path/cycle}}
\newcommand{\matching}{\small matching}
\newcommand{\stars}{\small stars}
\newcommand{\pathcyclestars}{\small path/cycle/stars}
\newcommand{\noneg}{\footnotesize nonneg.}
\newcommand{\symm}{\footnotesize symm.}
\newcommand{\niv}{u}
\newcommand{\iv}{v}
\newcommand{\colr}{\mathsf{r}}
\newcommand{\colb}{\mathsf{b}}
\newcommand{\incgraph}{G_{\textit{inc}}}
\newcommand{\compgraph}{G_{\textit{comp}}}
\newcommand{\smallgraph}{\mathcal{G}_{\textit{small}}}
\newcommand{\largegraph}{\mathcal{G}_{\textit{large}}}
\newcommand{\redcomp}{\agents_{\colr}}
\newcommand{\bluecomp}{\agents_{\colb}}
\newcommand{\appendixsymb}{$\ast$}
\newcommand{\appendixproofwithstatement}[3]{%
	\iflong{
  \gappto{\appendixtext}
  {%
    \subsection{Proof of \cref{#1}}\label{proof:#1}%
    \noindent {\normalfont\emph{#2}}
    #3
  }}
	\else
	\ifarxiv{#3}
	\fi
	\fi 
}
\newcommand{\appendixsection}[1]{%
  \gappto{\appendixtext}{
    \section{Additional Material for Section~\ref{#1}}
    \label{appsec:#1}
  }
}
\newcommand{\appendixcontinue}[4]{%
	\iflong{
	  #2
	  \gappto{\appendixtext}{
	    \subsection{Continuation of \cref{#1}}\label{example:#1}
	    \emph{#3}
	    
	    {#4}
	  }
	}
	\else
	\ifarxiv{#4}
	\fi
	\fi 
}
\newcommand{\statementarxiv}[3]{%
	\ifarxiv{
	\begin{#1}\label{#2}
		#3
	\end{#1}
	}
	\else{
	\begin{#1}[\appendixsymb]\label{#2}
		#3
	\end{#1}
	}
	\fi 
}
\begin{document}
\maketitle

\begin{abstract}
  We study four \NPh\ optimal seat arrangement problems~\cite{bodlaender2020}, which each have as input a set of $n$ agents, where each agent has cardinal preferences over other agents, and an $n$-vertex undirected graph (called \emph{seat graph}).
  The task is to assign each agent to a distinct vertex in the seat graph such that either the \emph{sum of utilities} or the \emph{minimum utility} is maximized, or it is \emph{envy-free} or \emph{exchange-stable}. 
  Aiming at identifying hard and easy cases, we {extensively study the algorithmic complexity of the four problems} by looking into natural graph classes for the seat graph (e.g., paths, cycles, stars, or matchings), problem-specific parameters (e.g., the number of non-isolated vertices in the seat graph or the maximum number of agents towards whom an agent has non-zero preferences), and preference structures (e.g., non-negative or symmetric preferences).
  For strict preferences and seat graphs with disjoint edges and isolated vertices, we correct an error by Bodlaender et al.~\shortcite{bodlaender2020tech} and show that finding an envy-free arrangement remains \NPh\ in this case.
\end{abstract}

\section{Introduction}
To designate seating for self-interested agents--\emph{seat arrangement}--is a fundamental and ubiquitous task in various situations, \newH{including seating for offices and events~\cite{LewisCarrollSeatPlan2016,vangerven22}, one-sided matching~\cite{gale62,alcalde94}, graphical resource allocation with preferences between the agents~\cite{massand19}, project management and team sports~\cite{GABMC2016MultiTeam}, and hedonic games with additive preferences~\cite{bogomolnaia02,aziz13,Woeginger2013}.}
The available seats, either physical or not, can be modeled via an undirected graph, called \emph{seat graph}, where each vertex corresponds to a seat and two vertices are connected through an edge if the corresponding seats are adjacent.
\newH{Simple graphs can already model many real-world situations, such as paths for rows of seats in meetings, grids or graphs consisting of disjoint edges for offices, or cliques (complete subgraphs) for groups or teams in games and coalition formation.}
Since agents have preferences over each other, their \emph{utility} for a seat may depend on who sits next to them. As such, not every seat arrangement is desired.
From a social welfare perspective, one may aim for an arrangement that maximizes the total or the minimum utility of the agents.
\newH{The corresponding combinatorial problems are called \MWAf (\MWA) and \MUAf (\MUA), respectively.}
From a game-theoretical perspective, however, one may aim for an arrangement where no agent envies the seat of another agent or no two agents would rather want to exchange their seats.
\newH{The corresponding decision problems are called \EFAf (\EFA) and \ESAf (\ESA), respectively.}
We provide an example with four agents~$p_1,p_2,p_3,p_4$ in \cref{fig:ex1}.
\newH{The preferences of the agents are depicted as an arc-weighted directed graph, called \emph{preference graph}.} An arc from agent~$p$ to 
agent~$q$ with weight $w\neq 0$ means that $p$ has a 
preference of $w$ towards $q$. %
The missing arcs represent~$0$ preferences. 
The utility of each agent is \emph{additive}, i.e., it is the sum of preferences of all the agents seated next to him. 
In arrangement~$\sigma_1$ (see \cref{fig:ex1}),
the utilities of~$p_1, p_2, p_3$, and $p_4$ are $-1$, $3$, $0$, and $2$, respectively. 
Hence, the minimum utility is  $-1$ and
the sum of utilities, i.e., \emph{welfare}, of~$\sigma_1$ is $4$. 
Arrangement $\sigma_1$ maximizes the welfare, but it is not exchange-stable since $p_1$ and $p_3$ \emph{envy} each other's seat and form an \emph{exchange-blocking pair}, i.e., they can increase their utility by swapping their seats.
\newH{Consequently, it is not envy-free.}
Arrangement $\sigma_2$ %
only maximizes the minimum utility, and is envy-free and exchange-stable.

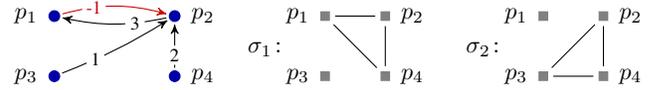
\begin{figure}[t]
  \begin{tikzpicture}[scale=1,every node/.style={scale=0.9}, >=stealth', shorten <= 1pt, shorten >= 1pt]
    \foreach  \x / \y / \n in
    {0/0/p1, 2/0/p2, 0/-1/p3, 2/-1/p4}
    {
      \node[nodeU] at (\x*0.8, \y*0.8) (\n) {};
    }
    \foreach \n / \p / \l in {1/left/1, 2/right/1, 3/left/1, 4/right/1} {
      \node[\p = \l pt of p\n] {$p_\n$}; 
    }
    \begin{scope}[every node/.style={fill=white,circle}]
      \begin{scriptsize}
		\foreach \s / \t/ \b /\v / \c in {1/2/15/-1/myRed, 2/1/12/3/black, 3/2/-5/1/black, 4/2/0/2/black} 
		{
	          \draw[->] (p\s) edge[bend left=\b, \c]  node[pos=0.3, fill=white, inner sep=0.5pt] {\v} (p\t);
		}
      \end{scriptsize}
    \end{scope}
    \begin{scope}[xshift=3.6cm,scale=1,every node/.style={scale=0.9}, >=stealth', shorten <= 2pt, shorten >= 2pt]
      \foreach  \x / \y / \n in
      {0/0/p1, 2/0/p2, 0/-1/p3, 2/-1/p4}
      {
        \node[nodeRecB] at (\x*0.4, \y*0.8) (\n) {};
      }
      \foreach \n / \p / \l in {1/left/1, 2/right/1, 3/left/1, 4/right/1} {
        \node[\p = \l pt of p\n] {$p_\n$}; 
      }
      \foreach \s / \t in {p1/p2, p2/p4, p4/p1} {
        \draw (\s) edge (\t);
      }
      \node at (-1.5*0.5, -0.55*0.8) {$\sigma_1\colon$};
    \end{scope}
    \begin{scope}[xshift=6.5cm,scale=1,every node/.style={scale=0.9}, >=stealth', shorten <= 2pt, shorten >= 2pt]
      \foreach  \x / \y / \n in
      {0/0/p1, 2/0/p2, 0/-1/p3, 2/-1/p4}
      {
        \node[nodeRecB] at (\x*0.4, \y*0.8) (\n) {};
      }
      \foreach \n / \p / \l in {1/left/1, 2/right/1, 3/left/1, 4/right/1} {
        \node[\p = \l pt of p\n] {$p_\n$}; 
      }
      \foreach \s / \t in {p2/p3, p3/p4, p4/p2} {
        \draw (\s) edge (\t);
      }
      \node at (-1.5*0.5, -0.55*0.8) {$\sigma_2\colon$};
    \end{scope}
  \end{tikzpicture}
  \caption{Left: Preference graph. The seat graph is a triangle with one isolated vertex. We use red arcs in the preference graph to denote negative preferences and black for positive ones. %
  Middle and right: Two possible seat arrangements.}\label{fig:ex1}
\end{figure}

\newH{In this work, we provide a refined complexity analysis of the four seat arrangement problems \MWA, \MUA, \EFA, and \ESA; the first three problems are known to be \NPh\ even for rather restricted cases, such as when the largest component of the seat graph has constant size~$\ell = 3$ and the preferences are symmetric and non-negative, while \ESA\ is \NPh\ even when $\ell=2$~\cite{bodlaender2020} and the preferences are strict~\cite{cechlarova05}, i.e., no agent has the same preferences towards two distinct agents. 
However, these hard\-ness results do not necessarily transfer to the case when $\ell > 3$ as all four problems become trivial if the seat graph is just a complete graph.
In other words, the complexity of seat arrangement may also depend on other aspects, such as the seat graph classes and other parameters. %
Hence, we systematically and extensively study the algorithmic complexity through a combination of three aspects:}

\noindent \textbf{Aspect 1: Seat graph classes.} We distinguish between \emph{path}\mbox{-,} \emph{cycle}-, \emph{clique}-, \emph{stars}-, and \emph{matching}-graphs which means that the graphs induced by the non-isolated vertices consist of a single path, a single cycle, a complete subgraph, disjoint stars, and disjoint edges, respectively. %
  Note that seat arrangement can also be used to model coalition formation or task management~\cite{shehory1998methods}, so the connectivity between the seats can be more sophisticated than just simple paths and cycles.
  For instance, in task assignment, when communication between agents must go through a central point, a stars-graph may be needed.
  More concretely, in~a sensor drone network one may divide drones into groups to save communication costs, and choose a center drone for each group through which other drones talk to the outside.
  In coalition formation, clique-graphs model the case where the task is to form a large coalition, together with singleton agents.

\noindent \textbf{Aspect 2: {Parameters}.} We specifically look at two structural parameters, namely
  the number~$\nonisolated$ of non-isolated vertices in the seat graph and
  the maximum number~$\maxoutdeg$ of agents known to an agent.
  Parameter~$\nonisolated$ has computational motivation since all four problems would be trivial if every seat is isolated. Hence, $\nonisolated$ measures the distance from triviality.
  Moreover, in many scenarios, such as task \mbox{assignment}, there may be a limited number of tasks that should be worked on by more than one agent, but also many single-agent tasks (i.e., isolated vertices). 
  Parameter~$\maxoutdeg$ is motivated by the observation that each agent typically only knows a few other agents, and thus the number of non-zero preferences of an agent is bounded~\cite{bachrach2013optimal,CsehIrvingManlove2019}.

\noindent \textbf{Aspect 3: Preference structures.} We primarily focus on two natural restrictions: \emph{non-negative} preferences which occur when no agent has enemies and \emph{symmetric} preferences where each pair of agents has exactly the same preferences over each other and which can model mutual acquaintances. %
\paragraph{Our contributions.}
We provide a comprehensive complexity picture (see \cref{tbl:overview-results}) and   %
  summarize our key contributions as follows:
  {      
    \begin{table*}[t!]
      \renewcommand{\aboverulesep}{0pt}
      \renewcommand{\belowrulesep}{0pt}
      \renewcommand{\arraystretch}{1.2}
      \setlength{\tabcolsep}{5pt}
      \small
      \centering
      \begin{tabular}{@{\,}l@{\;}
        c@{\,}l@{\;}c@{}l@{} c@{\,\;} 
        c@{\,}l@{\;}c@{}l@{} c@{\,\;}
        c@{\,}l@{\;}c@{}l@{} c@{\,\;\,}
        c@{\,}l@{\;}c@{}l@{} c@{\,}l@{}}
        \toprule
        & \multicolumn{4}{c}{\MWA} && \multicolumn{4}{c}{\MUA} && \multicolumn{4}{c}{\EFA} && \multicolumn{6}{c}{\ESA}\\
        \cline{2-5} \cline{7-10} \cline{12-15} \cline{17-22}
        & \multicolumn{2}{c}{\nor/} & \multicolumn{2}{c}{\pcycle/}  && \multicolumn{2}{c}{\nor/} & \multicolumn{2}{c}{\pcycle/} && \multicolumn{2}{c}{\nor/\clique/} & \multicolumn{2}{c}{\matching} & & \multicolumn{2}{c}{\nor/} & \multicolumn{2}{c}{\pcycle/} & \multicolumn{2}{c}{\matching} \\
        Param.& \multicolumn{2}{c}{\clique} & \multicolumn{2}{c}{\stars}&& \multicolumn{2}{c}{\clique}& \multicolumn{2}{c}{\stars} && \multicolumn{2}{c}{\pathcyclestars} & &  &&  \multicolumn{2}{c}{\clique}&\multicolumn{2}{c}{\stars} \\
     	   \midrule  \\[-2.8ex]
        $\nonisolated$ & \W\strictremark & [T\ref{thm:MWA_k_clique_bin-symm+strict}] %
        & ~{\sfpt} & [T\ref{thm:MWA_k_path-cycle+stars}] &&  \W\strictremark & [T\ref{thm:MUA_k_clique_symm+strict}] & {~\sfpt} & [T\ref{thm:MUA_k_path-cycle+stars}] & & \W\strictremark&[T\ref{thm:EFA_k_matching+clique+path-cycle_bin}] & \W\strictremark& [T\ref{thm:EFA_k_matching+clique+path-cycle_bin}] & & {\W} &[T\ref{thm:ESA_k_clique+matching}]& ~{?/?/\W}~ &[T\ref{thm:ESA_k_clique+matching}] & {\W}&[T\ref{thm:ESA_k_clique+matching}]\\
        \,\noneg & \W & [T\ref{thm:MWA_k_clique_bin-symm+strict}] & ~\sfpt & [T\ref{thm:MWA_k_path-cycle+stars}] && ~\sfpt/\poly& [P\ref{obs:MUA_nonneg-clique-k}] & ~\sfpt & [T\ref{thm:MUA_k_path-cycle+stars}] && \W&[T\ref{thm:EFA_k_matching+clique+path-cycle_bin}] & \W &  [T\ref{thm:EFA_k_matching+clique+path-cycle_bin}] & &?/\poly&[O\ref{obs:ESA_clique_nonneg}] & ? & -- & ? & --\\ 
        \,\symm & \W & [T\ref{thm:MWA_k_clique_bin-symm+strict}] & ~\sfpt & [T\ref{thm:MWA_k_path-cycle+stars}]  
        &&\W  &  [T\ref{thm:MUA_k_clique_symm+strict}]  & ~\sfpt & [T\ref{thm:MUA_k_path-cycle+stars}] &&                                                                                                                                        \,\W&[T\ref{thm:EFA_k_clique+stars+path-cycle_symm}] & \poly& [\tableBodl]& & {?}  & -- & ~\sfpt &[\tableBodl,T\ref{thm:MWA_k_path-cycle+stars}] & \poly & [\tableBodl]  \\\hline \\[-2.9ex]
        $\maxoutdeg$ & ~\NP & [T\ref{thm:MWA_delta_clique+path-cycle+stars_bin-symm}] & ~\NP & [T\ref{thm:MWA_delta_clique+path-cycle+stars_bin-symm}] && ~\NP & [T\ref{thm:MUA_delta_clique+star+path-cycle_symm}] & ~\NP & [T\ref{thm:MUA_delta_clique+star+path-cycle_symm}]   && ~\NP&[T\ref{thm:EFA_delta_matching+clique+path-cycle_bin}]  & ~\NP&[T\ref{thm:EFA_delta_matching+clique+path-cycle_bin}] && ~\NP&[T\ref{thm:ESA_delta_clique+path-cycle+matching}] & ~\NP&[T\ref{thm:ESA_delta_clique+path-cycle+matching}] & ~\NP&[T\ref{thm:ESA_delta_clique+path-cycle+matching}]\\
        \,\noneg &  ~\NP &[T\ref{thm:MWA_delta_clique+path-cycle+stars_bin-symm}] & ~\NP & [T\ref{thm:MWA_delta_clique+path-cycle+stars_bin-symm}] && ~\NP/\poly &[T\ref{thm:MUA_delta_clique+star+path-cycle_symm},P\ref{obs:MUA_nonneg-clique-k}] & ~\NP & [T\ref{thm:MUA_delta_clique+star+path-cycle_symm}]  &&~\NP&[T\ref{thm:EFA_delta_matching+clique+path-cycle_bin}] & ~\NP&[T\ref{thm:EFA_delta_matching+clique+path-cycle_bin}] && ?/\poly&[O\ref{obs:ESA_clique_nonneg}] & ?  & -- & ? & -- \\
        \,\symm &~\NP & [T\ref{thm:MWA_delta_clique+path-cycle+stars_bin-symm}] & ~\NP & [T\ref{thm:MWA_delta_clique+path-cycle+stars_bin-symm}]  &&   ~\NP & [T\ref{thm:MUA_delta_clique+star+path-cycle_symm}] & ~\NP & [T\ref{thm:MUA_delta_clique+star+path-cycle_symm}] && ~\NP&[T\ref{thm:EFA_delta_clique+path-cycle+stars_symm}] & \poly & [\tableBodl]  && ? & -- & ? & -- & \poly & [\tableBodl] \\                    \hline
        \\[-2.9ex]
        $\nonisolated + \maxoutdeg$ &  \W&[T\ref{thm:MWA_k+delta_clique_bin}] & ~\sfpt & [T\ref{thm:MWA_k_path-cycle+stars}]  && ~\sfpt & [T\ref{thm:MUA_k+delta}] &  ~\sfpt & [T\ref{thm:MUA_k_path-cycle+stars}] && \W&[T\ref{thm:EFA_k+delta_clique+stars},\ref{thm:EFA_k+delta_path-cycle}]& ~\sfpt &[T\ref{thm:EFA_k+delta_matching}] && ~\sfpt&[T\ref{thm:ESA_k+delta}] & ~\sfpt&[T\ref{thm:ESA_k+delta}] & ~\sfpt&[T\ref{thm:ESA_k+delta}]\\
        \,\noneg & \W &[T\ref{thm:MWA_k+delta_clique_bin}] & ~\sfpt & [T\ref{thm:MWA_k_path-cycle+stars}]  && ~\sfpt & [T\ref{thm:MUA_k+delta}] & ~\sfpt & [T\ref{thm:MUA_k_path-cycle+stars}] && ~\sfpt&[T\ref{thm:EFA_k+delta_nonneg}] & ~\sfpt&[T\ref{thm:EFA_k+delta_nonneg}] && ~\sfpt&[T\ref{thm:ESA_k+delta}] & ~\sfpt&[T\ref{thm:ESA_k+delta}] & ~\sfpt&[T\ref{thm:ESA_k+delta}] \\
        \,\symm & ~\sfpt &[T\ref{thm:MWA_k+delta_symm}] & ~\sfpt & [T\ref{thm:MWA_k_path-cycle+stars}]&& ~\sfpt & [T\ref{thm:MUA_k+delta}] & ~\sfpt & [T\ref{thm:MUA_k_path-cycle+stars}] && ~\sfpt & [T\ref{thm:EFA_k+delta_nonneg}] & \poly & [\tableBodl] && ~\sfpt&[T\ref{thm:ESA_k+delta}] & ~\sfpt&[T\ref{thm:ESA_k+delta}] & ~\sfpt&[T\ref{thm:ESA_k+delta}] \\ 
        \bottomrule
      \end{tabular}
      \caption{All \Wh\ (\W) problems are also in \xp. The \Wh{ness} 
        and \NPh{ness} (\NP) results for non-negative preferences always hold for binary preferences (except \MUA on a path-graph wrt.\ $\maxoutdeg$). ``\strictremark'' means that hardness holds even for strict preferences. We omit the case with matching-graphs for \MWA and \MUA since it is polynomial-time solvable~{\protect\cite{bodlaender2020}}~[\tableBodl]. \ifarxiv
        In \cref{thm:EFA_NP_path-cycle_bin+symm}, we show that \EFA remains \NPh\ even if the seat graph is a single path or cyle, while in \cref{cor:ESA_NP_path}, we show that \ESA\ remains \NPh\ even if the seat graph is a single path, and in \cref{thm:ESA_NP_cycle_nonneg} we show that \ESA is \NPh\ even if the seat graph is a single cycle and the preferences are non-negative.\fi} 
      \label{tbl:overview-results}
    \end{table*}
  }
  \begin{compactenum}[(1)]
  \item We obtain a number of fixed-parameter tractable (\fpt) algorithms for either $\nonisolated$ or $(\nonisolated,\maxoutdeg)$, i.e., the corresponding problems can be solved in time $f(\nonisolated)\cdot |I|^{O(1)}$ or $f(\nonisolated+\maxoutdeg)\cdot |I|^{O(1)}$, where $f$ is a computable function solely depending on the argument, and $|I|$ denotes the input size.
  The \fpt\ algorithms for $\nonisolated$ are for \MWA\ and \MUA\ under simple seat graphs and are based on color-coding coupled with book keeping of either the sum of utilities or the minimum utility.
  The \fpt\ algorithms for $(\nonisolated,\maxoutdeg)$ apply to all four problems, mostly without any restrictions on the seat graphs.
  They are based on random separation or kernelization. 

  \item We also obtain a number of \Wh{ness} results, most\-ly wrt.~$\nonisolated$; these exclude any \fpt\ algorithm for~$\nonisolated$ in~such cases.
  For \MUA\ and \MWA, these remain so for clique-graphs and symmetric preferences, while for \EFA\ and \ESA, they hold for almost all considered seat graph classes. 
  \newH{For \EFA\ and \ESA, the proofs are based on a novel all-or-nothing gadget (see \cref{fig:EFA_k_path-cycle_symm}) which may be of independent interest.}

  \item We strengthen existing NP-hardness results by showing that
  all four problems remain \NPh\ even for constant~$\maxoutdeg$ and for severe restrictions on the seat graphs and preference structures. 
  This is done by cleverly tweaking the preference graph with constant~$\maxoutdeg$.

\end{compactenum}
\newH{Summarizing, we show that for \MUA and \ESA, the combined parameter $(\nonisolated, \maxoutdeg)$ (aspect 1 alone) always gives rise to an \fpt\ algorithm, while for \MWA and \EFA, this is only the case for symmetric preferences.}
\newH{Additionally, we correct an error by Bodlaender et al.~\shortcite{bodlaender2020tech} and show that \EFA\ remains \NPh\ for matching-graphs and strict preferences (see \cref{thm:EFA_k_matching+clique+path-cycle_bin})}. %
\ifarxiv
\newE{Furthermore, we extend and improve existing knowledge about the complexity of \EFA and \ESA. 
Our results include that both \EFA and \ESA remain \NPh\ for restricted preferences even if the seat graph is a single path or cycle (see \cref{thm:EFA_NP_path-cycle_bin+symm,thm:ESA_NP_cycle_nonneg,cor:ESA_NP_path}).} %
\fi 

\paragraph{Paper outline.} %
In \cref{sec:prelim}, we introduce basic concepts and the four central problems.
In \cref{sec:mwa,sec:mua,sec:efa,sec:esa}, we discuss results for
\MWA, \MUA, \EFA, and \ESA, respectively.
In all four sections, we first consider parameter~$\nonisolated$, then $\maxoutdeg$, and finally the combination~$(\nonisolated, \maxoutdeg$).
\ifarxiv
\newE{In \cref{sec:complexity-EFA-ESA}, we give additional complexity results for \EFA and \ESA.}
\fi 
We conclude in \cref{sec:conclude}.
\ifshort
Due to space constraints, proofs for results or additional material marked with (\appendixsymb) are deferred to the technical report~\cite{optimal-arxiv}\todoE{add cite}.
\iflong
Due to space constraints, proofs for results or additional material marked with (\appendixsymb) are deferred to the appendix.
\fi 
\fi 

\paragraph{Related work.}
The solution concepts considered in the four problems are well studied in economics, social choice, and political sciences~\cite{caragiannis12,shapley12,aziz13,brandt2016}. 
Bodlaender et al.~\shortcite{bodlaender2020} initiated the study of the four optimal seat arrangement problems (OSA) and observed that 
\MWA\ and \MUA\ generalize the \NPh\ \textsc{Spanning Subgraph Isomorphism} problem, while \ESA\ generalizes the \NPh\ \textsc{Exchange-Stable Roommates} problem~\cite{cechlarova05}.
Very recently, Chen et al.~\shortcite{chen21} prove that \textsc{Exchange-Stable Roommates} remains \NPh\ even if each agent has positive preferences over at most three agents.
Derived from this, we show the same holds for \ESA\ under matching-graphs and with constant~$\maxoutdeg$.
OSA has been getting more attention recently. %
  Tomi{\'c} and Uro{\v{s}}evi{\'c}~\shortcite{tomic21} provide heuristic approaches for solving \MWA where the seat graph consists of equal-sized cliques.
\newH{
  Vangerven et al.~\shortcite{vangerven22} studied a related problem for parliament seating but the objective is different from ours. 
  }

  OSA generalizes multi-dimensional matchings ~\cite{CFH2019,BredHeeKnoNie2020multidimensional,ChenRoy2022esa}~and hedonic games with fixed-sized coalitions~\cite{bilo22} 
  where \newH{the seat graph consists of equal-sized cliques and cliques of fixed sizes, respectively.}
\newH{Bil{\`{o}} et al.~\shortcite{bilo22} consider paths to exchange stability and \MWA and strengthened the complexity result by Bodlaender et al.~\shortcite{bodlaender2020} by showing that \MWA\ remains highly inapproximable even if the seat graph consists of cliques of constant sizes.}
Massand and Simon~\shortcite{massand19} studied a generalization of OSA where the agents additionally have non-negative valuations over the seats such that 
the utility of an agent is the sum of his valuation of the assigned seat and his preferences over the agents in his neighborhood.
Their results imply that \EFA\ remains \NPh\ even if the preference graph is a cycle with binary preferences, and that for symmetric preferences, an \esArr can always be obtained from a given arrangement via a finite number of swaps.

OSA can also be conceptualized as hedonic games with overlapping coalitions similar to Schelling games which has been shown to be \NPh\ for simple models~\cite{agarwal2021schelling,kreisel2022equilibria}. 
However, in Schelling games the preferences are more restricted.
We refer to the work of Bodlaender et al.~\shortcite{bodlaender2020} for additional references.

\section{Preliminaries}\label{sec:prelim}
\appendixsection{sec:prelim}
Given an integer~$t$, let \myemph{$[t]$} denote the set~$\{1,2,\ldots,t\}$. We recall the following graph theoretic concepts.
Given an undirected graph $G$ and a vertex~$v\in V(G)$, we use \myemph{$\Neigh{G}(v)$} to denote $v$'s open neighborhood~$\{w\in V(G) \mid \{v,w\}\in E(G)\}$. 
Given a directed graph~$F$, and a vertex~$v\in V(F)$ we use~\myemph{$\Nin{F}(v)$} and~\myemph{$\Nout{F}(v)$} to denote the set of in- and out-neighbors of~$v$, respectively. 

A seat arrangement instance consists of a set~$\agents$ %
of $n$ agents, where each agent~$p \in \agents$ has cardinal preferences over the other agents, specified by the satisfaction function $\sat\colon \agents \setminus \{p\} \to \R$ for all~$p \in \agents$, and an undirected graph~$G$, called \myemph{seat graph}, where the number of vertices is the same as the number of agents, i.e., $|V(G)|=n$.
We derive a weighted directed graph $\prefgraph = (\agents, A, (\sat)_{p\in \agents})$ from $\sat$, called \myemph{preference graph}, where the vertex set is the set~$P$ of agents and~\myemph{$A$} denotes the set of arcs such that an arc from agent $p$ to $q$ means that $\sat(q) \neq 0$. 
\ifshort\else
With \myemph{$\posprefgraph$} (resp.\ \myemph{$\negprefgraph$}) we denote the preference graph restricted to only positive (resp.\ negative) weighted arcs.
\fi 
Note that, intuitively, negative (resp.\ positive) preference values model the degree of dislike (likeness) while zero values model indifference.
The goal of optimal seat arrangement is to find a bijection~$\sigma\colon \agents \to V(G)$, called \myemph{arrangement}, which is optimal or fair.
A \myemph{partial arrangement} is an injective function~$\sigma$ which assigns only a subset of the agents to the vertices of $G$, i.e., $\sigma^{-1}(V(G)) \subset \agents$.

We look into four different criteria, namely utilitarian and egalitarian welfares, envy-freeness, and exchange stability.
\ifarxiv
To this end, given 
\else
Given 
\fi 
an arrangement~$\sigma$, we define the \myemph{utility} of each agent~$p\in \agents$ as the sum of the satisfactions of~$p$ towards his neighbors in~$\sigma$, i.e., \mbox{\myemph{$\util{\sigma}$} $\coloneqq \sum_{v\in N_G(\sigma(p))}\sat(\sigma^{-1}(v))$}. 
By convention, the \myemph{utilitarian} (resp.\ \myemph{egalitarian}) welfare of~$\sigma$ is the sum (resp.\ minimum) of utilities of the agents towards~$\sigma$, denoted as \myemph{$\wel$} $\coloneqq \sum_{p}\util{\sigma}$ (resp.\ \mbox{\myemph{$\egal$} $\coloneqq \min_p(\util{\sigma})$}).
Additionally, for two agents $p, q \in \agents$ we define the \myemph{swap-arrangement}~$\swap{p}{q}$ as the arrangement where just $p$ and~$q$ switch their seats in $\sigma$, i.e.,  {$\swap{p}{q}(p)$} $\coloneqq \sigma(q)$, \mbox{$\swap{p}{q}(q) \coloneqq \sigma(p)$}, and all other agents~$x \in \agents\setminus \{p, q\}$ remain in their seats, i.e., ~$\swap{p}{q}(x) \coloneqq \sigma(x)$.
An arrangement~$\sigma$ is called \myemph{envy-free} (resp.\ \myemph{exchange-stable}) if no agent envies any other agent (resp.\ no two agents envy each other).
An agent~$p \in \agents$ \myemph{envies} another agent $q \in \agents\setminus\{p\}$ (in~$\sigma$) if $p$ finds the seat of~$q$ more attractive than his own, i.e., $\util{\sigma} < \util{\swap{p}{q}}$, and he is \myemph{envy-free} if he does not envy any other agents.
By definition, envy-freeness implies exchange stability.

Based on the different criteria, we define four computational/decision problems.
\taskprob{\MWA\ ({\normalfont\text{resp.\,}}\MUA)}
{An instance $(\agents, (\sat)_{p\in \agents}, G)$.}
{Find an arrangement $\sigma\colon \agents \to V(G)$ with maximum~$\wel$ (resp.\ $\egal$).}

\decprob{\EFA\ (\text{\normalfont{resp.}} \ESA)} %
{An instance $(\agents, (\sat)_{p\in \agents}, G)$.}
{Is there an arrangement $\sigma \colon \agents \to V(G)$ which is envy-free (resp.\ exchange-stable)?}

In the decision variant of \MWA (resp.\ \MUA), the input additonally has an integer~$L$, and the question is whether there is an arrangement~$\sigma$ with $\wel \geq L$ (resp.\ $\egal \geq L$).
It is straightforward that \EFA, \ESA, and the decision variants of \MWA and \MUA belong to \NPB.

Let $I=(\agents, (\sat)_{p\in \agents}, G)$ be an instance of our problems.
We say that the preferences of the agents are
\begin{compactitem}[--]
  \item \myemph{binary} if $\sat(q)\in \{0,1\}$ holds for each~$\{p, q\}\subseteq 
  \agents$,
  \item \myemph{non-negative} if  $\sat(q)\ge 0$ holds for each~$\{p, q\}\subseteq \agents$, %
  \item \myemph{positive} if $\sat(q) > 0$ holds for each~$\{p, q\}\subseteq \agents$, %
  \item \myemph{symmetric} if $\sat(q)$\,=\,$\sat[q](p)$ holds for each~$\{p, q\} \subseteq\agents$,
  \item \myemph{strict} if $\sat(q_1) \neq \sat(q_2)$ for each~$\{p, q_1,q_2\} \subseteq \agents$.
\end{compactitem}
We say that the seat graph is a
\begin{compactitem}[--]
  \item \myemph{clique-graph} if it consists of a complete subgraph (aka.\ clique) and isolated vertices,
  \item \myemph{stars-graph} if each connected component is a star, 
  \item \myemph{path-} (resp.\ \myemph{cycle-}) \myemph{graph} if it consists of a path (resp.\ \mbox{cycle}) and isolated vertices,
  \item \myemph{matching-graph} if it consists of disjoint edges and isolated vertices.
\end{compactitem}
We use \myemph{$\nonisolated$} (resp.\ \myemph{$\maxoutdeg$}) to denote the number of non-isolated vertices in the seat graph (resp.\ maximum out-degree of the vertices in the preference graph). 
Briefly put, $\nonisolated$ bounds the number of relevant seats that have neighbors, while~$\maxoutdeg$ bounds the maximum number of agents that an agent likes or dislikes.
We observe that all four problems are polynomial-time solvable for a constant number of non-isolated vertices by a simple brute-forcing algorithm.

\newcommand{\fptk}{%
  \MWA, \MUA, \EFA, and \ESA are in \xp\ wrt.\ $\nonisolated$.
}

\statementarxiv{obs}{obs:xp-k}{\fptk}
\appendixproofwithstatement{obs:xp-k}{\fptk}{
  \begin{proof}
    Let $I$ be an instance of \MWA, \MUA, \EFA resp.\ \ESA. 
    We obtain an \xp\ algorithm for each of these four problems by first 
    considering each size-$\nonisolated$ 
    subset $S_\nonisolated$ of $\agents$ and then each arrangement of 
    $S_\nonisolated$ to the $\nonisolated$ 
    non-isolated vertices.
    The remaining agents are assigned to isolated vertices. 
    For \EFA and \ESA we check whether this arrangement is envy-free resp.\ 
    exchange-stable. 
    If this is the case, we return yes; otherwise we continue with the next arrangement. 
    If we cannot find an \efArr resp.\ \esArr with this procedure, we return no. 
    For \MWA and \MUA we go through each arrangement and return one which maximizes the welfare resp.\ minimum utility. 
    
    There are $\Oh(n^\nonisolated)$ subsets of $\agents$ of size $\nonisolated$ and $\Oh(\nonisolated!)$ arrangements of $\nonisolated$ agents to $\nonisolated$ seats.
    Since computing the welfare and the minimum utility of an arrangement and checking if an arrangement is envy-free resp.\ exchange-stable can be done in polynomial time, we obtain that this can algorithm runs in time $\Oh(n^\nonisolated \cdot \nonisolated!)$. 
  \end{proof}
}

We assume basic knowledge of parameterized complexity such as fixed-parameter tractability~(\fpt), \Wh{ness}, and \xp, and refer to the following textbooks~\cite{Nie06,cygan15} for more details.

\looseness=-1
\section{\MWAf}\label{sec:mwa}
\appendixsection{sec:mwa}
In this section we study the (parameterized) complexity of finding an arrangement that maximizes the welfare.
Using color-coding coupled with a book keeping for welfare, we get fixed-parameter tractability (\fpt) for simple graphs:
\newcommand{\mwakpathcyclestarsfpt}{%
  \MWA is \fpt\ wrt.\ $\nonisolated$ if the seat graph is a path-, cycle-, or stars-graph. 
}
\statementarxiv{theorem}{thm:MWA_k_path-cycle+stars}{\mwakpathcyclestarsfpt}

\appendixproofwithstatement{thm:MWA_k_path-cycle+stars}{\mwakpathcyclestarsfpt}{ 
  \begin{proof}
    Let $I = (\agents, (\sat)_{p\in \agents}, G)$ be an \MWA instance.		
    We design a simple algorithm using color-coding as follows. 
    We color the agents $\agents$ uniformly at random with colors 
    $[\nonisolated]$. 
    Let $\sigma$ be a hypothetical optimal arrangement and $\chi\colon 
    \agents\to[\nonisolated]$ a coloring.
    We say $\chi$ is \myemph{good} (wrt.\ $\sigma$) if the $\nonisolated$ 
    non-isolated agents have pairwise distinct colors, i.e., $\chi(p) \neq 
    \chi(q)$ for each non-isolated $p,q \in \agents$. 
    Note that given a solution $\sigma$, the probability that the 
    $\nonisolated$ non-isolated agents are colored with pairwise distinct 
    colors is at least~$e^{-\nonisolated}$~\cite{cygan15}.		

    \paragraph*{Path-graphs.}
    We use dynamic programming (DP) to assign $\nonisolated$ agents on a 
    path which maximize the welfare. 
    We define a DP table where an entry $T[S,p]$ %
    returns the maximum welfare of a partial arrangement assigning agents on an $|S|$-path, where 
    \begin{inparaenum}[(i)]
      \item no two agents are colored with the same color,
      \item each color in $S$ is used once, and
      \item agent $p$ is assigned to the last vertex in the path. 
    \end{inparaenum}
    
    We start with filling our table $T$ for $|S| = 1$ and $p \in P$. Since there is only one assigned agent, the welfare is zero if the agent's color is the single color in $S$. Otherwise, there is no valid partial arrangement.
    \begin{align*}
      T[S,p] = \begin{cases}
        0 &\text{if } S = \{\chi(p)\}, \\
        -\infty & \text{otherwise.}
      \end{cases}
    \end{align*}
    
    For $|S| > 1$, we consider the maximum welfare of partial paths using all colors in $S \setminus \{\chi(p)\}$ with an agent~$q$ at the endpoint. 
    Then, we choose an $|S|$-path which has maximum welfare when adding agent~$p$ after~$q$. 
    Since each color in~$S$ is used exactly once, we can ensure that no agent is assigned to more than one seat on the path.
    Hence, the following recurrence holds:
    \begin{align*}
      &T[S,p] = \\
      &\max_{\substack{q \in \agents, \\\chi(q) \in 
      S\setminus\{\chi(p)\}}} ( T[S\setminus 
      \{\chi(p)\}, q] + \sat[p](q) + \sat[q](p)).
    \end{align*}
    
    Since $S \subseteq [\nonisolated]$, the entries of this table can be computed in time $2^\nonisolated\cdot n^{\Oh(1)}$ by applying the above 
    recurrence.
    The maximum welfare can be obtained by considering $\max_{p \in \agents} T[[\nonisolated], p]$. 
    
    \paragraph*{Cycle-graphs.}
    W extend the above DP table by adding a start-vertex $p_s$ of the path, i.e., $T'[S, p_s, p]$. 
    The base case and recurrence is the same as for a path-graph. 
    Hence, we first construct an $|S|$-path. 
    Then, we obtain the maximum welfare by closing the path to a cycle, i.e., we compute $\max_{p_s\neq p \in \agents} \left(T'[[\nonisolated], p_s, p] + \sat[p_s](p) + \sat[p](p_s)\right)$. 
    The time needed to compute this table's entries is $2^\nonisolated\cdot n^{\Oh(1)}$.
    
    \paragraph*{Stars-graphs.}
    For stars-graphs the maximum welfare can by computed in polynomial time. 
    Let the vertices of the seat graph~$G$ be denoted by $\{1,\ldots, k\}$.
    The $\nonisolated$ colors one-to-one correspond to the $\nonisolated$ non-isolated seats in the seat graph.
    Since agents in different stars do not interact with each other, we will go through each star individually and compute the maximum welfare of this star. 
    Let $k_c$ be the center of the corresponding star and $k_1, \ldots, \nonisolated_\ell$ its leaves. 
    For each possible center-agent we select the leaf-agents in such a way that $\sat(q) + \sat[q](p)$ is maximized. 
    Hence, we compute the welfare of a star by
    \begin{align*}
      \max_{\substack{p \in \agents,\\ \chi(p)=k_c}}\sum_{i = 1}^{\ell} \max_{\substack{q \in \agents\setminus\{p\},\\ \chi(q)=k_i}} (\sat(q) + \sat[q](p)).
    \end{align*}
    With the given coloring we can ensure that each agent is selected at most once. 
    Since the number of stars as well as the number of leaves of each star is $\Oh(\nonisolated)$, the computation time is $\Oh(k^2 \cdot n^2)$. 
    
    Summing up, given a coloring $\chi$ the algorithms run in $2^\nonisolated \cdot n^{\Oh(1)}$ resp.\ $\Oh(k^2 \cdot n^2)$ time.
    The probability of a good coloring is $e^{-k}$. 
    Hence, after repeating this algorithm $e^{k}$ times we obtain a solution with high probability.
    Finally, we can de-randomize the color-coding approach while maintaining fixed-parameter tractability~\cite{cygan15}.
  \end{proof}}

However, it is \Wh\ even for restricted preferences and when the seat graph is a clique-graph since the problem generalizes the \Clique problem which is \Wh\ wrt.\ the solution size~\cite{DF13}.
\newcommand{\mwakcliquebinsymmstrict}{%
  For clique-graphs, \MWA is \Wh\ wrt.~$\nonisolated$, even for strict, or binary and symmetric preferences.%
}
\statementarxiv{theorem}{thm:MWA_k_clique_bin-symm+strict}{\mwakcliquebinsymmstrict}
\appendixproofwithstatement{thm:MWA_k_clique_bin-symm+strict}{\mwakcliquebinsymmstrict}{
  \begin{proof}
    \textbf{Binary and symmetric preferences.}
    We provide a parameterized reduction from the \Wh\ \Clique\ problem, parameterized by the solution size~\cite{DF13}. 
    \ifarxiv
    \decprob{\Clique}
    {An undirected graph~$\hat{G}$ and a non-negative integer~$h$.}
    {Does $\hat{G}$ admit a size-$h$ clique, i.e., a complete subgraph?}
    \fi 

    Let $(\hat{G}, h)$ denote an instance of \Clique.
    For each vertex~$v_i \in V(\hat{G})$ we create one agent named $p_i$ and set $\sat[p_i](p_j)=\sat[p_j](p_i)=1$ if and only if $\{v_i,v_j\} \in E(\hat{G})$. 
    The non-mentioned preferences are set to zero. 
    The seat graph~$G$ consists of a clique of size~$\nonisolated \coloneqq h$ and $V(\hat{G})|-h$ isolated vertices.
    
    It is straightforward to verify that $\hat{G}$ admits a size-$h$ clique~if and only if there is an arrangement~$\sigma$ with $\wel \geq h(h-1)$.
    
    \textbf{Strict preferences.}                
    Now, we turn to strict preferences and modify the reduction above.
    The only differences are the preferences.
    For each neighbor~$v_j$ of $v_i$ in $\hat{G}$ we set $\sat[p_i](p_j)$ to a positive value from $[\hat{n}]$, where $\hat{n}\coloneqq |V(\hat{G})|$. 
    For each non-neighbor $v_j$ of $v_i$ we set $\sat[p_i](p_j)$ to a negative value from $\{-\hat{n}(\hat{n}+1), \ldots, -\hat{n}^2\}$. 
    In any case, the values can be assigned arbitrary, but each preference value can appear only at most once.
    Again, it is to verify that $\hat{G}$ has a size-$h$ clique if and only if there is an arrangement~$\sigma$ with positive welfare, i.e.,~$\wel > 0$.
  \end{proof}
}
\noindent Surprisingly, even if each agent only knows two others, \fpt\ algorithms (wrt.\ $\nonisolated$) for \MWA\ are unlikely:

\newcommand{\mwakcliquebinwhard}{%
  For clique graphs, \MWA remains \Wh\ wrt.\ $\nonisolated$, even for binary preferences with $\maxoutdeg = 2$. %
}
\begin{theorem}\label{thm:MWA_k+delta_clique_bin}
  \mwakcliquebinwhard
\end{theorem}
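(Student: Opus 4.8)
The plan is to give a polynomial parameterized reduction from \Clique\ (parameterized by the solution size~$h$), which is \Wh~\cite{DF13}, exploiting the same observation as in \cref{thm:MWA_k_clique_bin-symm+strict}: on a clique-graph the welfare of an arrangement that places an agent set~$S$ with $|S|=\nonisolated$ into the clique equals the number of arcs of the preference graph with both endpoints in~$S$. To keep $\maxoutdeg = 2$ I would replace the adjacency encoding of \cref{thm:MWA_k_clique_bin-symm+strict} by an incidence-style encoding. Starting from an instance $(\hat G, h)$ of \Clique\ with (WLOG) $h \ge 4$, $|V(\hat G)| \ge h$ and $|E(\hat G)| \ge \binom{h}{2}$ (otherwise output a trivial no-instance), create one \emph{vertex-agent}~$a_i$ for each $v_i \in V(\hat G)$ and one \emph{edge-agent}~$b_e$ for each $e \in E(\hat G)$; set $\sat[b_e](a_i) = 1$ exactly when $v_i$ is an endpoint of~$e$, and set all other preferences to~$0$. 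Then every edge-agent has out-degree~$2$ and every vertex-agent out-degree~$0$, so $\maxoutdeg = 2$ and preferences are binary. The seat graph is a clique on $\nonisolated \coloneqq h + \binom{h}{2}$ vertices plus the remaining $|V(\hat G)| + |E(\hat G)| - \nonisolated$ vertices as isolated vertices; as $\nonisolated$ depends only on~$h$, this is a valid parameterized reduction. The decision bound is $L \coloneqq 2\binom{h}{2}$.

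For the forward direction, given an $h$-clique~$C$ in~$\hat G$, place the $h$ vertex-agents $\{a_i : v_i \in C\}$ and the $\binom{h}{2}$ edge-agents $\{b_e : e \subseteq C\}$ into the clique of the seat graph (and everyone else on isolated vertices). Each of these $\binom{h}{2}$ selected edge-agents then has both of its out-neighbors inside the clique, contributing~$2$ to the welfare, whence $\wel = 2\binom{h}{2} = L$.

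For the backward direction, suppose an arrangement~$\sigma$ achieves $\wel \ge L$. Let $S$ be the set of agents on the clique, $X \coloneqq \{v_i : a_i \in S\}$ and $Y \coloneqq \{e : b_e \in S\}$, with $x \coloneqq |X|$, $y \coloneqq |Y|$, $x + y = \nonisolated$. Then $\wel = \sum_{e \in Y} |e \cap X|$; splitting $Y$ according to $|e \cap X| \in \{0,1,2\}$ into $Y_2, Y_1, Y_0$ gives $\wel = 2|Y_2| + |Y_1| \le |Y_2| + (|Y_2| + |Y_1|) \le \binom{x}{2} + y = \binom{x}{2} - x + \nonisolated$, using that $|Y_2|$ is at most the number of edges of $\hat G$ with both endpoints in~$X$, which is at most $\binom{x}{2}$. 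I would then check that $g(x) \coloneqq \binom{x}{2} - x + \nonisolated$ is non-increasing on $\{0,1,2\}$ and strictly increasing on $\{2,\dots,h\}$ with $g(h) = 2\binom{h}{2} = L$, and that for $h \ge 4$ one has $g(0) = h + \binom{h}{2} < 2\binom{h}{2} = L$; hence $g(x) < L$ for every $x \in \{0,\dots,h\} \setminus \{h\}$, so $\wel \ge L$ forces $x = h$ and $y = \binom{h}{2}$. Substituting back, $\wel = 2|Y_2| + |Y_1| \le |Y_2| + y = |Y_2| + \binom{h}{2} \le 2\binom{h}{2} = L$, with equality only if $|Y_2| = \binom{h}{2}$ and $|Y_1| = 0$; but $|Y_2| = \binom{h}{2} = \binom{x}{2}$ forces $\hat G[X]$ to be complete, i.e.\ $X$ is an $h$-clique of~$\hat G$. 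Since \Clique\ remains \Wh\ when $h \ge 4$, and the strict and the binary-and-symmetric cases are already covered by \cref{thm:MWA_k_clique_bin-symm+strict}, this finishes the proof.

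The main obstacle is the backward direction: unlike the adjacency encoding, the welfare of a ``wrong'' selection of agents is not obviously small here, so one must argue carefully that no mixture of vertex-agents and edge-agents other than ``$h$ vertices of a clique together with its $\binom{h}{2}$ edges'' can reach the bound $2\binom{h}{2}$ — which is exactly where the convexity estimate on~$g$ and the assumption $h \ge 4$ enter.
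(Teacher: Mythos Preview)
Your reduction is exactly the paper's: the same incidence encoding (one agent per vertex, one agent per edge, each edge-agent pointing with weight~$1$ to its two endpoints), the same clique of size $\nonisolated = h + \binom{h}{2}$, and the same threshold $L = 2\binom{h}{2} = h(h-1)$. The forward direction is also identical.

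The backward direction differs in technique. The paper argues by exchange: starting from an arrangement with $\wel \ge L$ and more than $\binom{h}{2}$ edge-agents in the clique, it shows one can always find two edge-agents each with only one endpoint present (using $\binom{h}{2} + x - \binom{h-x}{2} \ge 2$) and swap one of them for a missing vertex-agent incident to the other, never decreasing welfare, until exactly $h$ vertex-agents and $\binom{h}{2}$ edge-agents remain. You instead bound $\wel$ directly as a function of~$x$ via $g(x) = \binom{x}{2} - x + \nonisolated$ and analyse its shape; this is more concise and avoids the iterative exchange. Both are valid.

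There is one small omission in your argument: you establish $g(x) < L$ only for $x \in \{0,\dots,h-1\}$, but never rule out $x > h$ before concluding that $\wel \ge L$ forces $x = h$. This is immediate --- if $x > h$ then $y = \nonisolated - x < \binom{h}{2}$ and $\wel \le 2y < L$ since every edge-agent contributes at most~$2$ --- so your backward direction is complete once that one line is added.
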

{
  \begin{proof}
    We prove this via a parameterized reduction from \Clique, parameterized by the solution 
\ifarxiv
	size~$h$~\cite{DF13}. 
\else
    size~$h$~\cite{DF13}. %
\fi  
    For each vertex $v_i \in V(\hat{G})$ (resp.\ edge $e_\ell \in E(\hat{G})$), we create one \myemph{vertex-agent}~$p_i$ (resp.\ \myemph{edge-agent}~$q_{\ell}$),
    and set $\sat[q_{\ell}](p_i)= 1$ if~$v_i \in e_\ell$. 
    The non-mentioned preferences are set to zero. 
    The seat graph~$G$ consists of a clique of size $\nonisolated\coloneqq h + \binom{h}{2}$ and $|V(\hat{G})|+|E(\hat{G})|-\nonisolated$ isolated vertices. 
    Clearly, the pref\-er\-enc\-es are binary. Since only edge-agents have positive pref\-er\-enc\-es (towards incident vertices), we infer $\maxoutdeg = 2$, as desired.
    
    We show that $\hat{G}$ admits a size-$h$ clique if and only if there is an arrangement $\sigma$ with $\wel \geq h(h-1)$.
    The forward direction is straightforward by assigning the corresponding vertex- and edge-agents contained in the $h$-clique to the non-isolated vertices in the clique-graph.
    
    For the backward direction, we first observe that at least~$\binom{h}{2}$ edge-agents are non-isolated; otherwise this would imply that $\wel < h(h-1)$. 
    If~$\binom{h}{2} + x$ edge-agents are assigned to non-isolated vertices with $x \ge 1$, then we observe:
    \begin{inparaenum}[(i)]
      \item Each non-isolated edge-agent $q_{\ell}$ has a positive utility. 
      Otherwise, we can exchange $q_{\ell}$ with a vertex-agent which is 
      incident to some other non-isolated edge-agent. 
      This would only increase the welfare.			
      \item There are~$h-x$ vertex-agents assigned to non-isolated vertices. 
      Hence, there can be at most $\binom{h-x}{2}$ edge-agents with both endpoints in the clique. 
      Because of $\binom{h}{2} + x - \binom{h-x}{2} \geq 1 + x \geq 2$, we can always find two edge-agents with only one endpoint in the clique.
      If we exchange one edge-agent with a missing vertex-agent of the other edges, then the welfare does not decrease. 
    \end{inparaenum}
    Therefore, if there is an arrangement $\sigma$ with $\wel \geq h(h-1)$, we can apply these exchange-arguments until there is no more such pair. 
    This implies that exactly~$\binom{h}{2}$ edge-agents (and exactly~$h$ vertex-agents) are non-isolated. 
    Moreover, each of these edge-agents has a utility of two, since otherwise we cannot reach the desired welfare. 
    As there are~$h$ non-isolated vertex-agents which are incident to~$\binom{h}{2}$ edge-agents, it follows that the corresponding~$h$ vertices in~$\hat{G}$ form a clique. 
  \end{proof}

  \noindent
  \newH{Except for matching-graphs, \MWA\ remains intractable for constant~$\maxoutdeg$ %
  since it can be rephrased as finding a densest subgraph in the preference graph which remains hard for constant degree; this is also independently noted by Bil{\'o} et al.~\shortcite{bilo22}: %
  }
  \newcommand{\mwaconstantoutdeghard}{%
    For binary and symmetric preferences, \MWA remains \NPh\ for constant $\maxoutdeg$ and for
    each considered seat graph class except the matching-graphs. 
  }

\statementarxiv{theorem}{thm:MWA_delta_clique+path-cycle+stars_bin-symm}{\mwaconstantoutdeghard}
  \appendixproofwithstatement{thm:MWA_delta_clique+path-cycle+stars_bin-symm}{\mwaconstantoutdeghard}{ 
    \begin{proof}
      \textbf{Path- and cycle-graphs.}
      We prove this via a polynomial-time reduction from \HamPath, which is \NPh\ even for cubic graphs~\cite{garey1979}. 		
      \decprob{\HamPath}
      {An undirected graph~$\hat{G}$.}
      {Does $\hat{G}$ admit a Hamiltonian path, i.e., a path that visits every vertex $v \in V(\hat{G})$ exactly once?}
      
      Given an instance $\hat{G}$ of \HamPath, we create for each vertex $v_i \in V(\hat{G})$ one agent named $p_i$. 
      We set $\sat[p_i](p_j)=\sat[p_j](p_i) = 1$ if and only if $\{v_i,v_j\} \in E(\hat{G})$. 
      Otherwise, the preference value is set to zero.
      The seat graph is a path (resp.\ cycle) with $|V(\hat{G})|$ vertices. 
      Clearly, the preferences are binary and symmetric. 
      Since we can assume that~$\hat{G}$ is a cubic graph, we obtain $\maxoutdeg = 3$ for the created \MWA instance. 
      
      It remains to show that $\hat{G}$ has a Hamiltonian path if and only if there is an arrangement~$\sigma$ with $\wel \geq 2(n-1)$.
      This follows from the fact, that for each edge $\{v_i, v_j\} \in E(\hat{G})$, if the seats of agents $p_i, p_j$ are adjacent, then this contributes $2$ to the welfare. 
      
      Adding isolated vertices to the seat graph does not make the problem easier since we can add dummy agents to the above reduction which have zero preference towards every agent. 

      \paragraph*{Clique-graphs.} 
      If $G$ is a clique-graph we can reduce from the problem \DkS, which is \NPh\ even on graphs with maximum degree three. 
      \decprob{\DkS}
      {An undirected graph~$\hat{G}$, non-negative integers~$h$ and~$m$.}
      {Does $\hat{G}$ have an $h$-vertex subgraph with at least~$m$ edges?}
      
      Let $(\hat{G}, h,m)$ denote an instance of \DkS. 
      For each vertex~$v_i \in V(\hat{G})$ we create one agent named $p_i$ and set $\sat[p_i](p_j)=\sat[p_j](p_i)=1$ if and only if $\{v_i,v_j\} \in E(\hat{G})$. 
      The non-mentioned preferences are set to zero. 
      The seat graph~$G$ consists of a clique of size~$\nonisolated\coloneqq h$ and $|V(\hat{G})|-h$ isolated vertices. 		
      Clearly, the preferences are binary an symmetric.
      Moreover, since each vertex in $\hat{G}$ has maximum degree three, it follows that $\maxoutdeg = 3$. 
      
      It remains to show that $\hat{G}$ has an $h$-vertex subgraph with at least~$m$ edges if and only if there is an arrangement~$\sigma$ with $\wel \geq 2m$.
      This follows from the fact that each edge $\{v_i, v_j\}$ in an $h$-vertex subgraph of $\hat{G}$ contributes exactly two to the welfare of an arrangement~$\sigma$. 		
      
      \paragraph*{Stars-graphs.} 
      We provide a polynomial-time reduction from the \NPc\ problem \ExactCover~\cite{garey1979}. 
      \decprob{\ExactCover}
      {A finite set of elements $U = \{u_1, \ldots , u_{3n}\}$, subsets $S_1, \ldots, S_m \subseteq U$ with $|S_j| = 3$ for each $j \in [m]$.}
      {Does there exist an exact cover $J \subseteq [m]$ for $U$, i.e., $\bigcup_{j \in J} S_j = U$ and each element occurs in exactly one set in the cover?}
      This problem remains \NPh\ even if each element appears in at most three subsets. 
      
      Let $(U, (S_j)_{j \in [m]})$ denote an instance of \ExactCover. 			
      For each element $u_i \in U$ we create an \myemph{element-agent}~$p_i$. 
      For each set $S_j$ we create two \myemph{set-agents}~$q_j^1, q_j^2$. 
      The binary and symmetric preferences are defined in the following way (see \cref{fig:MWA_delta_stars} for the corresponding preference graph):
      \begin{compactitem}[--]
        \item For each element $u_i \in U$, set $\sat[p_i](q_j^1) = \sat[q_j^1](p_i) = 1$ if $u_i \in S_j$.
        \item For each subset $S_j$, set $\sat[q_j^1](q_j^2) = \sat[q_j^2](q_j^1) = 1$. 
        \item The non-mentioned preferences are set to zero. 
      \end{compactitem}
  		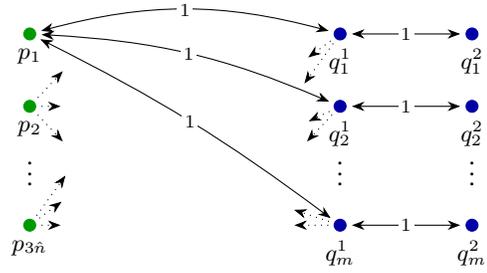
\begin{figure}[t]
  			\centering
  			\begin{tikzpicture}[>=stealth', shorten <= 2pt, shorten >= 2pt]
  			\def \xss {10ex}
  			\def \yss {5ex}
  			\def \ss {2.5ex}

  			\node[nodeW] (p1) {};
  			\node[nodeU, right = 2.5*\xss  of p1] (q11) {};
  			\node[nodeU, right = \xss  of q11] (q21) {};
  			\foreach \x / \typ in {p/W,q1/U,q2/U} {
  				\foreach \y / \skp  in {2/1,3/3} {
  					\node[node\typ, below = \skp * \yss  of \x1] (\x\y) {};
  				}
  				\path (\x2) -- node[auto=false]{\vdots} (\x3);
  			}

  			\begin{footnotesize}
	  			\foreach \y  in {1,2} {
	  				\node[below = 0pt of p\y] {$p_{\y}$};
	  			}
	  			\node[below = 0pt of p3] {$p_{3\hat{n}}$};
	  			\foreach \x / \pos in {1/below,2/below} {
	  				\foreach \y  in {1,2} {
	  					\node[\pos = 0pt of q\x\y] {$q_\y^{\x}$};
	  				}
	  				\node[\pos = 0pt of q\x3] {$q_m^{\x}$};
	  			}
	  		\end{footnotesize}

  			\begin{scriptsize}
  			\foreach \i in {1,2,3} {
  				\draw[Stealth-Stealth] (q1\i) edge node[pos=0.5, fill=white, inner sep=1pt] {$1$} (q2\i);
  			}
  			
  			\foreach \i/\b in {1/15,2/10,3/5} {
  				\draw[Stealth-Stealth] (p1) edge[bend left = \b] node[pos=0.5, fill=white, inner sep=1pt] {$1$} (q1\i);
  			}
  			\end{scriptsize}

  			\node[right = \ss of p2] (p22) {};
  			\node[above = \ss of p22] (p21) {};
  			\node[below = \ss of p22] (p23) {};
  			
  			\node[right = \ss of p3] (p33) {};
  			\node[above = 0.5*\ss of p33] (p31) {};
  			\node[above = 1.5*\ss of p33] (p32) {};
  			
  			\foreach \i in {2,3} {
  				\foreach \y in {1,2,3} {
  					\draw[dotted, -Stealth](p\i)--(p\i\y);
  				}
  			}
  			
  			\node[left = \ss of q11] (r11) {};
  			\node[below = 0.5*\ss of r11] (r12) {};
  			\node[below = 1.5*\ss of r11] (r13) {};
  			
  			\node[left = \ss of q12] (r21) {};
  			\node[below = 0.1*\ss of r21] (r22) {};
  			\node[below = 1*\ss of r21] (r23) {};
  			
  			\node[left = \ss*1.5 of q13] (r33) {};
  			\node[above = 0.05*\ss of r33] (r32) {};
  			
  			\foreach \i in {1,2,3} {
  				\foreach \y in {2,3} {
  					\draw[dotted, -Stealth](q1\i)--(r\i\y);
  				}
  			}
  			\end{tikzpicture}
  			\caption{Preference graph for \cref{thm:MWA_delta_clique+path-cycle+stars_bin-symm} for stars-graphs as seat graph, where $u_1 \in S_1 \cap S_2 \cap S_m$. Each element is contained in three sets. Therefore, the corresponding element-agent has a positive preference towards three set-agents, which ``contain'' the element. We indicate this through dashed arcs which have weight~$1$. Similarly, a dashed arc from a set-agent has weight~$1$ and goes to element-agents ``contained'' in this set.} 
  			\label{fig:MWA_delta_stars}
  		\end{figure}
  
      The seat graph~$G$ consists of~$n$ stars with four leaves each and $2m-2n$ isolated vertices.
      Since we can assume that each element is contained in only three sets and each set contains three elements, it holds $\maxoutdeg = 4$. 		
      
      It remains to show that there exists a cover of size $n$ if and only if there is an arrangement $\sigma$ with $\wel \geq 8n$. 
      
      Given a cover $J \subseteq [m]$ of $(U, (S_j)_{j \in [m]})$, we assign for each $j \in J$, agent $q_j^1$ to the center of a star and $q_j^2$ together with the three element-agents contained in $S_j$ to its leaves. 
      For this arrangement it holds $\wel = 8n$. 
      
      Conversely, given an arrangement $\sigma$ with $\wel = 8n$ we can observe that the utility of each non-isolated agent is positive. 
      Otherwise we cannot achieve the desired welfare in a seat graph with $4n$ edges and binary preferences.
      From this we can conclude that no set-agent $q_j^2$ or element-agent~$p_i$ is assigned to the center of a star, because the number of agents which have a positive preference towards $q_j^2$ (resp.\ $p_i$) is one (resp.\ three).
      This means that $n$ different set-agents $q_j^1$ are assigned to the star centers. 
      Moreover, the three element-agents contained in the corresponding sets together with the second set-agent are assigned to the leaves. 
    \end{proof}}
  
  The previous hardness results motivate us to study the combined parameters $(\nonisolated, \maxoutdeg)$ for symmetric preferences. We conclude the section with another color-coding based algorithm which is more involved than the one for \cref{thm:MWA_k_path-cycle+stars} since it works for arbitrary seat graphs.	
  
  \newcommand{\mwakdeltasymmfpt}{%
    For symmetric preferences, \MWA is \fpt\ wrt.\ $\nonisolated+\maxoutdeg$.
  }
\statementarxiv{theorem}{thm:MWA_k+delta_symm}{\mwakdeltasymmfpt}
  \appendixproofwithstatement{thm:MWA_k+delta_symm}{\mwakdeltasymmfpt}{
    \begin{proof}		
      Let $I = (\agents, (\sat)_{p\in \agents}, G)$ be an \MWA instance with
      symmetric preferences. Then, for each vertex, every in-neighbor in the preference graph is also an out-neighbor.
      Hence, we consider the underlying preference as an undirected graph. 
      This means that each non-isolated agent has bounded in- and out-degrees. Hence, we can use random separation to separate the non-isolated agents from their isolated neighbors.
      The approach is as follows: Color every agent independently with color~$r$ or $b$, each with probability $1/2$.
      We say that a coloring $\chi\colon \agents \to \{r, b\}$ is \myemph{successful} if there exists an optimal arrangement~$\sigma$ such that
      \begin{compactenum}[(i)]
        \item $\chi(p)=r$ for each $p \in W$ and
        \item $\chi(p)=b$ for each $p \in (\agents \setminus W)\cap 
        \Neigh{\prefgraph}(W)$, where $W\coloneqq\{p \in \agents~|~\degr{G}(\sigma(p)) \geq 1\}$ denotes the set of agents which are non-isolated in $\sigma$. %
      \end{compactenum}
      Since the seat graph has $\nonisolated$ non-isolated vertices and the 
      degree of each agent in the preference graph is bounded 
      by~$\maxoutdeg$, we infer $|W|+|(\agents \setminus W)\cap 
      \Neigh{\prefgraph}(W)| \leq 
      \nonisolated (1+\maxoutdeg)$. 
      Hence, the probability that a random coloring is successful is at least 
      $2^{-\nonisolated (1+\maxoutdeg)}$.
      
      Let $\redcomp$ be the subset of agents colored with~$r$ in~$\chi$.
      By definition of a successful coloring, we know for each connected component $C$ of $\redcomp$, that the agents in $C$ are all assigned to either isolated or non-isolated vertices. 
      Therefore, the size of each component is bounded by $\nonisolated$. 
      It remains to decide which non-isolated component to take and how to 
      assign them. 

      Let the vertices of the seat graph $G$ be denoted by
      $\{1, 2, \ldots, \nonisolated\}$. We design a simple algorithm using 
      color-coding as follows. We color the red agents $\redcomp$ uniformly 
      at random with colors $[\nonisolated]$. The $\nonisolated$ colors 
      one-to-one correspond to the $\nonisolated$ non-isolated seats in the 
      seat graph. 
      Let $\sigma$ be a hypothetical optimal arrangement and $\chi'\colon 
      \redcomp \to [\nonisolated]$ a coloring.
      We say~$\chi'$ is \myemph{good} (wrt.\ $\sigma$) if the agent at the $i$th vertex of~$G$ is colored~$i$, i.e., $\chi'(\sigma^{-1}(i))=i$, for each $i \in [\nonisolated]$. Note that given a solution $\sigma$, the probability that the $\nonisolated$ non-isolated agents are colored with pairwise distinct colors is at least~$e^{-k}$~\cite{cygan15}.	
      
      Since for each component all agents are assigned to either non-isolated or isolated vertices, we first check for each component, if each color appears at most one.	
      If there are two agents $p_1, p_2$ with $\chi'(p_1) = \chi'(p_2)$, then this component is assigned to isolated vertices in a good coloring. 
      Since there are $\Oh(n)$ components, this can be done in time $\Oh(k^2 \cdot n)$.
      
      For each of the remaining components we individually compute the welfare this component contributes if it is selected. 
      In this regard, we observe that the utility of an agent only depends on the agents inside the same component as all preferences between two components is zero. Hence, we compute for each agent $p$ with $\chi'(p) = i$ in a component $C$ his utility
      $\util[p]{\sigma} =\sum_{q\in 
        C\setminus\{p\},\chi'(q)=j,\{i,j\}\in 
        E(G)} \sat(q)$.

      Finally, we use dynamic programming (DP) to select from the remaining components those whose colors match the seats, sizes sum up to $k$, and which maximize the welfare. 
      Let $C_1,C_2,\ldots ,C_m$ be the remaining weakly connected components.
      We define a DP table where an entry $T[S,i]$ returns the maximum welfare of a partial arrangement assigning the agents of the first~$i$ components, where 
      \begin{inparaenum}[(i)]
        \item no two agents are colored with the same color and
        \item each color in~$S$ is used once.
      \end{inparaenum}
      
      We initiate our table for the first component as follows:
      \begin{align*}		
        T[S,1] = \begin{cases}
          0 & \text{ if } S = \emptyset, \\
          \sum_{p \in C_1} \util{\sigma} & \text{ if } S = \bigcup_{p \in C_i} \{\chi(p)\}, \\
          - \infty & \text{ else.}
        \end{cases}
      \end{align*}
      
      When considering a component, it is assigned completely to either non-isolated or isolated vertices. 
      Hence, we obtain the following recursion:
      \begin{align*}
        T[S,i] = &\max\Bigg\{ T[S,i-1], \\
                 &T\left[S\setminus\bigcup_{p \in C_i} \{\chi(p)\}, i-1\right] + \sum_{p \in C_i} \util{\sigma} \Bigg\}
      \end{align*}
      We return $T[[\nonisolated], m]$ as the maximum welfare of this instance. 
      The entries of this table can be computed in time $2^\nonisolated\cdot n^{\Oh(1)}$ as $S \subseteq [\nonisolated]$. 
      The probability of a successful and good coloring is $2^{-\nonisolated(1+\maxoutdeg)}e^{-\nonisolated}$. 
      Hence, after repeating this algorithm $2^{\nonisolated(1+\maxoutdeg)}e^{\nonisolated}$ times we obtain a solution with high probability.				
      We can get a de-randomized \fpt\ algorithm using set families and perfect hash functions~\cite{cygan15} to replace the two coloring steps respectively.
    \end{proof}
  }
  
  \section{\MUAf}\label{sec:mua}
  \appendixsection{sec:mua}        
  In this section, we focus on maximizing the minimum utility.
  Under non-negative preferences, \fpt\ algorithms (wrt.\ $k$) exist since any non-trivial instance has at most $k$ agents:
  \newcommand{\obsmua}{%
    For non-negative preferences, \MUA is \fpt\ wrt.\ $\nonisolated$, and becomes polynomial-time solvable if the seat graph is a clique-graph.%
  }
\statementarxiv{proposition}{obs:MUA_nonneg-clique-k}{\obsmua}
  \appendixproofwithstatement{obs:MUA_nonneg-clique-k}{\obsmua}{
    \begin{proof} 
      For non-negative preferences, if the seat graph contains isolated vertices then $\egal = 0$. 
      Otherwise, $\nonisolated = |\agents|$, i.e., we can bound the input size by $\nonisolated$. 
      Hence, we obtain fixed-parameter tractability since we have a linear problem kernel.
      
      For clique-graphs, if there are isolated vertices, then $\egal = 0$ for every arrangement $\sigma$. 
      Otherwise, since the seat graph is a clique, the seat of every agent is indifferent. 
      Hence, every arrangement is an optimal solution.
    \end{proof}
  }  
  
The presence of negative preferences excludes any \fpt\ algorithm (wrt.\ $\nonisolated$) for \MUA\ since it generalizes \textsc{Clique}: %
  \newcommand{\muakcliquewhard}{%
    \MUA is \Wh\ wrt.\ $\nonisolated$, even for a clique-graph and for symmetric or strict preferences. %
  }
\statementarxiv{theorem}{thm:MUA_k_clique_symm+strict}{\muakcliquewhard}
  \appendixproofwithstatement{thm:MUA_k_clique_symm+strict}{\muakcliquewhard}{
    \begin{proof}		
      \textbf{Symmetric preferences.} 
      We provide a parameterized reduction from the \Wh\ problem \Clique, 
      parameterized by the solution size.
      Let $(\hat{G}, h)$ denote an instance of \Clique.
      For each vertex~$v_i \in V(\hat{G})$ we create one agent $p_i$.
      We set $\sat[p_i](p_j)=\sat[p_j](p_i)=0$ if and only if $\{v_i,v_j\} \in E(\hat{G})$. 
      Otherwise, the preference value is set to~$-1$. 
      The seat graph $G$ consists of a clique of size~$\nonisolated\coloneqq h$ and $|V(\hat{G})|-h$ isolated vertices.
      
      It is straightforward to verify that $\hat{G}$ has a size-$h$ clique if and only if there is an arrangement~$\sigma$ with $\egal \geq 0$.
      
      \paragraph*{Strict preferences.} 
      Now, we turn to strict preferences and modify the above reduction.
      The only differences are the preferences.
      For each neighbor~$v_j$ of $v_i$ in $\hat{G}$ we set $\sat[p_i](p_j)$ to a 
      positive value from $[\hat{n}]$, where $\hat{n}\coloneqq |V(\hat{G})|$.
      For each non-neighbor $v_j$ of $v_i$ we set $\sat[p_i](p_j)$ to a negative 
      value from $\{-\hat{n}(\hat{n}+1), \ldots, -\hat{n}^2\}$. 
      In any case, the values can be assigned arbitrary, but each preference value can appear only at most once.
      Again, it is straightforward to verify that~$\hat{G}$ has a size-$h$ clique if and only if there is an arrangement~$\sigma$ with 
      $\egal > 0$.		
    \end{proof}
  }

  For simple seat graphs, we can again combine color-coding with dynamic programming to obtain \fpt\ algorithms for the single parameter~$\nonisolated$:
  \newcommand{\muafptkpath}{%
    For stars-, path-, and cycle-graphs, \MUA is \fpt\ wrt.\ $\nonisolated$.%
  }
\statementarxiv{theorem}{thm:MUA_k_path-cycle+stars}{\muafptkpath}
  \appendixproofwithstatement{thm:MUA_k_path-cycle+stars}{\muafptkpath}{
    \begin{proof}
      Let $I = (\agents, (\sat)_{p\in \agents}, G)$ be an \MUA instance and the vertices of the seat graph $G$ be denoted by $\{1, 2, \dots, \nonisolated\}$.
      
      We iterate over the possible value~$\beta$ of the minimum utility of an optimal solution; there are polynomially many.
      We design a simple algorithm using color-coding as follows. 
      We color the agents $\agents$ uniformly at random with colors 
      $[\nonisolated]$. 
      The $\nonisolated$ colors one-to-one correspond to the $\nonisolated$ 
      non-isolated seats in the seat graph.
      Let $\sigma$ be a hypothetical optimal arrangement and $\chi\colon 
      \agents\to[\nonisolated]$ a coloring.
      We say $\chi$ is \myemph{good} (wrt.\ $\sigma$) if the agent at the $i$th vertex of $G$ is colored $i$, i.e., $\chi(\sigma^{-1} (i)) = i$, for each $i \in [\nonisolated]$. 
      Note that given a solution $\sigma$, the probability that the $\nonisolated$ non-isolated agents are colored with pairwise distinct colors is at least $e^{-\nonisolated}$~\cite{cygan15}.
      
      \paragraph*{Stars-graphs.} 	
      Given a coloring $\chi\colon P \to [\nonisolated]$, we check whether there exists an arrangement~$\rho$ such that 
      \begin{inparaenum}[(i)]
        \item $\chi(\rho^{-1} (i)) = i$ and
        \item for each $p \in P$, it holds that $\util{\sigma}\geq \beta$.
      \end{inparaenum}
      Since agents in different stars do not interact with each other, we will check for each star if there is an arrangement, where each agent assigned to this star has utility at least~$\beta$.
      Let $k_c$ be the center of a star and $k_1, \ldots, k_\ell$ its leaves.
      Towards this, for each possible star center vertex $p$, %
      and each leaf $k_i \in \{k_1, \ldots, k_\ell\}$ we greedily choose an agent~$q$ with maximum $\sat(q)$ such that $\sat[q](p) \geq \beta$ and $\chi(q)=k_i$, i.e., we check
      \begin{align*}
        \bigvee_{\substack{p \in \agents,\\ \chi(p)=k_c}} \left[\sum_{i = 
        1}^{\ell} \left( \max_{\substack{q \in \agents\setminus\{p\}, \\ 
        \chi(q)=k_i, \sat[q](p)\geq \beta}} \sat (q) \right) \geq \beta \right].
      \end{align*}
      If we find such an arrangement $\rho$, return it; otherwise we return no.
      Since the number of stars as well as the number of leaves of each star is $\Oh(\nonisolated)$, the computation time is $\Oh(k^2 \cdot n^2)$.

      \paragraph*{Path-graphs.}
      The approach is very similar to the color-coding approach for finding a 
      $k$-path~\cite{AYZ95colorcoding}.
      We define a dynamic programming (DP) table, where an entry $T[S,p,q]$ with $S \subseteq [\nonisolated], p,q \in \agents$, is true if there is an 
      arrangement 
      assigning agents on a $|S|$-path, where 
      \begin{compactenum}[(i)]
        \item no two agents are colored with the same color,
        \item each color in $S$ is used once,
        \item agent $p$ is assigned to the last vertex on the path,
        \item agent $q$ is assigned next to $p$, i.e., $q$ is the penultimate agent on the path, and
        \item the utility of each agent on the path up to $q$ (without $p$) is at least~$\beta$. 
      \end{compactenum}
      
      In the base case we have to check $\util[q]{\sigma} = \sat[q](p) \geq 
      \beta$ since the path consists only of a single edge.
      Hence, for each pair of agents ${p,q}$ and each $S \subseteq [k]$ with $|S| 
      \leq 2$, we define
      \begin{align*}
        T[S,p,q] = \textit{true} \Leftrightarrow S = \{\chi(p), \chi(q)\} &\wedge 
                                                                            \chi(p) \neq \chi(q) \\
                                                                          &\wedge \sat[q](p) \geq \beta.
      \end{align*}
      
      \noindent For $|S| > 2$, the following recurrence holds:
      \begin{align*}
        &T[S,p,q] = \bigvee_{\substack{r \in \agents, \\ \sat[q](p) + \sat[q](r) \geq 
        \beta}} T[S\setminus \{\chi(p)\}, q, r].
      \end{align*}
      Observe that the above recurrence correctly computes a path on $\nonisolated$ 
      agents since no color is repeated and hence no agent is repeated. 
      Specifically, it is a path ending at agent $p$ where each agent has utility at 
      least $\beta$.
      To decide whether there exists an arrangement~$\sigma$ with $\egal \geq \beta$ we 
      check whether $\bigvee_{p,q \in \agents, \sat[p](q)\geq \beta} T[[k], p,q]$ is true.

      Since this DP table stores binary values for 
      all subsets of $\nonisolated$ colors and two agents $p,q$ and each binary value 
      can be 
      computed by looking up up to $\nonisolated\cdot n$ previous entries, the table 
      can be 
      computed in time $2^\nonisolated\cdot n^{\Oh(1)}$. 
      
      \paragraph*{Cycle-graphs.} 
      In this case we can extend the DP table for path-graphs by adding a start agent~$s$, which is assigned to the first vertex in the path and an adjacent second agent~$r$. 
      We can check whether there exists an arrangement~$\sigma$ with $\egal \geq \beta$ by
      \begin{align*}
        \bigvee_{\substack{p,q,s,r \in \agents \\ \sat[p](q) + \sat[p](s) \geq \beta \\ 
        \sat[s](r) + \sat[s](p) \geq \beta}} T'[[\nonisolated], p,q,s,r].
      \end{align*}
      Again, given a coloring $\chi$ the time needed to fill this DP table's entries is $2^\nonisolated \cdot n^{\Oh(1)}$. 
    \end{proof}
  }

  Unfortunately, \MUA cannot be solved in polynomial time even if each agent knows a few other agents and the preferences are symmetric, unless \PP{}={}\NPB.
  
  \newcommand{\muadelta}{%
    For symmetric preferences and for each of the following restrictions, \MUA remains \NPh\ for constant~$\maxoutdeg$:
    \begin{inparaenum}[(i)]
      \item a clique-graph,
      \item a path-graph and non-negative preferences,
      \item a cycle-graph and binary preferences,
      \item a stars-graph where each star has two leaves and binary preferences.
    \end{inparaenum}%
  }
\statementarxiv{theorem}{thm:MUA_delta_clique+star+path-cycle_symm}{\muadelta}
  \appendixproofwithstatement{thm:MUA_delta_clique+star+path-cycle_symm}{\muadelta}{
    \begin{proof}
      \begin{compactenum}[(i)]
        \item For clique-graphs we provide a polynomial-time reduction from \IS, which is \NPh\ even on cubic graphs~\cite{garey1979}.
        \ifarxiv
        	\decprob{\IS}
        	{An undirected graph~$\hat{G}$ and a non-negative integer~$h$.}
        	{Does~$\hat{G}$ admit a size-$h$ independent set, i.e., a subset of vertices where no two agents are adjacent?}
        \fi 
        
        Let $(\hat{G},h)$ be an instance of \IS. 	
        For each vertex $v_i \in V(\hat{G})$ we create one agent named~$p_i$ and set $\sat[p_i](p_j)=\sat[p_j](p_i) = -1$ if and only if $\{v_i,v_j\} \in E(\hat{G})$. 
        Otherwise, the preference value is set to zero.
        Hence, the preferences are symmetric.
        Since~$\hat{G}$ is cubic, we obtain $\maxoutdeg = 3$. 
        The graph induced by the non-isolated vertices is a clique of size $\nonisolated\coloneqq h$ and $|V(\hat{G})|-h$ isolated vertices.
        
        It is straightforward that $\hat{G}$ admits a size-$h$ independent set iff.\ there is an  arrangement 
        with $\egal \geq 0$.
        
        \item For a path-graph we give a polynomial-time reduction from \HamPath. 
        This problem remains \NPh\ even if the two endpoints of the path are specified~\cite{garey1979}. 
        Given an instance $(\hat{G},s,t)$ with $s,t \in V(\hat{G})$ being the specified endpoints of \HamPath we create for each vertex $v_i \in V(\hat{G})$ one agent named $p_i$ and set $\sat[p_i](p_j) = 1$ if and only if $\{v_i,v_j\} \in E(\hat{G})$. 
        Moreover, we add two additional agents $q_s, q_t$ with $\sat[p_s](q_s) = \sat[q_s](p_s) = \sat[p_t](q_t) = \sat[q_t](p_t) = 2$.
        All non-mentioned preferences are set to zero (see \cref{fig:MUA_delta_path} for the corresponding preference graph).	
        The graph induced by the non-isolated vertices in the seat graph~$G$ consists of a single path with $|V(\hat{G})|+2$ vertices.        
        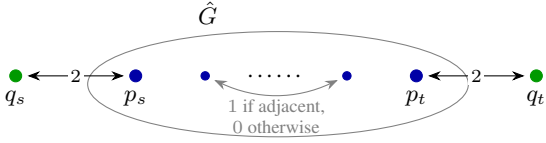
\begin{figure}[t]
        	\centering
        	\begin{tikzpicture}[>=stealth', shorten <= 2pt, shorten >= 2pt]
        	\def \xs {9ex}
        	\def \ys {14ex}
        	\def \ss {2.5ex}
        	
        	\node[nodeW] (x1) {};
        	\node[nodeU, right = \xs of x1] (p1) {};
        	\node[nodeU, right = 2.5*\xs of p1] (p2) {};
        	\node[nodeW, right = \xs of p2] (x2) {};

        	\begin{footnotesize}
	        	\node[below = 0pt of x1] {$q_s$};
	        	\node[below = 0pt of p1] {$p_{s}$};
	        	\node[below = 0pt of x2] {$q_t$};
	        	\node[below = 0pt of p2] {$p_{t}$};
	        	\path (p1) -- node[auto=false]{\ldots\ldots} (p2);
	        \end{footnotesize}

        	\node[ellipse, draw = gray, minimum width = 32ex, minimum height = 1.4cm, label={above left:{\footnotesize$\hat{G}$}}] (ell) at (\xs*2.45,-0.05*\ys) {};

        	\begin{scriptsize}
        	\foreach \x in {1,2} {
        		\draw[Stealth-Stealth] (p\x) edge  node[pos=0.5, fill=white, inner sep=1pt] {$2$} (x\x);
        	}
        	
        	\node[nodeU, right = \xs-2ex  of p1] (pi) {};
        	\node[nodeU, left = \xs-2ex  of p2] (pj) {};
        	\draw[Stealth-Stealth] (pi) edge[bend right=25, gray] node[below, inner sep=1pt] {\begin{tabular}{c}$1$ if adjacent, \\ $0$ otherwise \end{tabular}} (pj);
        	\end{scriptsize}
        	\end{tikzpicture}
        	\caption{Preference graph for \cref{thm:MUA_delta_clique+star+path-cycle_symm} for path-graphs.}
        	\label{fig:MUA_delta_path}
        \end{figure}
        Clearly, the preferences are binary, and hence non-negative. Since \HamPath remains \NPh\ even on cubic graphs, we obtain $\maxoutdeg = 4$.
        
        Before we show the correctness, i.e., there exists a Hamiltonian path in 
        $\hat{G}$ starting in $s$ and ending in $t$ iff.\
        if there is an arrangement~$\sigma$ with $\egal \geq 2$, we observe the following: 
        In every arrangement~$\sigma$ with $\egal \geq 2$, agents $p_s, q_s$ resp.\ $p_t, q_t$ are assigned consecutively. 
        Moreover, $q_s$ and $q_t$ are assigned the two endpoints of the path since otherwise we can find an agent $p \in 
        \agents$ with $\util{\sigma} \leq 1$.
        
        With this observation the correctness follows directly.
        
        \item Since \HamCycle is \NPh\ even on cubic graphs, we can conclude from a result by Bodlaender et al.~\shortcite{bodlaender2020}  that \MUA remains \NPh\ even for a cycle-graph and binary and symmetric preferences with $\maxoutdeg = 3$.
        
        \item For stars-graphs we provide a polynomial-time reduction from \PathPartition which is \NPh\ even on subcubic graphs~\cite[also follows from \cite{dyer85}]{bevern17}. 
        \decprob{\PathPartition}
        {An undirected graph~$\hat{G}$.}
        {Can $V(\hat{G})$ be partitioned into $|V(\hat{G})|/3$ mutually disjoint vertex subsets $V_1, \ldots, V_{|V(\hat{G})|/3}$, such that each subgraph $\hat{G}[V_i]$ is a $P_3$, i.e., a path with three vertices?}
        
        Given an instance $\hat{G}$ of \PathPartition we create for each vertex $v_i \in V(\hat{G})$ one agent named~$p_i$ and set $\sat[p_i](p_j)=\sat[p_j](p_i)=1$ if and only if $\{v_i,v_j\} \in E(\hat{G})$. 
        Otherwise, the preference value is set to zero.
        The seat graph $G$ consists of $|V(\hat{G})|/3$ $P_3$'s.        
        Since each vertex in $\hat{G}$ has at most three neighbors, we constructed an \MUA instance with symmetric preferences and $\maxoutdeg = 3$.
        
        It remains to verify that there exists a partition of $\hat{G}$ into~$P_3$'s if and only if there is an arrangement~$\sigma$ with $\egal \geq 1$. 
        The ``only if'' part is straightforward. 
        For the ``if'' part let~$\sigma$ be an arrangement with $\egal \geq 1$. 
        Then for each agent $p_i$ assigned to the endpoint and~$p_j$ assigned to the middle vertex in a $P_3$ it holds that the vertex~$v_i$ is adjacent to~$v_j$. 
        Since the preferences are symmetric, $v_j$ is also adjacent to $v_i$, i.e., each vertex has a neighbor in the partition. 
        From this we can conclude that there exists a partition of~$\hat{G}$ into~$P_3$'s. \qedhere
      \end{compactenum}
    \end{proof}
  }	

  \MUA admits a polynomial-size problem kernel for the combined parameters~$(\nonisolated, \maxoutdeg)$.
  \newcommand{\muakdelta}{%
    \MUA is \fpt\ wrt.\ $\nonisolated+\maxoutdeg$.
  }
  \begin{theorem}%
    \label{thm:MUA_k+delta}
    \muakdelta
  \end{theorem}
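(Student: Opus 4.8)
The plan is to prove something stronger than \fpt, namely a polynomial kernel in $\nonisolated+\maxoutdeg$. First observe that if the seat graph has no isolated vertex then $|\agents|=\nonisolated$, so the whole instance is already a kernel and can be solved by trying all $\nonisolated!$ arrangements. So assume the seat graph has an isolated vertex. Then every arrangement places some agent on an isolated vertex, whose utility is $0$; hence $\egal\le 0$ for every arrangement, and, conversely, any arrangement in which all $\nonisolated$ agents on the non-isolated vertices have utility $\ge 0$ attains $\egal=0$, which is then optimal. So it suffices to decide whether the $\nonisolated$ non-isolated vertices can be occupied by $\nonisolated$ agents each having utility $\ge 0$: if yes, the optimum is $0$; and if no, we will see that $|\agents|$ is already bounded in $\nonisolated+\maxoutdeg$, so we can brute-force (which also reveals the true, possibly negative, optimum).

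Now the kernelization step. Let $G'$ be the subgraph of the seat graph induced by its $\nonisolated$ non-isolated vertices, so every vertex of $G'$ has degree at most $\nonisolated-1$, and every agent placed on $G'$ has all its seat-neighbors inside $G'$. I claim that if $|\agents|>\nonisolated(\nonisolated-1)\maxoutdeg+1$, then $G'$ can be occupied so that each occupying agent has utility $\ge 0$. To show this, sample an arrangement by drawing a uniformly random injection from $V(G')$ into $\agents$ (placing the remaining agents on isolated vertices arbitrarily). For an agent $p$, let $D(p)\subseteq \Nout{\prefgraph}(p)$ be its set of negative out-neighbors, so $|D(p)|\le\maxoutdeg$. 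An agent placed on $G'$ can have negative utility only if some neighbor of its seat is occupied by one of its negative out-neighbors. Conditioning on the agent at a fixed vertex $v\in V(G')$ being $p$, each of the at most $\nonisolated-1$ neighbors of $v$ receives a uniformly random agent of $\agents\setminus\{p\}$, so by the union bound the conditional probability that $p$ has negative utility is at most $(\nonisolated-1)\maxoutdeg/(|\agents|-1)$; as this holds for every $p$, it bounds the unconditional probability too. Summing over the $\nonisolated$ vertices of $G'$, the expected number of agents on $G'$ with negative utility is at most $\nonisolated(\nonisolated-1)\maxoutdeg/(|\agents|-1)<1$. Since this count is a nonnegative integer, some arrangement has no such agent; in it no agent on $G'$ sits next to a negative out-neighbor, so each such agent's utility is a sum of non-negative terms and hence $\ge 0$, proving the claim.

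Putting things together: if $|\agents|>\nonisolated(\nonisolated-1)\maxoutdeg+1$ we report that the optimum is $0$; otherwise $|\agents|=\Oh(\nonisolated^{2}\maxoutdeg)$, the instance is itself a kernel of size polynomial in $\nonisolated+\maxoutdeg$, and we solve it by brute force over all arrangements. The main obstacle — and really the one idea that makes the argument go through — is recognising that the counting/union-bound argument only has to produce an arrangement with \emph{non-negative} utilities rather than one meeting a prescribed threshold: the presence of an isolated seat already caps $\egal$ at $0$, and an agent fails to reach utility $0$ only through the rare event of being seated next to one of its at most $\maxoutdeg$ negative out-neighbors, so the expected number of such failures is below $1$ once $|\agents|$ exceeds a polynomial in the two parameters. (One should still check the degenerate cases $\nonisolated\le 1$ and $\maxoutdeg=0$, where the bound and the reductions behave correctly.)
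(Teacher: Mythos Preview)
Your proof is correct and, like the paper, establishes a polynomial kernel for $\nonisolated+\maxoutdeg$; but the route is genuinely different. The paper argues constructively: it repeatedly picks an agent of minimum in-degree in the preference graph (such an agent has in-degree at most $\maxoutdeg$, hence total degree at most $2\maxoutdeg$), adds it to a set $S$, and deletes its in- and out-neighbors. After $\nonisolated$ rounds either $S$ contains $\nonisolated$ agents that are pairwise non-adjacent in $\prefgraph$---placing these on the non-isolated vertices in any order gives every non-isolated agent utility exactly $0$---or fewer than $\nonisolated$ rounds exhausted all agents, whence $|\agents|\le \nonisolated(1+2\maxoutdeg)$. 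You instead use the probabilistic method: a random injection of agents onto the non-isolated vertices has, by a union bound, expected number of agents seated next to a negative out-neighbor below $1$ once $|\agents|-1>\nonisolated(\nonisolated-1)\maxoutdeg$, so some arrangement attains $\egal=0$.

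What each approach buys: the paper's greedy is fully constructive and yields the tighter kernel bound $\Oh(\nonisolated\maxoutdeg)$, whereas your argument gives $\Oh(\nonisolated^{2}\maxoutdeg)$ and is only existential as stated (though it derandomizes routinely via conditional expectation if one needs to actually output the arrangement rather than just the value $\egal=0$). On the other hand, your argument is more flexible---it does not need the agents on the non-isolated seats to be mutually indifferent, only that none sits next to a negative out-neighbor---and it avoids the averaging observation about min in-degree vertices that the paper relies on.
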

  \begin{proof}
  The idea is to obtain a polynomial-sized problem kernel, i.e., an equivalent instance of size $(\nonisolated+\maxoutdeg)^{\Oh(1)}$, or solve the problem in polynomial time.  First, if the seat graph has no isolated vertices, then $ |\agents| = \nonisolated$ and we have a linear-sized kernel.
  Otherwise, $\egal \leq 0$ holds for every arrangement~$\sigma$.
  Further, we observe that in every directed graph with maximum out-degree~$\maxoutdeg$, there is always a vertex with in-degree bounded by $\maxoutdeg$.
  That is, the sum of in- and out-degrees of this vertex is at most $2\maxoutdeg$.
  Hence, we iteratively select an agent~$p$ with minimum in-degree in the preference graph, put him to our solution \newE{$S$}, and delete all in- and out-neighbors of~$p$. 
  If, after~$\nonisolated$ steps, we can find a set~$S$ of $\nonisolated$ ``independent'' agents (they do not have arcs towards each other), then we can assign them arbitrarily to the non-isolated vertices. 
  Since the preference between each two agents in $S$ is zero, the utility of each agent in~$S$ is also zero. 
  Hence, we found an arrangement~$\sigma$ with $\egal = 0$. 
  If we could not find $\nonisolated$ independent agents, then the instance has at most $\nonisolated(1+2\maxoutdeg)$ agents since in each step we deleted at most $1+2\maxoutdeg$ agents.
  It is straightforward that the approach above runs in polynomial time.
  \end{proof}
  \looseness=-1
  \section{\EFAf}\label{sec:efa}
  \appendixsection{sec:efa}
  In this section, we consider envy-freeness.
  First, we observe the following for non-negative preferences and we will use it extensively in designing both algorithms and reductions. 
  
  \newcommand{\obsefanonneg}{%
    Let $p$ be an agent with non-negative preferences.
    Then, for each envy-free arrangement it holds that if~$p$ is isolated, then every~$q$ with $\sat(q)\!>\!0$ is isolated as well.
  }
  \begin{obs}\label{lem:EFA_non-neg_cond}
    \obsefanonneg
  \end{obs}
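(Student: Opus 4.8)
The plan is to argue by contraposition inside a fixed envy-free arrangement~$\sigma$: assuming that $p$ occupies an isolated seat while some agent~$q$ with $\sat(q)>0$ occupies a non-isolated seat, I will pinpoint a concrete agent whom~$p$ envies, contradicting envy-freeness. The single tool used is the swap-arrangement $\swap{p}{r}$ together with the fact that moving $p$ away from an isolated seat cannot disturb any neighbourhood other than the two seats involved in the swap.

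First I would fix, since $\sigma(q)$ is non-isolated, a seat $v\in\Neigh{G}(\sigma(q))$ and set $r\coloneqq\sigma^{-1}(v)$. Two quick sanity checks: $r\neq q$, because a vertex is never its own neighbour; and $r\neq p$, because $\sigma(p)$ is isolated and hence not adjacent to $\sigma(q)$, so $v\neq\sigma(p)$. Next I would evaluate $\util{\swap{p}{r}}$, where $p$ now sits on seat $\sigma(r)$. The swap only changes the occupants of the two seats $\sigma(p)$ and $\sigma(r)$; every seat $w\in\Neigh{G}(\sigma(r))$ is distinct from both of these ($\sigma(r)$ is not its own neighbour, and $\sigma(p)$, being isolated, is not adjacent to $\sigma(r)$). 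Consequently the occupants of $\Neigh{G}(\sigma(r))$ are identical in $\sigma$ and in $\swap{p}{r}$, so
\begin{align*}
  \util{\swap{p}{r}} \;=\; \sum_{w\in\Neigh{G}(\sigma(r))}\sat(\sigma^{-1}(w)).
\end{align*}
Because $r$ was chosen as a neighbour of $\sigma(q)$, we have $\sigma(q)\in\Neigh{G}(\sigma(r))$, so this sum contains the term $\sat(q)>0$; every other term is nonnegative since $p$ has non-negative preferences. Hence $\util{\swap{p}{r}}\ge\sat(q)>0$. On the other hand $\util{\sigma}=0$, as $\sigma(p)$ has no neighbours. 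Therefore $p$ envies $r$, contradicting that $\sigma$ is an \efArr, and the contrapositive yields the claim.

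I do not expect a genuine obstacle here; the only point requiring care is the bookkeeping that the swap leaves $\Neigh{G}(\sigma(r))$'s occupants untouched, which is exactly where the isolation of $p$ enters (used both to guarantee $r\neq p$ and to keep the relevant neighbourhood fixed), while non-negativity of $p$'s preferences is what lets us discard all summands other than $\sat(q)$.
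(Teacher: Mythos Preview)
Your proof is correct and follows exactly the same approach as the paper: assume $p$ is isolated while some $q$ with $\sat(q)>0$ is non-isolated, and exhibit envy of $p$ towards the agent seated next to $q$. The paper's proof is a one-liner that simply asserts ``$p$ envies the agent assigned to a neighbor of $\sigma(q)$''; you have unpacked precisely this assertion, including the bookkeeping that the swap leaves the occupants of $\Neigh{G}(\sigma(r))$ unchanged (which is where isolation of $\sigma(p)$ is used) and that $r\notin\{p,q\}$.
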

  \appendixproofwithstatement{lem:EFA_non-neg_cond}{\obsefanonneg}{
    \begin{proof}
      Given an envy-free arrangement $\sigma$ and agent $p \in \agents$ with non-negative preferences, suppose this statement is not satisfied, i.e., $p$ is isolated and there is a non-isolated agent $q \in \agents$ with $\sat(q)>0$. 
      Clearly, $\util[p]{\sigma}=0$, i.e., $p$ envies the agent assigned to a neighbor of $\sigma(q)$.
    \end{proof}}

  \noindent  By \cref{lem:EFA_non-neg_cond}, deciding envy-freeness is easy for clique-graphs when the preferences are additionally symmetric.
    
  \newcommand{\efacliquenonnegsymm}{%
    For clique-graphs, and non-negative and symmetric preferences, \EFA is 
    polynomial-time solvable. %
  }
\statementarxiv{proposition}{thm:EFA_clique_nonneg+symm}{\efacliquenonnegsymm}
  \appendixproofwithstatement{thm:EFA_clique_nonneg+symm}{\efacliquenonnegsymm}{
    \begin{proof}
      Let $I = (\agents, (\sat)_{p\in \agents}, G)$ be an instance of \EFA, where the seat graph $G$ consists of a clique of size $\nonisolated$ and isolated vertices.       
      Since the agent's preferences are symmetric, each weakly connected component of the preference graph~$\prefgraph$ is a strongly connected component. 
      Therefore, in the rest of this proof we will refer to a weakly and strongly connected component as a component of~$\prefgraph$. 
      
      Because all seats in the clique are indifferent and the preferences are non-negative, only an isolated agent can envy a non-isolated agent.  
      By \cref{lem:EFA_non-neg_cond}, we obtain for an \efArr that the agents in each component are all assigned to either isolated or non-isolated vertices. 
      
      Hence, finding an \efArr of $I$ reduces to finding a subset of the components of~$\prefgraph$, where the size of the components sums up to exactly $\nonisolated$. 
      This can be solved using dynamic programming. 
      
      Let $C_1, \ldots, C_m$ be an arbitrary ordering of the components of~$\prefgraph$.
      We define a DP table~$T$ where an entry $T[i, k']$ is true if there is $S \subseteq[i]$ where the sum of the sizes of the components of~$S$ is exactly~$k'$. 
      We start filling the table for $i = 1$:
      \begin{align*}
        T[1, k'] = \begin{cases}
          \textit{true} & \text{if } k' = |C_1| \text{ or } k' = 0, \\
          \textit{false} & \text{otherwise}.
        \end{cases}
      \end{align*}
      
      In each further step the considered component $C_i$ is completely assigned to either isolated or non-isolated vertices.
      Therefore, the following recurrence holds:
      \begin{align*}
        T[i, k'] = T[i-1, k'-|C_i|] \vee T[i-1, k'].
      \end{align*}
      We return yes if $T[m, k]$ is true, no otherwise.
      Since the preference graph has $\Oh(n)$ components and $k \leq n$, 
      computing this table's entries can be done in $\Oh(n^2)$ time. 
    \end{proof}
  }

  By \cref{obs:xp-k}, \EFA\ is polynomial-time solvable for constant~$\nonisolated$.
  In the next two theorems, we show that this result cannot be improved to obtain \fpt\ algorithms by providing a parameterized reduction from either \Clique\ or \IS (wrt.\ the solution size~$h$).
 We introduce a novel all-or-nothing gadget (see \cref{fig:EFA_k_path-cycle_symm} for an example) to enforce that only $f(h)$ many copies of the all-or-nothing gadgets can be non-isolated, which correspond to a solution of size $h$. 
\ifarxiv
 \newE{As the name suggests, either ``all'' agents in the gadget are assigned to non-isolated seats or ``nothing'' from the gadget is assigned to non-isolated seats.}
\fi 
 \todoE{review: give more intuition about all-or-nothing gadget? Update: added a sentence in arxiv version}
  \newE{In addition, \cref{thm:EFA_k_matching+clique+path-cycle_bin} corrects an error by Bodlaender et al.~\shortcite{bodlaender2020tech} and shows that \EFA\ remains \NPh\ for matching-graphs and strict preferences.
  A crucial observation in this setting is that not every non-isolated agent needs to be matched with his most preferred agent.}
  
  \newcommand{\efakbinstrictwhard}{%
    For each considered seat graph class, \EFA is \Wh\ wrt.\ $\nonisolated$ even if the preferences are binary or strict. %
  }
\statementarxiv{theorem}{thm:EFA_k_matching+clique+path-cycle_bin}{\efakbinstrictwhard}
  \appendixproofwithstatement{thm:EFA_k_matching+clique+path-cycle_bin}{\efakbinstrictwhard}{
    \begin{proof}
      \noindent   \textbf{Binary preferences.}		First, we consider the case with matching-graphs. %
      We prove this via a parameterized reduction from the \Wh\ problem \Clique, parameterized by the solution size.
      
      Let $(\hat{G}, h)$ denote an instance of \Clique.
      Without loss of generality, we assume that $h$ is even. 
      We construct an instance~$I$ of \EFA in the following way (see \cref{fig:EFA_k_matching_bin} for two adjacent vertices $v_i, v_j \in V(\hat{G})$):
      \begin{compactitem}[--]
        \item For each vertex~$v_i \in V(\hat{G})$, create $h^2$ \myemph{vertex-agents} $p_i^1, \ldots, p_i^{h^2}$, and set $\sat[p_i^z](p_i^{z+1})=\sat[p_i^{z+1}](p_i^z)=1$ for all~$z \in [h^2\!-\!1]$.
        \item For each edge~$e_\ell \in E(\hat{G})$, create one \myemph{set-agent} $q_{\ell}$ and set $\sat[p_i^1](q_{\ell}) = 1$ if $v_i \in e_\ell$.
        \item The non-mentioned preferences are set to zero.
      \end{compactitem}

      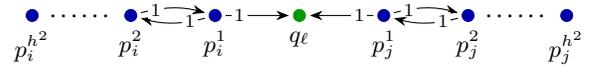
\begin{figure}[t]
      	\centering
      	\begin{tikzpicture}[>=stealth', shorten <= 1.5pt, shorten >= 1.5pt]
      	\def \xs {6ex}
      	\def \ys {4ex}

      	\node[nodeW] (q) {};
      	\node[below = 0pt of q] {\footnotesize$q_\ell$};
      	
      	\foreach \x / \pos in {i/left,j/right} {
      		\node[nodeU, \pos = \xs of q] (p\x1) {};
      		\node[below = 0pt of p\x1] {\footnotesize$p_\x^1$};
      		
      		\node[nodeU, \pos = \xs of p\x1] (p\x2) {};
      		\node[below = 0pt of p\x2] {\footnotesize$p_\x^2$};
      		
      		\node[nodeU, \pos = 1.2*\xs of p\x2] (p\x3) {};
      		\node[below = 0pt of p\x3] {\footnotesize$p_\x^{h^2}$};
      		
      		\path (p\x2) -- node[auto=false]{\ldots\ldots} (p\x3);
      	}

      	\begin{scriptsize}
      	\foreach \x in {i,j} {
      		\draw[-Stealth] (p\x1) edge  node[pos=0.25, fill=white, inner sep=1pt] {$1$} (q);
      		
      		\draw[-Stealth] (p\x1) edge[bend left=15]  node[pos=0.25, fill=white, inner sep=1pt] {$1$} (p\x2); 
      		\draw[-Stealth] (p\x2) edge[bend left=15]  node[pos=0.25, fill=white, inner sep=1pt] {$1$} (p\x1); 
      	}
      	\end{scriptsize}
      	\end{tikzpicture}
      	\caption{Preference graph for \cref{thm:EFA_k_matching+clique+path-cycle_bin} for each considered seat graph class, where $e_{\ell}=\{v_i, v_j\}$.}
      	\label{fig:EFA_k_matching_bin}
      \end{figure}
      The seat graph $G$ consists of $\nonisolated/2 \coloneqq  (h^3 + \binom{h}{2}) /2$ disjoint edges and $|V(\hat{G})|h^2+|E(\hat{G})|-k$ isolated vertices.
      It is straightforward that the preferences are binary and the construction can be done in polynomial time.
      
      It remains to show the correctness, i.e., $\hat{G}$ admits a size-$h$ clique~$C$ if and only if~$I$ admits an \efArr. 
      For the ``only if'' part, let $C$ be a size-$h$ clique of $\hat{G}$. 
      For each vertex $v_i \in C$ and $\ell \in [\frac{h^2}{2}]$, assign agents $p_i^{2\ell-1}, p_i^{2\ell}$ to the endpoints of the same edge in the seat graph. 
      After this step $h^3$ endpoints of edges are assigned. 
      Moreover, for every edge $e_\ell \in C$ we assign~$q_{\ell}$ to the remaining endpoints of edges in~$G$.
      The remaining agents are assigned to isolated vertices.			
      This arrangement $\sigma$ is envy-free because of the following:
      \begin{compactenum}[(i)]
        \item Every edge-agent $q_{\ell}$ is envy-free since $\sat[q_{\ell}](p) = 0$ for each $p \in \agents\setminus\{q_{\ell}\}$. 
        
        \item Every vertex-agent $p_i^z$ with $v_i \in C$, $z \in [h^2]$, is envy-free because $p_i^z$ has his maximum possible utility of one. 
        
        \item The remaining vertex-agents $p_i^z$ with $v_i \notin C$ and $z \in [h^2]$, have zero utility. 
        However, there is also no non-isolated agent towards which $p_i^z$ has a positive preference.
        Therefore, also all agents corresponding to vertices in $V(\hat{G})\setminus C$ are envy-free.
      \end{compactenum}
      
      For the ``if'' part, let~$\sigma$ be an \efArr of~$I$. 		
      By \cref{lem:EFA_non-neg_cond} it follows that for each vertex $v_i \in V(\hat{G})$, agents $p_i^1, \ldots, p_i^{h^2}$ are all assigned to either isolated or non-isolated vertices.
      Together with $h^3 + \binom{h}{2} < h^3 + h^2$, 
      we can conclude that at most $h$ different sets of vertices are assigned on the matching-edges. 
      
      Moreover, if an agent~$q_{\ell}$ for $e_\ell = \{v_i,v_j\}$ is non-isolated, then again by \cref{lem:EFA_non-neg_cond}, agents~$p_i^1$ and~$p_j^1$ (and all agents of $v_i$ and~$v_j$) are non-isolated. 
      Since $\binom{h}{2}+1$ edges are incident to at least $h+1$ vertices, at most $\binom{h}{2}$ edge-agents can be assigned to matching-edges. 
      Therefore, exactly~$h$ groups of vertex- and $\binom{h}{2}$ edge-agents are assigned to matching-edges, which have to form a clique in~$\hat{G}$. 
      
      Since matching-graphs are stars-graphs, we also obtain hardness for this graph class. 
      For a clique-, path-, and cycle-graph and binary preferences we can use the same reduction as above, but create a clique, path resp.\ cycle of size $h^3+\binom{h}{2}$ and $|V(\hat{G})|h^2+|E(\hat{G})|$ isolated vertices as seat graph. 
      The reasoning for the correctness works analogously. %
      
      \paragraph*{Strict preferences.} %
      We modify the reduction above. 
      Define $\hat{n} = |V(\hat{G})|$. 
      Again, we create for each vertex $v_i \in V(\hat{G})$ in total $h^2$ agents $p_i^1, \ldots, p_i^{h^2}$ and for each edge $e_\ell \in E(\hat{G})$ two agents $q_{\ell}^1, q_{\ell}^2$. 
      For each edge $e_\ell \in E(\hat{G})$ define the preferences as follows:
      \begin{compactitem}[--]
        \item $\sat[q_{\ell}^1](q_{\ell}^2) = \sat[q_{\ell}^2](q_{\ell}^1) = 0$.
        \item For each  $p \in \agents\setminus\{q_{\ell}^1,q_{\ell}^2\}$, set $\sat[q_{\ell}^1](p) = \sat[q_{\ell}^2](p)$ to a distinct negative value from $\{-1, \ldots, -\hat{n}h^2-2|E(\hat{G})|\}$. 
      \end{compactitem}
      For each vertex $v_i \in V(\hat{G})$ define the following preferences: 
      \begin{compactitem}[--]
        \item For each edge $e_\ell$ incident to $v_i$, set $\sat[p_i^1](q_{\ell})$ to a distinct positive value from $[\hat{n}-1]$.
        \item $\sat[p_i^{1}](p_i^2) = \sat[p_i^{2}](p_i^1) = \hat{n}$.
        \item For each $z \in [h^2/2]\setminus\{1\}$, set $\sat[p_i^{2z-1}](p_i^{2z}) = \sat[p_i^{2z}](p_i^{2z-1}) = 2$.
        \item For each $z \in [h^2/2-1]$, set $\sat[p_i^{2z}](p_i^{2z+1}) = \sat[p_i^{2z+1}](p_i^{2z}) = 1$.
        \item For each $z \in [h^2]$ and remaining agent $p$, set $\sat[p_i^{z}](p)$ to a distinct negative value from $\{-1, \ldots, -\hat{n}h^2-2|E(\hat{G})|\}$.
      \end{compactitem}
      For a path- and cycle-graph (resp.\ clique-graph) we construct an \EFA instance~$I$ as above but increase the non-negative preferences by $\hat{n}h^2+2|E(\hat{G})|$ (resp.\ $(h-1)(\hat{n}h^2+2|E(\hat{G})|)$).
      In any case, the values can be assigned arbitrary, but each preference value can appear at most once.       
      The seat graph induced by the non-isolated vertices consists of $\left(h^3+\binom{h}{2}\right)/2$ disjoint edges resp.\ a path, cycle, or clique with $h^3+2\binom{h}{2}$ vertices and with $\hat{n}h^2+|E(\hat{G})|$ isolated vertices.
      
      The correctness, i.e., $\hat{G}$ has a size-$h$ clique if and only if there is an \efArr, can be shown in a similar fashion as for the binary case. 
    \end{proof}
  }

  \newcommand{\thmefakhard}{%
    For clique-, stars-,  path-, and cycle-graphs, and for symmetric preferences, \EFA is \Wh\ wrt.\ $\nonisolated$.
  }
  \newcommand{\clmefaksymm}{
    Every \efArr~$\sigma$ satisfies:
    \begin{compactenum}[(i)]
      \item For each edge $e_\ell \in E(\hat{G})$, agents $q_{\ell}^2, \ldots, q_{\ell}^{h^4}$ are always assigned to isolated vertices.
      
      \item\label{efasymk-allornothing} If a vertex-agent~$p_i^z$ with $v_i \in V(\hat{G}), z \in [h(h-1)]$ is non-isolated, then all agents from $\{p_i^1, \ldots, p_i^{\newH{h(h-1)}}\}$ are non-isolated.   
      Moreover, if such a set of agents is non-isolated, then the seats of $p_i^{sh+z}$ and $p_i^{sh+z+1}$ for $s\in \{0\}\cup [h-2]$, $z\in [h-1]$ are adjacent. 
      
      \item\label{efasymk-edgepair} If $q_{\ell}^0$ or $q_{\ell}^1$ is non-isolated for some $e_\ell \in E(\hat{G})$, then~$q_{\ell}^0$ and $q_{\ell}^1$ are non-isolated and their seats are adjacent. 
      
      \item\label{efasymk-edgevertexset} If $q_{\ell}^0$ and $q_{\ell}^1$ are non-isolated for $e_\ell  = \{v_i,v_j\}\in E(\hat{G})$, then both~$q_{\ell}^0$ and~$q_{\ell}^1$ are adjacent in~$\sigma$ to vertex-agents corresponding to $v_i$ or $v_j$. 
      In particular, $q_{\ell}^0$ and~$q_{\ell}^1$ are adjacent to $p_i^{sh+1}$ or $p_j^{sh+1}$ with $s \in \{0\} \cup [h-2]$, where the next $h$ seats on the path (resp.\ cycle) are assigned to $p_i^{sh+1}, \ldots, p_i^{(s+1)h}$ or $p_j^{sh+1}, \ldots, p_j^{(s+1)h}$. 
    \end{compactenum}
  }

\statementarxiv{theorem}{thm:EFA_k_clique+stars+path-cycle_symm}{\thmefakhard}

\ifarxiv
  \begin{proof} 
  	\textbf{Path- and cycle-graphs. }
  	We provide
\else
  \begin{proof}[Proof sketch]
    We only show the case with path- and cycle-graphs via
\fi 		
    a parameterized reduction from
    \ifarxiv the \Wh\ problem \Clique, parameterized by the solution size~\cite{DF13}.
    \else
    \Clique, parameterized by the solution size~\cite{DF13}. 
    \fi
    Let $(\hat{G},h)$ be an instance of \Clique. 
    For each \mbox{$v_i \in V(\hat{G})$,} create~$h(h-1)$ \myemph{vertex-agents} $p_i^1, \ldots, p_i^{h(h-1)}$. 
    For each $e_{\ell} \in E(\hat{G})$, create $h^4+1$ \myemph{edge-agents} $q_{\ell}^0, q_\ell^1, \ldots, q_\ell^{h^4}$.
    
    Since the preferences will be symmetric, we only specify one value for each pair of agents (see \cref{fig:EFA_k_path-cycle_symm} for the corresponding preference graph).
    For each~$(e_{\ell},v_i)\in E(\hat{G}) \times V(\hat{G})$, we do the following:
    Set $\sat[p_i^{sh+z}](p_i^{sh+z+1})$ = $z$ for each $(z,s)\in [h-1]\times \{0, \ldots, h-2\}$, and if $s\neq h-2$, then set $\sat[p_i^{sh+2}](p_i^{(s+1)h+2}) = 1$. %
    For each $z \in \{0\}\cup [h^4-1]$, set $\sat[q_{\ell}^{z}](q_{\ell}^{z+1}) = 1$ and $\sat[q_{\ell}^{0}](q_{\ell}^{2}) = \sat[q_{\ell}^{3}](q_{\ell}^{h^4}) = 1$.
    For each~$s\in \{0\}\cup\! [h-2]$, set $\sat[q_{\ell}^{2}](p_i^{sh+2})\!=\!-1$ if $v_i\in e_{\ell}$. %
    The non-mentioned preferences are set to zero.
	\begin{figure}[t]
		\centering
		\begin{tikzpicture}[>=stealth', shorten <= 2pt, shorten >= 2pt]
			\hspace{-1.5mm}
			\def \xs {4ex}
			\def \ys {8ex}

			\node[nodeW] (q2) {};
			\node[nodeW, below left = 0.8*\ys and 0.5*\xs of q2] (q0) {};
			\node[nodeW, below right = 0.8*\ys and 0.5*\xs of q2] (q1) {};
			\node[nodeW, above = 0.65*\ys of q2] (q3) {};
			\node[nodeW, above right = 0.3*\ys and 1.3*\xs of q3] (q4) {};
			\node[nodeW, above left = 0.3*\ys and 1.3*\xs of q3] (q5) {};
			
			\path (q4) -- node[auto=false]{\ldots\ldots} (q5);
			
			\foreach \i / \pos / \v in {0/below/0, 1/below/1,2/{above right}/2, 3/{right}/3, 4/right/4, 5/left/h^4}{
				\node[\pos = -1pt of q\i] {\scriptsize $q_\ell^{\v}$};
			}
			
			\foreach \i / \pos / \l / \r in {i/left/h-1/2, j/right/2/h-1}{
				\node[nodeU, above \pos = 0.65*\ys and 1.7*\xs of q2] (p\i0h1) {};
				\node[below = 0pt of p\i0h1] {\scriptsize$p_\i^1$};
				
				\node[nodeU, below = 0.8*\ys of p\i0h1] (p\i1h1) {};
				\node[below = 0pt of p\i1h1] {\scriptsize$p_\i^{h+1}$};
				
				\node[nodeU, below = 0.9*\ys of p\i1h1] (p\i2h1) {};
				\node[below = 0pt of p\i2h1] {\scriptsize$p_\i^{(h-2)h+1}$};
				
				\foreach \x / \n / \m in {0h/2/h, 1h/h+2/2h, 2h/(h-2)h+2/h(h-1)}{
					\node[nodeU, \pos = 1.2*\xs of p\i\x1] (p\i\x2) {};

					\node[nodeU, \pos = 2.4*\xs of p\i\x2] (p\i\x3) {};
					\node[below = 1pt of p\i\x3] {\scriptsize$p_\i^{\m}$};
					
					\path (p\i\x2) -- node[auto=false]{\color{gray}$\overset{\l, \ldots, \r}{\ldots\ldots\ldots}$} (p\i\x3); 
				}
				\node[below \pos = 1pt and -3pt of p\i0h2] {\scriptsize$p_\i^{2}$};
				\node[below = 1pt of p\i1h2] {\scriptsize$p_\i^{h+2}$};
				\node[below \pos = 1pt and -12pt of p\i2h2] {\scriptsize$p_\i^{(h-2)h+2}$};
			
				\path (p\i1h2) -- node[pos=0.55, auto=false]{\vdots} (p\i2h2);
			}

			\begin{scriptsize}
			\draw[Stealth-Stealth] (q0) edge[bend left = 35]  node[pos=0.5, fill=white, inner sep=1pt] {$1$} (q1); 
			\foreach \x / \y / \b in {1/2/0, 2/3/0, 3/4/10, 3/5/-10, 0/2/0} {
				\draw[Stealth-Stealth] (q\x) edge[bend left=\b] node[pos=0.5, fill=white, inner sep=1pt] {$1$} (q\y); 
			}
			
			\foreach \i / \b in {i/right, j/left}{					
				\draw[Stealth-Stealth] (q2) edge[bend \b=50, myRed] node[pos=0.5, fill=white, inner sep=1pt] {$-1$} (p\i0h2);
				\draw[Stealth-Stealth] (q2) edge[bend \b=17, myRed] node[pos=0.5, fill=white, inner sep=1pt] {$-1$} (p\i1h2);
				\draw[Stealth-Stealth] (q2) edge[bend \b=-10, myRed] node[pos=0.60, fill=white, inner sep=1pt] {$-1$} (p\i2h2);
			}
			
			\foreach \i in {i, j}{					
				\foreach \x in {0h,1h,2h}{	
					\draw[Stealth-Stealth] (p\i\x1) edge node[pos=0.5, fill=white, inner sep=1pt] {$1$} (p\i\x2);
				}
				\draw[Stealth-Stealth] (p\i0h2) edge node[pos=0.5, fill=white, inner sep=1pt] {$1$} (p\i1h2);
			}
			\end{scriptsize}
		\end{tikzpicture}
		\caption{Preference graph for \cref{thm:EFA_k_clique+stars+path-cycle_symm} for path-\,or\,cycle-graphs, where $e_{\ell}=\{v_i, v_j\}$. 
		The vertex-agents sets $\{p_i^1, \ldots, p_i^{h(h-1)}\}$ and $\{p_j^1, \ldots, p_j^{h(h-1)}\}$ are all-or-nothing gadgets, see \cref{claim:EFA_k_clique+stars+path-cycle_symm}\eqref{efasymk-allornothing}.}
		\label{fig:EFA_k_path-cycle_symm}
	\end{figure}
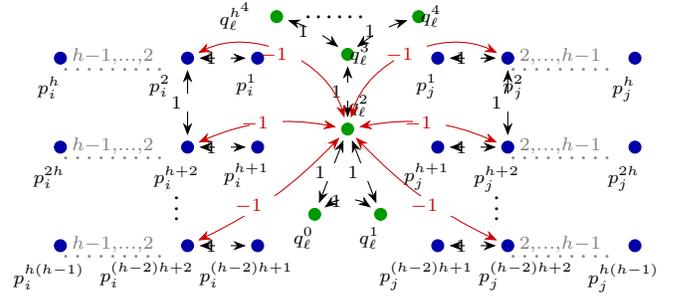
    The seat graph~$G$ consists of a path (resp.\ cycle) with $\nonisolated\coloneqq h^2(h-1)+h(h-1)$ vertices and $|V(\hat{G})|h(h-1)+|E(\hat{G})|(h^4+1)-\nonisolated$ isolated vertices.
    
    It remains to show that $\hat{G}$ admits a size-$h$ clique if and only if~$I$ admits an \efArr.	For the ``only if'' part, let~$C$ be a size-$h$ clique. 
    For the path we begin assigning at one of the endpoints; for the cycle we can begin at any non-isolated vertex.
    For each edge $e_\ell = \{v_i,v_j\} \in C$ and some (not yet used) $z,z' \in [h-1]$, assign agents $p_i^{zh}, \ldots, p_i^{(z-1)h+1}, q_{\ell}^0, q_\ell^1, p_j^{(z'-1)h+1}, \ldots, p_j^{z'h}$ in this order to the path (resp.\ cycle). 
    Since each vertex $v_i \in C$ is incident to exactly $h-1$ edges in~$C$,
    we can find such $z,z' \in [h-1]$. 
    The remaining agents are assigned to isolated vertices.			
    This arrangement is envy-free because: %
    \begin{inparaenum}[(i)]
      \item Every vertex-agent $p_i^z$ with $v_i \in C, z \in [h(h-1)]$ has his maximum possible utility.
      \item Vertex-agents $p_i^z$ with $v_i \notin C, z \in [h(h-1)]$ are envy-free since there is no non-isolated agent towards which~$p_i^z$ has a positive preference.
      \item All edge-agents $q_\ell^0, q_{\ell}^1$ with $e_\ell \in C$ are envy-free since they have maximum possible utility one.
      \item The remaining edge-agents assigned to isolated vertices either have no non-isolated agent towards which they have a positive preference, or the corresponding agents $q_{\ell}^0, q_{\ell}^1$ are assigned in such a way, that $q_{\ell}^2$ does not envy his neighbors. 
    \end{inparaenum}
    
    Before we prove the ``if'' part, we observe the following.
    \begin{claim}[\appendixsymb]\label{claim:EFA_k_clique+stars+path-cycle_symm}
      \clmefaksymm
    \end{claim}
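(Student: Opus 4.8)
The plan is to prove the four items in the stated order, each relying on the earlier ones, with two tools doing most of the work. The first is \cref{lem:EFA_non-neg_cond}, used directly and in contrapositive form: if an agent $w$ has non-negative preferences and some agent it values positively is non-isolated, then $w$ is non-isolated. The second is a \emph{swap argument}: in an envy-free arrangement~$\sigma$, if an agent $p$ has utility $t$ and a positively-valued neighbour $w$ of $p$ is non-isolated, then $p$ may move onto the seat of any $G$-neighbour of $\sigma(w)$; provided that seat carries no agent that $p$ values negatively, $p$ obtains utility at least $\sat[p](w)$ there, so $t\ge\sat[p](w)$. I will also use two facts immediate from the definition of the preferences: the positive part of $\prefgraph$ restricted to $\{q_\ell^3,\dots,q_\ell^{h^4}\}$ is a single cycle on $h^4-2>\nonisolated$ vertices, and the positive part restricted to $\{p_i^1,\dots,p_i^{h(h-1)}\}$ is connected --- it is the comb built from the $h-1$ length-$h$ block-paths, glued along their second vertices $p_i^{sh+2}$.

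For item~(i): all of $q_\ell^3,\dots,q_\ell^{h^4}$ have non-negative preferences, so the contrapositive of \cref{lem:EFA_non-neg_cond} propagates ``non-isolated'' around their cycle; since the cycle has more than $\nonisolated$ vertices, none of them is non-isolated, and then $q_\ell^2$, being valued positively by the now-isolated non-negative-preference agent $q_\ell^3$, is isolated by \cref{lem:EFA_non-neg_cond} as well. For the all-or-nothing part of item~(ii): if some $p_i^z$ were non-isolated and some $p_i^{z'}$ isolated, connectivity of the comb yields an edge of its positive part joining a non-isolated $p_i^a$ to an isolated $p_i^b$; if $p_i^b$ has non-negative preferences this contradicts \cref{lem:EFA_non-neg_cond}, and otherwise $p_i^b$ is one of the agents $p_i^{sh+2}$, whose only negatively-valued agents are the $q_\ell^2$'s incident to $v_i$, all isolated by item~(i), so the swap argument applies to $p_i^b$ verbatim and again gives a contradiction. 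Hence all of $p_i^1,\dots,p_i^{h(h-1)}$ share one status; assuming they are non-isolated, every positive neighbour of each $p_i^{sh+z}$ is non-isolated, so the swap bound gives $\util[p_i^{sh+z}]{\sigma}\ge z$, and with only two seats available this forces $p_i^{sh+z}\sim p_i^{sh+z+1}$ for $3\le z\le h-1$ and $p_i^{sh+1}\sim p_i^{sh+2}$ (the latter because $p_i^{sh+2}$ is the unique positive neighbour of $p_i^{sh+1}$). The only adjacency not yet obtained is $p_i^{sh+2}\sim p_i^{sh+3}$: the swap bound leaves open that $p_i^{sh+2}$ is matched instead to two weight-$1$ cross-block partners, and ruling this out needs a global argument --- such a cross-block chain splits every block it touches, and counting occupied seats against $\nonisolated=h^2(h-1)+h(h-1)$ together with items~(iii) and~(iv) (each active edge consumes a whole block of each of its endpoints) shows it cannot be completed to an envy-free arrangement.

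For item~(iii): $q_\ell^0$ and $q_\ell^1$ have non-negative preferences and value each other positively, so \cref{lem:EFA_non-neg_cond} makes them share status; if one is non-isolated but they are not adjacent, then $q_\ell^0$ has utility $0$ (its only positive neighbours are $q_\ell^1$, not adjacent, and the isolated $q_\ell^2$), and the swap argument against a neighbour of $q_\ell^1$ produces an envied seat. For item~(iv): by item~(i), $q_\ell^2$ is isolated with utility $0$, hence must not envy the non-isolated $q_\ell^0$; swapping $q_\ell^2$ onto $\sigma(q_\ell^0)$ would make it adjacent to $q_\ell^1$ and to $q_\ell^0$'s other neighbour $r$, so $\sat[q_\ell^2](q_\ell^1)+\sat[q_\ell^2](r)\le 0$, forcing $r$ to be an agent $q_\ell^2$ dislikes, i.e.\ a vertex-agent of $v_i$ or $v_j$; the same holds for $q_\ell^1$. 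Matching these adjacencies against the block layout from item~(ii) then gives the precise picture, namely that $q_\ell^0$ and $q_\ell^1$ sit at the ends of one whole block of $v_i$ and one whole block of $v_j$.

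The step I expect to be the main obstacle is the last adjacency of item~(ii). Every other adjacency drops out of a one-line swap argument, but for $p_i^{sh+2}\sim p_i^{sh+3}$ the two competing configurations give the agent exactly the same utility, so separating them demands a genuinely global counting argument interlocking items~(iii) and~(iv) with the exact value of $\nonisolated$; carrying out this bookkeeping correctly --- including the small-$h$ boundary cases and the path-versus-cycle distinction --- is where most of the effort will go.
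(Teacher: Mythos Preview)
Your overall plan mirrors the paper's proof: items (i), (iii), and the all-or-nothing half of (ii) are handled exactly as the paper does, via \cref{lem:EFA_non-neg_cond} plus a one-step swap, and for (iv) both you and the paper reason through swaps of the isolated agent $q_\ell^2$. The one genuine divergence is the adjacency $p_i^{sh+2}\sim p_i^{sh+3}$ in (ii). The paper dispatches every $z$ with the single sentence ``$\sat[p_i^{sh+z}](p_i^{sh+z+1})$ is the largest positive preference value of $p_i^{sh+z}$, hence if not adjacent he will have an envy''; you correctly notice that for $z=2$ this does not go through, because $p_i^{sh+2}$ can sit between $p_i^{sh+1}$ and a cross-block partner $p_i^{(s\pm1)h+2}$ with utility exactly $2$, and then every swap onto a seat adjacent to $p_i^{sh+3}$ yields only utility $\ge 2$, not $>2$. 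So your diagnosis is sharper than the paper's here; the paper simply glosses over this case.

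That said, your proposed global counting fix is only a sketch, and before investing in it you should confront a tension it hides. In your argument for (iv) you swap $q_\ell^2$ onto $\sigma(q_\ell^0)$ and deduce that $q_\ell^0$'s outer neighbour $r$ is an agent $q_\ell^2$ dislikes --- but the only such agents are the $p_i^{sh+2}$'s and $p_j^{s'h+2}$'s, which by the block layout of (ii) are \emph{interior} to their blocks, already flanked by $p_i^{sh+1}$ and $p_i^{sh+3}$. Thus (ii) and your version of (iv), taken together, say that no $q_\ell^0$ can be non-isolated at all. The paper's proof of (iv) instead swaps $q_\ell^2$ onto $\sigma(r)$, concluding only that the agent at distance two from $q_\ell^0$ is some $p_i^{sh+2}$, whence $r=p_i^{sh+1}$ via (ii); but your more direct swap onto $\sigma(q_\ell^0)$ is equally legitimate and gives the incompatible conclusion $r=p_i^{sh+2}$. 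You need to resolve this before the global bookkeeping can be trusted: check whether the construction in fact admits envy-free arrangements using $h+1$ vertex-sets and \emph{no} edge-agents (note that $\nonisolated=h(h-1)(h+1)$ is exactly $(h+1)$ times the size of one vertex-gadget), in which case (iv) is vacuous and the reduction itself needs repair, or locate the missing local constraint that kills the cross-block configuration without the detour through (iii)/(iv).
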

    \appendixproofwithstatement{claim:EFA_k_clique+stars+path-cycle_symm}{\clmefaksymm}
    {\begin{proof}[Proof of \cref{claim:EFA_k_clique+stars+path-cycle_symm}]
        \renewcommand{\qedsymbol}{$\diamond$}
        \begin{compactenum}[(i)]
          \item Suppose there is an agent $q_{\ell}^z$ for $e_\ell \in E(\hat{G})$, $z \in [h^4]\setminus\{1\}$assigned to the path (resp.\ cycle). 
          Then, by \cref{lem:EFA_non-neg_cond} all agents $q_{\ell}^2, \ldots, q_{\ell}^{h^4}$ have to be assigned to the path (resp.\ cycle).
          However, since $\nonisolated < h^4-1$ for $h>1$, there are not enough vertices in the path (resp.\ cycle).
          
          \item By the previous statement, we know that for each edge $e_\ell \in E(\hat{G})$, agent~$q_\ell^2$ is isolated. 
          Hence, the first part of this statement follows from \cref{lem:EFA_non-neg_cond}.
          If the set of agents for a vertex $v_i \in V(\hat{G})$ is non-isolated, then the seats of $p_i^{zh+s}, p_i^{zh+s+1}$ for $z\in \{0\}\cup [h-2]$, $s\in [h-1]$ are adjacent because $\sat[p_i^{zh+s}](p_i^{zh+s+1})$ is the largest positive preference value of $p_i^{zh+s}$. 
          Hence, if~$p_i^{zh+s}$ is not assigned next to $p_i^{zh+s+1}$, he will have an envy.
          
          \item This follows from \cref{lem:EFA_non-neg_cond}. 
          
          \item By the previous statement we know that agents $q_{\ell}^0$ and $q_{\ell}^1$ are both either isolated or non-isolated and in the latter case, their seats are adjacent. 
          If $q_{\ell}^0$ (resp.\ $q_{\ell}^1$) is not adjacent to a vertex-agent, then he has to be adjacent to an edge-agent, which implies that $q_\ell^2$ envies $q_{\ell}^0$ (resp.\ $q_{\ell}^1$). 
          To make $q_\ell^2$ envy-free, the next agent but one from $q_{\ell}^0$ (resp.\ $q_{\ell}^1$) has to be some $p_i^{zh+2}$ or $p_j^{zh+2}$. 
          Combining this with the second statement of this claim proves this statement. \qedhere
        \end{compactenum}
      \end{proof}
    }
    \noindent By \cref{claim:EFA_k_clique+stars+path-cycle_symm}\eqref{efasymk-allornothing}, at most $h$ different groups of vertex-agents can be assigned to the path (resp.\ cycle), i.e., non-isolated.
    \newH{By our bound on~$\nonisolated$, at least $2\binom{h}{2}$ many edge-agents are non-isolated. 
    By \cref{claim:EFA_k_clique+stars+path-cycle_symm}\eqref{efasymk-edgepair}, at least $\binom{h}{2}$ many pairs of edge-agents of the form $\{q_{\ell}^0, q_{\ell}^1\}$ are non-isolated.
    Let $E'\subseteq E(\hat{G})$ denote the set of edges that correspond to the non-isolated edge-agents. 
    By \cref{claim:EFA_k_clique+stars+path-cycle_symm}\eqref{efasymk-edgevertexset}, all vertex-agents that are ``incident'' to the edges in~$E'$ must also be non-isolated.
    Since only $h$ different groups of vertex-agents can be non-isolated, this corresponds to a clique of size $h$ in $\hat{G}$.
  }
    \appendixcontinue{thm:EFA_k_clique+stars+path-cycle_symm}{}{\thmefakhard}{
    \iflong
      We continue with the proof.
    \fi 
      
      \paragraph*{Clique-graphs.} %
      We provide a parameterized reduction from the \Wh\ problem \IS parameterized by the solution size. 
      Let $(\hat{G}, h)$ be an instance of \IS. 		
      
      For each vertex $v_i \in V(\hat{G})$, we create one agent named $p_i$ and set $\sat[p_i](p_j)=\sat[p_j](p_i) = -1$ if $\{v_i,v_j\} \in E(\hat{G})$; otherwise the preference values are set to zero.
      The seat graph~$G$ consists of a clique of size $\nonisolated\coloneqq h$ and $|\hat{V}|-h$ isolated vertices.
      It is straightforward to verify that~$\hat{G}$ has a size-$h$ independent set if and only if there is an \efArr~$\sigma$.
      
      \paragraph*{Stars-graphs.}
      We also reduce from \IS and use the fact that \IS remains \Wh, parameterized by the solution size $h$ even on $r$-regular graphs~\cite{Cai2008}, where $r > h$. %
      For each vertex $v_i \in V(\hat{G})$ (resp.\ each edge $e_{\ell} \in E(\hat{G})$), create one vertex-agent $p_i$ (resp.\ edge-agent $q_{\ell}$). 
      Finally, create a special agent~$x$ which can only be assigned to the star center.
      Summarizing, the agent set is $\agents=\{x\}\cup \{p_i\mid v_i\in V(\hat{G})\}\cup \{q_{\ell}\mid e_{\ell}\in E(\hat{G})\}$.
      
      Now, we define the symmetric preferences. %
      For each edge $e_{\ell} \in E(\hat{G})$ and each agent~$u \in 
      \agents\setminus\{q_\ell\}$, we set $\sat[q_{\ell}](u) = \sat[u](q_{\ell}) = h-1$ if $u$ corresponds to the endpoint of $e_{\ell}$, %
      and $\sat[q_{\ell}](u) = \sat[u](q_{\ell}) = -1$ otherwise. 
      The non-mentioned preferences are set to zero.		
      The seat graph consists of a single star with~$h$ leaves, i.e., $\nonisolated\coloneqq h+1$, and $|V(\hat{G})|+|E(\hat{G})|-k$ isolated vertices. 
      
      Before we continue with the correctness, we observe that in every \efArr the agents assigned to the leaves of the star have a non-negative preference towards the center-agent. 
      Otherwise they envy every isolated agent.
      Since for each edge-agent $q_{\ell}$ there are only two other agents with non-negative preferences towards $q_{\ell}$, no edge-agent can be assigned to the star center in an \efArr. 
      Moreover,  since each vertex in $\hat{G}$ has more than~$h$ neighbors, no vertex-agent can be assigned to the center in an \efArr as there are only $h$ leaves.
      Hence, in every \efArr agent~$x$ has to be assigned to the center. 
      Furthermore, only vertex-agents can be assigned to leaves because $\sat[q_{\ell}](x)<0$ for each $e_{\ell} \in E(\hat{G})$.
      
      This brings us to the following: 
      There exists an \efArr if and only if each edge-agent has at most one incident vertex-agent in the star's leaves, which is equivalent to $\hat{G}$ having a size-$h$ independent set. 
      \iflong \hfill \qed \fi }	
  \end{proof}

  \newcommand{\efadeltabinnphard}{%
    For each considered graph class, \EFA remains \NPh\ for binary preferences with $\maxoutdeg = 3$. %
  }
  
  The next two theorems show that achieving envy-freeness remains hard for bounded maximum out-degree and binary (resp.\ symmetric) preferences.
  The proofs are similar to the ones for \cref{thm:EFA_k_matching+clique+path-cycle_bin,thm:EFA_k_clique+stars+path-cycle_symm}.
  
  \statementarxiv{theorem}{thm:EFA_delta_matching+clique+path-cycle_bin}{\efadeltabinnphard}
  \appendixproofwithstatement{thm:EFA_delta_matching+clique+path-cycle_bin}{\efadeltabinnphard}{
    \begin{proof}
      We provide a polynomial-time reduction from \Clique for each graph class of the seat graph.
      Let $(\hat{G},h)$ be an instance of \Clique and $\hat{n}\coloneqq|V(\hat{G})|$. 
      We construct an instance~$I$ of \EFA in the following way: 
      \begin{compactitem}[--]
        \item For each vertex $v_i \in V(\hat{G})$, create $\hat{n}^2$ \myemph{vertex-agents} $p_i^1, \ldots, p_i^{\hat{n}^2}$ and set $\sat[p_i^z](p_i^{z+1})=\sat[p_i^{z+1}](p_i^z)=1$ for $z \in [\hat{n}^2-1]$.		

        \item For each edge $e_\ell = \{v_i,v_j\} \in E(\hat{G})$, create one \myemph{set-agent} $q_{\ell}$ and set $\sat[p_i^j](q_{\ell}) = \sat[p_j^i](q_{\ell}) = 1$. 
        
        \item The non-mentioned preferences are set to zero. 
      \end{compactitem}
      \cref{fig:EFA_delta_path-cycle_bin} shows this construction for two adjacent vertices $v_i, v_j \in V(\hat{G})$. 		
  		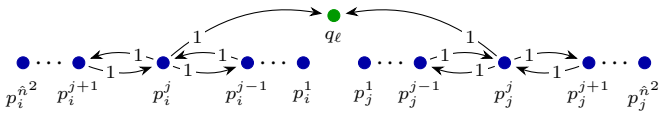
\begin{figure}[t]
  			\centering
  			\begin{tikzpicture}[>=stealth', shorten <= 1.5pt, shorten >= 2pt]
  			\def \xs {6ex}
  			\def \ys {3.2ex}

  			\node[nodeW] (q) {};
  			\node[below = 0pt of q] {\scriptsize$q_\ell$};
  			
  			\foreach \x / \pos in {i/left,j/right} {
  				\node[nodeU, below \pos = \ys and 0.3*\xs of q] (p\x1) {};
  				\node[below = 1pt of p\x1] {\scriptsize$p_\x^1$};
  				
  				\node[nodeU, \pos = 0.6*\xs of p\x1] (p\x2) {};
  				\node[below = 1pt of p\x2] {\scriptsize$p_\x^{j-1}$};
  				
  				\node[nodeU, \pos = \xs of p\x2] (p\x3) {};
  				\node[below = 1pt of p\x3] {\scriptsize$p_\x^{j}$};
  				
  				\node[nodeU, \pos = \xs of p\x3] (p\x4) {};
  				\node[below = 1pt of p\x4] {\scriptsize$p_\x^{j+1}$};
  				
  				\node[nodeU, \pos = 0.6*\xs of p\x4] (p\x5) {};
  				\node[below = 1pt of p\x5] {\scriptsize$p_\x^{\hat{n}^2}$};
  				
  				\path (p\x1) -- node[auto=false]{\ldots} (p\x2);
  				\path (p\x4) -- node[auto=false]{\ldots} (p\x5);
  			}

  			\begin{scriptsize}
  			\foreach \x / \pos in {i/left,j/right} {
  				\draw[-Stealth] (p\x3) edge[bend \pos=25] node[pos=0.2, fill=white, inner sep=1pt] {$1$} (q);
  				
  				\draw[-Stealth] (p\x2) edge[bend \pos=-20] node[pos=0.25, fill=white, inner sep=1pt] {$1$} (p\x3); 
  				\draw[-Stealth] (p\x3) edge[bend \pos=-20] node[pos=0.25, fill=white, inner sep=1pt] {$1$} (p\x2); 
  				
  				\draw[-Stealth] (p\x3) edge[bend \pos=-20] node[pos=0.25, fill=white, inner sep=1pt] {$1$} (p\x4); 
  				\draw[-Stealth] (p\x4) edge[bend \pos=-20] node[pos=0.35, fill=white, inner sep=1pt] {$1$} (p\x3); 
  			}
  			\end{scriptsize}
  			\end{tikzpicture}
  			\caption{Preference graph for \cref{thm:EFA_delta_matching+clique+path-cycle_bin} for each considered seat graph class, where $e_{\ell}=\{v_i, v_j\}$.}
  			\label{fig:EFA_delta_path-cycle_bin}
  		\end{figure}
      The seat graph~$G$ consists of a matching (i.e., stars with one leaf), clique, path resp.\ cycle with $\nonisolated \coloneqq h\hat{n}^2 + \binom{h}{2}$ vertices and $\hat{n}^3+|E(\hat{G})|-k$ isolated vertices. %
      Clearly, the preferences are binary and this construction of $I$ can be done in polynomial time. 
      Moreover, since between every pair of vertices in~$\hat{G}$ there is at most one incident edge, we obtain $\maxoutdeg \leq 3$. 
      
      It remains to show the correctness, i.e., $\hat{G}$ admits a size-$h$ clique $C$ if and only if $I$ admits an \efArr. 
      For the ``only if'' part, let $C$ be a size-$h$ clique of $\hat{G}$. 
      For each vertex $v_i \in C$, assign agents $p_i^1, \ldots, p_i^{\hat{n}^2}$ to non-isolated vertices in the following way:
      \begin{compactitem}[--]
        \item For a clique-graph, we can assign them arbitrarily. 
        \item For a matching-graph, assign for each $z \in 			[\hat{n}^2/2]$ the two agents~$p_i^{2z-1}$ and~$p_i^{2z}$ to the endpoints of the same edge.
        \item For a path- and cycle-graph assign the agents consecutively on the path (resp.\ cycle), i.e., $p_i^{z+1}$ is assigned next to~$p_i^{z}$ for $z \in [\hat{n}^2-1]$.
      \end{compactitem}
      After this step $h\hat{n}^2$ non-isolated vertices are assigned. 
      Moreover, for every edge $e_\ell \in C$ we assign $q_{\ell}$ to an arbitrary remaining non-isolated vertex in $G$. 
      All remaining agents are assigned to isolated vertices.			
      This arrangement $\sigma$ is envy-free because of the following:
      \begin{compactenum}[(i)]
        \item Every edge-agent $q_{\ell}$ is envy-free since $\sat[q_{\ell}](p) = 0$ for each $p \in \agents\setminus\{q_{\ell}\}$. 
        
        \item Every vertex-agent $p_i^z$ with $v_i \in C$, $z \in [\hat{n}^2]$, is envy-free because $p_i^z$ has his maximum possible utility. 
        
        \item The remaining vertex-agents $p_i^z$ with $v_i \notin C$ and $z \in [\hat{n}^2]$, have zero utility. 
        However, there is also no non-isolated agent towards which $p_i^z$ has a positive preference. 
        Therefore, also all agents corresponding to vertices in $V(\hat{G})\setminus C$ are envy-free.
      \end{compactenum}
      
      For the ``if'' part, let~$\sigma$ be an \efArr of~$I$. 
      By \cref{lem:EFA_non-neg_cond} we can observe that for each $v_i \in V(\hat{G})$, agents $p_i^1, \ldots, p_i^{\hat{n}^2}$ are all assigned to either isolated or to non-isolated vertices.
      Combining this with $h\hat{n}^2 + \binom{h}{2} < h\hat{n}^2 + 
      \hat{n}^2$, 
      we can conclude that at most $h$ different sets of vertices are assigned to non-isolated vertices. 
      
      Moreover, if agent~$q_{\ell}$ for an edge $e_\ell = \{v_i, v_j\} \in E(\hat{G})$ is non-isolated, then again by \cref{lem:EFA_non-neg_cond}, $p_i^j$ and~$p_j^i$ (and all created agents of $v_i$ and~$v_j$) are assigned to non-isolated vertices. 
      Since $\binom{h}{2}+1$ edges are incident to at least $h+1$ vertices, 
      we know that at most $\binom{h}{2}$ edge-agents can be non-isolated. 
      Therefore, exactly~$h$ different sets of vertex-agents and $\binom{h}{2}$ edge-agents are assigned to non-isolated vertices, which have to form a clique in~$\hat{G}$. 
    \end{proof}
  }

  \looseness=-1
  \newcommand{\efadeltasymmnphard}{%
    \EFA remains \NPh\ even for a clique-, \mbox{path-,} \mbox{cycle-,} or stars-graph and symmetric preferences with constant~$\maxoutdeg$. %
  }

\statementarxiv{theorem}{thm:EFA_delta_clique+path-cycle+stars_symm}{\efadeltasymmnphard}
  \appendixproofwithstatement{thm:EFA_delta_clique+path-cycle+stars_symm}{\efadeltasymmnphard}
  {
    \begin{proof}
      \textbf{Clique-graphs.} 
      We provide a polynomial-time reduction from \IS, which is \NPh\ even on cubic graphs~\cite{garey1979}. 
      Given an instance $(\hat{G},h)$ of \IS, we create for each vertex $v_i 
      \in V(\hat{G})$ one agent named~$p_i$. 
      We set $\sat[p_i](p_j)=\sat[p_j](p_i)=-1$ if and only if $\{v_i,v_j\} \in E(\hat{G})$. 
      Otherwise, the preference value is set to zero.
      The seat graph~$G$ consists of a clique of size $\nonisolated \coloneqq h$ and $|V(\hat{G})|-h$ isolated vertices. 
      Clearly, the preferences are symmetric and since $\hat{G}$ is cubic, it holds $\maxoutdeg = 3$.
      
      Regarding the correctness we observe that an arrangement of $I$ is 
      envy-free 
      if and only if each agent has zero utility, 
      which is equivalent to $\hat{G}$ having a size-$h$ independent set.

      \paragraph*{Path- and cycle-graphs.} 
      We provide a polynomial-time reduction from the \NPc\ problem \Clique. 
      Let $(\hat{G},h)$ be an instance of \Clique\ and 
      $\hat{n}\coloneqq|V(\hat{G})|$. 
      For each vertex $v_i \in V(\hat{G})$, we create~$h\hat{n}$ \myemph{vertex-agents} named
      $p_i^1, \ldots, p_i^{h\hat{n}}$.
      For each edge $e_\ell \in E(\hat{G})$, we create~$\hat{n}^4+1$ \myemph{edge-agents} named $q_\ell^0,q_\ell^1, \ldots, q_\ell^{\hat{n}^4}$. 
      
      Since we will have symmetric preferences, for each pair of agents we 
      only specify one value (see \cref{fig:EFA_delta_path-cycle_symm} 
      for the corresponding preference graph). 
      For each edge $e_\ell \in E(\hat{G})$, each vertex $v_i \in V 
      (\hat{G})$, do the following:
      \begin{compactitem}[--]
        \item For each $z \in [h-1]$ and $s \in \{0\}\cup [\hat{n}-1]$, set 
        $\sat[p_i^{sh+z}](p_i^{sh+z+1}) = 1$.
        
        \item For each $s \in [\hat{n}-1]$, set $\sat[p_i^{(s-1)h+2}](p_i^{sh+2}) =  1$. 
        
        \item For each $z \in [\hat{n}^4]$, set $\sat[q_{\ell}^{z}](q_{\ell}^{z-1}) = 1$,  $\sat[q_{\ell}^{0}](q_{\ell}^{2}) = 1$ and $\sat[q_{\ell}^{3}](q_{\ell}^{\hat{n}^4}) = 1$.
        
        \item Set $\sat[q_{\ell}^{2}](p_i^{(j-1)h+2}) = -1$ if $v_i \in e_\ell$.
        
        \item The non-mentioned preferences are set to zero.
      \end{compactitem}
  		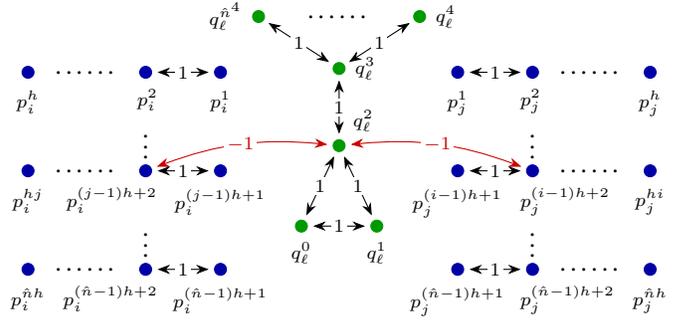
\begin{figure}[t]
  			\centering
  			\begin{tikzpicture}[>=stealth', shorten <= 2pt, shorten >= 2pt]
  			\def \xs {4ex}
  			\def \ys {6ex}

  			\node[nodeW] (q2) {};
  			\node[nodeW, below left = \ys and 0.6*\xs of q2] (q0) {};
  			\node[nodeW, below right = \ys and 0.6*\xs of q2] (q1) {};
  			\node[nodeW, above = 0.9*\ys of q2] (q3) {};
  			\node[nodeW, above right = 0.6*\ys and 1.5*\xs of q3] (q4) {};
  			\node[nodeW, above left = 0.6*\ys and 1.5*\xs of q3] (q5) {};
  			
  			\path (q4) -- node[auto=false]{\ldots\ldots} (q5);
  			
  			\foreach \i / \pos / \v in {0/below/0, 1/below/1, 3/right/3, 4/right/4, 5/left/\hat{n}^4}{
  				\node[\pos = 0pt of q\i] {\scriptsize$q_\ell^{\v}$};
  			}
  			\node[above right = 0pt of q2] {\scriptsize$q_\ell^{2}$};
  			
  			\foreach \i /\j / \pos in {i/j/left, j/i/right}{
  				\node[nodeU, above \pos = 0.9*\ys and 2.3*\xs of q2] (p\i0h1) {};
  				\node[below = 1pt of p\i0h1] {\scriptsize$p_\i^1$};
  				
  				\node[nodeU, below = 1.2*\ys of p\i0h1] (p\i1h1) {};
  				\node[below = 1pt of p\i1h1] {\scriptsize$p_\i^{(\j-1)h+1}$};
  				
  				\node[nodeU, below = 1.2*\ys of p\i1h1] (p\i2h1) {};
  				\node[below = 1pt of p\i2h1] {\scriptsize$p_\i^{(\hat{n}-1)h+1}$};
  				
  				\foreach \x / \n / \m in {0h/2/h, 1h/(\j-1)h+2/h\j, 2h/(\hat{n}-1)h+2/\hat{n}h}{
  					\node[nodeU, \pos = 1.3*\xs of p\i\x1] (p\i\x2) {};
  					\node[below \pos = 1pt and -10pt of p\i\x2] {\scriptsize$p_\i^{\n}$};
  					
  					\node[nodeU, \pos = 2.2*\xs of p\i\x2] (p\i\x3) {};
  					\node[below = 1pt of p\i\x3] {\scriptsize$p_\i^{\m}$};
  					
  					\path (p\i\x2) -- node[auto=false]{\ldots\ldots} (p\i\x3); 
  				}
  				\path (p\i1h2) -- node[pos=0.7, auto=false]{\vdots} (p\i2h2);
  				\path (p\i0h2) -- node[pos=0.7, auto=false]{\vdots} (p\i1h2);
  			}

  			\begin{scriptsize}
  			\foreach \x / \y in {0/1, 1/2, 2/3, 3/4, 3/5, 0/2} {
  				\draw[Stealth-Stealth] (q\x) edge  node[pos=0.5, fill=white, inner sep=1pt] {$1$} (q\y); 
  			}
  			
  			\foreach \i / \b in {i/right, j/left}{ 
  				\draw[Stealth-Stealth] (q2) edge[bend \b=13, myRed] node[pos=0.5, fill=white, inner sep=1pt] {$-1$} (p\i1h2);
  			}
  			
  			\foreach \i in {i, j}{					
  				\foreach \x in {0h,1h,2h}{	
  					\draw[Stealth-Stealth] (p\i\x1) edge node[pos=0.5, fill=white, inner sep=1pt] {$1$} (p\i\x2);
  				}
  			}
  			\end{scriptsize}
  			\end{tikzpicture}
  			\caption{Preference graph for \cref{thm:EFA_delta_clique+path-cycle+stars_symm} for path- and cycle-graphs, where $e_{\ell}=\{v_i, v_j\}$.}
  			\label{fig:EFA_delta_path-cycle_symm}
  		\end{figure}
      
      The seat graph consists of a path (resp.\ cycle) with $\nonisolated\coloneqq \hat{n}h^2+h(h-1)$ vertices and $h\hat{n}^2+ |E(\hat{G})|(\hat{n}+1)-k$ isolated vertices.
      Clearly, this instance $I$ of \EFA can be constructed in polynomial time. 
      Since between every pair of vertices in~$\hat{G}$ there is at most one edge, it holds $\maxoutdeg = 5$.
      
      It remains to show that $\hat{G}$ admits a size-$h$ clique if and only 
      if $I$ admits an \efArr. 		
      We start proving the ``only if'' part. 
      Let $C$ be a size-$h$ clique. 
      For the path we begin assigning at one of the endpoints, for the cycle 
      we can begin at any non-isolated vertex.
      For each edge $e_\ell = \{v_i,v_j\} \in C$ assign agents $p_i^{jh}, \ldots, p_i^{(j-1)h+1}, q_{\ell}^0, q_\ell^1, p_j^{(i-1)h+1}, \ldots, p_j^{ih}$ in this order to the path (resp.\ cycle). 
      Next, for each vertex $v_i \in V(\hat{G})$ and $z \in [n]\setminus \{j| 
      v_j \in C\}$ assign $p_i^{(z-1)h+1}, \ldots, p_i^{zh}$ in this order 
      consecutively to the path (resp.\ cycle). 
      The remaining agents are assigned to isolated vertices.	
      This arrangement is envy-free because of the following:
      \begin{compactenum}[(i)]
        \item The vertex-agents $p_i^{zh+s}$ for $v_i \in 
        C, z \in \{0\}\cup[\hat{n}-1]$, $s \in [h]$ have their 
        maximum possible utility. 
        Thus, they are envy-free.
        
        \item A vertex-agent $p_i^z, z \in [\hat{n}h]$ with $v_i 
        \notin C$ is envy-free since there is no non-isolated agent 
        towards which they have a positive preference.
        
        \item For each edge $e_\ell \in C$, agents$q_{\ell}^0, q_{\ell}^1$ are envy-free since they have their maximum possible utility. 
        
        \item The remaining edge-agents assigned to isolated vertices either have no non-isolated agent towards which they have a positive preference, or the corresponding agents $q_{\ell}^0, q_{\ell}^1$ are assigned in such a way, that $q_{\ell}^2$ does not envy his neighbors. 
      \end{compactenum}	
      
      Before we prove the ``if'' part, we observe the following properties an \efArr has to satisfy.
      \begin{claim}\label{claim:EFA_delta_clique+stars+path-cycle_symm}
        Every envy-free arrangement~$\sigma$ satisfies:
        \begin{compactenum}
          \item For each edge $e_\ell \in E(\hat{G})$ agents 
          $q_{\ell}^2, \ldots, q_{\ell}^{\hat{n}^4}$ are assigned to isolated vertices.
          
          \item\label{efa-delta-symm:clique+stars+path-cycle-vertex-gadget} If a vertex-agent $p_i^z$ with $v_i \in V(\hat{G}), z \in [\hat{n}h]$ is assigned to the path (resp.\ cycle), then all agents from $\{p_i^1, \ldots, p_i^{\hat{n}h}\}$ are assigned to the path (resp.\ cycle).
          
          \item If $q_{\ell}^0$ or $q_{\ell}^1$ is non-isolated for some $e_\ell \in E(\hat{G})$, then both agents $q_{\ell}^0, q_{\ell}^1$ are non-isolated and their seats are adjacent.

          \item If $q_{\ell}^0$ or $q_{\ell}^1$ is non-isolated 
          for some 
          $e_\ell = \{v_i, v_j\} \in E(\hat{G})$, then all agents from $\{p_i^1, \ldots, 
          p_i^{h(h-1)}\}$ 
          and from $\{p_j^1, \ldots, p_j^{h(h-1)}\}$ are non-isolated. 
        \end{compactenum}
      \end{claim}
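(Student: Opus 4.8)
The plan is to follow the proof of \cref{claim:EFA_k_clique+stars+path-cycle_symm} almost verbatim, only adjusting the counting to the new gadget sizes: each vertex $v_i$ now has $\hat{n}h$ vertex-agents, split into $\hat{n}$ blocks $p_i^{sh+1},\dots,p_i^{sh+h}$ (for $s\in\{0\}\cup[\hat{n}-1]$) that are chained internally and linked to the next block through their second elements $p_i^{sh+2}$, and each edge $e_\ell$ has $\hat{n}^4+1$ edge-agents. The whole argument rests on \cref{lem:EFA_non-neg_cond}, used mostly in its contrapositive form --- if $q$ is non-isolated, $p$ has non-negative preferences, and $\sat[p](q)>0$, then $p$ is non-isolated --- together with crude counting against the $\nonisolated=\hat{n}h^2+h(h-1)$ non-isolated seats, where we may assume $1\le h\le\hat{n}$ and $\hat{n}\ge 2$.

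\emph{First statement.} The agents $q_\ell^0,q_\ell^1,\dots,q_\ell^{\hat{n}^4}$ form a single connected component in the positive part of the preference graph (the chain $\{q_\ell^z,q_\ell^{z+1}\}$ plus $\{q_\ell^0,q_\ell^2\}$ and $\{q_\ell^3,q_\ell^{\hat{n}^4}\}$), and every $q_\ell^z$ with $z\neq 2$ has non-negative preferences. If some $q_\ell^z$ with $z\ge 3$ were non-isolated, propagating \cref{lem:EFA_non-neg_cond} along the chain forces all of $q_\ell^3,\dots,q_\ell^{\hat{n}^4}$ to be non-isolated; if $q_\ell^2$ were non-isolated, applying the observation first to $q_\ell^3$ yields the same. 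Either way at least $\hat{n}^4-2$ agents are non-isolated, contradicting $\hat{n}h^2+h(h-1)\le\hat{n}^3+\hat{n}^2<\hat{n}^4-2$. Hence $q_\ell^2,\dots,q_\ell^{\hat{n}^4}$ are all isolated. Since every negative preference of the construction points at some $q_\ell^2$, no non-isolated agent is ever adjacent to a disliked agent; in particular the conclusion of \cref{lem:EFA_non-neg_cond} stays valid for the vertex-agents $p_i^{(j-1)h+2}$ despite their single $-1$ preference, so from here on \cref{lem:EFA_non-neg_cond} may be used for all vertex-agents and for $q_\ell^0,q_\ell^1$.

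\emph{Second and third statements.} For the second: $\{p_i^1,\dots,p_i^{\hat{n}h}\}$ is connected in the positive preference graph (each block is a path, consecutive blocks share the arc $\{p_i^{(s-1)h+2},p_i^{sh+2}\}$), so if any of its agents is non-isolated, the same propagation as above makes all of them non-isolated, which in particular covers $\{p_i^1,\dots,p_i^{h(h-1)}\}$ since $h(h-1)\le\hat{n}h$. For the third: as $\sat[q_\ell^0](q_\ell^1)>0$ and both agents have non-negative preferences, \cref{lem:EFA_non-neg_cond} forces them to be simultaneously isolated or non-isolated; and if they were non-isolated on non-adjacent seats, then, their only positively-valued agents being each other and the isolated $q_\ell^2$, we would get $\util[q_\ell^0]{\sigma}=0$, so $q_\ell^0$ could move to a seat next to $q_\ell^1$ for utility at least $\sat[q_\ell^0](q_\ell^1)>0$, an envy; hence their seats are adjacent.

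\emph{Fourth statement (the main obstacle).} Suppose $q_\ell^0,q_\ell^1$ are non-isolated for $e_\ell=\{v_i,v_j\}$; by the third statement the seat line reads $\dots,r_0,q_\ell^0,q_\ell^1,r_1,\dots$. Neither $q_\ell^0$ nor $q_\ell^1$ can be an endpoint of the path, for otherwise swapping the isolated $q_\ell^2$ (utility $0$) onto that endpoint would give it utility $\sat[q_\ell^2](q_\ell^1)=\sat[q_\ell^2](q_\ell^0)=1>0$; so $r_0$ and $r_1$ exist, and each has a further neighbour $r_0',r_1'$ (again, else a $+1$-envy for $q_\ell^2$). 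Swapping $q_\ell^2$ with $r_0$ makes $q_\ell^2$ adjacent to $q_\ell^0$ ($+1$) and to $r_0'$; for $q_\ell^2$ not to envy $r_0$ we need $1+\sat[q_\ell^2](r_0')\le 0$, and the only agents with $\sat[q_\ell^2](\cdot)=-1$ are $p_i^{(j-1)h+2}$ and $p_j^{(i-1)h+2}$, so $r_0'$ is one of these; symmetrically so is $r_1'$. Since $r_0'\neq r_1'$ for every non-trivial instance, one equals $p_i^{(j-1)h+2}$ and the other $p_j^{(i-1)h+2}$, so both are non-isolated, and applying the second statement to $v_i$ and to $v_j$ finishes the proof. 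This last step is the only place where \cref{lem:EFA_non-neg_cond} no longer suffices on its own: one must trace the $-1$ preferences of $q_\ell^2$ through the seat graph while discarding the degenerate endpoint/collision cases, whereas the first three statements are near-transcriptions of \cref{claim:EFA_k_clique+stars+path-cycle_symm}.
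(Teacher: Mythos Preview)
Your proof is correct and follows the paper's approach closely. The only notable difference is in the fourth statement: the paper swaps $q_\ell^2$ into the seat of $q_\ell^0$ (resp.\ $q_\ell^1$) directly, which constrains the \emph{immediate} other neighbour $r_0$ (resp.\ $r_1$) to be $p_i^{(j-1)h+2}$ or $p_j^{(i-1)h+2}$, whereas you swap $q_\ell^2$ into the seat of $r_0$ (resp.\ $r_1$), constraining the \emph{second} neighbour $r_0'$ (resp.\ $r_1'$). Both routes work; the paper's version is slightly simpler since $r_0\ne r_1$ is automatic, while you need the extra (easy) step that $r_0'\ne r_1'$ for non-trivial instances. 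On the other hand, your treatment is more careful than the paper's in two places: you handle the endpoint cases explicitly, and you spell out why \cref{lem:EFA_non-neg_cond} continues to apply to the vertex-agents $p_i^{(j-1)h+2}$ despite their single $-1$ preference (namely, once all $q_\ell^2$ are known isolated, no non-isolated seat is ever next to a disliked agent). The paper simply invokes the observation without comment.
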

      \begin{proof}[Proof of 
        \cref{claim:EFA_delta_clique+stars+path-cycle_symm}]
        \renewcommand{\qedsymbol}{$\diamond$}
        \begin{compactenum}
          \item Suppose an agent $q_{\ell}^z$ for some $e_\ell \in E(\hat{G}), z \in [n^4]\setminus\{1\}$ is assigned to the path (resp.\ cycle). 
          Then, by \cref{lem:EFA_non-neg_cond} all agents $q_{\ell}^2, \ldots, q_{\ell}^{\hat{n}^4}$ have to be assigned to the path (resp.\ cycle).
          However, since $\nonisolated < \hat{n}^4-1$ for $\hat{n}>1$, there are not enough vertices in the path (resp.\ cycle).
          
          \item By the previous statement we know for each edge $e_\ell \in E(\hat{G})$ that agent $q_\ell^2$ is isolated. 
          Hence, this follows now from \cref{lem:EFA_non-neg_cond}.
          
          \item This follows from \cref{lem:EFA_non-neg_cond}. 
          
          \item By the previous statement we know that agents $q_{\ell}^0$ and $q_{\ell}^1$ are both either isolated or non-isolated, and in the latter case, their seats are adjacent. 
          If $q_{\ell}^0$ (resp.\ $q_{\ell}^1$) is not adjacent to a vertex-agent, then he has to be adjacent to an edge-agent, which implies that $q_\ell^2$ envies $q_{\ell}^0$ (resp.\ $q_{\ell}^1$). 
          To make $q_\ell^2$ envy-free, the next agent but one from 
          $q_{\ell}^0$ (resp.\ $q_{\ell}^1$) has to be $p_i^{(j-1)h+2}$ 
          or $p_j^{(i-1)h+2}$ for $e_\ell = \{v_i, v_j\}$. 
          Combining this with Statement~\eqref{efa-delta-symm:clique+stars+path-cycle-vertex-gadget} proves this statement. \qedhere
        \end{compactenum}
      \end{proof}
      From the second statement of this claim it follows that at most $h$ different sets of vertex-agents can be assigned to the path (resp.\ cycle). 
      
      Moreover, for each pair of non-isolated edge-agents $q_{\ell}^0, q_{\ell}^1$ the two vertex-agents $p_i^{(j-1)h+2}, p_j^{(i-1)h+2}$ need to be assigned to the two sides of $q_{\ell}^0, q_{\ell}^1$.
      Since  $\binom{h}{2} + x$ edges are incident to at least $h+1$ vertices, we can assign at most $\binom{h}{2}$ edge-agents to the path (resp.\ cycle). 
      Hence, we have to assign exactly $h$ different sets of vertex-agents to the 
      non-isolated vertices, which have to be incident to the $\binom{h}{2}$ 
      non-isolated edge-agents.
      This means the corresponding vertices and edges in $\hat{G}$ form a 
      clique. 
      
      \paragraph*{Stars-graphs.} 
      We provide a similar reduction from \Clique. 
      For each vertex $v_i \in V(\hat{G})$ create $3\hat{n}^2$ \myemph{vertex-agents} $p_i^1, \ldots, p_i^{3\hat{n}^2}$. 
      For each edge $e_\ell \in E(\hat{G})$, create $\hat{n}^4$ \myemph{edge-agents} $q_{\ell}^1, \ldots, q_{\ell}^{\hat{n}^4}$. 
      Since we will have symmetric preferences, for each pair of agents we only specify one value. 
      For each $e_\ell \in E(\hat{G})$ and $v_i \in V(\hat{G})$, do the following:
      \begin{compactitem}[--]
        \item For each $z \in \{0\}\cup[\hat{n}^2-1]$, $s \in [2]$, set 
        $\sat[p_i^{3z+s}](p_i^{{3z+s+1}}) = 1$.
        \item For each $z \in [\hat{n}^2-1]$, set 
        $\sat[p_i^{3(z-1)+2}](p_i^{{3z+2}}) = 1$.
        \item For each $z \in [\hat{n}^4-1]\setminus\{2\}$, set 
        $\sat[q_{\ell}^{z}](q_{\ell}^{z+1}) = 1$, and 
        $\sat[q_{\ell}^2](q_{\ell}^4) = \sat[q_{\ell}^3](q_{\ell}^6) = 
        \sat[q_{\ell}^{\hat{n}^4}](q_{\ell}^7) = 1$, and  
        $\sat[q_{\ell}^1](q_{\ell}^3) = \sat[q_{\ell}^1](q_{\ell}^5)= 
        -1$.
        \item $\sat[q_{\ell}^6](p_{i}^{3j}) = \sat[q_{\ell}^6](p_j^{3i})= 
        -1$ if $e_\ell = \{v_i, v_j\}$.
        \item The non-mentioned preferences are set to zero. 
      \end{compactitem}
      \cref{fig:EFA_delta_stars_symm} %
      shows this construction for two adjacent vertices $v_i, v_j \in V(\hat{G})$.
  		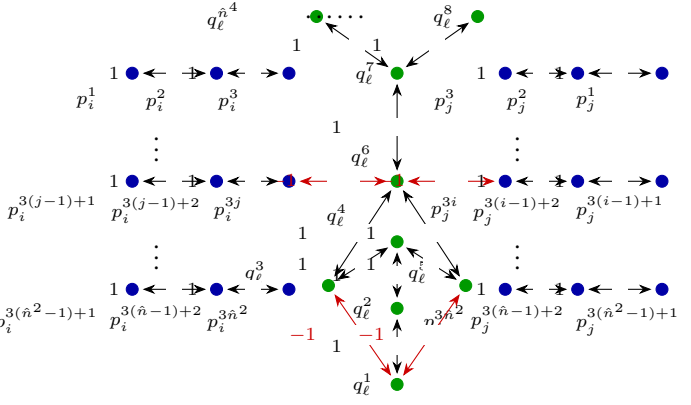
\begin{figure}[t]
  			\centering
  			\begin{tikzpicture}[>=stealth', shorten <= 1.5pt, shorten >= 1.5pt]
  			\hspace{-8mm}
  			\def \xs {5ex}
  			\def \qxs {6ex}
  			\def \ys {8ex}
  			\def \qys {4ex}

  			\node[nodeW] (q6) {};
  			\node[nodeW, below = 2*\ys of q6] (q1) {};
  			\node[nodeW, below = 1.2*\ys of q6] (q2) {};
  			\node[nodeW, below left = \ys and \xs of q6] (q3) {};
  			\node[nodeW, below = 0.5*\ys of q6] (q4) {};
  			\node[nodeW, below right = \ys and \xs of q6] (q5) {};
  			\node[nodeW, above = \ys of q6] (q7) {};
  			\node[nodeW, above right = \qys and \qxs of q7] (q8) {};
  			\node[nodeW, above left = \qys and \qxs of q7] (q9) {};
  			
  			\path (q8) -- node[auto=false]{\ldots\ldots} (q9);
  			
  			\foreach \i / \pos / \v / \l in {1/right/1/0, 3/above left/3/-6, 4/above/4/0, 5/above right/5/-6, 7/right/7/1, 8/right/8/0, 9/left/\hat{n}^4/0}{
  				\node[\pos = \l pt of q\i] {\scriptsize $q_\ell^{\v}$};
  			}
  			\node[above right = 0pt of q6] {\scriptsize $q_\ell^{6}$};
  			\node[right = 0pt of q2] {\scriptsize $q_\ell^{2}$};
  			
  			\foreach \i /\j / \pos / \rr in {i/j/left/20, j/i/right/340}{
  				\node[nodeU, \pos = 1.6*\xs of q6] (p\i1h1) {};
  				\node[below = 0pt of p\i1h1] {\scriptsize$p_\i^{3\j}$};
  				
  				\node[nodeU, above = \ys of p\i1h1] (p\i0h1) {};
  				\node[below = 0pt of p\i0h1] {\scriptsize$p_\i^3$};
  				
  				\node[nodeU, below = 1*\ys of p\i1h1] (p\i2h1) {};
  				\node[below = 0pt of p\i2h1] {\scriptsize$p_\i^{3\hat{n}^2}$};
  				
  				\foreach \x / \n / \m in {0h/2/1, 1h/3(\j-1)+2/3(\j-1)+1, 2h/3(\hat{n}-1)+2/3(\hat{n}^2-1)+1}{
  					\node[nodeU, \pos = \xs of p\i\x1] (p\i\x2) {};
  					\node[below = 0pt of p\i\x2] {\scriptsize$p_\i^{\n}$};
  					
  					\node[nodeU, \pos = 1.2*\xs of p\i\x2] (p\i\x3) {};
  					\node[below \pos = 0pt and -15pt of p\i\x3] {\scriptsize$p_\i^{\m}$}; 
  				}
  				\path (p\i1h2) -- node[pos=0.65, auto=false]{\vdots} (p\i2h2);
  				\path (p\i0h2) -- node[pos=0.65, auto=false]{\vdots} (p\i1h2);
  			}

  			\begin{scriptsize}
  			\foreach \x / \y in {1/2, 2/4, 3/4, 3/6, 4/5, 5/6, 6/7, 7/8, 7/9} {
  				\draw[Stealth-Stealth] (q\x) edge node[pos=0.5, fill=white, inner sep=1pt] {$1$} (q\y); 
  			}
  			\foreach \x / \y in {1/3, 1/5} {
  				\draw[Stealth-Stealth] (q\x) edge[myRed] node[pos=0.5, fill=white, inner sep=1pt] {$-1$} (q\y); 
  			}
  			
  			\foreach \i / \b in {i/right, j/left}{ 
  				\draw[Stealth-Stealth] (q6) edge[bend \b=0, myRed] node[pos=0.5, fill=white, inner sep=1pt] {$-1$} (p\i1h1);
  			}
  			
  			\foreach \i in {i, j}{					
  				\foreach \x in {0h,1h,2h}{	
  					\draw[Stealth-Stealth] (p\i\x1) edge node[pos=0.5, fill=white, inner sep=1pt] {$1$} (p\i\x2);
  					\draw[Stealth-Stealth] (p\i\x2) edge node[pos=0.5, fill=white, inner sep=1pt] {$1$} (p\i\x3);
  				}
  			}
  			\end{scriptsize}
  			\end{tikzpicture}
  			\caption{Preference graph for \cref{thm:EFA_delta_clique+path-cycle+stars_symm} for stars-graphs, where $e_{\ell}=\{v_i, v_j\}$.}
  			\label{fig:EFA_delta_stars_symm}
  		\end{figure}  
      The seat graph consists of $h(h-1)$ stars with three leaves, $h(\hat{n}^2-h+1)+\binom{h}{2}$ $P_3$'s (i.e., $\nonisolated \coloneqq 3h\hat{n}^2+5\binom{h}{2}$) and $\hat{n}^3+|E(\hat{G})|\hat{n}^4-k$ isolated vertices.  
      The construction of this \EFA instance $I$ can clearly be done in polynomial time and it holds $\maxoutdeg = 5$.
      
      It remains to show the correctness, i.e., $\hat{G}$ admits a size-$h$ clique if and only if $I$ has an \efArr. 		
      For the ``only if'' part let $C$ be a size-$h$ clique of $\hat{G}$. We obtain an \efArr by the following:
      \begin{compactitem}[--]
        \item For each edge $e_\ell = \{v_i, v_j\} \in C$, assign each agent from $\{p_i^{3(j-1)+1}, p_i^{3(j-1)+2}, p_i^{3j}, q_{\ell}^3\}$ resp.\ from $\{p_j^{3(i-1)+1}, p_j^{3(i-1)+2}, p_j^{3i}, q_{\ell}^5\}$ to a star with three leaves, where $p_i^{3(j-1)+2}$ resp.\ $p_j^{3(i-1)+2}$ is assigned to the center. 
        Moreover, assign $q_{\ell}^1, q_{\ell}^2, q_{\ell}^4$ in this order to a $P_3$. 
        
        \item For each $v_i\in C, z \in [\hat{n}^2]\setminus \{j|v_j \in C\}$ assign $p_i^{3(z-1)+1}, p_i^{3(z-1)+2}, p_i^{3z}$ in this order to a $P_3$. 
        
        \item All remaining agents are assigned to isolated vertices.
      \end{compactitem}
      This arrangement is envy-free because of the following:
      \begin{compactenum}[(i)]
        \item The vertex-agents assigned to leaves have their maximum possible utility. 
        The vertex-agents assigned to a center of a star have utility two, but cannot increase this in this arrangement. 
        Thus, they are envy-free.
        
        \item The vertex-agents $p_i^z$ with $v_i \notin C, z \in [3\hat{n}^2]$ and edge-agents $q_\ell^s$ with $e_\ell \notin C, s \in [\hat{n}^4]$ are envy-free since there is no non-isolated agent towards which they have a positive preference.

        \item The edge-agents corresponding to an edge in the clique are assigned in such a way that they are envy-free.
      \end{compactenum}			
      
      Conversely, let $\sigma$ be an \efArr of $I$. 
      We observe the following properties of an \efArr:
      \begin{claim} \label{claim:EFA_delta_stars_symm}
        Every \efArr satisfies:
        \begin{compactenum}[(i)]
          \item\label{EFA_delta_stars_symm:no-ql} For each $e_\ell \in E(\hat{G})$, agents $q_{\ell}^6, \ldots, q_{\ell}^{\hat{n}^4}$ are assigned to isolated vertices.
          
          \item\label{EFA_delta_stars_symm:all-or-nothing-v} For each $v_i \in V(\hat{G})$, the corresponding vertex-agents are all assigned to either non-isolated or isolated vertices.
          If they are non-isolated, then for each $z \in [\hat{n}^2]$, the seats assigned to agents $p_i^{3(z-1)+1}, p_i^{3(z-1)+2}, p_i^{3z}$ are adjacent and $p_i^{3(z-1)+2}$ is assigned to the center of a star. 
          
          \item\label{EFA_delta_stars_symm:clique} At most $h$ different sets of vertex-agents and at most $\binom{h}{2}$ different sets of edge-agents are assigned to the stars.
        \end{compactenum}
      \end{claim}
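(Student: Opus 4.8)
The plan is to fix an arbitrary \efArr $\sigma$ of $I$ and prove the three items in order, each time bootstrapping on the previous one. Two facts will be used constantly: the non-isolated part of the seat graph has $\nonisolated=3h\hat{n}^2+5\binom{h}{2}$ vertices, of which the only ones of degree $\ge 2$ are the centers of the $h(h-1)$ three-leaf stars and of the $h(\hat{n}^2-h+1)+\binom{h}{2}$ copies of $P_3$; and the only negative preferences in $I$ are the symmetric $\pm1$ pairs $\{q_\ell^1,q_\ell^3\}$, $\{q_\ell^1,q_\ell^5\}$, $\{q_\ell^6,p_i^{3j}\}$, $\{q_\ell^6,p_j^{3i}\}$ for $e_\ell=\{v_i,v_j\}$. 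I would also first record a mild sharpening of \cref{lem:EFA_non-neg_cond}, proved by the same swap: if $p$ is isolated in $\sigma$ and every agent disliked by $p$ is isolated in $\sigma$, then every $q$ with $\sat(q)>0$ is isolated in $\sigma$ (the swap in that proof moves $p$ onto an occupied non-isolated seat, whose occupied neighbors are all non-isolated and hence none is disliked by $p$).

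For item~(i) I would argue by a cascade. The agents $q_\ell^7,\dots,q_\ell^{\hat{n}^4}$ have non-negative preferences and are connected in the positive part of the preference graph (via the arcs $q_\ell^zq_\ell^{z+1}$ for $z\in\{7,\dots,\hat{n}^4-1\}$ together with $q_\ell^{\hat{n}^4}q_\ell^7$), and $q_\ell^6$ has a positive arc to $q_\ell^7$; so if any of $q_\ell^6,\dots,q_\ell^{\hat{n}^4}$ were non-isolated, the contrapositive of \cref{lem:EFA_non-neg_cond} applied repeatedly would make all of $q_\ell^7,\dots,q_\ell^{\hat{n}^4}$ non-isolated. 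That is $\hat{n}^4-6$ agents, which exceeds $\nonisolated\le 3\hat{n}^3+\tfrac{5}{2}\hat{n}^2$ once $\hat{n}\ge 4$ (which we may assume), a contradiction; hence $q_\ell^6,\dots,q_\ell^{\hat{n}^4}$ are isolated.

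For item~(ii) I would use item~(i) to note that every $q_\ell^6$ is isolated, so every vertex-agent dislikes only isolated agents and the sharpened observation applies to all of them. Since $\{p_i^1,\dots,p_i^{3\hat{n}^2}\}$ induces a connected caterpillar in the positive preference graph (spine $p_i^2-p_i^5-\cdots-p_i^{3\hat{n}^2-1}$ formed by the block centers, with the other two agents of each block as legs), the sharpening forces an all-or-nothing behavior for this set. Assuming it is non-isolated, I would fix $z$ and observe that $p_i^{3(z-1)+1}$ has a single non-zero preference ($1$ towards $p_i^{3(z-1)+2}$) while $p_i^{3z}$ has, besides $1$ towards $p_i^{3(z-1)+2}$, only a preference towards the isolated $q_\ell^6$ (never its neighbor); so if either of these agents were not seated next to $p_i^{3(z-1)+2}$ it would have utility $0$, and swapping it onto an occupied neighbor of $p_i^{3(z-1)+2}$'s (non-isolated) seat would raise its utility to $1$, contradicting envy-freeness. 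Hence both are adjacent to $p_i^{3(z-1)+2}$, whose seat therefore has degree $\ge 2$, i.e.\ is a star center, with $p_i^{3(z-1)+1},p_i^{3z}$ at two of its leaves.

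For item~(iii), the vertex-agent bound is immediate: by item~(ii) a non-isolated vertex-agent set uses $3\hat{n}^2$ non-isolated seats, and $5\binom{h}{2}<3\hat{n}^2$, so at most $h$ such sets can be non-isolated. The edge-agent bound is where the real work lies, and I would follow the template of \cref{claim:EFA_delta_clique+stars+path-cycle_symm}: prove that if any $q_\ell^z$ is non-isolated then $q_\ell^1,\dots,q_\ell^5$ are all non-isolated and $q_\ell^3,q_\ell^5$ occupy leaves of two \emph{distinct} stars whose remaining leaves must be $p_i^{3j}$ and $p_j^{3i}$ --- this being forced by the envy-freeness of the isolated $q_\ell^6$, which dislikes exactly $p_i^{3j}$ and $p_j^{3i}$ and would otherwise profit by swapping onto a seat next to $q_\ell^3$ or $q_\ell^5$ --- so that the all-or-nothing of item~(ii) puts both $\{p_i^1,\dots\}$ and $\{p_j^1,\dots\}$ among the $\le h$ non-isolated vertex-agent sets, whence at most $\binom{h}{2}$ edges can have a non-isolated edge-agent. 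I expect this last sub-claim to be the main obstacle: cascading non-isolation from an arbitrary $q_\ell^z$ up to all of $q_\ell^1,\dots,q_\ell^5$ cannot invoke \cref{lem:EFA_non-neg_cond} for $q_\ell^1,q_\ell^3,q_\ell^5$ (they carry negative preferences) and needs ad hoc swap arguments with a small case split on whether $q_\ell^1$ is isolated; and pinning down the seats of $q_\ell^1,\dots,q_\ell^5$ precisely enough that $q_\ell^6$'s envy-freeness forces \emph{both} $p_i^{3j}$ and $p_j^{3i}$ (rather than the same agent twice) requires, because of the degree-$3$ branching at a star center, a heavier analysis than in the path/cycle version.
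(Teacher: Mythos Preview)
Your outline for items (i) and (ii) is essentially identical to the paper's proof: a cascade via \cref{lem:EFA_non-neg_cond} for (i), then the observation that vertex-agents dislike only the now-isolated $q_\ell^6$ so the observation applies to them as if their preferences were non-negative; your explicit ``sharpening'' is exactly what the paper uses implicitly.

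For item (iii) the paper takes a slightly different route that sidesteps part of the difficulty you anticipate. Rather than cascading non-isolation through all of $q_\ell^1,\dots,q_\ell^5$ and then pinning down both $q_\ell^3$ and $q_\ell^5$ simultaneously, the paper argues as follows: if $q_\ell^3$ (resp.\ $q_\ell^5$) is non-isolated, then envy-freeness of the isolated $q_\ell^6$ forces it onto a leaf whose center is the specific agent $p_i^{3(j-1)+2}$ (resp.\ $p_j^{3(i-1)+2}$); separately, \cref{lem:EFA_non-neg_cond} applied to $q_\ell^2$ and $q_\ell^4$ (both with non-negative preferences) pulls $q_\ell^1,q_\ell^2,q_\ell^4$ onto a $P_3$. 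From there the paper argues that no edge-agent can occupy a three-leaf star center (not enough compatible leaves remain), so all $h(h-1)$ such centers are vertex-agents, and the extra third leaf of each such star must be an incident $q_\ell^3$ or $q_\ell^5$; the edge-agent bound then follows from the familiar ``$\binom{h}{2}+1$ edges touch $\ge h+1$ vertices'' count. Your direct plan (non-isolated edge-agent set $\Rightarrow$ both endpoint vertex-agent sets non-isolated) reaches the same conclusion, but the paper's detour through the three-leaf-star-center argument avoids the ad hoc case split on $q_\ell^1$ that you correctly flag as the main obstacle.
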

      \begin{proof}[Proof of \cref{claim:EFA_delta_stars_symm}]
        \renewcommand{\qedsymbol}{$\diamond$}
        Let~$\sigma$ be an \efArr.
        
        Statement~\eqref{EFA_delta_stars_symm:no-ql}: This follows from \cref{lem:EFA_non-neg_cond} and $k < \hat{n}^4-5$ for $\hat{n}>3$. 
        
        Statement~\eqref{EFA_delta_stars_symm:all-or-nothing-v}: By the previous statement, we know that for each vertex-agent no agents with negative preferences are assigned to the stars. 
        Hence, by \cref{lem:EFA_non-neg_cond} it follows that all agents corresponding to the same vertex are either isolated or non-isolated.
        This proves the first part of Statement~\eqref{EFA_delta_stars_symm:all-or-nothing-v}.
        
        Next, suppose, towards a contradiction, that the second part of Statement~\eqref{EFA_delta_stars_symm:all-or-nothing-v} does not hold. Then the utility of  $p_i^{3(z-1)+1}$ or $p_i^{3z}$ is zero and this agent has an envy, a contradiction.
        
        Statement~\eqref{EFA_delta_stars_symm:clique}: By Statement~\eqref{EFA_delta_stars_symm:all-or-nothing-v}, at most $h$ different sets of vertex-agents are assigned to the stars. 
        
        For each edge $e_\ell = \{v_i,v_j\} \in E(\hat{G})$, if agent $q_{\ell}^3$ resp.\ $q_{\ell}^5$ is non-isolated, then by the previous statement, he is assigned to a leaf of a star, where the center agent is $p_i^{3(j-1)+2}$ resp.\ $p_j^{3(i-1)+2}$ to make $q_\ell^6$ envy-free. 				
        Furthermore, by \cref{lem:EFA_non-neg_cond}, if one of the agents $q_{\ell}^3$ or $q_{\ell}^5$ is non-isolated, then  $q_{\ell}^1, q_{\ell}^2, q_{\ell}^4$ are also non-isolated and their assigned seats form a $P_3$. 
        
        Combining these statements we obtain that none of the agents $q_{\ell}^1, q_{\ell}^2, q_{\ell}^3$ can be assigned to the center of a star with four leaves since there are no other agents which can be assigned to the remaining leaves.
        This means that the remaining candidates for the star centers with four leaves are vertex-agents. 
        
        For these $h(h-1)$ stars the fourth leaf has to be assigned to an incident edge-agent. 
        Since $\binom{h}{2}+1$ edges are incident to at least $h+1$ vertices, we can conclude that no more than $\binom{h}{2}$ different sets of edge-agents are assigned to the stars. 				
      \end{proof}	
      From \cref{claim:EFA_delta_stars_symm} we obtain that $h$ different sets of vertex-agents and $\binom{h}{2}$ edge-agents are assigned to the seats. 
      Moreover, these non-isolated edge-agents have to be incident to the non-isolated vertex-agents. 
      
      Summing up, we obtain that~$I$ admits an \efArr if and only if $\hat{G}$ admits a size-$h$ clique~$C$. 
    \end{proof}
  }
  
  Under non-negative or symmetric preferences, we obtain an \fpt\ algorithm for $\nonisolated+\maxoutdeg$, which is based on random separation and dynamic programming. 
  
  \newcommand{\efakdeltanonnegsymm}{%
  	For non-negative or symmetric preferences, \EFA is \fpt\ wrt.\ $\nonisolated+\maxoutdeg$. %
  }
\statementarxiv{theorem}{thm:EFA_k+delta_nonneg}{\efakdeltanonnegsymm}

\ifarxiv
\begin{proof} 
	\textbf{Non-negative preferences. }
\else
\ifshort
	\begin{proof}[Proof sketch]
		We only consider the case with non-negative preferences. 
\else
	\begin{proof}[Proof sketch]
		We only consider the case with non-negative preferences. 
		The case with symmetric preferences can be found in the appendix.
		
\fi 		
\fi   	
    Let $I=(\agents, (\sat)_{p\in \agents}, G)$ be an \EFA instance with non-negative preferences. %
    Then, no non-isolated agent in an \efArr\ can have any in-arcs from isolated agents by the contrapositive of \cref{lem:EFA_non-neg_cond}.
    This means that each non-isolated agent has bounded in- and out-degree.
    Hence, we can use random separation to separate the non-isolated agents from their isolated neighbors.	
    The approach is as follows: Color every agent independently with~\textit{r} or \textit{b}, each with probability $1/2$. 
    We say that a coloring~$\chi\colon \agents \to \{r,b\}$ is \myemph{successful} if there exists an \efArr~$\sigma$ such that
    \begin{compactenum}[(i)]
      \item $\chi(p)=r$ for each $p \in W$ and
      \item $\chi(p)=b$ for each $p \in (\agents \setminus W)\cap \Nout{\prefgraph}(W)$, 
    \end{compactenum}
    where $W\coloneqq \{p \in \agents \mid \degr{G}(\sigma(p)) \geq 1\}$ denotes the set of~non-isolated agents in $\sigma$. 
    Note that by the above reasoning, if \mbox{$p \in W$,} then it holds $\Nin{\prefgraph}(p) \subseteq W$ and by the first condition, all agents in $\Nin{\prefgraph}(p)$ are colored with~$r$.   
    Since the seat graph has~$\nonisolated$ non-isolated vertices and the out-degree of each agent in the preference graph is bounded by $\maxoutdeg$, we infer that $|W|+|(\agents \setminus W)\cap \Nout{\prefgraph}(W)| \leq \nonisolated (1+\maxoutdeg)$.
    Hence, the probability that a random coloring is successful is at least $2^{-\nonisolated (1+\maxoutdeg)}$.
    
    Let $\redcomp$ be the subset of agents colored with~$r$.
    We already know for each weakly connected component $C$ of $\redcomp$ that the agents in $C$ are all assigned to either isolated or non-isolated vertices (see \cref{lem:EFA_non-neg_cond}). 
    Therefore, the size of each component is bounded by $\nonisolated$. 
    It remains to decide which component to assign to non-isolated vertices and how to assign them.
    
    Let the vertices of the seat graph $G$ be denoted by $\{1, 2, \dots, \nonisolated\}$.
    We design a simple algorithm using color-coding as follows. 
    We color the red agents $\redcomp$ uniformly at random with colors     $[\nonisolated]$. 
    The $\nonisolated$ colors one-to-one correspond to the $\nonisolated$ non-isolated seats in the seat graph.  
    Let $\sigma$ be a hypothetical \efArr and $\chi'\colon \redcomp \to [\nonisolated]$ a coloring.  
    We say $\chi'$ is \myemph{good} (wrt.\ $\sigma$) if the agent at the $i$th vertex of~$G$ is colored $i$, i.e., $\chi'(\sigma^{-1} (i)) = i$, for each $i \in [\nonisolated]$.  
    Note that given a solution $\sigma$, the probability that the $\nonisolated$ non-isolated agents are colored with pairwise distinct colors is at least $e^{-\nonisolated}$~\cite{cygan15}. 
    Since for each component all agents are assigned to either non-isolated or to isolated vertices, we can first check for each component, if each color appears at most once. 
    If there are two agents~$p_1$ and~$p_2$ with $\chi'(p_1)=\chi'(p_2)$, then this component will be assigned to isolated vertices in a good coloring.
    Since there are $\Oh(n)$ components, this can be done in time $\Oh(\nonisolated^2 \cdot n)$.
    
    For each remaining component, we check whether the given arrangement is envy-free. 
    In this regard, we observe that the utility of an agent only depends on the agents inside the same component as all preferences between two components are zero.
    Hence, we compute for each agent~$p$ with $\chi'(p)= i$ in a component 
    $C$ his utility
    \mbox{$\util[p]{\rho} = \sum_{\substack{q \in C\setminus\{p\}, \chi'(q)=j, 
        \{i,j\} \in E(G)}} \sat[p](q)$.}
    Similarly, we can compute the utility of $p$ in a swap-arrangement, where $p$ and the agent assigned to seat $j\neq i$ swap their seats and determine whether $p$ is envy-free. 
    If not, we can assign this component to isolated vertices. 
    Since each component has $\Oh(\nonisolated)$ agents, this step can be done in time $\Oh(\nonisolated^2\cdot n)$. 

    Finally, we use dynamic programming (DP) to select from the remaining envy-free components those whose colors match the seats and sizes sum up to~$\nonisolated$.
    Let $C_1, C_2, \ldots, C_m$ be the remaining weakly connected components.
    We define a DP table where an entry $T[S,i]$ is true if there is a partial arrangement assigning the first $i$ weakly connected components, where
    \begin{inparaenum}[(i)]
      \item no two non-isolated agents are colored with the same color and
      \item each color in $S$ is used once.
    \end{inparaenum}
    
    We start filling our table for $i=1$ as follows:
    
    {\centering
      $T[S,1] =$ \textit{true} $\Leftrightarrow |S|=0 \vee S = \bigcup_{p 
        \in C_1} 
      \{\chi'(p)\}$. 
      \par}    
  	{\noindent 
    Each component $C_i$ is either non-isolated or isolated. 
    Therefore, the following recurrence holds:
	}

    {\centering
      $T[S,i] = T[S,i-1] \vee T\left[S\setminus\bigcup_{p \in C_i} 
        \{\chi'(p)\}, i-1\right]$.
      \par}
  	{\noindent      
    Since $S \subseteq [\nonisolated]$, the entries of this table can be computed in time $2^\nonisolated\cdot n^{\Oh(1)}$.   
	}
    Summing up, the above algorithm runs in time $2^{\nonisolated}\cdot n^{\Oh(1)}$. 
    The probability of a successful and good coloring is at least \mbox{$2^{-\nonisolated(1+\maxoutdeg)}\cdot e^{-\nonisolated}$.} 
    Hence, after repeating this algorithm $2^{\nonisolated(1+\maxoutdeg)}\cdot e^{\nonisolated}$ times we obtain a solution with high probability. 
    Finally, we can de-randomize the random separation and color-coding approaches while maintaining fixed-parameter tractability~\cite{cygan15}.
    \appendixcontinue{thm:EFA_k+delta_nonneg}{}{\efakdeltanonnegsymm}{
    \ifarxiv
    \paragraph*{Symmetric preferences.} 
    \fi 
    We continue with the proof for the case with symmetric preferences and provide a proof sketch for that.
    Let $I=(\agents, (\sat)_{p\in \agents}, G)$ be an \EFA instance. 
    Since the preferences are symmetric, we assume that the corresponding preference graph~$\prefgraph$ is undirected. 
    We denote the non-isolated vertices of the seat graph~$G$ by $\niv_1, \dots, \niv_\nonisolated$. 
    	
    \paragraph*{Guess the structure of a solution.} 
    	For a given \efArr~$\sigma$ of~$I$, we observe the following about the preference structure in~$\sigma$. 
        Obviously exactly~$\nonisolated$ agents are non-isolated in~$\sigma$ which can have preferences among each other. 
    	Hence, the preference graph induced by these non-isolated agents can consist of several connected components.     	
    	Furthermore, we know that the~$\nonisolated$ non-isolated agents in~$\sigma$ have at most $\nonisolated\maxoutdeg$ non-zero preferences towards isolated agents. 
    	When constructing a solution, these are the isolated agents which can have an envy, since the others have only zero preferences towards the non-isolated agents. 
    	    	
        Hence, in this step we, guess (by branching over all possibilities of) the preference structure of a solution.
    	This means if we look at the preference graph induced by the non-isolated agents, we guess the seats of which of them belong to the same connected component. 
    	Hence, we partition the non-isolated vertices into disjoint subsets, where each two non-isolated agents assigned to a seat in a subset~$S$ are connected through a path in the preference graph using agents assigned to seats in~$S$. 
    	We denote such a partition of the non-isolated vertices by $S_1 \uplus \ldots \uplus S_x = \{\niv_1, \dots, \niv_\nonisolated\}$. 
    	The number of all possible partitions of a set of size $\nonisolated$ can be upper bounded by $\Oh(\nonisolated^{\nonisolated})$. 
    	
    	Next, we guess the number~$t$ of isolated agents in the solution which have a non-zero preference towards a non-isolated agent (i.e., are adjacent to a non-isolated agent in the preference graph). 
    	Since each agent has at most $\maxoutdeg$ neighbors in the preference graph and there are $\nonisolated$ non-isolated vertices, it holds $t \leq \nonisolated\maxoutdeg$. 
    	We denote the seats of these isolated agents as $\iv_1, \ldots, \iv_{t}$.
    	
    	Furthermore, we guess the edges in~$\prefgraph$ which are incident to a non-isolated and an isolated agent:
        Since we have $\nonisolated$ non-isolated which are incident to ~$t$ isolated agents, the number of guesses in this step is bounded by $\Oh(2^{\nonisolated\maxoutdeg})$.
    	We denote this bipartite graph by $H_{\colr\colb}$, where $V(H_{\colr\colb}) = \{\niv_1, \ldots, \niv_{\nonisolated}\} \cup \{\iv_1, \ldots, \iv_{t}\}$ and there is an edge if we guess that the two agents have a non-zero preference towards each other. 
    	
    \paragraph*{Random separation.}
    	In the previous step we guessed a structure of the solution. 
        In this step, we use random separation to separate the non-isolated agents from their isolated neighbors.
    	The approach is as follows: Color every agent independently with colors $\colr_1, \ldots, \colr_{\nonisolated}, \colb_1, \ldots, \colb_{t}$, each with probability $1/(\nonisolated+t)$. 
    	We say that a coloring~$\chi\colon \agents \to \{\colr_1, \ldots, \colr_{\nonisolated}, \colb_1, \ldots, \colb_{t}\}$ is \myemph{successful} if there exists an \efArr~$\sigma$ such that
    	\begin{compactenum}[(i)]
    		\item $\chi(p) \in \{\colr_1, \ldots, \colr_{\nonisolated}\}$ for each $p \in W$, and
    		\item $\chi(p) \in \{\colb_1, \ldots, \colb_{t}\}$ for each $p \in (\agents \setminus W)\cap \Neigh{\prefgraph}(W)$, 
    	\end{compactenum}
    	where  $W\coloneqq \{p \in \agents \mid \degr{G}(\sigma(p)) \geq 1\}$ denotes the set of non-isolated agents in~$\sigma$.  
    	Since the seat graph has~$\nonisolated$ non-isolated vertices and the degree of each agent in the preference graph is bounded by $\maxoutdeg$, we infer that $|W|+|(\agents \setminus W)\cap \Neigh{\prefgraph}(W)| \leq \nonisolated (1+\maxoutdeg)$.
    	Hence, the probability that a random coloring is successful is at least $(\nonisolated+t))^{-\nonisolated (1+\maxoutdeg)}\geq(\nonisolated(1+\maxoutdeg))^{-\nonisolated (1+\maxoutdeg)}$. 
    	
    	We introduce some notation and notions for the sake of brevity.
        Given a coloring $\chi$, we call agents colored with a color from $\{\colr_1, \ldots, \colr_{\nonisolated}\}$ \myemph{red agents}. 
    	A \myemph{blue agent} is an agent colored with a color from $\{\colb_1, \ldots, \colb_{t}\}$ and which has a non-zero preference to some red agent.
    	We denote the set of red agents by~\myemph{$\redcomp$} and the set of blue agents by~\myemph{$\bluecomp$}. 
    	Let $\prefgraph[\redcomp]$ be the preference graph induced by the red agents.
    	A \myemph{red component} is a connected component in $\prefgraph[\redcomp]$. 
    	A blue agent~$p$ is \myemph{adjacent to a red component} $C$ if $p$ is adjacent to some red agent in $C$ in the preference graph~$\prefgraph$. 

        In the following, we do a sanity check to rule out irrelevant red components and blue agents. 

    	\noindent \textit{Red components (wrt.\ agents):} 	    	
    	If one of the following criteria is not fulfilled for a red component~$C$, we assign all agents in this component to isolated vertices. 
    	\begin{compactenum}[(RE1)]
    		\item $C$ has at most $\nonisolated$ agents. 
    		
    		\item There exists $y \in [x]$ such that it holds $\{j \colon \chi(p)=r_j \text{ for } p \in \redcomp\} = \{j \colon \niv_j \in S_y\}$ and $|C|=|S_y|$.
    		
    		\item $C$ has no two blue neighbors with the same color. 
    		
    		\item Each agent in~$C$ does not envy any other non-isolated agent when $C$ is assigned to non-isolated vertices respecting their coloring.

    	\end{compactenum}

    	In the following we describe each of the above criteria in more detail:
        
    	(RE1) By definition of a successful coloring, we know for each red component $C$ that the agents in $C$ are all assigned to either isolated or non-isolated vertices. 
    	Therefore, the size of each red component is bounded by $\nonisolated$. 
    	
    	(RE2) It remains to decide which of these components to assign to non-isolated vertices and how to assign them.
    	In this regard we note that the colors $\colr_1, \ldots, \colr_{\nonisolated}$ one-to-one correspond to the non-isolated seats $\niv_1, \ldots, \niv_{\nonisolated}$ in the seat graph. 
    	Hence, we check for each red component if the contained colors match a component in the guessed structure $S_1, \ldots, S_x$; otherwise, we assign all agents in this component to isolated vertices. 
    	Since each component has $\Oh(\nonisolated)$ agents, this step can be done in time $\Oh(\nonisolated^2 \cdot n)$. 
    	
    	(RE3) Moreover, we check that no red component has two blue neighbors with the same color; otherwise, we assign all agents in this component to isolated vertices since this cannot match the guessed structure. 
    	This can be done in time $\Oh((\nonisolated\maxoutdeg)^2 \cdot n)$.     	
    	
    	(RE4) For each remaining component we check whether the given arrangement is envy-free. 
    	In this regard, we observe that the utility of an agent only depends on the agents inside the same component as all preferences between two components are zero.
    	Hence, we compute for each agent $p$ with $\chi(p)= \colr_i$ in a component $C$ his utility
    	$\util[p]{\rho} = \sum_{\substack{q \in C\setminus\{p\}, \chi(q)=\colr_j, \{\niv_i,\niv_j\} \in E(G)}} \sat[p](q)$.
    	Similarly, we can compute the utility of $p$ in a swap-arrangement, where $p$ and some agent colored with $\colr_j\neq \colr_i$ swap their seats and determine whether $p$ is envy-free. 
    	If $p$ is not envy-free, we can assign all agents in the component~$C$ to the isolated vertices. 
    	Since each component has $\Oh(\nonisolated)$ agents, this step can again be done in time $\Oh(\nonisolated^2\cdot n)$. 
    	    	    	
    	\noindent \textit{Blue agents:}  
    	We do a sanity check for each blue agent~$p$ with $\chi(p)=b_i$ in two steps:
    	\begin{compactenum}
    		\item[(BL1)] The colors of the red neighbors of~$p$ contain all colors indicated the neighbors of $\iv_i$ in the guessed structure. 
    		
    		\item[(BL2)] $p$ has at least one \emph{compatible} combination of red-components. 
    	\end{compactenum}
    
        In the following we describe each of the above steps in more detail:      	  	
    	
    	(BL1) The colors $\colb_1, \ldots, \colb_{t}$ one-to-one correspond to the~$t$ isolated agents $\iv_1, \ldots, \iv_{t}$ guessed in the first step.     	
    	Hence, we check now whether our coloring~$\chi$ matches the guessed structure. 
    	This means for every blue agent $p \in \bluecomp$ with $\chi(p)=b_i$, the non-zero preferences towards red agents should match $H_{\colr\colb}$. 
    	In particular, for every agent~$p \in \bluecomp$ colored $\colb_i$ it should hold:
    	\begin{align*}
    		\{j\colon \exists q \in \redcomp \cap \Neigh{\prefgraph}(p) \text{ s.t. } \chi(q) = \colr_j \} = 
    		\{j\colon \niv_j \in \Neigh{H_{\colr\colb}}(\iv_i) \} .
    	\end{align*}
    	If the left-hand side (LHS) above is a superset of the right-hand side (RHS), 
        then for each index~$j$ not contained RHS, we assigned all adjacent red agents (and the whole red component in which they are contained) colored with $\colr_j$ and check the above equality again. 
        If LHS is a strict subset of RHS,
        then we assign all adjacent red agents (and the whole red component in which they are contained) and~$p$ to isolated vertices. 
    	
    	(BL2) After the previous step, every blue agent in $\bluecomp$ is left with at most~$\maxoutdeg$ red components where the colors of each component ``match'' with a set in $S_1, \ldots, S_x$. %
    	Next, we check for each blue agent if he is a possible candidate for the isolated seats $\iv_1, \ldots, \iv_{t}$. 
    	To this end, let $\mathcal{R} = \{R_1, ..., R_{n'}\}$ be the connected red components in $\prefgraph[\redcomp]$, and 	
    	we say a subfamily of red components $\mathcal{R'} \subseteq \mathcal{R}$ of size at most $\maxoutdeg$ is \myemph{compatible} with a blue agent~$p$ colored $\chi(p) = b_i$ if 
    	\begin{compactenum}[(i)]
    		\item for each $u_j\in N_{H_{rb}}(v_i)$ there exists exactly one agent $q \in \bigcup_{R \in \mathcal{R'}}R$ with color $r_j$ i.e., $\chi(q) = r_j$,
    		\item $p$ is envy-free when among all adjacent red components of~$p$ exactly those from $\mathcal{R'}$ are assigned to non-isolated and~$p$ to isolated vertices.
    	\end{compactenum} 
	    The utility of a blue agent is always zero because he will be assigned to an isolated vertex and his utility in a swap-arrangement depends only on the agents in the red components. 
    	Therefore, we can check (similar to the red agents) if assigning a specific combination of red components to non-isolated vertices makes a specific blue agent envious in time $\Oh(\nonisolated^2)$.
    	Next, we check for each blue agent if there exists at least one compatible combination of red-components. 
    	If this is not the case for some $p \in \bluecomp$, then we assign all the adjacent red components and $p$ to isolated vertices. 
    	Since each blue agent has $\Oh(\maxoutdeg)$ adjacent red components, there are $2^{\maxoutdeg}$ possible combinations of them. 
    	Since there are $\Oh(n)$ blue agents, this step can be done in time $\Oh(n\cdot 2^{\maxoutdeg}\cdot\nonisolated^2)$. 
    	    	
   	\paragraph*{Construct the compatibility graph.}
    	After we have filter out irrelevant red components and blue agents, we need to decide which remaining red components are good for the corresponding non-isolated vertices. 
    	However, we cannot decide on each red component independently since choosing a red component may have an influence on the choice of another red component since otherwise a blue neighbor will be envious.
    	In other words, we need to decide which blue agent~$p$ is a neighbor of a non-isolated agent and which adjacent red components of~$p$ shall be non-isolated. 
    	To achieve this, we construct a so-called compatibility graph where for each blue agent there are multiple copies of non-adjacent vertices and check whether the graph contains an induced subgraph which corresponds to the guessed structure.  
    	
    	In the guessed structure if some $\iv_i, \iv_j$ are adjacent to the same component $S_{y}, y \in [x]$, then also the blue agents colored with $\colb_i, \colb_j$ need to be compatible for at least one combination. 
    	Hence, for two blue agents $p,q \in \bluecomp$ with $\chi(p) = \colb_i$, $\chi(q) = \colb_j$, and $\mathcal{R}_p, \mathcal{R}_q \subseteq \mathcal{R}$ we say that $(p,\mathcal{R}_p)$ is \myemph{compatible} with $(q,\mathcal{R}_q)$ if it holds:
    	\begin{compactenum}[(i)]
          \item $\mathcal{R}_p$ is compatible with $p$ and $\mathcal{R}_q$ is compatible with $q$, 
          \item $\mathcal{R}_p \cap \mathcal{R}_q \neq \emptyset$, and
          \item for each $S_y, y \in [x]$ which is adjacent to~$\iv_i$ and~$\iv_j$ there exists exactly one $R \in \mathcal{R}_p \cap \mathcal{R}_q$ such that
          \begin{align*}
            \{\ell\colon \niv_\ell \in S_y\} = \{\ell\colon \chi(q') = \colr_{\ell} \text{ for } q' \in R\}.
          \end{align*}
        \end{compactenum}
        Together with the two previous sanity checks, the third condition above implies that the common red components of $p$ and $q$ should one-to-one correspond to guessed structure restricted to the common neighbors of $\iv_i$ and $\iv_j$.
        Note that the number of blue colors is $\Oh(\nonisolated\maxoutdeg)$,
        the number of compatible combinations of red components for a blue agent is $\Oh(2^{\maxoutdeg})$, and the number of components in a combination is $\Oh(\maxoutdeg)$. %
    	
    	To check whether two blue agents can be ``used'' for the solution, we create a colored compatibility graph $\compgraph$. 
    	We add a vertex $w_{p, \mathcal{R}_p}$ (corresponding to agent~$p$) for each blue agent $p \in \bluecomp$ and each compatible combination of red components $\mathcal{R}_p$. 
    	The vertex $w_{p, \mathcal{R}_p}$ is colored with $\chi(p)$. 
    	There is an edge between two vertices $w_{p, \mathcal{R}_p}$ and $w_{q, \mathcal{R}_q}$ if $(p, \mathcal{R}_p)$ is compatible with $(q, \mathcal{R}_q)$. 
    	The number of vertices of $\compgraph$ is $\Oh(n2^{\maxoutdeg})$.
    	If two vertices are adjacent, then $p$ and $q$ are adjacent to the same red component at least once. 
    	Since each agent is adjacent to at most $\Oh(\maxoutdeg)$ red components, the number of neighbors of a red component is bounded by $\Oh(\nonisolated\maxoutdeg)$, and the number of partitions for each agent is $\Oh(2^{\maxoutdeg})$, we obtain that the maximum degree of $\compgraph$ is bounded by $\Oh(\maxoutdeg\cdot\nonisolated\maxoutdeg \cdot 2^{\maxoutdeg})$. 
    	
    \paragraph*{Construct a solution from the compatibility graph.}
    	Finally, to find an envy-free arrangement, we check whether the compatibility graph~$\compgraph$
    	contains an induced subgraph isomorphic to the to the guessed structure. 
    	To this end, we create an incidence graph $\incgraph$ of the guessed structure. 
    	The set of vertices $V(\incgraph)$ consists of $\{\iv_1, \ldots, \iv_{t}\}$ and two vertices $\iv_i, \iv_j \in V(\incgraph)$ are adjacent in $\incgraph$ if $\Neigh{H_{\colr\colb}}(\iv_i) \cap \Neigh{H_{\colr\colb}}(\iv_j) \neq \emptyset$, i.e., $\iv_i$ and $\iv_j$ have a common red neighbor in the guessed structure $H_{\colr\colb}$. 
    	
    	To check if~$\compgraph$ has an induced subgraph isomorphic to~$\incgraph$ we need the next claim.
    	    	
    	\begin{claim} \label{claim:indsubgraphisomorph}
    		Let $\smallgraph$ be a graph where the vertices are denoted by $\{v_1, \ldots, v_{\tilde{n}}\}$. %
    		Furthermore, let $\largegraph$ be a colored graph with maximum degree $\Delta$ and colored with colors from $\{\colb_1, \ldots, \colb_{\tilde{n}}\}$. 
    		Then, we can decide in time $f(\tilde{n}, \Delta) \cdot |V(\largegraph)|^{\Oh(1)}$ whether~$\largegraph$ contains a \myemph{colorful induced subgraph isomorphic} to $\smallgraph$, i.e., 
               whether the exists a subset of vertices $W \subseteq V(\largegraph)$ and a bijection $\rho\colon W \to V(\smallgraph)$ such that
    		\begin{compactenum}[(1)]
    			\item $\{w,w'\} \in E(G[W])$ iff. $\{\rho(w), \rho(w')\} \in E(\smallgraph)$, and
    			\item for each $w \in W$ with color $\gamma(w) = b_j$ it holds that $\rho(w) = v_j$. 
    		\end{compactenum}
    	\end{claim}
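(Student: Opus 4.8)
The plan is to turn the problem into a family of \emph{local} searches, one per connected component of $\smallgraph$, and then to glue the pieces together through a branch-or-win recursion that exploits the degree bound of $\largegraph$.

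First I would unwind the definition: a colorful induced copy of $\smallgraph$ is the same as a choice of a vertex $w_j$ of color $\colb_j$ in $\largegraph$ for every $j\in[\tilde n]$ such that $\{w_i,w_j\}\in E(\largegraph)$ if and only if $\{v_i,v_j\}\in E(\smallgraph)$; here the $w_j$ are automatically distinct because they carry distinct colors. Decomposing $\smallgraph$ into its connected components $C_1,\dots,C_s$ (so $s\le \tilde n$), such a copy is precisely a tuple $(U_1,\dots,U_s)$ where each $U_t$ is a colorful induced copy of $C_t$ in $\largegraph$ using exactly the colors occurring in $C_t$, and the $U_t$ are pairwise non-adjacent in $\largegraph$; disjointness across different $U_t$ comes for free, since distinct components use disjoint color sets.

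The next step is to compute, for each component $C_t$, the list $\mathcal{L}_t$ of \emph{all} colorful induced copies of $C_t$ in $\largegraph$. Fix one vertex of $C_t$ as a representative; any copy of $C_t$ sends it to a vertex $w$ of the matching color, and since $C_t$ is connected with at most $\tilde n$ vertices, the whole copy lies in the ball $B_w$ of radius $\tilde n$ around $w$. As $\largegraph$ has maximum degree $\Delta$, $|B_w|$ is bounded by a function of $\tilde n$ and $\Delta$; hence we can enumerate all $\binom{|B_w|}{\tilde n}$ relevant subsets of $B_w$, test each for a color-preserving isomorphism to $C_t$, and build $\mathcal{L}_t$ in time $|V(\largegraph)|^{\Oh(1)}\cdot g(\tilde n,\Delta)$ for a suitable computable $g$. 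In particular $|\mathcal{L}_t|\le |V(\largegraph)|\cdot g(\tilde n,\Delta)$, and the same argument shows that any single vertex of $\largegraph$ lies in at most $g(\tilde n,\Delta)$ members of any one list.

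It remains to glue the components, and this is where the real work lies: the induced-ness requirement forces copies of different components to be mutually non-adjacent, which is an independent-transversal-type constraint and a priori carries a branching factor of order $|V(\largegraph)|$ per component. The way out is a dichotomy on the threshold $D\coloneqq \tilde n(\Delta+1)\,g(\tilde n,\Delta)+1$, applied recursively to the set of still-unplaced components in the current (possibly shrunk) host. If every remaining list has size at least $D$, answer \emph{yes}: place the components one by one, greedily choosing for the current one any copy avoiding the closed neighbourhood of the copies placed so far — always possible, since those earlier copies span at most $\tilde n$ vertices and hence rule out fewer than $\tilde n(\Delta+1)\,g(\tilde n,\Delta)<D$ copies of the current component — and the chosen copies are then pairwise non-adjacent by construction; alternatively one may invoke Haxell's theorem on independent transversals at this point. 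If instead some remaining component $C_t$ has $|\mathcal{L}_t|<D$, branch over its at most $D-1$ copies $U_t$, and for each recurse on the remaining components with host $\largegraph-\Neigh{\largegraph}[U_t]$, which still has maximum degree $\Delta$, returning \emph{yes} if any branch does. Correctness in both directions is routine, the search tree has at most $D^{\tilde n}$ leaves, and each node costs $|V(\largegraph)|^{\Oh(1)}\cdot g(\tilde n,\Delta)$ (mainly recomputing the lists in the reduced host), which yields the claimed bound $f(\tilde n,\Delta)\cdot |V(\largegraph)|^{\Oh(1)}$.
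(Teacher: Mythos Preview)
Your argument is correct and takes a genuinely different route from the paper. The paper proves the claim via \emph{random separation}: it colours the vertices of $\largegraph$ yellow or green uniformly at random, calls a colouring successful if some colourful copy $W$ of $\smallgraph$ is entirely yellow while $N_{\largegraph}(W)\setminus W$ is entirely green (probability at least $2^{-\tilde n(1+\Delta)}$), and observes that under a successful colouring each image of a component of $\smallgraph$ is precisely a yellow connected component of $\largegraph$. It then builds a bipartite graph between components of $\smallgraph$ and yellow components of $\largegraph$, with edges recording colour-respecting isomorphism, and decides via a maximum matching; derandomisation is deferred to standard tools.

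Your approach instead enumerates, for every component $C_t$ of $\smallgraph$, all colourful induced copies inside $\largegraph$ using the bounded-radius-ball argument, and then runs a branch-or-win recursion on these lists: either every list is above the threshold $D$ and a direct greedy (or Haxell-type) argument produces a pairwise non-adjacent transversal, or some list is short and you branch on it. This is fully deterministic from the outset and fits squarely into the bounded-search-tree paradigm; the key point you exploit---that a fixed vertex lies in only $g(\tilde n,\Delta)$ copies of any one component---is exactly what makes both the greedy step and the bounded branching work. Compared to the paper, your method avoids the probabilistic detour and the subsequent derandomisation, at the cost of the explicit list enumeration and a slightly more delicate recursion; the paper's method, conversely, offloads the ``pairwise non-adjacent'' constraint entirely to the random colouring and reduces the combinatorics to a single bipartite matching. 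Both yield the required $f(\tilde n,\Delta)\cdot|V(\largegraph)|^{\Oh(1)}$ bound. One minor remark: your passing reference to Haxell's theorem would require a somewhat larger threshold than the $D$ you state, but your direct greedy argument already works with the stated $D$, so this does not affect correctness.
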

    	\begin{proof}[Proof of \cref{claim:indsubgraphisomorph}]
    		\renewcommand{\qedsymbol}{$\diamond$}
    		We prove the statement using random separation:
    		Color every vertex of $\largegraph$ independently with colors~$y$ and~$g$, each with probability $1/2$. 
    		We say a random coloring~$\psi\colon V(\largegraph) \to \{y,g\}$ is \myemph{successful} if there exists a colorful induced subgraph $\largegraph[W]$ isomorphic to $\smallgraph$ such that %
    		\begin{compactenum}[(i)]
    			\item $\psi(v) = y$ for each $v \in W$, and
    			\item $\psi(v) = g$ for each $v \in (V(\largegraph) \setminus W)\cap \Neigh{\largegraph}(W)$.
    		\end{compactenum}  
    		Since graph $\smallgraph$ has~$\tilde{n}$ vertices and the degree of each vertex in $\largegraph$ is bounded by $\Delta$, we infer that $|W|+|(V(\largegraph) \setminus W)\cap \Neigh{\largegraph}(W)| \leq \tilde{n} (1+\Delta)$.
    		Hence, the probability that a random coloring is successful is at least $2^{-\tilde{n} (1+\Delta)}$. 
    		We say a vertex in $\largegraph$ is \myemph{yellow} if he is colored with $y$; otherwise we say he is \myemph{green}. 
    		
    		After having colored the vertices of $\largegraph$, we construct a bipartite graph on $A \uplus B$:
    		Set~$A$ consists of a vertex for each connected component in~$\smallgraph$. 
    		Set~$B$ consists of a vertex for each connected component in the graph~$\largegraph$ induced by the yellow vertices.     		
    		There is an edge between a vertex~$a$ from~$A$ and a vertex~$b$ from~$B$ if and only if the following holds:
    		\begin{compactenum}[(i)]
    			\item the colors of $C_b$ match with the vertices in $C_a$, i.e., 
    			\begin{align*}
    				\{j \colon \gamma(p)=\colb_j \text{ for } p \in C_b \} = \{j \colon \iv_j \in C_a\}, 
    			\end{align*}
    			\item $\smallgraph[C_a]$ is isomorphic to $\largegraph[C_b]$,
    		\end{compactenum}
    		where $C_a$ resp.\ $C_b$ is the set of vertices in the component in~$\smallgraph$ (resp.~$\largegraph$) corresponding to $a$ (resp.~$b$).
			The construction of this graph can be done in time $\Oh((\tilde{n}|V(\largegraph)|)^2)$. 
    		
    		Finally, we need to find a maximum cardinality matching, which can be done in time~$\Oh((\tilde{n}+|V(\largegraph)|)^3)$.     		
    		If the size of the matching is equal to $|A|$, then we can construct a solution by taking the components which correspond to a matched vertex.     		
    		Otherwise, there is no solution for the current guess and coloring.
    		
    		Summing up, this runs in \fpt\ (wrt.\ $\nonisolated+\maxoutdeg$) time.%
    	\end{proof}
    	
    	By \cref{claim:indsubgraphisomorph} we can decide in time $f(\nonisolated,\maxoutdeg) \cdot (n2^{\maxoutdeg})^{\Oh(1)}$ if $\incgraph$ is isomorphic to an induced subgraph of~$\compgraph$.     	
    	If the answer is yes, then we obtain an \efArr by doing the following: 
    	Let $S$ be the vertices of $\compgraph$ which are isomorphic to $\incgraph$. 
    	Since we have a colored induced subgraph, we know for each pair of vertices $(p, \mathcal{R}_p), (q, \mathcal{R}_q) \in S$ that $p \neq q$. 
    	Hence, for each selected agent exactly one combination of components is selected. 
    	Furthermore, by the definition of the edges in $\compgraph$, we know that if $p,q$ are adjacent in $\compgraph$ then they are also compatible. 
    	Hence, each color in the combination of components can appear at most once. 
    	If a color does not appear, then the component of this color is not adjacent to an isolated vertex in the guessed structure. 
    	In this case, $\redcomp$ also needs to contain such a red component if the guesses are all correct. 
    	Therefore, by taking the subsets which are contained in the vertices of $S$ we obtain the set of non-isolated agents and by their coloring an envy-free arrangement. 
    	\iflong \qed \fi 
    } 
  \end{proof}
  
  For matching-graphs we also obtain fixed-parameter tractability even for arbitrary preferences.
  \newcommand{\efakdeltamatchingfpt}{%
   For matching-graphs, \EFA is \fpt\ wrt.\ \mbox{$\nonisolated+\maxoutdeg$}. %
  }
\statementarxiv{theorem}{thm:EFA_k+delta_matching}{\efakdeltamatchingfpt}
  \appendixproofwithstatement{thm:EFA_k+delta_matching}{\efakdeltamatchingfpt}{
    \begin{proof}
      Let $I = (\agents, (\sat)_{p \in \agents} ,G)$ be an \EFA instance where the seat graph is a matching-graph.
      
      Then, no non-isolated agent in an \efArr can have any in-arcs from the isolated agents in $\posprefgraph$. This means that each non-isolated agent has 
      bounded in- and out-degrees in $\posprefgraph$. Hence, we can use random 
      separation to separate the non-isolated agents from their isolated 
      neighbors in the positive preference graph $\posprefgraph$.
      The approach is as follows: Color every agent independently
      with one of the two colors $r$ and $b$, each with probability $1/2$.
      We say that a coloring $\chi\colon \agents\to\{r, b\}$ is 
      \myemph{successful} if there exists an \efArr $\sigma$ such that
      \begin{compactenum}[(i)]
        \item $\chi(p)=r$ for each $p \in W$, and
        \item $\chi(p)=b$ for each $p \in (\agents \setminus W)\cap
        \Noutpos{\prefgraph}(W)$,
      \end{compactenum}
      where $W\coloneqq \{p \in 
      \agents \mid \degr{G}(\sigma(p)) \geq 1\}$ denotes the set of agents which are non-isolated in $\sigma$.
      Note that for a successful coloring (wrt.\ $\sigma$) we restrict to only the positive 
      preference graph $\posprefgraph$ and no non-isolated agent can have a color-$b$ in-neighbor in~$\posprefgraph$. 
      Since the seat graph has $\nonisolated$ non-isolated vertices and the 
      (positive) out-degree of each agent in the preference 
      graph is bounded by~$\maxoutdeg$, we infer $|W|+|(\agents \setminus 
      W)\cap \Noutpos{\prefgraph}(W)| \leq 
      \nonisolated (1+\maxoutdeg)$.
      Hence, the probability that a random coloring is successful is at least 
      $2^{-\nonisolated (1+\maxoutdeg)}$.
      
      Let $\redcomp$ be the subset of agents colored with~$r$.
      For each weakly connected component $C$ of $\redcomp$ in $\posprefgraph$ the agents in $C$ are all assigned to either isolated or non-isolated vertices (see \cref{lem:EFA_non-neg_cond}).
      Therefore, the size of each such component is bounded by~$\nonisolated$. 
      It remains to decide which non-isolated component to take and how to assign them. 
      
      Let the vertices of the seat graph~$G$ be denoted by $\{1, 2, \dots, \nonisolated\}$.
      We design an \fpt\ algorithm using color-coding as follows: 
      We iterate over all possible partitions of the non-isolated vertices, which are $\Oh(k^{k})$ many. 
      Each set in a partition will be assigned to exactly one weakly connected component. 
      We color the red agents $\redcomp$ uniformly at random with colors $[\nonisolated]$. 
      The $\nonisolated$ color one-to-one correspond to the $\nonisolated$ non-isolated seats in the seat graph.
      Let $\sigma$ be a hypothetical \efArr and $\chi'\colon \redcomp \to [\nonisolated]$ a coloring.  
      We say~$\chi'$ is \myemph{good} if the agent at the $i$th vertex of~$G$ is colored~$i$, i.e., $\chi'(\sigma^{-1} (i)) = i$, for each $i \in [\nonisolated]$.
      
      Note that given a solution $\sigma$, the probability that the $\nonisolated$ non-isolated agents are colored with pairwise distinct colors is at least $e^{-k}$~\cite{cygan15}. 
      Since for each component all agents are all assigned to either non-isolated or isolated vertices, we can first check for each component, if each color appears at most one. 
      If there are two agents $p_1, p_2$ with $\chi'(p_1) = \chi'(p_2)$, then this component will be assigned to isolated vertices for a good coloring. 
      Moreover, we check for each component if its colors match a set in the partition~$P$. 
      If this is not the case, then this component will be assigned to isolated vertices. 
      Since there are $\Oh(n)$ components, this can be done in time $O(k^2\cdot n)$.

      For the remaining components we will check individually if the given 
      arrangement is envy-free or not. 
      In this regard, we observe that the utility of each agent has to be 
      non-negative and each agent with positive preferences has to be matched 
      to his best preference inside the component as all preferences between 
      two components is at most zero.
      If this is not the case, we can assign this component to isolated 
      vertices. 
      Since each component has $\Oh(k)$ agents, this step can be done in time 
      $\Oh(k^2 \cdot n)$. 
      
      Let $C_1, C_2, \ldots, C_m$ be the remaining weakly connected 
      components.
      In each of these components there can be some agents, which have to be matched with agents of other components which should have zero preferences towards each other. 
      To select the right components, we construct a multicolored independent set instance $\hat{G}$: 
      The color-classes of this instance are the sets in the partition $P$. 
      There is a vertex~$v_i$ for each component $C_i$ in a color-class if the set of colors in $C_i$ is equal to the set of the partition. 
      There is an arc $(v_i, v_j)$ from a vertex $v_i$ to $v_j$, if there are two agents $p \in C_i, q \in C_j$ and $\ell \in [k/2]$ with
      \begin{inparaenum}[(i)]
        \item $\chi'(p), \chi'(q) \in \{2\ell-1, 2\ell\}, 
        \chi'(p)\neq\chi(q)$ and 
        \item $\sat[p](q)<0$ or $\sat[q](p)<0$. 
      \end{inparaenum}
      
      Since the size of each component is bounded by $\nonisolated$, we obtain that in this independent set instance, the out-degree of each vertex is bounded by $\nonisolated\maxoutdeg$. 
      We obtain a solution by the following branching tree:
      \begin{compactitem}[--]
        \item Select a vertex $v \in V(\hat{G})$ with minimum in-degree. 
        \item Branch into either putting this vertex $v$ into the solution or one of its in- or out-neighbors. 
        \item Delete $v$ and all its in- or out-neighbors and the color-class of the selected vertex. 
        \item Decrease the parameter $\nonisolated$ by one.
      \end{compactitem} 
      We observe that in every digraph with maximum out-degree $k\maxoutdeg$, there is always a vertex with in-degree bounded by $k\maxoutdeg$. 
      In other words, there is a vertex with sum of in- and out-degrees bounded by $2\nonisolated\maxoutdeg$.
      Hence, in each step we branch into at most $2\nonisolated\maxoutdeg+1$ possibilities. 
      Since the color-classes are partitions of $\nonisolated$, the depth of this branching tree is also bounded by $\nonisolated$, i.e., the size of this tree is bounded by $(2\nonisolated\maxoutdeg+1)^\nonisolated$. 
      
      The following claim shows that if the guess is correct, we can always find a solution with this approach:
      \begin{claim}\label{claim:EFA_k+delta_matching}
        If there exists a colorful $\nonisolated$-independent set, then there exists also one which contains either a vertex $v$ of minimum in-degree or one of its in- or out-neighbors.
      \end{claim}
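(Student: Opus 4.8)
The plan is to prove this by the standard local-exchange argument used to justify a bounded search tree. Let $S$ be any colorful $\nonisolated$-independent set in $\hat{G}$, and let $v\in V(\hat{G})$ be the vertex of minimum in-degree on which the algorithm branches. I would argue that, depending on how $S$ meets $v$ and its in-/out-neighborhood, we can always exhibit a colorful $\nonisolated$-independent set that contains $v$ or one of the vertices in $\Nin{\hat{G}}(v)\cup\Nout{\hat{G}}(v)$; this is exactly what is needed for the branching to be complete.

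Concretely, I would split into three cases. First, if $v\in S$, then $S$ itself already works. Second, if $v\notin S$ but $S\cap(\Nin{\hat{G}}(v)\cup\Nout{\hat{G}}(v))\neq\emptyset$, then again $S$ already works, since it contains an in- or out-neighbor of $v$. The only remaining case is $v\notin S$ and $S\cap(\Nin{\hat{G}}(v)\cup\Nout{\hat{G}}(v))=\emptyset$; here I would perform a swap. Because $S$ is colorful, it contains a unique vertex $w$ from the same color class as $v$, and $w\neq v$ since $v\notin S$. Set $S'\coloneqq(S\setminus\{w\})\cup\{v\}$. Then $S'$ has size $\nonisolated$, it is colorful (we replaced $w$ by a vertex of the same color), and it is independent: the only adjacencies that could be newly introduced involve $v$, but by the case assumption $v$ has no arc to or from any vertex of $S$, in particular none to $S\setminus\{w\}\subseteq S$. (A possible arc between $v$ and $w$ is irrelevant, as $w$ has been removed.) Hence $S'$ is a colorful $\nonisolated$-independent set containing $v$.

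In all three cases we obtain a colorful $\nonisolated$-independent set containing $v$ or one of its in-/out-neighbors, which proves the claim. I do not expect any real obstacle here; the only point worth stating carefully is why the swap in the third case cannot create a conflict — that is precisely the reason the third case is conditioned on $S$ avoiding the \emph{entire} (in- and out-)neighborhood of $v$, not just the out-neighborhood. Once the claim is in place, completeness of the branching is immediate: the algorithm branches on including $v$ or one of its at most $2\nonisolated\maxoutdeg$ in-/out-neighbors, and each such choice lets us delete the chosen vertex together with all its in-/out-neighbors and its color class, decreasing $\nonisolated$ by one; together with the depth bound $\nonisolated$ this yields a search tree of size $(2\nonisolated\maxoutdeg+1)^{\nonisolated}$ and hence the claimed \fpt\ running time.
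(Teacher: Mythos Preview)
Your proof is correct and follows essentially the same approach as the paper: both argue that if neither $v$ nor any of its in-/out-neighbors lies in the given colorful independent set, then swapping $v$ for the unique vertex $w$ of its color class yields another colorful $\nonisolated$-independent set. Your version is slightly more explicit (you separate the case $v\in S$ from the case where a neighbor of $v$ is in $S$, and you spell out why the swap preserves independence), but the argument is the same.
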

      \begin{proof}[Proof of \cref{claim:EFA_k+delta_matching}]
        \renewcommand{\qedsymbol}{$\diamond$}
        Let $I$ be a colorful $\nonisolated$-independent set of $\hat{G}$ and $v$ a vertex of minimum in-degree. 
        If $\Neigh{\hat{G}}(v)\cap I \neq \emptyset$, then clearly the statement holds. 
        However, if $\Neigh{\hat{G}}(v)\cap I = \emptyset$, then we can safely exchange $v$ with the vertex $w$ of the same color-class as $v$. 
      \end{proof}
      Therefore, if the guesses are correct, then a minimum in-degree vertex $v$ or one of its neighbors will appear in the branching tree at some step and we will find a $\nonisolated$-colorful independent set. 
      The vertices in this set correspond to the components which we can assign to the edges in matching-graph in an \efArr.
      
      Summing up, we obtain that this algorithm runs in time $\Oh(k^k(2\nonisolated\maxoutdeg+1)^k)$.
      The probability of a successful and good coloring is $2^{-\nonisolated(1+\maxoutdeg)}e^{-k}$. 
      Hence, after repeating this algorithm $2^{\nonisolated(1+\maxoutdeg)}\cdot e^{k}$ times we obtain a solution with high probability. 
      Finally, we can de-randomize the random separation and color-coding approaches while maintaining fixed-parameter tractability~\cite{cygan15}.  
    \end{proof}
  }
  
  \newH{ The same approach for matching-graphs does not hold for other graphs as implied by the next two theorems.
  The proofs are once again similar to the ones for \cref{thm:EFA_k_matching+clique+path-cycle_bin,thm:EFA_k_clique+stars+path-cycle_symm}.}
  \newcommand{\efakdeltacliquestarwhard}{%
    \EFA is \Wh\ wrt.\ $\nonisolated$, even if $\maxoutdeg = 3$ and the seat graph is a clique- or stars-graph. %
  }
  
  \statementarxiv{theorem}{thm:EFA_k+delta_clique+stars}{\efakdeltacliquestarwhard}
  \appendixproofwithstatement{thm:EFA_k+delta_clique+stars}{\efakdeltacliquestarwhard}{
    \begin{proof}
      We provide a parameterized reduction from the \Wh\ problem \Clique parameterized by the solution size for both classes of seat graphs. 
      Let $(\hat{G}, h)$ be an instance of \Clique.
      
      \paragraph*{Clique-graphs.} 
      For each vertex $v_i \in V(\hat{G})$, we create~$h^2$ \myemph{vertex-agents} named $p_i^1, \ldots, p_i^{h^2}$. 
      For each edge $e_\ell \in E(\hat{G})$, we create $h^4$ \myemph{edge-agents} $q_{\ell}^1, \ldots, q_{\ell}^{h^4}$. 
      For each $e_\ell \in E(\hat{G})$ and $v_i \in V(\hat{G})$, we define the preferences as follows:
      \begin{compactitem}[--]
        \item For each $z \in [h^2-1]$, set $\sat[p_i^z](p_i^{z+1}) = 1$ and $\sat[p_i^{h^2}](p_i^{1}) = 1$.
        \item For each $z \in [h^4-1]\setminus\{1\}$, set 
        $\sat[q_{\ell}^{z+1}](q_{\ell}^{z}) = 1$ and 
        $\sat[q_{\ell}^3](q_{\ell}^{h^4}) = 1$ and $\sat[q_{\ell}^2](q_{\ell}^{1}) = 2$. 
        \item If $v_i \in e_\ell$, then set $\sat[q_{\ell}^{2}](p_i^1) = -1$. 
        \item The non-mentioned preferences are set to zero. 
      \end{compactitem}
      \cref{fig:EFA_k+delta_clique} shows this construction for two adjacent 
      vertices $v_i, v_j \in V(\hat{G})$.
      \begin{figure}[t]
      	\centering
      	\begin{tikzpicture}[>=stealth', shorten <= 1.5pt, shorten >= 1.5pt]
      	\def \xs {6ex}
      	\def \ys {4ex}
      	\def \qys {3ex}

      	\node[nodeW] (q2) {};			
      	\node[nodeW, below = \ys of q2] (q1) {};
      	\node[nodeW, above = \ys of q2] (q3) {};
      	\node[nodeW, above right = \qys and \xs of q3] (q4) {};
      	\node[nodeW, above left = \qys and \xs of q3] (q5) {};
      	
      	\path (q4) -- node[auto=false, yshift=2pt]{\ldots \ldots} (q5);
      	
      	\foreach \i / \pos / \v in {1/right/1, 3/above/3, 4/right/4, 5/left/h^4}{
      		\node[\pos = 0pt of q\i] {\scriptsize$q_\ell^{\v}$};
      	}
      	
      	\node[above right = 0pt of q2] {\scriptsize$q_\ell^{2}$};
      	
      	\foreach \i / \pos in {i/left, j/right}{
      		\node[nodeU, \pos = 2*\xs of q2] (p\i1) {};
      		\node[below = 2pt of p\i1] {\scriptsize$p_\i^{1}$};
      		
      		\node[nodeU, above \pos = \ys and \xs of p\i1] (p\i2) {};
      		\node[\pos = 0pt of p\i2] {\scriptsize$p_\i^2$};
      		
      		\node[nodeU, below \pos = \ys and \xs of p\i1] (p\i3) {};
      		\node[\pos = 0pt of p\i3] {\scriptsize$p_\i^{h^2}$};

      		\path (p\i2) -- node[pos=0.5, auto=false]{\vdots} (p\i3);
      	}

      	\begin{scriptsize}
      	\foreach \x / \y / \v in {1/2/2, 2/3/1, 3/4/1, 5/3/1} {
      		\draw[-Stealth] (q\y) edge node[pos=0.25, fill=white, inner sep=1pt] {$\v$} (q\x); 
      	}
      	\foreach \i / \b in {i/right, j/left}{ 
      		\draw[-Stealth] (q2) edge[bend \b=0, myRed] node[pos=0.25, fill=white, inner sep=1pt] {$-1$} (p\i1);
      		
      		\foreach \s / \t in {1/2, 3/1}{ 
      			\draw[-Stealth] (p\i\s) edge node[pos=0.3, fill=white, inner sep=1pt] {$1$} (p\i\t);
      		}
      	}
      	\end{scriptsize}
      	\end{tikzpicture}
      	\caption{Preference graph for \cref{thm:EFA_k+delta_clique+stars} for clique-graphs, where $e_{\ell}=\{v_i, v_j\}$.}
      	\label{fig:EFA_k+delta_clique}
      \end{figure}
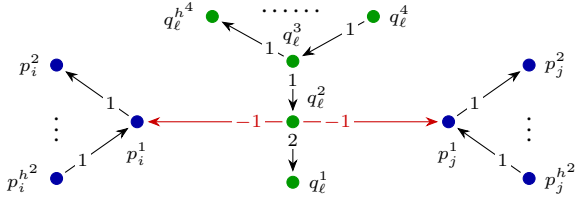

      The seat graph consists of a clique of size $\nonisolated \coloneqq 
      h^3+\binom{h}{2}$ and $|V(\hat{G})|h^2+|E(\hat{G})|h^4-k$ isolated 
      vertices.
      The construction of this \EFA instance $I$ can clearly be done in polynomial time and it holds $\maxoutdeg = 3$.
      
      It remains to show the correctness, i.e., $\hat{G}$ admits a size-$h$ clique $C$ if and only if $I$ admits an \EFA. 
      For the ``only if'' part, let $C$ be a size-$h$ clique of $\hat{G}$. 
      For each $v_i \in C$ and $\ell \in [{h^2}]$, assign agent~$p_i^{\ell}$ to some non-isolated vertex. 
      Moreover, for every $v_i, v_j \in C$ assign $q_{i,j}^1$ to some remaining non-isolated vertex. 
      The remaining agents are assigned to isolated vertices.
      This arrangement is envy-free because
      \begin{compactenum}[(i)]
      	\item for each $v_i \in C, \ell \in [h^2]$, agent $p_i^\ell$ has his maximum possible utility,
      	\item for each $v_i, v_j \in C$, agent $q_{i,j}^\ell$ for $\ell \in [h^4]\setminus \{2\}$ has no positive preference towards a non-isolated agent. 
      	Since $p_i^1, p_j^1$ are both non-isolated, also $q_{i,j}^2$ is envy-free, and
      	\item each agent corresponding to a vertex or edge in $\hat{G}\graphminus C$ has no positive preference towards a non-isolated agent. 
      \end{compactenum}		
  
  	  For the ``if'' part, we conclude by \cref{lem:EFA_non-neg_cond}, that the vertex-agents corresponding to the same vertex $v_i \in V(\hat{G})$ are all assigned to either non-isolated or isolated vertices in an \efArr. 
      This implies that at most~$h$ different sets of vertex-agents can be assigned to non-isolated vertices. 
      
      Furthermore, because $\nonisolated < h^4-1$ for $h>1$ and \cref{lem:EFA_non-neg_cond}, agents $q_{\ell}^2, \ldots, q_{\ell}^{h^4}$ for $e_\ell \in E(\hat{G})$ are always assigned to isolated vertices. 
      Since we can assign at most $h$ different sets of vertex-agents to non-isolated vertices, the remaining seats are filled with edge-agents.
      However, if some $q_{\ell}^1$ is non-isolated, then also $p_i^1$ is non-isolated for $v_i \in e_\ell$. 
      Otherwise, $q_{\ell}^2$ has an envy.      
      Therefore, we can conclude that~$\hat{G}$ admits~a size-$h$ clique if and only if $I$ admits an \efArr.
      
      \paragraph*{Stars-graphs.} 
      We provide a reduction from \Clique which is similar to the one given for the clique-graphs. 
      For each vertex $v_i \in V(\hat{G})$, create three \myemph{vertex-agents} $p_i^1, p_i^2, p_i^3$. 
      For each edge $e_\ell \in E(\hat{G})$, create $h^4$ \myemph{edge-agents} $q_{\ell}^1, \ldots, q_{\ell}^{h^4}$.
      For each $e_\ell \in E(\hat{G})$ and $v_i \in V (\hat{G})$, define the preferences as follows:
      \begin{compactitem}[--]
        \item For each $z \in [2]$, set $\sat[p_i^z](p_i^{z+1}) = 
        \sat[p_i^{z+1}](p_i^{z})= 1$.
        \item For each $z \in [h^4-1]\setminus[7]$, set 
        $\sat[q_{\ell}^{z}](q_{\ell}^{z+1}) = 1$ and 
        $\sat[q_{\ell}^1](q_{\ell}^2) = \sat[q_{\ell}^2](q_{\ell}^1)= 
        \sat[q_{\ell}^2](q_{\ell}^3) = \sat[q_{\ell}^2](q_{\ell}^4)= 
        \sat[q_{\ell}^5](q_{\ell}^3) = 
        \sat[q_{\ell}^6](q_{\ell}^4)= \sat[q_{\ell}^7](q_{\ell}^5) = 
        \sat[q_{\ell}^7](q_{\ell}^6)= \sat[q_{\ell}^{h^4}](q_{\ell}^7) = 1$. 
        \item For $e_\ell = \{v_i, v_j\}$ with $i<j$, set $\sat[q_{\ell}^{5}](p_i^1) = -1$ and $\sat[q_{\ell}^{6}](p_j^1) = -1$.
        \item The non-mentioned preferences are set to zero. 
      \end{compactitem}
      \cref{fig:EFA_k+delta_stars} shows this construction for two adjacent 
      vertices $v_i, v_j \in V(\hat{G})$.
      \begin{figure}[t]
      	\centering
      	\begin{tikzpicture}[>=stealth', shorten <= 2pt, shorten >= 2pt]
      	\def \xs {5ex}
      	\def \ys {5ex}
      	\def \qys {3.5ex}

      	\node[nodeW] (q1) {};			
      	\node[nodeW, above = \ys of q1] (q2) {};
      	\node[nodeW, above left = 0*\qys and \xs of q2] (q3) {};
      	\node[nodeW, above right = 0*\qys and \xs of q2] (q4) {};
      	\node[nodeW, above = \ys of q3] (q5) {};
      	\node[nodeW, above = \ys of q4] (q6) {};
      	
      	\node[nodeW, above = 2*\qys of q2] (q7) {};
      	\node[nodeW, above right = \qys and \xs of q7] (q8) {};
      	\node[nodeW, above left = \qys and \xs of q7] (q9) {};
      	
      	\path (q8) -- node[auto=false]{\ldots\ldots} (q9);
      	
      	\foreach \i / \pos / \v in {1/right/1, 2/above/2, 3/left/3, 4/right/4, 5/left/5, 6/right/6, 7/below/7, 8/right/8, 9/left/h^4}{
      		\node[\pos = 0pt of q\i] {\scriptsize$q_\ell^{\v}$};
      	}
      	
      	\foreach \i / \pos in {i/left, j/right}{
      		\node[nodeU, \pos = 3*\xs of q7] (p\i1) {};					
      		\node[nodeU, below = \ys of p\i1] (p\i2) {};					
      		\node[nodeU, below = \ys of p\i2] (p\i3) {};
      		
      		\foreach \j in {1,2,3}{
      			\node[\pos = 0pt of p\i\j] {\scriptsize$p_\i^{\j}$};
      		}
      	}

      	\begin{scriptsize}
      	\foreach \x / \y / \b in {1/2/25, 2/1/25, 3/2/0, 4/2/0, 3/5/0, 4/6/0, 5/7/10, 6/7/-10, 8/7/0, 7/9/0} {
      		\draw[-Stealth] (q\y) edge[bend right = \b] node[pos=0.25, fill=white, inner sep=1pt] {$1$} (q\x); 
      	}
      	\foreach \i / \b / \j in {i/left/5, j/right/6}{ 
      		\draw[-Stealth] (q\j) edge[bend \b=-50, myRed] node[pos=0.3, fill=white, inner sep=1pt] {$-1$} (p\i1);
      		
      		\foreach \s / \t in {1/2, 2/3}{ 
      			\draw[-Stealth] (p\i\s) edge[bend \b = 30] node[pos=0.3, fill=white, inner sep=1pt] {$1$} (p\i\t);
      			\draw[-Stealth] (p\i\t) edge[bend \b = 30] node[pos=0.3, fill=white, inner sep=1pt] {$1$} (p\i\s);
      		}
      	}
      	\end{scriptsize}
      	\end{tikzpicture}
      	\caption{Preference graph for \cref{thm:EFA_k+delta_clique+stars} for stars-graphs, where $e_{\ell}=\{v_i, v_j\}$ with $i<j$.}
      	\label{fig:EFA_k+delta_stars}
      \end{figure}
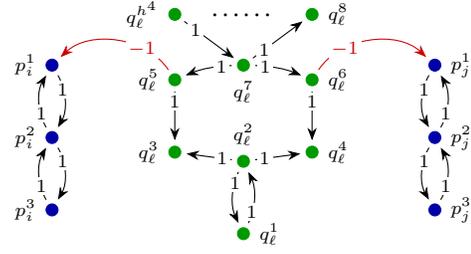
  
      The seat graph consists of $h$ stars, each with~$h$ leaves and $\binom{h}{2}$ disjoint edges, i.e., $\nonisolated\coloneqq 2h^2$, and $3|V(\hat{G})|+|E(\hat{G})|h^4-k$ isolated vertices. 
      The construction of this \EFA instance $I$ can clearly be done in polynomial time and it holds $\maxoutdeg = 3$.
      
      It remains to show the correctness, i.e., $\hat{G}$ admits a size-$h$ clique $C$ if and only if $I$ admits an \EFA. 
      For the ``only if'' part, let $C$ be a size-$h$ clique of $\hat{G}$. 
      For each $v_i \in C$, assign agent $p_i^{2}$ to the center of a star with $h$ leaves and $p_i^1, p_i3$ to its leaves. 
      Moreover, for every $v_i, v_j \in C$ with $i<y$, assign $q_{i,j}^3$ to the leaves of the star with center $p_i^2$ and $q_{i,j}^4$ to the leaves of the star with center $p_j^2$. 
      Furthermore, assign for every $v_i, v_j \in C$, agents $q_{i,j}^1,q_{i,j}^2$ to the endpoints of the same matching-edge. 
      The remaining agents are assigned to isolated vertices.
      It can be argued analogously to the case with clique-graphs that this arrangement is envy-free. 
      
      Before we show the ``if'' part we observe the following properties every \efArr has to satisfy:  
      By \cref{lem:EFA_non-neg_cond} it follows that the three vertex-agents of a vertex $v_i \in V(\hat{G})$ are all assigned to either non-isolated or isolated vertices.
      If they are non-isolated, then they are assigned consecutively, i.e., $p_i^2$ is assigned to the center of a star and $p_i^1, p_i^3$ to its leaves. 
      Since there are $h$ stars with at least two leaves, we can conclude that at most~$h$ different sets of vertex-agents are non-isolated in an \efArr.
      \begin{claim}\label{claim:EFA_k+delta_clique}
        In every \efArr exactly $h$ different sets of vertex-agents are assigned to the centers of stars with $h$ leaves.
      \end{claim}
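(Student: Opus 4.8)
The plan is to pin down exactly which agents can occupy the $h$ star centers. We already know from the preceding observations that a non-isolated vertex-agent set $\{p_i^1,p_i^2,p_i^3\}$ occupies one star center (namely with $p_i^2$) together with two of its leaves, and that at most $h$ such sets are non-isolated. So it suffices to show that \emph{every} star center is occupied by some middle agent $p_i^2$: the $h$ centers then force $h$ pairwise distinct non-isolated vertex-agent sets, and combined with the upper bound this gives exactly $h$. Throughout we may assume $h\ge 4$, since \Clique is polynomial-time solvable when $h\le 3$.

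First I would establish a pruning statement: for every edge $e_\ell$, the agents $q_\ell^7,q_\ell^8,\ldots,q_\ell^{h^4}$ are isolated in every \efArr. They induce a connected subgraph of the positive preference graph, and none of them carries a negative preference, so \cref{lem:EFA_non-neg_cond} applied repeatedly forces them to be either all isolated or all non-isolated; since there are $h^4-6>\nonisolated$ of them, they must all be isolated. Consequently only $q_\ell^1,\ldots,q_\ell^6$ can be non-isolated for each edge, and each of these has positive preference towards at most three other agents.

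The crux is to rule out that an edge-agent occupies a star center. Suppose $q_\ell^z$ (with $z\le 6$ by the pruning step) sits at the center of a star $T$ with $h$ leaves. By symmetry of the preferences, each leaf-agent $p$ of $T$ has $\util[p]{\sigma}=\sat[q_\ell^z](p)$; this value cannot be negative, for otherwise $p$ would strictly prefer — hence envy — any of the many isolated agents, and $p$ cannot be a vertex-agent, since a non-isolated $p_i^1$ or $p_i^3$ sits at a leaf of the star centered at $p_i^2\ne q_\ell^z$ by the consecutiveness observation. Thus all $h$ leaves of $T$ hold edge-agents $p$ with $\sat[q_\ell^z](p)\in\{0,1\}$, and because $q_\ell^z$ has at most three positively-preferred agents while $h\ge 4$, at least one leaf holds an edge-agent $q_{\ell'}^w$ with $\util[q_{\ell'}^w]{\sigma}=0$. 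I would then derive a contradiction from such a non-isolated ``dead'' agent: $q_{\ell'}^w$ has a positively-preferred edge-agent $q_{\ell'}^{w'}$ of $e_{\ell'}$ that carries only non-negative preferences, and an exchange argument in the spirit of \cref{lem:EFA_non-neg_cond} (swapping the utility-$0$ agent $q_{\ell'}^w$ onto a seat adjacent to $q_{\ell'}^{w'}$) forces $q_{\ell'}^{w'}$ to be isolated; on the other hand \cref{lem:EFA_non-neg_cond} applied to $q_{\ell'}^{w'}$, whose positive preference towards the non-isolated $q_{\ell'}^w$ cannot be ignored, forces $q_{\ell'}^{w'}$ to be non-isolated. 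This contradiction shows that no edge-agent occupies a star center, so every star center is occupied by a middle vertex-agent $p_i^2$, whence exactly $h$ vertex-agent sets are non-isolated.

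The step I expect to be the main obstacle is the exchange argument for the utility-$0$ leaf-agent: one must choose the swap partner carefully — distinguishing whether $q_{\ell'}^{w'}$ lies on a matching-edge, on a star leaf, or at a star center — so that the utility after the swap is certifiably positive, while correctly accounting for the few negative preferences carried by $q_\ell^5$ and $q_\ell^6$ towards vertex-agents.
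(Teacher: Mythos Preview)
Your approach has two genuine gaps that stem from the same misconception: the preferences in this construction are \emph{not} symmetric. The theorem claims hardness with $\maxoutdeg=3$, not with symmetric preferences; indeed $\sat[q_\ell^2](q_\ell^3)=1$ while $\sat[q_\ell^3](q_\ell^2)=0$, etc.

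First, and most damaging: your contradiction step assumes the utility-$0$ leaf~$q_{\ell'}^w$ has a positively-preferred out-neighbour~$q_{\ell'}^{w'}$. But $q_{\ell'}^3$ and $q_{\ell'}^4$ have \emph{zero out-degree}---all their preferences are~$0$---so they are trivially envy-free wherever they sit. Nothing stops many leaves of your star~$T$ from being such agents, and your ``$q_{\ell'}^w$ has a positively-preferred edge-agent'' simply fails. Second, even when $q_{\ell'}^{w'}$ exists, your final contradiction requires $\sat[q_{\ell'}^{w'}](q_{\ell'}^w)>0$ to invoke \cref{lem:EFA_non-neg_cond} in the other direction; you only know $\sat[q_{\ell'}^w](q_{\ell'}^{w'})>0$, and these need not coincide.

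The paper's route is different in kind. It first strengthens your pruning by one more step: since $q_\ell^7$ is isolated and $\sat[q_\ell^7](q_\ell^5)=\sat[q_\ell^7](q_\ell^6)=1$, \cref{lem:EFA_non-neg_cond} forces $q_\ell^5,q_\ell^6$ to be isolated too. Now $q_\ell^5$ acts as a \emph{watchdog}: it has preference~$+1$ towards $q_\ell^3$ and~$-1$ towards $p_i^1$, so whenever $q_\ell^3$ is non-isolated, $q_\ell^5$ envies the center of the star containing $q_\ell^3$ unless $p_i^1$ is another leaf of that star---forcing the center to be $p_i^2$. This pins $q_\ell^3,q_\ell^4$ to leaves of vertex-centered stars, never to centers; then $q_\ell^1,q_\ell^2$ are ruled out as big-star centers by a separate adjacency argument. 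The key mechanism you are missing is precisely this watchdog role of $q_\ell^5,q_\ell^6$; your generic out-arc argument cannot substitute for it.
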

      \begin{proof}[Proof of \cref{claim:EFA_k+delta_clique}]
        \renewcommand{\qedsymbol}{$\diamond$}
        Let~$\sigma$ be an \efArr and $e_\ell \in E(\hat{G})$ an edge. 
        We already know that we can assign at most $h$ different sets of vertex-agents to the stars. 
        
        From \cref{lem:EFA_non-neg_cond} and $2h^2 < h^4-5$ for $h>1$ it follows that agents $q_{\ell}^5, \ldots, q_{\ell}^{h^4}$ are assigned to isolated vertices.
        Moreover, if agent $q_{\ell}^3$ (resp.\ $q_{\ell}^4$) is non-isolated, then he can only be in the leaf of a star, where the center agent is~$p_i^2$ (resp.~$p_j^2$) with $e_\ell = \{v_i, v_j\}$. 
        Otherwise $q_{\ell}^5$ (resp.\ $q_{\ell}^6$) envies $p_i^2$ (resp.~$p_j^2$). 
        Furthermore, if one of the two agents $q_{\ell}^1$ or~$q_{\ell}^2$ is non-isolated, then their assigned seats are adjacent. 
        
        Combining these statements we obtain that none of the agents $q_{\ell}^1$ nor $q_{\ell}^2$ can be assigned to the center of a star with~$h$ leaves since there are not enough agents to assign to the remaining~$h$ leaves.
        This means that the remaining candidates for the star centers with $h$ leaves are the vertex-agents. 
        Hence, exactly $h$ different sets of vertex-agents are assigned to the centers of stars with $h+1$ leaves. 
      \end{proof}
      For each star with center-agent $p_i^2$, we can assign to the remaining $h-1$ leaves some incident edge-agents $q_{\ell}^3, q_{\ell}^4$. 
      Each non-isolated agent $q_{\ell}^3, q_{\ell}^4$ implies that the pair 
      $q_{\ell}^1, q_{\ell}^2$ is assigned to some matching-edge. 
      Since we have $h(h-1) = 2\binom{h}{2}$ we can choose at most 
      $\binom{h}{2}$ different sets of agents for the matching-edges. 
      Hence, the vertices of the corresponding non-isolated vertex-agents form a clique in $\hat{G}$	
    \end{proof}
  }

  \newcommand{\efakdeltapathcyclewhard}{%
    \EFA is \Wh\ wrt.\ $\nonisolated+\maxoutdeg$, even for a path- or 
    cycle-graph.  %
  }  
  \statementarxiv{theorem}{thm:EFA_k+delta_path-cycle}{\efakdeltapathcyclewhard}
  \appendixproofwithstatement{thm:EFA_k+delta_path-cycle}{\efakdeltapathcyclewhard}{
    \begin{proof}
      We prove this via a parameterized reduction from the \Wh\ problem 
      \Clique, parameterized by the solution size. 
      Let $(\hat{G},h)$ be an instance of \Clique. 
      
      For each vertex $v_i \in V(\hat{G})$, create $h(h-1)$ \myemph{vertex-agents} named $p_i^1, \ldots, p_i^{h(h-1)}$. 
      For each edge $e_\ell \in E(\hat{G})$, create~$h^4+1$ \myemph{edge-agents} $q_\ell^0, q_\ell^1, \ldots, q_\ell^{h^4}$. 
      For each $e_\ell \in E(\hat{G})$ and $v_i \in V (\hat{G})$, define the preferences as follows (see \cref{fig:EFA_k+delta_path-cycle} for the corresponding preference graph):
      \begin{compactitem}[--]
        \item For each $z \in \{0\} \cup [h-2]$ and $s \in [h-1]$, set 
        $\sat[p_i^{zh+s}](p_i^{zh+s+1}) = 
        \sat[p_i^{zh+s+1}](p_i^{zh+s}) = s$.
        
        \item For each $z \in [h-2]$, set $\sat[p_i^{(z-1)h+2}](p_i^{zh+2}) 
        = 
        \sat[p_i^{zh+2}](p_i^{(z-1)h+2}) = 1$.
        
        \item For each $z \in [h^4]\setminus[3]$, set 
        $\sat[q_{\ell}^{z-1}](q_{\ell}^{z}) = 1$ and 
        $\sat[q_{\ell}^{h^4}](q_{\ell}^{3}) = 
        \sat[q_{\ell}^{1}](q_{\ell}^{0}) = 
        \sat[q_{\ell}^{2}](q_{\ell}^{0}) = 1$. For each $z \in [3]$, set 
        $\sat[q_{\ell}^{z}](q_{\ell}^{z-1}) = 1$. 
        
        \item For each $v_i \in e_\ell$, $z \in \{0\}\cup [h-2]$, set 
        $\sat[q_{\ell}^{2}](p_i^{zh+2}) = -1$.			
        \item The non-mentioned preferences are set to zero.
      \end{compactitem}
  		
  		\begin{figure}[t]
  			\centering
  			\begin{tikzpicture}[>=stealth', shorten <= 2pt, shorten >= 2pt]
  			\def \xs {3ex}
  			\def \ys {6ex}
  			\def \qys {3ex}
  			\def \qxs {6ex}

  			\node[nodeW] (q2) {};
  			\node[nodeW, below left = 1.7*\ys and \xs of q2] (q0) {};
  			\node[nodeW, below right = 1.7*\ys and \xs of q2] (q1) {};
  			\node[nodeW, above = \ys of q2] (q3) {};
  			\node[nodeW, above right = \qys and \qxs of q3] (q4) {};
  			\node[nodeW, above left = \qys and \qxs of q3] (q5) {};
  			
  			\path (q4) -- node[auto=false]{\ldots\ldots} (q5);
  			
  			\foreach \i / \pos / \v in {0/below/0, 1/below/1, 3/above/3, 4/right/4, 5/left/h^4}{
  				\node[\pos = 0pt of q\i, inner sep=0.5pt] {\scriptsize$q_\ell^{\v}$};
  			}
  			\node[above right = -3pt and 0pt of q2] {\scriptsize$q_\ell^{2}$};
  			
  			\foreach \i / \pos / \l / \r in {i/left/h-1/2, j/right/2/h-1}{
  				\node[nodeU, above \pos = 0.5*\ys and 2.2*\xs of q2] (p\i0h1) {};
  				\node[below = 1pt of p\i0h1] {\scriptsize $p_\i^1$};
  				
  				\node[nodeU, below = \ys of p\i0h1] (p\i1h1) {};
  				\node[below = 1pt of p\i1h1] {\scriptsize $p_\i^{h+1}$};
  				
  				\node[nodeU, below = 1.35*\ys of p\i1h1] (p\i2h1) {};
  				\node[below = 1pt of p\i2h1] {\scriptsize $p_\i^{(h-2)h+1}$};
  				
  				\foreach \x / \n / \m in {0h/2/h, 1h/h+2/2h, 2h/(h-2)h+2/h(h-1)}{
  					\node[nodeU, \pos = 2*\xs of p\i\x1] (p\i\x2) {};
  					\node[below \pos = 2pt and -2pt of p\i\x2] {\scriptsize $p_\i^{\n}$};
  					
  					\node[nodeU, \pos = 3.5*\xs of p\i\x2] (p\i\x3) {};
  					\node[below = 2pt of p\i\x3] {\scriptsize $p_\i^{\m}$};
  					
  					\path (p\i\x2) -- node[auto=false]{\color{gray}$\overset{\l, \ldots, \r}{\ldots\ldots\ldots}$} (p\i\x3); 
  				}
  				\path (p\i1h2) -- node[pos=0.42, auto=false]{\vdots} (p\i2h2);
  				\path (p\i1h2) -- node[pos=0.58, auto=false]{\vdots} (p\i2h2);
  			}

  			\begin{scriptsize}
  			\foreach \x / \y / \b in {0/1/15, 1/0/15, 2/1/0, 3/2/0, 3/4/-10, 5/3/-10, 2/0/0} {
  				\draw[-Stealth] (q\x) edge[bend left = \b]  node[pos=0.25, fill=white, inner sep=1pt] {$1$} (q\y); 
  			}
  			
  			\foreach \i / \b in {i/right, j/left}{					
  				\draw[-Stealth] (q2) edge[bend \b=60, myRed] node[pos=0.25, fill=white, inner sep=1pt] {$-1$} (p\i0h2);
  				\draw[-Stealth] (q2) edge[bend \b=12, myRed] node[pos=0.25, fill=white, inner sep=1pt] {$-1$} (p\i1h2);
  				\draw[-Stealth] (q2) edge[bend \b=-17, myRed] node[pos=0.26, fill=white, inner sep=1pt] {$-1$} (p\i2h2);
  			}
  			
  			\foreach \i / \b in {i/left, j/right}{					
  				\foreach \x in {0h,1h,2h}{	
  					\draw[-Stealth] (p\i\x1) edge[bend left = 15] node[pos=0.3, fill=white, inner sep=1pt] {$1$} (p\i\x2);
  					\draw[-Stealth] (p\i\x2) edge[bend left = 15] node[pos=0.3, fill=white, inner sep=1pt] {$1$} (p\i\x1);
  				}
  				\draw[-Stealth] (p\i0h2) edge[bend \b = 10] node[pos=0.3, fill=white, inner sep=1pt] {$1$} (p\i1h2);
  				\draw[-Stealth] (p\i1h2) edge[bend \b = 10] node[pos=0.3, fill=white, inner sep=1pt] {$1$} (p\i0h2);
  			}
  			\end{scriptsize}
  			\end{tikzpicture}
  			\caption{Preference graph for \cref{thm:EFA_k+delta_path-cycle} for path- and cycle-graphs, where $e_{\ell}=\{v_i, v_j\}$.}
  			\label{fig:EFA_k+delta_path-cycle}
  		\end{figure}
      
      The seat graph $G$ consists of a path (resp.\ cycle) with $\nonisolated\coloneqq h^2(h-1)+h(h-1)$ vertices and $|V(\hat{G})|h(h-1)+|E(\hat{G})|(h^4+1)-k$ isolated vertices.
      Clearly, this instance~$I$ of \EFA can be constructed in polynomial time and it holds $\maxoutdeg = 2h$.
      
      It remains to show that $\hat{G}$ admits a size-$h$ clique if and only if $I$ admits an \efArr.		
      We start proving the ``only if'' part. 
      Let $C$ be a size-$h$ clique. 
      For the path we begin assigning at one of the endpoints, for the cycle we can begin at any non-isolated vertex.
      For each edge $e_\ell = \{v_i,v_j\} \in C$ and some (not yet used) $z,z' \in [h-1]$, assign agents $p_i^{zh}, \ldots, p_i^{(z-1)h+1}, q_{\ell}^0, q_\ell^1, p_j^{(z'-1)h+1}, \ldots, p_j^{z'h}$ in this order to the path (resp.\ cycle). 
      The remaining agents are assigned to isolated vertices.	
      
      This arrangement is envy-free because of the following: 
      \begin{compactenum}[(i)]
        \item Every vertex-agents $p_i^z$ with $v_i \in C, z \in [h(h-1)]$ is envy-free because $p_i^z$ has his maximum possible utility.
        
        \item The vertex-agents $p_i^z$ with $v_i \notin C, z \in [h(h-1)]$ are envy-free since there is no non-isolated agent towards which $p_i^z$ has a positive preference. 
        
        \item Every edge-agent $q_\ell^0, q_{\ell}^1$ with $e_\ell \in C$ is envy-free since he has his maximum possible utility of one.
        
        \item The remaining isolated edge-agents either have no non-isolated agent towards which they have a positive preference, or the corresponding agents $q_{\ell}^0, q_{\ell}^1$ are assigned in such a way that $q_{\ell}^2$ does not envy his neighbors. 
      \end{compactenum}	
      
      Before we prove the ``if'' part, we observe the following properties an \efArr has to satisfy. 
      \begin{claim}\label{claim:EFA_k+delta_path-cycle}
        Every envy-free arrangement~$\sigma$ satisfies:
        \begin{compactenum}[(1)]
          \item\label{eq:EFA_k+delta_path-cycle-no-q} For each edge $e_\ell \in E(\hat{G})$, agents $q_{\ell}^2, \ldots, q_{\ell}^{h^4}$ are always assigned to 
          isolated vertices.
          
          \item\label{eq:EFA_k+delta_path-cycle-vertex-gadget} If a vertex-agent $p_i^z$ with $v_i \in V(\hat{G}), z \in [h^2]$ is assigned to the path (resp.\ cycle), then all agents from $\{p_i^1, \ldots, p_i^{h^2}\}$ are assigned to the path (resp.\ cycle).   
          Moreover, if such a set of agents is non-isolated, then the seats of $p_i^{zh+s}, p_i^{zh+s+1}$ for $z\in \{0\}\cup [h-2]$ and $s\in [h-1]$ are adjacent. 
          
          \item\label{eq:EFA_k+delta_path-cycle-edge-gadget1} If $q_{\ell}^0$ or $q_{\ell}^1$ is non-isolated for some $e_\ell \in E(\hat{G})$, then both $q_{\ell}^0$ and $q_{\ell}^1$ are non-isolated and their seats are adjacent. 
          
          \item\label{eq:EFA_k+delta_path-cycle-edge-gadget2} If a pair $q_{\ell}^0,q_{\ell}^1$ is non-isolated for $e_\ell = \{v_i,v_j\}\in E(\hat{G})$, then both $q_{\ell}^0$ and $q_{\ell}^1$ are adjacent in~$\sigma$ to vertex-agents corresponding to $v_i$ or $v_j$. 
          In particular, $q_{\ell}^0,q_{\ell}^1$ are adjacent to some $p_i^{zh+1}$ or $p_j^{zh+1}$ with $z \in \{0\} \cup [h-2]$, where the next $h$ seats on the path (resp.\ cycle) are assigned to $p_i^{zh+1}, \ldots, p_i^{(z+1)h}$ or $p_j^{zh+1}, \ldots, p_j^{(z+1)h}$. 
        \end{compactenum}
      \end{claim}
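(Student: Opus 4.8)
The plan is to prove the four properties in the order listed, using \cref{lem:EFA_non-neg_cond} as the main tool. Observe first that every agent other than the $q_\ell^2$'s has only non-negative preferences, so \cref{lem:EFA_non-neg_cond}, together with its contrapositive (if some agent $q$ with $\sat(q) > 0$ is non-isolated, then $p$ is non-isolated), applies to all of them.

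For Statement~\eqref{eq:EFA_k+delta_path-cycle-no-q}, I would observe that, restricted to $\{q_\ell^3, \ldots, q_\ell^{h^4}\}$, the positive preferences contain the directed cycle $q_\ell^3 \to q_\ell^4 \to \cdots \to q_\ell^{h^4} \to q_\ell^3$. Since all these agents have non-negative preferences, walking around this cycle with \cref{lem:EFA_non-neg_cond} shows that they are either all non-isolated or all isolated; as $|\{q_\ell^3, \ldots, q_\ell^{h^4}\}| = h^4 - 2 > \nonisolated$ for $h > 1$, they cannot all fit on the path (resp.\ cycle), so they are all isolated. Applying \cref{lem:EFA_non-neg_cond} once more to the now-isolated agent $q_\ell^3$ (using $\sat[q_\ell^3](q_\ell^2) = 1 > 0$) then forces $q_\ell^2$ to be isolated as well.

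For Statement~\eqref{eq:EFA_k+delta_path-cycle-vertex-gadget}, the key point is that the positive-preference graph on the vertex-agents of a fixed $v_i$ is connected: inside each block the agents $p_i^{zh+1}, \ldots, p_i^{zh+h}$ form a path, and consecutive blocks are threaded together by the edges $p_i^{(z-1)h+2}\,p_i^{zh+2}$. Iterating \cref{lem:EFA_non-neg_cond} along this connected graph yields the all-or-nothing dichotomy. For the adjacency refinement, in the non-isolated case I would argue as in \cref{claim:EFA_k_clique+stars+path-cycle_symm}\eqref{efasymk-allornothing}: the value $\sat[p_i^{zh+s}](p_i^{zh+s+1}) = s$ is strictly the largest positive preference of $p_i^{zh+s}$ (its other positive contacts carry weight at most $s-1$, or $1$ for the inter-block edges), so if $p_i^{zh+s}$ is not seated next to $p_i^{zh+s+1}$ then $p_i^{zh+s}$ has an improving swap and hence an envy, a contradiction; consequently each block occupies $h$ consecutive seats in the order $p_i^{zh+1}, \ldots, p_i^{zh+h}$.

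Statements~\eqref{eq:EFA_k+delta_path-cycle-edge-gadget1} and~\eqref{eq:EFA_k+delta_path-cycle-edge-gadget2} are where the negative preferences of $q_\ell^2$ enter, and I expect this to be the main obstacle. Since $q_\ell^0$ is the unique positive out-neighbour of $q_\ell^1$, \cref{lem:EFA_non-neg_cond} immediately gives that $q_\ell^0$ non-isolated implies $q_\ell^1$ non-isolated. For the remaining implications I would exploit that $q_\ell^2$ is isolated (by~\eqref{eq:EFA_k+delta_path-cycle-no-q}) with $\sat[q_\ell^2](q_\ell^0) = \sat[q_\ell^2](q_\ell^1) = 1$ and $\sat[q_\ell^2](p_i^{zh+2}) = -1$ for $v_i \in e_\ell$: envy-freeness of $q_\ell^2$ is equivalent to the requirement that, at every non-isolated seat, the agents sitting at its (at most two) path-neighbours have $\sat[q_\ell^2]$-values summing to at most $0$. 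Applying this to the seats incident to $q_\ell^0$ and to $q_\ell^1$, and combining it with~\eqref{eq:EFA_k+delta_path-cycle-vertex-gadget} (which fixes which vertex-agents may occupy those seats, and in which order), forces $q_\ell^0$ and $q_\ell^1$ to be non-isolated and adjacent, and forces the next-but-one seat from each of them to hold some $p_i^{zh+2}$ or $p_j^{zh+2}$ --- so that the following $h$ seats carry the full block $p_i^{zh+1}, \ldots, p_i^{(z+1)h}$ (respectively for $j$). Carefully discharging these local inequalities on a path --- in particular ruling out path-endpoints and handling the two orientations of a block --- is the technical crux; the cycle case follows along the same lines, since passing to a cycle only adds neighbours and never removes the swap options used above.
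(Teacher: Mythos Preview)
Your approach is essentially the paper's approach: \cref{lem:EFA_non-neg_cond} plus counting for \eqref{eq:EFA_k+delta_path-cycle-no-q}, \cref{lem:EFA_non-neg_cond} plus the ``largest positive preference'' argument for \eqref{eq:EFA_k+delta_path-cycle-vertex-gadget}, and the envy-freeness of the isolated agent~$q_\ell^2$ for \eqref{eq:EFA_k+delta_path-cycle-edge-gadget2}. Parts \eqref{eq:EFA_k+delta_path-cycle-no-q}, \eqref{eq:EFA_k+delta_path-cycle-vertex-gadget}, and \eqref{eq:EFA_k+delta_path-cycle-edge-gadget2} match the paper almost verbatim; your extra care in \eqref{eq:EFA_k+delta_path-cycle-no-q} about $q_\ell^2$ having negative preferences is a welcome refinement.

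The one place you diverge is \eqref{eq:EFA_k+delta_path-cycle-edge-gadget1}. The paper simply says ``follows from \cref{lem:EFA_non-neg_cond}'', implicitly using that the arcs between $q_\ell^0$ and $q_\ell^1$ are \emph{mutual} (as drawn in \cref{fig:EFA_k+delta_path-cycle}; the text states $\sat[q_\ell^1](q_\ell^0)=1$ twice, which is a typo for $\sat[q_\ell^0](q_\ell^1)=1$). With both arcs, \cref{lem:EFA_non-neg_cond} gives both directions of ``non-isolated'', and adjacency then follows because each has the other as its \emph{unique} positive out-neighbour, so if they were non-adjacent each would have utility~$0$ and envy a seat next to the other. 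Your plan instead tries to extract the missing direction and adjacency from the $q_\ell^2$ constraint alone; but that constraint, combined with \eqref{eq:EFA_k+delta_path-cycle-vertex-gadget}, only pins down the agents \emph{two} steps away from $q_\ell^{0}$ (resp.\ $q_\ell^{1}$) as some $p_{i}^{zh+2}$ or $p_{j}^{zh+2}$ --- it does not by itself force $q_\ell^0$ and $q_\ell^1$ to sit next to each other, nor does it force $q_\ell^0$ to be non-isolated when $q_\ell^1$ is. So for \eqref{eq:EFA_k+delta_path-cycle-edge-gadget1} you should use the mutual arc directly (as the paper does) rather than routing through~$q_\ell^2$; reserve the $q_\ell^2$ argument for \eqref{eq:EFA_k+delta_path-cycle-edge-gadget2}, where it is actually needed.
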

      \begin{proof}[Proof of \cref{claim:EFA_k+delta_path-cycle}]
        \renewcommand{\qedsymbol}{$\diamond$}
        \begin{compactenum}[(1)]
          \item Suppose an agent $q_{\ell}^z$ for some $e_\ell \in E(\hat{G}), z \in [h^4]\setminus\{1\}$ is assigned to the path (resp.\ cycle). 
          Then, by \cref{lem:EFA_non-neg_cond} all agents $q_{\ell}^2, \ldots, q_{\ell}^{h^4}$ have to be assigned to the path (resp.\ cycle).
          However, since $\nonisolated < h^4-1$ for $h>1$, there are not enough vertices in the path (resp.\ cycle).
          
          \item By the previous statement we know that for each edge $e_\ell 
          \in E(\hat{G})$, agent $q_\ell^2$ is isolated. 
          Hence, the first part of this statement follows now from \cref{lem:EFA_non-neg_cond}.
          If the set of agents for a vertex $v_i \in V(\hat{G})$ is non-isolated, then the seats of $p_i^{zh+s}, p_i^{zh+s+1}$ for $z\in \{0\}\cup [h-2]$, $s\in [h-1]$ are adjacent because $\sat[p_i^{zh+s}](p_i^{zh+s+1})$ is the largest positive preference of $p_i^{zh+s}$. 
          Hence, if $p_i^{zh+s}$ is not assigned next to $p_i^{zh+s+1}$, he will have an envy.
          
          \item This follows from \cref{lem:EFA_non-neg_cond}. 
          
          \item By the previous statement we know that agents $q_{\ell}^0, q_{\ell}^1$ are both either isolated or non-isolated and in the latter case, their seats are adjacent. 
          If $q_{\ell}^0$ (resp.\ $q_{\ell}^1$) is not adjacent to a vertex-agent, then he has to be adjacent to an edge-agent, which implies that $q_\ell^2$ envies $q_{\ell}^0$ (resp.\ $q_{\ell}^1$). 
          To make $q_\ell^2$ envy-free, the next agent but one from $q_{\ell}^0$ (resp.\ $q_{\ell}^1$) has to be some $p_i^{zh+2}$ or $p_j^{zh+2}$. 
          Combining this with Statement~\eqref{eq:EFA_k+delta_path-cycle-vertex-gadget} proves this statement. \qedhere
        \end{compactenum}
      \end{proof}
      
      From \cref{claim:EFA_k+delta_path-cycle}\eqref{eq:EFA_k+delta_path-cycle-vertex-gadget} it follows that at most $h$ different size-$h(h-1)$ sets of vertex-agents can be assigned to the path (resp.\ cycle). 
      By the number of non-isolated vertices, at least $h(h-1)$ edge-agents are are assigned to non-isolated vertices. 
      Let $E'$ denote the edges from $E(\hat{G})$ which correspond to the non-isolated edge-agents.
      We claim that $E'$ corresponds to a size-$h$ clique.
      To this end, let $V'=\{v,w\mid \{v,w\} \in E'\}$.
      By \cref{claim:EFA_k+delta_path-cycle}\eqref{eq:EFA_k+delta_path-cycle-edge-gadget1}, we have that $|E'|\ge \binom{h}{2}$.
      By \cref{claim:EFA_k+delta_path-cycle}\eqref{eq:EFA_k+delta_path-cycle-edge-gadget2}, for each $e_{\ell}\in E'$ the two size-$h(h-1)$ sets vertex-agents which correspond to the endpoints of $e_{\ell}$ must also be non-isolated.
      This means that $|V'| \ge h$. 
      Since we can only assign~$h$ many such sets of vertex-agents to the non-isolated vertices, we have that $|V'|\le h$, implying that $|V'|=h$, and $(V',E')$ is a clique of size $h$, as desired. 
    \end{proof}
  }

  \looseness=-1
  \section{\ESAf}\label{sec:esa}
  \appendixsection{sec:esa}
  In this section we consider the last concept exchange stability. 
  We first show that an \esArr always exists for clique-graphs and non-negative preferences. %
  \begin{obs} \label{obs:ESA_clique_nonneg}
    For clique-graphs and non-negative preferences, \ESA is polynomial-time solvable.
  \end{obs}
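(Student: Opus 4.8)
The plan is to prove the (apparently much stronger but in fact easy) statement that, under non-negative preferences and a clique-graph, \emph{every} arrangement is exchange-stable; the polynomial-time algorithm then merely outputs an arbitrary bijection $\sigma\colon\agents\to V(G)$. The key structural observation I would record first is that in a clique-graph all non-isolated seats are pairwise adjacent, so if $S\coloneqq\{p\in\agents\mid\degr{G}(\sigma(p))\ge 1\}$ denotes the set of agents placed at clique vertices, then $\util{\sigma}=\sum_{q\in S\setminus\{p\}}\sat(q)\ge 0$ for every $p\in S$, and $\util{\sigma}=0$ for every $p\notin S$. In particular the utility of an agent sitting in the clique is independent of \emph{which} clique vertex it occupies.

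Next I would rule out exchange-blocking pairs $\{p,q\}$ by an exhaustive three-case analysis, in each case exhibiting one of the two agents who does not strictly envy the other (which suffices, since a blocking pair needs mutual envy). (1) If $p,q\in S$, then in $\swap{p}{q}$ the set of clique agents is still $S$, so $\util[p]{\swap{p}{q}}=\sum_{r\in S\setminus\{p\}}\sat(r)=\util{\sigma}$ and $p$ does not envy $q$. (2) If $p\in S$ and $q\notin S$, then in $\swap{p}{q}$ agent $p$ moves to an isolated vertex, so $\util[p]{\swap{p}{q}}=0\le\util{\sigma}$ and $p$ does not envy $q$. (3) If $p,q\notin S$, both utilities equal $0$ before and after the swap, so neither envies the other. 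The degenerate sub-cases are subsumed: if $\nonisolated\le 1$ the seat graph has no edges and all utilities are $0$, and if there are no isolated vertices then $S=\agents$ and only case (1) occurs. Hence no arrangement has an exchange-blocking pair, so every arrangement is an \esArr.

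The ``main obstacle'' is essentially nonexistent: the whole argument rests only on the seat-symmetry of a clique together with the non-negativity of the preferences, and the sole point needing care is making the case distinction over the positions of $p$ and $q$ exhaustive and checking the two degenerate clique sizes. Concluding, since \ESA on clique-graphs with non-negative preferences is always a ``yes''-instance and an \esArr can be written down in linear time, the problem is polynomial-time solvable.
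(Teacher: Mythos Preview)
Your proposal is correct and takes essentially the same approach as the paper: both argue that \emph{every} arrangement is exchange-stable, using that clique seats are indistinguishable and that non-negativity ensures no clique agent envies an isolated one. Your version simply spells out the three-case analysis that the paper compresses into two sentences.
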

  \begin{proof}
    With non-negative preferences, no agent assigned to non-isolated vertices envies an isolated agent.
    Moreover, each agent is indifferent towards the seats in the clique. 
    Hence, they will not form any exchange-blocking pair.
  \end{proof}
  
\ifarxiv\else
  \noindent In general, the problem remains intractable, even for clique- or matching-graphs. 
\fi 
  Particularly, it is unlikely that an \fpt\ algorithm wrt.\ $\nonisolated$ exists, implied by the following.
  
  \newcommand{\esakcliquematchingwhard}{%
    \ESA is \Wh\ wrt.\ $\nonisolated$, even for a \mbox{clique-,} stars-, or matching-graph. %
  }
  \newcommand{\esakcliqueclaimxtwo}{%
    In every \esArr, agent $x_2$ is assigned to an isolated vertex. 
  }
\statementarxiv{theorem}{thm:ESA_k_clique+matching}{\esakcliquematchingwhard}
  {
\ifarxiv
	\begin{proof} 
		\textbf{Clique-graphs. }
\else
\ifshort
	\begin{proof}[Proof sketch]
		We show the case with clique-graphs. 
\else
	\begin{proof}[Proof sketch]
		We show the case with clique-graphs. The proof for matching-graphs (and hence stars-graphs) is deferred to the appendix.
		
\fi 
\fi        
      We provide a parameterized reduction from \IS parameterized by the solution 
      \ifshort size~$h$~\cite{DF13}. %
      \else
      size~\cite{DF13}. 
      \fi 
      Given an instance $(\hat{G}, h)$ of \IS, we create for each vertex~$v_i \in V(\hat{G})$ a \myemph{vertex-agent}~$p_i$, and two special agents~$x_1$ and~$x_2$. 
      For each two~$v_i, v_j \in V(\hat{G})$, the preferences are defined as follows (see Figure~\ref{fig:ESA_k_clique} for the corresponding preference graph):
      If $\{v_i,v_j\} \in E(\hat{G})$, set 
      $\sat[p_i](p_j)=\sat[p_j](p_i)=-1$.
      Finally, set $\sat[p_i](x_2) = h$,
      $\sat[x_1](x_2) = -h$,
      $\sat[x_2](x_1) = h$,
      $\sat[x_1](p_i) = 1$,	and		
      $\sat[x_2](p_i) = -1$.
      The non-mentioned preferences are zero.
      \begin{figure}[t]
      	\centering
      	\begin{tikzpicture}[>=stealth', shorten <= 2pt, shorten >= 2pt]
	      	\def \xs {10ex}
	      	\def \ys {10ex}
	      	\def \ss {2.5ex}
	      	
	      	\node[nodeW] (x1) {};
	      	\node[nodeW, right = 1.8*\xs  of x1] (x2) {};
	      	\node[nodeU, below left = 0.4*\ys and 0.5*\xs of x1] (p1) {};
	      	\node[nodeU, below right = 0.4*\ys and 0.5*\xs of x2] (p2) {};

	      	\foreach \x in {1, 2} {
	      		\node[above = 0pt of x\x] {\scriptsize$x_{\x}$};
	      	}
	      	\node[below = 0pt of p1] {\scriptsize$p_{1}$};
	      	\node[below = 0pt of p2] {\scriptsize$p_{\hat{n}}$};
	      	\path (p1) -- node[auto=false]{\ldots\ldots} (p2);

	      	\node[ellipse, draw = gray, minimum width = 4.4*\xs, minimum height = 0.5*\ys, label={left:$\hat{G}$}] (ell) at (0.94*\xs,-0.55*\ys) {};

	      	\begin{scriptsize}	      	
		      	\foreach \x / \b / \bb / \p /\pp / \bbb in {1/15/5/0.32/0.15/15,2/10/45/0.12/0.31/0} {
		      		\draw[-Stealth] (x1) edge[black, bend right=\bbb]  node[pos=\p, fill=white, inner sep=1pt] {$1$} (p\x);
		      		\draw[-Stealth] (x2) edge[bend right=\b, myRed]  node[pos=\pp, fill=white, inner sep=1pt] {$-1$} (p\x);
		      		\draw[-Stealth] (p\x) edge[bend right=\bb, black]  node[pos=\pp, fill=white, inner sep=1pt] {$h$} (x2);
		      	}
		      	
		      	\draw[-Stealth] (x1) edge[bend left=30, myRed]  node[pos=0.22, fill=white, inner sep=1pt] {$-h$} (x2);
		      	\draw[-Stealth] (x2) edge[bend left=-15, black]  node[pos=0.25, fill=white, inner sep=1pt] {$h$} (x1);
		      	
		      	\node[nodeU, right = \xs of p1] (pi) {};
		      	\node[nodeU, left = \xs of p2] (pj) {};
		      	\draw[Stealth-Stealth] (pi) edge[bend right=15, gray] node[below, inner sep=1pt] {\begin{tabular}{c}$-1$ if adjacent, $0$ otherwise \end{tabular}} (pj);
	      	\end{scriptsize}
      	\end{tikzpicture}
      	\caption{Preference graph for \cref{thm:ESA_k_clique+matching} for clique-graphs.}
      	\label{fig:ESA_k_clique}
    \end{figure}
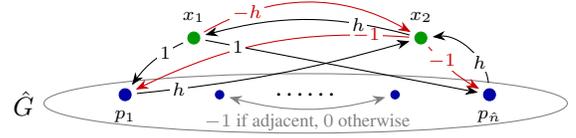       
      The seat graph~$G$ consists of a clique of size $\nonisolated \coloneqq h+1$ and $|V(\hat{G})|+2-\nonisolated$ isolated vertices.
      
      It remains to show that $\hat{G}$ admits a size-$h$ independent set if and only if the constructed instance admits an \esArr.
      First, we observe the following. %
      \begin{claim}%
        \label{claim:ESA_k_clique}
        \esakcliqueclaimxtwo
      \end{claim}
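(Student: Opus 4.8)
The plan is to show that if $x_2$ were assigned to a non-isolated vertex, then some pair would form an exchange-blocking pair, contradicting exchange stability. First I would analyze the utility of $x_2$ in any arrangement $\sigma$ where $x_2$ is non-isolated. Since the seat graph is a clique of size $\nonisolated = h+1$, the $h+1$ non-isolated seats are occupied by $x_2$ together with $h$ other agents from $\{x_1\} \cup \{p_i \mid v_i \in V(\hat{G})\}$. I would split into two cases depending on whether $x_1$ is also non-isolated.

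In the case where $x_1$ is non-isolated, $x_2$ sits in a clique with $x_1$ and some set $S$ of vertex-agents with $|S| = h-1$, so $\util{x_2}{\sigma} = h + \sum_{p_i \in S} (-1) = h - (h-1) = 1 > 0$; meanwhile $x_1$'s utility is $-h + \sum_{p_i \in S} 1 = -h + (h-1) = -1$. Now consider swapping $x_1$ with any isolated vertex-agent $p_j$ (one exists since $\hat{n} + 2 > h+1$ when $\hat{n} \ge h$, which we may assume): after the swap $x_1$ becomes isolated with utility $0 > -1$, and $p_j$ moves into the clique. I would need to check $p_j$ also strictly improves: $p_j$ was isolated with utility $0$, and in the clique $p_j$'s utility is $h$ (from being adjacent to $x_2$) plus contributions from the other vertex-agents in $S$, which are at worst $-|S| = -(h-1)$, giving at least $h - (h-1) = 1 > 0$. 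Hence $\{x_1, p_j\}$ is an exchange-blocking pair, a contradiction. (If $x_1$ is non-isolated but the non-isolated vertex-agents do not include a full independent-set structure, the contribution bound still holds since $\sat[p_j](p_k) \in \{0,-1\}$.)

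In the case where $x_1$ is isolated, $x_2$ sits in the clique with $h$ vertex-agents, call this set $S$ with $|S| = h$. Then $x_1$'s utility is $0$ (isolated), and $x_2$'s utility is $h - |S| = h - h = 0$. Consider swapping $x_1$ and $x_2$: after the swap, $x_2$ is isolated with utility $0$ — no change — so this particular swap is not blocking. Instead I would swap $x_1$ with $x_2$ and check $x_1$: $x_1$ moves into the clique, gaining $\sum_{p_i \in S} 1 = h$ from the vertex-agents and losing nothing else (no $x_2$ adjacency since $x_2$ left), so $\util{x_1}{\swap{x_1}{x_2}} = h > 0 = \util{x_1}{\sigma}$, while $x_2$'s utility stays $0$. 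Since exchange stability requires that no two agents \emph{both} strictly improve, I need $x_2$ to strictly improve too, which fails here. So I would instead route through a vertex-agent: since $\hat{n} \ge h+1$ can be assumed, there is an isolated vertex-agent $p_j$; swap $x_2$ with $p_j$. Then $x_2$ becomes isolated ($\util{x_2}{} = 0$, no improvement over $0$) — still not blocking. The cleanest fix is to observe that in this sub-case $x_1$ being isolated forces a contradiction via $p_j$ and $x_1$: swap isolated $p_j$ into the clique replacing some $p_k \in S$; then $p_j$ gains $h$ from $x_2$, strictly improving from $0$, and $p_k$ becomes isolated with utility $0 \ge \util{p_k}{\sigma}$ (since $p_k$'s utility in the clique is $h$ from $x_2$ minus non-negative penalties, which could be positive, so this needs care).

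The main obstacle, as the last sentence indicates, is that exchange stability is weaker than envy-freeness — a blocking pair requires \emph{both} agents to strictly gain — so I cannot simply move $x_2$ out unilaterally; I must always exhibit a partner who also strictly benefits. The delicate point is the sub-case analysis when $x_1$ is isolated, where $x_2$'s utility is already $0$ and swapping $x_2$ to an isolated seat gives no gain. I expect the resolution is to choose the swap partner and the displaced agent carefully using the large pool of isolated vertex-agents and the value $\sat[p_i](x_2) = h$, which dominates the $-1$ penalties, so that some vertex-agent currently isolated strictly prefers $x_2$'s seat while the vertex-agent it displaces is indifferent or worse — and then argue the displaced agent's utility before the swap was actually positive (it was adjacent to $x_2$), so it strictly worsens only if... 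Actually the correct framing is: assume $x_2$ non-isolated and derive that $x_2$ together with an appropriately chosen isolated vertex-agent $p_j$ forms a blocking pair, because $p_j$ strictly prefers $x_2$'s neighborhood (worth $h$ minus small penalties) and $x_2$ strictly prefers $p_j$'s empty seat exactly when $\util{x_2}{\sigma} < 0$; handling the boundary $\util{x_2}{\sigma} = 0$ is where I expect to need the finer structural observation that the non-isolated vertex-agents cannot all be pairwise non-adjacent unless $\hat{G}$ has an independent set — but at this stage of the proof we have not yet established that, so instead I would push the penalty count: with $|S| = h$ vertex-agents in a clique, if they induce \emph{any} edge in $\hat{G}$ then $\util{x_2}{\sigma} \le h - h - 2 < 0$ and the swap with $p_j$ is blocking; and if they induce no edge, then $\hat{G}$ already has a size-$h$ independent set, so we may separately handle that by noting the instance is a yes-instance and the claim about $x_2$ can be re-derived on the canonical solution. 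I would write the argument to cover all these branches, with the penalty-counting bound $\util{x_2}{\sigma} = h - |S| - 2\cdot(\text{edges in }\hat{G}[S])$ as the key inequality.
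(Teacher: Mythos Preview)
Your case split matches the paper's, and your treatment of the case where $x_1$ is non-isolated is correct and essentially identical to the paper's: $\util[x_1]{\sigma}=-h+(h-1)=-1$, any isolated vertex-agent $p_j$ strictly gains by moving next to $x_2$ (at least $h-(h-1)=1$), so $\{x_1,p_j\}$ blocks.

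The genuine gap is in the other case, where $x_1$ is isolated. You compute $\util[x_2]{\sigma}=h-|S|=0$, but there is no ``$h$'' term here: $x_1$ is isolated, so $x_2$'s only neighbors are the $h$ vertex-agents in $S$, and $\sat[x_2](p_i)=-1$ for each of them. Hence $\util[x_2]{\sigma}=-h$, not $0$. With the correct value the swap $\{x_1,x_2\}$ is immediately blocking: $x_2$ goes from $-h$ to $0$, and $x_1$ goes from $0$ to $\sum_{p_i\in S}\sat[x_1](p_i)=h$. Both strictly improve. This is exactly the paper's argument, and it dissolves all of the difficulties you describe in the second half of your proposal. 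In particular, the ``penalty-counting bound'' $\util[x_2]{\sigma}=h-|S|-2\cdot(\text{edges in }\hat{G}[S])$ is not right either: $x_2$'s utility does not depend on the edges among the vertex-agents, only on who $x_2$'s own neighbors are.
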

        \begin{proof}[Proof of 
        \cref{claim:ESA_k_clique}]
        \renewcommand{\qedsymbol}{$\diamond$}
        Towards a contradiction, suppose that $\sigma$ is exchange-stable where $x_2$ is non-isolated. %
        If $x_1$ is isolated, then $\util[x_2]{\sigma} =-h<0$ and 
        $\util[x_1]{\swap{x_1}{x_2}}=h>0$, implying that $\{x_1,x_2\}$ is an exchange-blocking pair, a contradiction.
        If $x_1$ is non-isolated, then $\util[x_1]{\sigma}=-1$. 
        Since there is a vertex-agent $p_i \in \agents$ assigned to an isolated vertex and $\util[p_i]{\swap{p_i}{x_1}}>0$, agents $x_1$ and $p_i$ form an exchange-blocking pair, a contradiction.
      \end{proof}

      Hence, at least~$h$ vertex-agents have to be assigned to the non-isolated vertices in an \esArr. 
      If one of these non-isolated vertex-agents~$p_i$ has negative utility, then~$p_i$ envies every isolated agent.
      Depending on whether~$x_1$ is assigned to an isolated vertex, agent~$p_i$ forms an exchange-blocking pair with~$x_1$ or with $x_2$. 		
      Therefore,~$I$ admits an \esArr if and only if every non-isolated vertex-agent has non-negative utility. 
      By Claim~\ref{claim:ESA_k_clique} this is equivalent to each pair of non-isolated vertex-agents is non-adjacent, i.e., $\hat{G}$ admits a size-$h$ independent set. 
      \appendixcontinue{thm:ESA_k_clique+matching}{}{\esakcliquematchingwhard}{
      	\iflong
        We continue with the proof.
        \fi 
        
        \paragraph*{Matching-graphs.} 
        We provide a parameterized reduction from the \Wh\ problem \Clique parameterized by the solution size. 
        Let $(\hat{G}, h)$ be an instance of \Clique and $\hat{n}\coloneqq|V(\hat{G})|$. 
        For each vertex $v_i \in V(\hat{G})$, we create~$2h^2$ \myemph{vertex-agents}~$p_i^1, \ldots, p_i^{h^2}, \tilde{p}_i^1, \ldots, \tilde{p}_i^{h^2}$.
        For each edge $e_\ell \in E(\hat{G})$, create one \myemph{edge-agent}~$q_{\ell}$. 
        The preferences are defined as follows:
        \begin{compactitem}[--]
          \item For each edge $e_\ell \in E(\hat{G})$: 
          \begin{compactitem}
            \item For each $e_{\ell'} \in E(\hat{G}) \setminus \{e_\ell\}$, set $\sat[q_{\ell}](q_{\ell'}) = -1$.
            
            \item For each $v_{i} \in V(\hat{G}), z \in [h^2]$, set $\sat[q_{\ell}](p_{i}^z) = \sat[q_{\ell}](\tilde{p}_{i}^z) = -i-1$.
          \end{compactitem}
          \item For each vertex $v_i \in V(\hat{G})$ and $z, z' \in [h^2]$ with $z\neq z'$: 
          \begin{compactitem}
            \item  $\sat[p_i^z](\tilde{p}_i^{z'})\!=\!1$, $\sat[p_i^z](\tilde{p}_i^{z}) = -1$, and $\sat[p_i^z](p_i^{z'})\!=\!-2$.
            \item $\sat[\tilde{p}_i^z](p_i^{z})\!=\!2$, $\sat[\tilde{p}_i^z](\tilde{p}_i^{z'})\!=\!-2$, and $\sat[\tilde{p}_i^z](p_i^{z'})\!=\!-1$.

          \end{compactitem}
          
          \item For each pair of vertices $v_i, v_j \in V(\hat{G})$ and 
          $z, z' \in [h^2]$:
          \begin{compactitem}
            \item $\sat[p_i^z](p_j^{z'}) = 
            \sat[p_i^z](\tilde{p}_j^{z'}) = -h^2-1-j$,
            \item $\sat[\tilde{p}_i^z](p_j^{z'}) = 
            \sat[\tilde{p}_i^z](\tilde{p}_j^{z'}) = -h^2-1-j$.
          \end{compactitem}
          \item For each $v_i \in V(\hat{G}), z \in [h^2]$ and $e_\ell \in E(\hat{G})$:
          \begin{compactitem}
            \item $\sat[p_i^z](q_{\ell}) = h^2-1$ if $z \neq 1$ and 
            $v_i \notin e_\ell$,
            \item $\sat[p_i^1](q_{\ell}) = 1$ if $v_i \in e_\ell$,
            \item $\sat[\tilde{p}_i^z](q_{\ell}) = -h^2-1$.
          \end{compactitem}
        \end{compactitem}
        To complete the construction of our \ESA instance~$I$, we create a seat graph~$G$ with $\nonisolated/2 \coloneqq h^3 + \binom{h}{2}/2$ disjoint edges (i.e., stars with one leaf) and $2h^2|V(\hat{G})|+|E(\hat{G})|-k$ isolated vertices. 
        This construction of $I$ can be done in polynomial time and $\nonisolated \coloneqq 2h^3 + \binom{h}{2}$. 
        
        It remains to prove the correctness. 
        That is, we show that~$\hat{G}$ has a size-$h$ clique iff.~$I$ has an exchange-stable arrangement.
        To this end, we say that an agent~$p$ is \myemph{matched} to another agent~$q$ if they are assigned to the endpoints of the same edge in the seat graph; we also say that~$p$ is \myemph{matched} in this case.
        
        To prove the ``only if'' part, assume $C$ is a size-$h$ clique in~$\hat{G}$. We give an arrangement $\sigma$ as follows.
        For each $v_i \in C$ we assign the corresponding vertex-agents in the following way:
        For $\ell \in [h^2-1]$ assign $p_i^\ell, \tilde{p}_i^{\ell+1}$ as well as $p_i^{h^2}, \tilde{p}_i^{1}$ to the two endpoints of the same edge. 
        Moreover, since $C$ induces a clique, there exists an edge between every two vertices in $C$. 
        Since there are~$\binom{h}{2}$ such edges, we assign the corresponding edge-agents to the remaining endpoints of edges in $G$. 
        All other agents are assigned to isolated vertices.        
        This arrangement $\sigma$ is exchange-stable because of the following:
        \begin{compactenum}
          \item Since edge-agents have a negative preference towards every other agent, the edge-agents in isolated 
          vertices are never part of an exchange-blocking pair. 
          
          \item The matched edge-agents do not form blocking. First, we argue that an edge-agent does not form a blocking with other matched agents. Edge-agents prefer other edge-agents more than vertex-agents.
          Since each non-isolated edge-agent is matched to another edge-agent he does not envy another non-isolated agent. 
          Next, we show he doesn't form blocking with an isolated agent. 
          An edge-agent envies isolated agents. 
          The only two agents, which could form an exchange-blocking pair with an edge-agent~$q_\ell$ are the vertex-agents $\tilde{p}_i^1, \tilde{p}_j^1$ where $e_\ell = \{i,j\}$. 
          However, $\tilde{p}_i^1$ and $\tilde{p}_j^1$ are also non-isolated.
          Therefore, no edge-agent is part of an exchange-blocking pair.
          
          \item For each $v_i \in V(\hat{G}) \setminus C, z \in [h^2]$, agents $p_i^z, \tilde{p}_i^z$ have no neighbor with positive preference assigned to the endpoints of an edge. Hence, they are envy-free and do not form a blocking.
          
          \item For each $v_i \in C, z \in [h^2]$, agent $p_i^z$ is envy-free because he has his maximum possible utility. 
          Agent $\tilde{p}_i^z$ is not envy-free, but cannot find a partner to form an exchange-blocking pair.
        \end{compactenum}	
        
        For the ``if'' part, assume that $\sigma$ is an \esArr. 
        We observe some properties that $\sigma$ must satisfy.  
        \begin{claim} \label{claim:ESA_k_matching-vertex}
          For each $(i, z) \in [\hat{n}]\times [h^2]$, the following holds in every \esArr:
          \begin{compactenum}[(i)]
            \item If $p_i^z$ is matched, then $\tilde{p}_i^z$ as well. 
            \item If $\tilde{p}_i^z$ is matched, then he is matched to $p_i^{z'}$ for some $z' \in [h^2]\setminus \{z\}$ or all agents from $\{p_i^1, \ldots, p_i^{z-1},p_i^{z+1}, \ldots, p_i^{h^2} \}$ are matched. 
            \item If $p_i^{z}$ is matched to $\tilde{p}_i^z$, then all agents from $\{p_i^1, \ldots, p_i^{h^2} \}$ are matched. 
            \item If $p_i^{z}$ is matched to $\tilde{p}_i^{z'}$ for some $z' \in [h^2]\setminus \{z\}$, then $\tilde{p}_i^{z}$ is also matched. 
            \item\label{claim:ESA_k_matching-vertex(allornone)} Either all agents from $\{p_i^1, \ldots, p_i^{h^2}, \tilde{p}_i^1, \ldots, \tilde{p}_i^{h^2} \}$ or none of them are matched.
            \item If $p_i^z$ or $\tilde{p}_i^{z}$ is matched, then he is not matched to a vertex-agent of a different vertex $v_j \in V(\hat{G})$.
          \end{compactenum}
        \end{claim}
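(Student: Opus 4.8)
The plan is to establish each of the six items by contradiction: assuming the asserted matching structure fails, we produce an exchange-blocking pair and contradict exchange-stability of $\sigma$. Two structural facts will be used throughout. First, since $G$ is a matching-graph, a matched agent's utility in any arrangement is exactly its single preference value towards the agent on the other endpoint of its edge, and an isolated agent has utility $0$; so exhibiting a blocking pair amounts to comparing single preference values. Second -- the key property of the gadget -- $\tilde p_i^z$ is the \emph{unique} agent assigning a positive preference to $p_i^z$ (namely $\sat[\tilde p_i^z](p_i^z)=2$), and $\tilde p_i^z$ ranks isolation (value $0$) strictly above \emph{every} potential partner other than $p_i^z$. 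Hence, whenever $p_i^z$ is matched to some $r\neq\tilde p_i^z$, agent $r$ necessarily has utility $\sat[r](p_i^z)<0$ and strictly prefers an isolated seat.

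With these two facts, items (i), (iii), (iv) are short. For (i): if $p_i^z$ is matched to $r$ while $\tilde p_i^z$ is isolated, then $r\neq\tilde p_i^z$; in $\swap{\tilde p_i^z}{r}$, agent $\tilde p_i^z$ goes from $0$ to $2$ and $r$ goes from $\sat[r](p_i^z)<0$ to $0$, so $\{\tilde p_i^z,r\}$ blocks. For (iv): if $p_i^z$ is matched to $\tilde p_i^{z'}$ with $z'\neq z$ but $\tilde p_i^z$ is isolated, then in $\swap{\tilde p_i^z}{\tilde p_i^{z'}}$ agent $\tilde p_i^z$ improves from $0$ to $2$ while $\tilde p_i^{z'}$ improves from $\sat[\tilde p_i^{z'}](p_i^z)=-1$ to $0$; a blocking pair. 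For (iii): if $p_i^z$ is matched to $\tilde p_i^z$ (so $\util[p_i^z]{\sigma}=-1$) while some $p_i^{z''}$, $z''\neq z$, is isolated, then in $\swap{p_i^{z''}}{p_i^z}$ agent $p_i^{z''}$ improves from $0$ to $\sat[p_i^{z''}](\tilde p_i^z)=1$ and $p_i^z$ improves from $-1$ to $0$.

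Items (ii) and (vi) are the technical heart, because the partner of the ``stuck'' agent is no longer pinned down and a single swap need not strictly help both endpoints. I plan to handle them by a case distinction on the partner's type, using the magnitude hierarchy of the construction: the large negative blocks $\sat[\cdot](\cdot)=-h^2-1-j$ (between vertex-agents of distinct vertices) and $-h^2-1$ (between a vertex-agent and an edge-agent) guarantee that any such match is strictly worse than isolation for \emph{both} endpoints, which makes it easy to inject a third agent that does value one of the endpoints positively. Concretely, for (ii), when $\tilde p_i^z$ is matched to an agent $r\notin\{p_i^{z'}:z'\in[h^2]\}$ it has strictly negative utility, and assuming some $p_i^w$ with $w\neq z$ were isolated one reroutes: either $p_i^w$ (or, in the degenerate sub-case $r=\tilde p_i^w$, the pair $\{p_i^w,\tilde p_i^w\}$) can be profitably swapped towards a neighbour that positively values $r$, or an isolated agent positively valuing $r$ is exhibited via the already-established items; in every configuration this yields a blocking pair, so only the first disjunct survives. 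For (vi), if $p_i^z$ (resp.\ $\tilde p_i^z$) is matched to a vertex-agent of a different vertex $v_j$, both endpoints have utility below $-h^2$; applying item (i) to $v_i$ and item (v) to $v_j$ forces the entire $v_j$-group to be non-isolated, so the unique positive-liker of that $v_j$-partner is itself matched with utility below $-h^2$, and swapping it onto its ideal partner produces a blocking pair. I expect (vi), together with the edge-agent sub-cases of (ii), to be the most delicate part of the argument.

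Finally, item (v) follows combinatorially from (i)--(iv). If at least one agent of the group $\{p_i^1,\dots,p_i^{h^2},\tilde p_i^1,\dots,\tilde p_i^{h^2}\}$ is matched, then some $p_i^z$ is matched -- otherwise a matched $\tilde p_i^z$ violates (ii), whose first disjunct requires a matched $p_i^{z'}$ and whose second requires all $p_i^{z'}$, $z'\neq z$, to be matched. By (i), $\tilde p_i^z$ is then also matched, and a short analysis of $p_i^z$'s partner propagates ``matched'' through the whole group: if $p_i^z$ is matched to $\tilde p_i^z$, items (iii) and (i) finish; if $p_i^z$ is matched to some $\tilde p_i^{z'}$ with $z'\neq z$, then chasing (iv), (ii) and (i) forces every $p_i^{z''}$ to be matched, hence (again by (i)) every $\tilde p_i^{z''}$. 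So the group is either entirely matched or entirely isolated, as claimed.
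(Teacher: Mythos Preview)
Your approach is the same as the paper's: each item is proved by contradiction, exhibiting an exchange-blocking pair. Items (i), (iii), (iv) match the paper exactly.

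You overcomplicate (ii). It is \emph{not} part of the ``technical heart''; it is a one-liner dual to (i). The key observation you stated for $p_i^z$ has a twin: the only agents with positive preference toward $\tilde p_i^z$ are precisely the $p_i^{z'}$ with $z'\neq z$ (everyone else --- $p_i^z$, all $\tilde p_i^{z'}$, all vertex-agents of other vertices, all edge-agents --- assigns $\tilde p_i^z$ a strictly negative value). So if $\tilde p_i^z$ is matched to some $r\notin\{p_i^{z'}:z'\neq z\}$ and some $p_i^{w}$ with $w\neq z$ is isolated, then $\sat[r](\tilde p_i^z)<0$ and $\sat[p_i^{w}](\tilde p_i^z)=1$, making $\{r,p_i^{w}\}$ blocking immediately. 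No case distinction on the type of $r$ and no ``rerouting'' is needed.

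For (vi) your sketch is vaguer than the paper's and misses the actual mechanism. The paper argues as follows: assume wlog $i<j$; by (v) all $2h^2$ vertex-agents of $v_j$ are matched, and since this count is even, some second $v_j$-agent is matched to an agent $p$ outside the $v_j$-group. Now $y_j^{z'}$ (the $v_j$-agent matched to your $v_i$-agent) strictly prefers any $v_j$-agent over its current partner, so it envies $p$'s seat; and $p$ strictly prefers being matched to a $v_i$-agent over a $v_j$-agent because the construction encodes the index in the preference value (edge-agents get $-i-1>-j-1$; vertex-agents of a third vertex get $-h^2-1-i>-h^2-1-j$). Hence $\{y_j^{z'},p\}$ is blocking. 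The parity of $2h^2$ and the monotone dependence of the cross-vertex penalties on the index $j$ are what make (vi) work; your sketch does not identify either.

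Finally, your derivation of (v) from (i)--(iv) by ``chasing'' can cycle: the conditions (i)--(iv) alone permit a proper subset $M\subsetneq[h^2]$ with a fixed-point-free map $f\colon M\to M$ where $\tilde p_i^w$ is matched to $p_i^{f(w)}$ for each $w\in M$. Such a configuration is not exchange-stable, but ruling it out requires one more blocking-pair argument rather than pure bookkeeping. The paper is no more explicit here (it just says ``follows from the previous statements''), but be aware the chase as written is not yet a proof.
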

        \begin{proof}[Proof of 
          \cref{claim:ESA_k_matching-vertex}]
          \renewcommand{\qedsymbol}{$\diamond$}
          We show each case by contradiction. Suppose that there is an \esArr~$\sigma$ 
          and $(i, z) \in [\hat{n}]\times [h^2]$, where
          \begin{compactenum}[(i)]
            \item $p_i^z$ is matched to an agent~$p$ and $\tilde{p}_i^z$ is assigned to an isolated vertex. 
            Then $\util[p]{\sigma} < 0$ since each agent (except $\tilde{p}$) has a negative preference to $p_i^z$, i.e., $(p, \tilde{p}_i^z)$ is an exchange-blocking pair, a contradiction.
            
            \item $\tilde{p}_i^z$ is matched to $p \notin \{p_i^1, \ldots, p_i^{z-1}, p_i^{z+1}, \ldots, p_i^{h^2}\}$ and some $p_i^{z'}, z' \in [h^2] \setminus \{z\}$ is assigned to an isolated vertex. 
            Then $\util[p]{\sigma} < 0$, i.e., $(p, p_i^{z'})$ is an exchange-blocking pair, a contradiction.
            
            \item $p_i^{z}$ is matched to $\tilde{p}_i^z$ and some $p_i^{z'}, z' \in [h^2] \setminus \{z\}$ is assigned to an isolated vertex. 
            Then $\util[p_i^z]{\sigma} = -1$, i.e., $(p_i^z, p_i^{z'})$ is an exchange-blocking pair, a contradiction.
            
            \item $p_i^{z}$ is matched to some $\tilde{p}_i^{z'}, z' \in [h^2] \setminus \{z\}$ and $\tilde{p}_i^{z}$ is assigned to an isolated vertex. 				
            Then $\util[\tilde{p}_i^{z'}]{\sigma} = -1$, i.e., $\tilde{p}_i^{z'}$ and $\tilde{p}_i^z$ is an exchange-blocking pair, a contradiction. 
            
            \item This follows from the previous statements. 
            
            \item $x_i^z$ is matched to $y_j^{z'}$ with $x,y \in 
            \{p,\tilde{p}\}, i \neq j$. 
            Without loss of generality, assume $i < j$. 
            By the previous statement we know that all agents from $\{p_j^1, \ldots, p_j^{h^2}, \tilde{p}_j^1, 
            \ldots, \tilde{p}_j^{h^2} \}$ are matched.
            Since we created an even number of agents for each vertex in $V(\hat{G})$, there is an agent 
            $\tilde{y}_j^{z''}, \tilde{y} \in \{p,\tilde{p}\}, z'' \in 
            [h^2]$, which is matched to $p \notin 
            \{p_j^1, \ldots, p_j^{h^2}, \tilde{p}_j^1, \ldots, 
            \tilde{p}_j^{h^2} \}$.
            Agent~$y_j^{z'}$ envies $p$ since he would rather be matched to a vertex-agent of his own vertex than to anybody else.
            Since we assumed $i < j$, we obtain the following:			
            If $p$ is an edge-agent, then $\util[p]{\sigma} = -j-1 < -i-1 = \util[p]{\swap{p}{y_j^{z'}}}$.			
            If $p$ is a vertex-agent, then $\util[p]{\sigma} = -h^2-j-1 <  -i-1 = \util[p]{\swap{p}{y_j^{z'}}}$.
            Hence, $(p, y_j^{z'})$ is an exchange-blocking pair, a contradiction. \qedhere
          \end{compactenum}
        \end{proof}
        
        \begin{claim} \label{claim:ESA_k_matching-edge}
          In an \esArr~$\sigma$ the following holds for each edge $e_\ell = \{v_i,v_j\} \in E(\hat{G})$:
          \begin{compactenum}
            \item If $q_{\ell}$ is matched, then he is not matched to a vertex-agent.
            \item If $q_{\ell}$ is matched, then all agents from $\{p_i^1, 
            \ldots, p_i^{h^2}, \tilde{p}_i^1, \ldots, 
            \tilde{p}_i^{h^2}, p_j^1, \ldots, p_j^{h^2}, \tilde{p}_j^1, \ldots, \tilde{p}_j^{h^2} \}$ are 
            matched.
          \end{compactenum}
        \end{claim}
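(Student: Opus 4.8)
The plan rests on one structural feature of the construction: in the preference graph, every edge-agent has a strictly negative preference towards every other agent. Hence a matched edge-agent $q_\ell$ has $\util[q_\ell]{\sigma}\le -1<0$ and strictly prefers any isolated seat to its current one. I record the resulting tool, used repeatedly: if $q_\ell$ is matched to an agent $b$ in an \esArr\ $\sigma$, then \emph{no isolated agent $c$ can have $\sat[c](b)>0$}; otherwise $\{q_\ell,c\}$ exchange-blocks, since after the swap $q_\ell$ moves to $c$'s isolated seat (utility $0>\util[q_\ell]{\sigma}$) while $c$ moves to $q_\ell$'s seat and becomes matched to $b$ (utility $\sat[c](b)>0=\util[c]{\sigma}$). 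Equivalently, every agent with a positive preference towards $b$ must be non-isolated. In the gadget the only agent with a positive preference towards a vertex-agent $p_a^w$ is $\tilde p_a^w$, while the agents with a positive preference towards an edge-agent $q_\ell$ with $e_\ell=\{v_i,v_j\}$ are exactly $p_i^1,\dots,p_i^{h^2},p_j^1,\dots,p_j^{h^2}$.

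To prove the first part, assume $q_\ell$ is matched to a vertex-agent. If that partner is some $\tilde p_a^w$, then $\tilde p_a^w$ also has negative utility and hence prefers an isolated seat, so the tool applies to both ends of that edge: it forces every $p_a^{w'}$ with $w'\neq w$ and every $p_i^{\cdot},p_j^{\cdot}$ to be non-isolated, and by the all-or-nothing property of \cref{claim:ESA_k_matching-vertex} this drags in all vertex-agents of $v_a$, $v_i$ and $v_j$; repeating the argument on the (again negatively matched) $\tilde p$-agents produced this way, one pushes the number of forced non-isolated agents beyond $\nonisolated$ (assuming, harmlessly, that $\hat G$ has been padded with enough isolated vertices), a contradiction. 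If instead the partner is some $p_a^w$, then, since only $\tilde p_a^w$ positively prefers $p_a^w$, the agent $\tilde p_a^w$ is non-isolated, so by \cref{claim:ESA_k_matching-vertex} all $2h^2$ vertex-agents of $v_a$ are; by the case just excluded none of the $\tilde p$-agents of $v_a$ is matched to an edge-agent, and by \cref{claim:ESA_k_matching-vertex} none is matched to a vertex-agent outside $v_a$, so all $h^2$ agents $\tilde p_a^1,\dots,\tilde p_a^{h^2}$ must be matched within $\{p_a^1,\dots,p_a^{h^2}\}$ (a $\tilde p$--$\tilde p$ pair being ruled out along the same lines via that pair's preferred $p$-partner), which exhausts $p_a^1,\dots,p_a^{h^2}$ and contradicts $p_a^w$ being matched to $q_\ell$.

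The second part is then immediate. If $q_\ell$ with $e_\ell=\{v_i,v_j\}$ is matched, the first part says its partner is an edge-agent $q_{\ell'}$, which is itself matched (to $q_\ell$). Applying the tool at $q_{\ell'}$ with partner $q_\ell$, every agent with a positive preference towards $q_\ell$ is non-isolated; these are exactly $p_i^1,\dots,p_i^{h^2},p_j^1,\dots,p_j^{h^2}$, so by the all-or-nothing property of \cref{claim:ESA_k_matching-vertex} every agent of $\{p_i^1,\dots,p_i^{h^2},\tilde p_i^1,\dots,\tilde p_i^{h^2},p_j^1,\dots,p_j^{h^2},\tilde p_j^1,\dots,\tilde p_j^{h^2}\}$ is matched. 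Together with \cref{claim:ESA_k_matching-vertex} and a counting argument comparing $\nonisolated=2h^3+\binom{h}{2}$ with the sizes of these forced blocks of matched agents, this will later yield the ``if'' direction of \cref{thm:ESA_k_clique+matching}.

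The main obstacle is the $\tilde p$-partner subcase of the first part: unlike the other steps it is not settled by inspecting a single gadget but by a global argument about how the ``prefers-to-be-isolated'' obligation propagates between vertex- and edge-gadgets, so one must carefully pin down the full neighbourhoods (in the positive preference graph) of all gadget agents and choose the padding of $\hat G$ so that the propagation provably overflows the $\nonisolated$ available non-isolated seats.
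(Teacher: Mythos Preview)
Your tool is sound, but you have misread the preference profile. In the construction $\sat[p_a^z](q_\ell)=h^2-1$ holds only when $z\neq 1$ \emph{and} $v_a\notin e_\ell$, while $\sat[p_a^1](q_\ell)=1$ holds only when $v_a\in e_\ell$. Hence for $e_\ell=\{v_i,v_j\}$ the agents with strictly positive preference towards $q_\ell$ are $\{p_i^1,p_j^1\}\cup\{p_a^z:\,v_a\notin e_\ell,\ z\ge 2\}$, not $\{p_i^1,\dots,p_i^{h^2},p_j^1,\dots,p_j^{h^2}\}$ as you assert. This misidentification is load-bearing in both parts. In Part~2 your tool, with the correct neighbourhood, still forces $p_i^1$ and $p_j^1$ to be non-isolated (and then \cref{claim:ESA_k_matching-vertex}\eqref{claim:ESA_k_matching-vertex(allornone)} finishes), so that part is salvageable; but in Part~1 your whole case analysis --- the ``propagation'' in the $\tilde p_a^w$-subcase and the exhaustion count in the $p_a^w$-subcase --- depends on which agents your tool drags in, so it does not go through as written. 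The appeal to ``padding $\hat G$ with enough isolated vertices'' is also illegitimate: the reduction is fixed and you may not modify the instance to make a counting overflow work; nor do you ever make the propagation precise enough to bound how many agents it actually forces.

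The paper's argument for Part~1 is different and sidesteps all of this. It never analyses positive-preference neighbourhoods of edge-agents. Once $q_\ell$ is matched to some vertex-agent $x$ of $v_{i'}$, \cref{claim:ESA_k_matching-vertex} forces all $2h^2$ agents of $v_{i'}$ to be matched and forbids cross-vertex matches; since the $2h^2$ agents pair internally in pairs, an \emph{even} number of them are matched to edge-agents, and as $x$ is one such there is a second, say $y$. Then $\{y,q_\ell\}$ is exchange-blocking: $q_\ell$ goes from a vertex-agent of $v_{i'}$ (utility $-i'-1$) to $y$'s edge-agent partner (utility $-1$), while $y$ goes from an edge-agent back to a same-vertex agent $x$. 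This parity argument is purely local, needs no $p$/$\tilde p$ case split, no global counting, and no change to the instance. For Part~2 the paper likewise exhibits a single concrete blocking pair between $q_\ell$'s (edge-agent) partner and an isolated endpoint-agent of $e_\ell$ --- essentially the corrected version of your tool applied only at the two agents $p_i^1,p_j^1$.
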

        \begin{proof}[Proof of 
          \cref{claim:ESA_k_matching-edge}]
          \renewcommand{\qedsymbol}{$\diamond$}
          Suppose towards a contradiction, that there are an \esArr~$\sigma$ 
          and an edge $e_\ell = \{v_i,v_j\} \in E(\hat{G})$, where
          \begin{compactenum}

            \item $q_{\ell}$ is matched to $x \in \{p_{i'}^1, \ldots, 
            p_{i'}^{h^2}, \tilde{p}_{i'}^1, 
            \ldots, \tilde{p}_{i'}^{h^2}\}$  for some $v_{i'} \in 
            V(\hat{G})$.				
            By the previous claim we know that all agents from $\{p_{i'}^1, \ldots, p_{i'}^{h^2}, \tilde{p}_{i'}^1, 
            \ldots, \tilde{p}_{i'}^{h^2} \}$ are matched and no two vertex-agents corresponding to different vertices in 
            $\hat{G}$ are matched together. 
            Since $2h^2$ is even, there is another agent $y \in \{p_{i'}^1, \ldots, p_{i'}^{h^2}, 
            \tilde{p}_{i'}^1, \ldots, \tilde{p}_{i'}^{h^2} \}\setminus\{x\}$ matched to an edge-agent.
            Hence, $\util[y]{\sigma} = -h^2-1 < -2 \leq 
            \util[y]{\swap{y}{q_{\ell}}}$. 
            
            \item $q_{\ell}$ is matched to $p$ and one of the set of agents $\{p_i^1, \ldots, p_i^{h^2}, \tilde{p}_i^1, \ldots, \tilde{p}_i^{h^2} \}$ or $\{p_j^1, \ldots, p_j^{h^2}, \tilde{p}_j^1, \ldots, 
            \tilde{p}_j^{h^2} \}$ is assigned to isolated vertices (by the previous claim either all agents of the same vertex in $\hat{G}$ or none of them are matched).
            Since by the previous statement $q_{\ell}$ is not matched to any vertex-agent, we know $p \neq \tilde{p}_i^1, \tilde{p}_j^1$, i.e., $\util[p]{\sigma} < 0$. 
            If $\tilde{p}_i^1$ or $ \tilde{p}_j^1$ is isolated, then $(p, \tilde{p}_i^1)$ resp.\ $(p, \tilde{p}_j^1)$ is an exchange-blocking pair.\qedhere
          \end{compactenum}
        \end{proof}
        By Claim~\ref{claim:ESA_k_matching-vertex} and $2h^3 + \binom{h}{2} < 2h^3 + h^2$ we obtain that at most~$h$ different sets of vertex-agents can be matched. 		
        Suppose we choose~$\binom{h}{2}+1$ edge-agents. 
        Then these edges are incident to at least $h+1$ vertices. 
        Since for each matched edge-agent, we also have to match the set of vertex-agents of both his endpoints, by \cref{claim:ESA_k_matching-edge}, we cannot match $\binom{h}{2}+1$ edge-agents. 
        Hence, at most $\binom{h}{2}$ edge-agents can be matched. 
        Hence, we need to assign exactly~$h$ different sets of vertex-agents and $\binom{h}{2}$ edge-agents to the endpoints of the edges.
        Furthermore, these chosen edge-agents need to be incident to the vertex-agents in $\hat{G}$, which means that they form a clique in~$\hat{G}$.	
        
        \paragraph*{Stars-graphs.} This follows from the previous case since a matching-graph is also a stars-graph. 
        \iflong \hfill\qed \fi 
    	}
    \end{proof}
  }
  
  Even for constant values of~$\maxoutdeg$, \ESA remains intractable for each 
  considered class of seat graphs.
  
  \newcommand{\esadeltanphard}{%
    \ESA is \NPc\ for each considered seat graph class and constant~$\maxoutdeg$. %
  }
\statementarxiv{theorem}{thm:ESA_delta_clique+path-cycle+matching}{\esadeltanphard}
  \appendixproofwithstatement{thm:ESA_delta_clique+path-cycle+matching}{\esadeltanphard}{
    \begin{proof}
      \textbf{Clique-graphs.} 
      We provide a polynomial-time reduction from \IS which is \NPh\ even on cubic graphs~\cite{garey1979}. 
      Let $(\hat{G}, h)$ be an instance of \IS. 
      Without loss of generality, we assume $h \mod 3 \neq 0$. 
      
      For each vertex $v_i \in V(\hat{G})$ we create three \myemph{vertex-agents} $p_i^0, p_i^1, p_i^2$. 
      For each $v_i \in V(\hat{G}), z,z' \in \{0,1,2\}$, the preferences are defined as follows, where the superscript $z\pm 1$ is taken $\mod 3$ (see Figure~\ref{fig:ESA_delta_clique} for the corresponding preference graph):
      \begin{compactitem}[--]
        \item $\sat[p_i^z](p_i^{z+1}) = 10$,
        \item $\sat[p_i^z](p_i^{z-1}) = -11$,
        \item $\sat[p_i^z](p_j^{z'}) = -1$ if $\{v_i, v_j\} \in E(\hat{G})$. 
        \item The non-mentioned preferences are set to zero.
      \end{compactitem}
  		
  		\begin{figure}[t]
  			\centering
  			\begin{tikzpicture}[>=stealth', shorten <= 2pt, shorten >= 2.5pt]
  			\def \xs {15ex}
  			\def \ys {11ex}
  			\def \ss {2.5ex}
  			\node[nodeU] (pi0) {};
  			\node[nodeU, right = 2*\xs  of pi0] (pj0) {};
  			\foreach \x in {pi,pj} {
  				\foreach \y / \z  in {0/1,1/2} {
  					\node[nodeU, below = \ys  of \x\y] (\x\z) {};
  				}
  			}

  			\foreach \x / \mid in {i/left, j/right} {
  				\foreach \y / \pos in {0/above, 2/below} {
  					\node[\pos = 0pt of p\x\y] {$p_\x^{\y}$};
  				}
  				\node[\mid = 0pt of p\x1] {$p_\x^{1}$};
  			}

  			\begin{scriptsize}
  			\foreach \i / \b in {i/left,j/right} {
  				\foreach \x / \y in {0/1, 1/2} {
  					\draw[-Stealth] (p\i\y) edge[bend \b=16, myRed]  node[pos=0.25, fill=white, inner sep=1pt] {$-11$} (p\i\x);
  					\draw[-Stealth] (p\i\x) edge[bend \b=15, black]  node[pos=0.25, fill=white, inner sep=1pt] {$10$} (p\i\y);
  				}
  				\draw[-Stealth] (p\i2) edge[bend \b=45, black]  node[pos=0.2, fill=white, inner sep=1pt] {$10$} (p\i0);
  				\draw[-Stealth] (p\i0) edge[bend \b=-85, myRed]  node[pos=0.2, fill=white, inner sep=1pt] {$-11$} (p\i2);
  			}
  			\end{scriptsize}
  			
  			\begin{tiny}
  			\foreach \x in {0,1,2} {
  				\foreach \y in {0,1,2} {
  					\draw[-Stealth] (pi\x) edge[bend left=6, myRed]  node[pos=0.2, fill=white, inner sep=1pt] {$-1$} (pj\y);
  					\draw[-Stealth] (pj\x) edge[bend left=7, myRed]  node[pos=0.2, fill=white, inner sep=1pt] {$-1$} (pi\y);
  				}
  			}
  			\end{tiny}
  			\end{tikzpicture}
  			\caption{Preference graph for \cref{thm:ESA_delta_clique+path-cycle+matching} for clique-graphs, where $\{v_i, v_j\} \in E(\hat{G})$.}
  			\label{fig:ESA_delta_clique}
  		\end{figure}
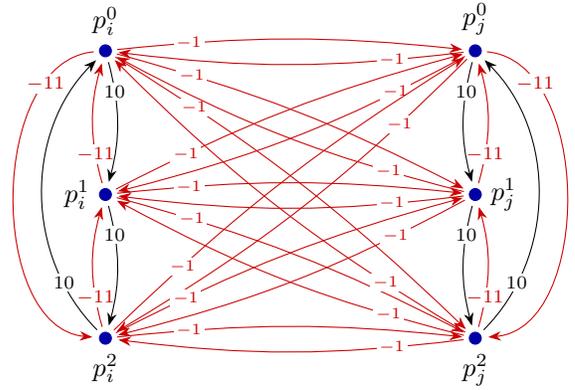
  	
      The seat graph $G$ consists of a clique with $\nonisolated\coloneqq h$ 
      vertices and $3|V(\hat{G})|-h$ isolated vertices.	      
      This completes the construction of the \ESA instance $I$ which can clearly be done in polynomial time. 
      Since we can assume that $\hat{G}$ is cubic, it holds that $\maxoutdeg = 11$. 
      
      Before we prove the correctness, we observe the following property of an \esArr:
      \begin{claim}\label{claim:ESA_delta_clique}
        For each $v_i \in V(\hat{G})$, at most one of the vertices $p_i^1, p_i^2, p_i^3$ can be assigned to non-isolated vertices.
      \end{claim}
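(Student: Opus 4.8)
The plan is to prove \cref{claim:ESA_delta_clique} by contradiction: suppose some vertex $v_i$ has two of its three vertex-agents assigned to non-isolated (clique) seats in an \esArr~$\sigma$. I would first note that since the seats form a clique, every non-isolated agent is adjacent to exactly the same set of the other $\nonisolated-1$ non-isolated agents, so an agent's utility in $\sigma$ depends only on \emph{which} set of agents is non-isolated, not on the actual bijection. Now consider the three vertex-agents $p_i^0, p_i^1, p_i^2$ of $v_i$: the preference pattern is cyclic, with $\sat[p_i^z](p_i^{z+1}) = 10$ and $\sat[p_i^z](p_i^{z-1}) = -11$ (superscripts mod $3$). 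The key observation is that if two of them, say $p_i^z$ and $p_i^{z'}$ with $z' = z+1$, are both non-isolated, then $p_i^{z+1} = p_i^{z'}$ receives $-11$ from its non-isolated "predecessor" $p_i^z$, and the only agent that could offset this is $p_i^{z+2}$ (worth $+10$ to $p_i^{z+1}$), but even if $p_i^{z+2}$ were also non-isolated the total contribution to $p_i^{z+1}$ from its own cluster is $10 + (-11) = -1 < 0$, plus only non-positive contributions from agents of adjacent vertices. Hence $\util[p_i^{z+1}]{\sigma} < 0$.

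Next I would exhibit the exchange-blocking pair. Since $\nonisolated = h < 3|V(\hat G)|$, at least one vertex-agent is isolated; in fact for the agent $p_i^{z+1}$ with negative utility, I would argue there is an isolated agent $p$ such that swapping $p_i^{z+1}$ and $p$ strictly benefits both. Concretely, $p_i^{z+1}$ has utility $<0$ in $\sigma$, so it strictly prefers any isolated seat (utility $0$); I need an isolated agent who also strictly improves by taking $p_i^{z+1}$'s seat. A clean choice is to take an isolated vertex-agent $p_j^{w}$ whose vertex $v_j$ is non-adjacent to $v_i$ (such a $v_j$ exists unless $\hat G$ is very dense — but if every non-isolated position were forced I would instead pick the isolated agent $p_i^{z}$ itself, which has preference $+10$ towards $p_i^{z+1}$, hence gains positive utility by moving next to it, while $p_i^{z+1}$'s utility after the swap still counts only the $-11$ it was already getting from... ) — here the cyclic structure needs care, so I would set up the swap partner as $p_i^{z-1}$ (i.e. $p_i^{z+2}$): if it is isolated, then $\sat[p_i^{z+2}](p_i^{z+1}+\text{non-isolated neighbors})$ — actually $p_i^{z+2}$ has preference $10$ towards $p_i^{(z+2)+1}=p_i^z$, which is non-isolated, so $p_i^{z+2}$ strictly gains by becoming non-isolated, and $p_i^{z+1}$ strictly gains by becoming isolated; this is the exchange-blocking pair, contradiction. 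If instead $p_i^{z+2}$ is already non-isolated, then all three of $v_i$'s agents are non-isolated, and I re-run the utility computation: $p_i^{z+1}$ still gets $10-11=-1<0$ from its cluster plus $\le 0$ from others, so it still envies, and now I pick any isolated agent — one exists since $h<3|V(\hat G)|$ — who has zero or positive preference towards the non-isolated set; the cleanest is an isolated $p_j^w$ with $v_j$ non-adjacent to every non-isolated vertex-vertex, which gives it utility $0$ in $\sigma$ and still $\ge 0$ (actually I need $>0$ or a weak-improvement tie-break; I would instead use the definition of exchange-blocking carefully, noting it suffices that one of the two strictly improves and the other does not get worse — check the paper's exact definition, which in the preliminaries says "no two agents would rather want to exchange", i.e., both strictly improve, so I must find a \emph{mutually strictly improving} swap, which the $p_i^{z+2}$-with-$+10$-target argument supplies whenever $p_i^{z+2}$ is isolated, and the all-three-non-isolated case is handled separately by observing that then there are $h\ge 3$ non-isolated agents all from... no — re-examine).

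The main obstacle I anticipate is exactly this bookkeeping of the exchange-stability definition (both agents must strictly benefit) combined with the cyclic $\pm$ structure: I must make sure that whichever of the three agents of $v_i$ ends up with negative utility, I can name a concrete isolated agent who \emph{strictly} gains by moving into its seat. The trick will be to use one of the other two agents $p_i^\bullet$ of the same vertex as the swap partner, exploiting that each $p_i^z$ has a $+10$-valued "successor" $p_i^{z+1}$: if exactly two of $\{p_i^0,p_i^1,p_i^2\}$ are non-isolated, the isolated one, call it $p_i^{w}$, has its $+10$-successor among the non-isolated pair precisely when the two non-isolated ones are $p_i^{w+1}$ and $p_i^{w+2}$ — but its successor is $p_i^{w+1}$, which is non-isolated, so $p_i^w$ strictly gains by becoming non-isolated; meanwhile among the two non-isolated agents, $p_i^{w+2}$ receives $-11$ from $p_i^{w+1}$ and $0$ from $p_i^w$ (isolated), so $\util[p_i^{w+2}]{\sigma}\le -11 < 0$, hence it strictly gains by becoming isolated. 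Thus $(p_i^w, p_i^{w+2})$ is an exchange-blocking pair, contradicting exchange-stability and proving the claim. I would present this as the single clean case (the "exactly two non-isolated" case reduces to it by symmetry, and "all three non-isolated" is then impossible or handled by the same $-11$ computation showing $p_i^{w+2}$ envies some isolated agent that gains via its own $+10$-successor).
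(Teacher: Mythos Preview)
Your handling of the ``exactly two non-isolated'' case is correct and matches the paper's case~(i): with $p_i^w$ isolated and $p_i^{w+1},p_i^{w+2}$ non-isolated, agent $p_i^{w+2}$ has utility at most $-11$ (it sees its $-11$-predecessor $p_i^{w+1}$ but not its $+10$-successor $p_i^w$), while the isolated $p_i^w$ would gain at least $10-9=1>0$ by swapping in (its $+10$-successor $p_i^{w+1}$ stays in the clique, and the cubic assumption bounds the $-1$ contributions by~$9$). So $(p_i^w,p_i^{w+2})$ is exchange-blocking.

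The ``all three non-isolated'' case, however, has a real gap. You need an isolated agent who \emph{strictly} benefits from entering the clique, and the only way an agent $p_j^{w}$ can strictly benefit is if its $+10$-successor $p_j^{w+1}$ is already non-isolated. Your proposal never justifies why such an agent exists. In fact, without the reduction's assumption $h\bmod 3\neq 0$, it need not: take $h=3$ and put all three agents of a single vertex~$v_1$ in the clique. Each has utility $10-11=-1$, yet every isolated agent $p_j^{w}$ ($j\neq 1$) would get utility $0$ or $-2$ after swapping in---never strictly positive---so the arrangement is exchange-stable and the claim fails.

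The paper closes this gap precisely via $h\bmod 3\neq 0$: since case~(i) already rules out any vertex having exactly two non-isolated agents (else $\sigma$ is not exchange-stable), every vertex has $0$, $1$, or $3$ non-isolated agents; as $h$ is not a multiple of~$3$, some $v_j$ has exactly one, say $p_j^{z}$. Then the isolated $p_j^{z-1}$ has its $+10$-successor $p_j^{z}$ in the clique and forms an exchange-blocking pair with any $p_i^{z'}$. You should incorporate this step explicitly. (Also, your aside that ``one strictly improves and the other does not get worse'' might suffice is incorrect under the paper's definition: both agents must strictly improve.)
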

      \begin{proof}[Proof of 
        \cref{claim:ESA_delta_clique}]
        \renewcommand{\qedsymbol}{$\diamond$}
        Suppose towards a contradiction, that for some vertex $v_i \in \hat{G}$ at least two of the agents $p_i^1, p_i^2, p_i^3$ are assigned to non-isolated vertices in an \esArr~$\sigma$. 
        We distinguish the following two cases (where $z\pm 1$ is taken $\mod 3$): 
        \begin{compactenum}[(i)]
          \item Two vertex-agents, $p_i^z, p_i^{z+1}, z \in \{0,1,2\}$ are assigned to non-isolated vertices. 
          It holds, $\util[p_i^{z+1}]{\sigma} \leq -1$. 
          Clearly, $p_i^{z+2}$ is isolated and envies $p_i^{z+1}$, i.e., $(p_i^{z+1}, p_i^{z+2})$ is an exchange-blocking pair. 
          
          \item All three agents are assigned to non-isolated vertices. 
          In this case all three agents $p_i^0, p_i^1, p_i^2$ envy the agents in the isolated vertices. 
          Moreover, since $h \mod 3 \neq 0$ and by the previous case, we can find $v_j \in V(\hat{G})$ where only one $p_j^z,  z \in \{0,1,2\}$ is non-isolated. 
          Hence, $(p_i^0, p_j^{z-1})$ is an exchange-blocking pair. \qedhere
        \end{compactenum}
      \end{proof}
      By this claim we can conclude that agents of $h$ different vertices of $\hat{G}$ are assigned to non-isolated vertices. 
      For each non-isolated agent $p_i^z$, agent~$p_i^{z-1}$ envies the non-isolated agents. To obtain an \esArr it is necessary that each non-isolated agent has non-negative utility. 
      Since the preferences between different sets of vertex-agents is non-positive, we proved the following:
      
      $I$ has an \esArr if and only if each of the~$h$ different non-isolated agents has non-negative utility if and only if the vertices corresponding to each pair of non-isolated agents are not adjacent, which is equivalent to~$\hat{G}$ admits a size-$h$ independent set.
      
      \paragraph*{Cycle-graphs.} 
      We provide a polynomial-time reduction from \HamPath which is \NPh\ even on cubic graphs~\cite{garey1979}. 
      Let $\hat{G}$ be an instance of \HamPath. 
      
      We start with creating three special agents named $x_1, x_2, x_3$.
      Then, for each vertex $v_i \in V(\hat{G})$ we create one \myemph{vertex-agent} named $p_i$. 
      For each $v_i \in V(\hat{G})$, we define the preferences as follows 
      (see Figure~\ref{fig:ESA_delta_cycle} for the corresponding preference 
      graph):
      \begin{compactitem}[--]
        \item $\sat[p_i](p_j) = \sat[p_j](p_i) = 1$ if $\{v_i,v_j\} \in E(\hat{G})$,			
        \item $\sat[p_i](x_1) = \sat[p_i](x_3) = 1$,	$\sat[p_i](x_2) = -2$,
        \item $\sat[x_1](x_2) = \sat[x_3](x_2) = 1$,
        \item $\sat[x_2](x_1) = \sat[x_2](x_3) = -1$.			
        \item The non-mentioned preferences are set to zero.
      \end{compactitem}
  
  		\begin{figure}[t]
  			\centering
  			\begin{tikzpicture}[>=stealth', shorten <= 2pt, shorten >= 2pt]
  			\def \xs {9ex}
  			\def \ys {10ex}
  			\def \ss {2.5ex}
  			
  			\node[nodeW] (x1) {};
  			\foreach \x / \y in {1/2, 2/3} {
  				\node[nodeW, right = 2*\xs  of x\x] (x\y) {};
  			}
  			\node[nodeU, below right = 0.65*\ys and \xs  of x1] (p1) {};
  			\node[nodeU, right = 2*\xs  of p1] (p2) {};

  			\foreach \x in {1, 2, 3} {
  				\node[above = 0pt of x\x] {$x_{\x}$};
  			}
  			\node[below = 0pt of p1] {$p_{1}$};
  			\node[below = 0pt of p2] {$p_{|V(\hat{G})|}$};
  			\path (p1) -- node[auto=false]{\ldots\ldots} (p2);

  			\node[ellipse, draw = gray, minimum width = 6.5cm, minimum height = 1.4cm, label={left:$\hat{G}$}] (ell) at (2*\xs+1.1ex,-0.8*\ys) {};

  			\begin{scriptsize}
  			\foreach \x in {1,2} {
  				\draw[-Stealth] (p\x) edge[black]  node[pos=0.2, fill=white, inner sep=1pt] {$1$} (x1);
  				\draw[-Stealth] (p\x) edge[myRed]  node[pos=0.2, fill=white, inner sep=1pt] {$-2$} (x2);
  				\draw[-Stealth] (p\x) edge[black]  node[pos=0.2, fill=white, inner sep=1pt] {$1$} (x3);
  			}
  			
  			\foreach \x / \b in {1/right, 3/left} {
  				\draw[-Stealth] (x\x) edge[bend \b=10, black]  node[pos=0.25, fill=white, inner sep=1pt] {$1$} (x2);
  				\draw[-Stealth] (x2) edge[bend \b=10, myRed]  node[pos=0.25, fill=white, inner sep=1pt] {$-1$} (x\x);
  			}
  			
  			\node[nodeU, right = \xs/2  of p1] (pi) {};
  			\node[nodeU, left = \xs/2  of p2] (pj) {};
  			\draw[Stealth-Stealth] (pi) edge[bend right=20, gray] node[below, inner sep=1pt] {\begin{tabular}{c}$1$ if adjacent, \\ $0$ otherwise \end{tabular}} (pj);
  			\end{scriptsize}
  			\end{tikzpicture}
  			\caption{Preference graph for \cref{thm:ESA_delta_clique+path-cycle+matching} for cycle-graphs.}
  			\label{fig:ESA_delta_cycle}
  		\end{figure}
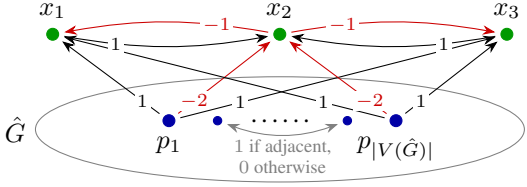

      The seat graph $G$ consists of a single cycle with $|V(\hat{G})|+3$ vertices.	
      
      This completes the construction of the \ESA instance $I$ which can clearly be done in polynomial time. 
      Since we can assume that $\hat{G}$ is cubic, it holds that $\maxoutdeg = 6$. 
      
      Next, we show the correctness of our reduction. 
      Let $P$ be a Hamiltonian path of $\hat{G}$.
      To obtain an \esArr of $I$ we first assign $x_1$ to an arbitrary seat in the seat graph. 
      Then, we continue with assigning $x_2$ next to $x_1$ and $x_3$ next to $x_2$.
      The remaining vertex-agents are assigned to the remaining seats according to the order in~$P$.
      In this arrangement all agents (except $x_2$) have their maximum possible utility and are therefore never part of an exchange-blocking pair.
      Hence, $x_2$ does not have an exchange-blocking partner, which means this arrangement is exchange-stable.
      
      For the backward direction, let $\sigma$ be an exchange-stable arrangement of $I$. 
      We observe that $\sigma$ satisfies the following:
      \begin{claim}\label{claim:ESA_delta_cycle}
        Agents $x_1, x_2, x_3$ are assigned consecutively. 
      \end{claim}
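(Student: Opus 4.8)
The statement is about the reduction from \HamPath\ with a single cycle as seat graph. I would argue by contradiction: suppose $\sigma$ is an \esArr\ in which $x_1,x_2,x_3$ are \emph{not} placed on three consecutive seats. Since every seat of the cycle has exactly two neighbors, it is enough to show that the two neighbors of $x_2$ in $\sigma$ are precisely $x_1$ and $x_3$. So I would equivalently show that \emph{no vertex-agent is adjacent to $x_2$}; once this is established, both neighbors of $x_2$ lie in $\{x_1,x_3\}$, and as the cycle has length $|V(\hat G)|+3>3$ these two neighbors are distinct, hence exactly $x_1$ and $x_3$.

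First I would record two elementary utility bounds from the fixed preference values. Because $\sat[x_1](x_2)=1$ and $x_1$ is indifferent to every other agent, $\util[x_1]{\sigma}$ equals $1$ if $x_1$ is adjacent to $x_2$ and $0$ otherwise (and symmetrically for $x_3$). And because $\sat[p_i](x_2)=-2$ while $\sat[p_i](q)\le 1$ for every agent $q$, any vertex-agent adjacent to $x_2$ has utility at most $-1$ in $\sigma$.

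The crucial step is: \emph{if some vertex-agent $p_i$ is adjacent to $x_2$ in $\sigma$, then $x_1$ is adjacent to $x_2$ in $\sigma$}. Suppose not, and consider the swap-arrangement $\swap{p_i}{x_1}$. In it, $x_1$ occupies $p_i$'s old seat, which is adjacent to $x_2$, so $\util[x_1]{\swap{p_i}{x_1}}\ge \sat[x_1](x_2)=1>0=\util[x_1]{\sigma}$; meanwhile $p_i$ occupies $x_1$'s old seat, none of whose neighbors is $x_2$ (as $x_1$ was not adjacent to $x_2$), and every remaining neighbor of $p_i$ — a vertex-agent, $x_1$, or $x_3$ — contributes a nonnegative amount, so $\util[p_i]{\swap{p_i}{x_1}}\ge 0>-1\ge\util[p_i]{\sigma}$. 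Thus $\{p_i,x_1\}$ is an exchange-blocking pair, contradicting exchange-stability. The construction is symmetric in $x_1$ and $x_3$ (the preferences $\sat[p_i](x_1)=\sat[p_i](x_3)=1$, $\sat[x_1](x_2)=\sat[x_3](x_2)=1$, $\sat[x_2](x_1)=\sat[x_2](x_3)=-1$ coincide, and $\sat[x_1](x_3)=\sat[x_3](x_1)=0$), so the same argument also forces $x_3$ adjacent to $x_2$. But then both neighbors of $x_2$ are $x_1$ and $x_3$, contradicting that $x_2$ has a vertex-agent neighbor. Hence no vertex-agent is adjacent to $x_2$, which finishes the argument.

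The only thing requiring care — not a real obstacle — is the bookkeeping of adjacencies after the swap, in particular the degenerate subcase where $p_i$ and $x_1$ are themselves adjacent in $\sigma$; there one checks directly that $x_1$'s new neighbors are $x_2$ and $p_i$ (utility $1$) and $p_i$'s new neighbors are $x_1$ and the former other neighbor of $x_1$ (utility $\ge 1$), so the swap is still blocking. All the rest is a short computation from the constant preference values.
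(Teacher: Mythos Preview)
Your proof is correct and uses essentially the same approach as the paper: the key step in both is that if some vertex-agent $p$ is adjacent to $x_2$ while $x_1$ is not, then $(p,x_1)$ is an exchange-blocking pair, and the symmetric argument handles $x_3$. The paper frames this slightly more directly (assume $x_1$ not adjacent to $x_2$, pick a vertex-agent neighbor of $x_2$, exhibit the blocking pair), whereas you package it as ``a vertex-agent neighbor of $x_2$ forces both $x_1$ and $x_3$ adjacent to $x_2$, contradiction''---but this is the same argument, and your additional care with the degenerate adjacency case is just extra bookkeeping the paper leaves implicit.
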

      \begin{proof}[Proof of 
        \cref{claim:ESA_delta_cycle}]
        \renewcommand{\qedsymbol}{$\diamond$}
        Suppose towards a contradiction, that~$x_1$ is not assigned next to 
        $x_2$, i.e., there is an agent $p \notin 
        \{x_1, x_2, x_3\}$ with $\sigma(p) \in \Neigh{G}(\sigma(x_2))$. 
        Then, $\util[x_1]{\sigma} = 0$ and $\util[p]{\sigma} \leq -1$. 
        Since $\util[p]{\swap{x_1}{p}} \geq 0$, $(p, x_1)$ is an exchange-blocking pair. 
        
        With an analogous argument we can see that $x_3$ also has to be a neighbor of $x_2$. 
        Hence, $x_1, x_2, x_3$ have to be assigned consecutively in every 
        \esArr. 
      \end{proof}
      
      From this claim it also follows that $x_1$ and $x_3$ are envy-free, but $x_2$ envies every vertex-agent.
      Since $\sigma$ is exchange-stable, we can conclude that $\util[p_i]{\sigma} = 2$, i.e., there is a Hamiltonian path 
      in $\hat{G}$. 
      
      \paragraph*{Path-graphs.} 
      For this case we reduce from \HamPath, which remains \NPh\ even on 
      cubic graphs and when one endpoint of the path is specified~\cite{garey1979}. 
      Let $(\hat{G}, s)$ be an instance of \HamPath. 
      We start with creating two special agents named~$x_1$ and~$x_2$.
      Then, for each vertex $v_i \in V(\hat{G})$ we create one agent named $p_i$. 
      We define the preferences as follows (see Figure~\ref{fig:ESA_delta_path} for the corresponding preference 
      graph):
      \begin{compactitem}[--]
        \item $\sat[p_i](p_j) = \sat[p_j](p_i) = 1$ if $\{v_i,v_j\} \in E(\hat{G})$,			
        \item for each $v_i \in V(\hat{G})\setminus\{s\}$, set	 
        $\sat[p_i](x_1) = 3$,
        \item for each $v_i \in V(\hat{G})$, set $\sat[p_i](x_2) = -3$,
        \item $\sat[x_1](x_2) = 1$, $\sat[x_2](x_1) = -1$, $\sat[p_s](x_1) = 2$.
        \item The non-mentioned preferences are set to zero.
      \end{compactitem}
  
  		\begin{figure}[t]
  			\centering
  			\begin{tikzpicture}[>=stealth', shorten <= 2pt, shorten >= 2pt]
  			\def \xs {9ex}
  			\def \ys {10ex}
  			\def \ss {2.5ex}
  			
  			\node[nodeW] (x1) {};
  			\node[nodeW, right = 2*\xs  of x1] (x2) {};
  			\node[nodeU, below left = 0.8*\ys and \xs/2+2ex  of x1] (p1) {};
  			\node[nodeU, right = 2.5*\xs  of p1] (p2) {};
  			\node[nodeU, right = \xs  of p2] (p3) {};

  			\foreach \x / \pos in {x/left,p/below} {
  				\node[\pos = 0pt of \x1] {$\x_1$};
  			}
  			\node[right = 0pt of x2] {$x_2$};
  			\node[below = 0pt of p2] {$p_{\hat{n}-1}$};
  			\node[below = 0pt of p3] {$p_{\hat{n}}$};
  			\path (p1) -- node[auto=false]{\ldots\ldots} (p2);

  			\node[ellipse, draw = gray, minimum width = 7.5cm, minimum height = 1.5cm, label={left:$\hat{G}$}] (ell) at (\xs+1ex,-0.9*\ys) {};

  			\begin{scriptsize}
  			\foreach \x in {1,2} {
  				\draw[-Stealth] (p\x) edge[black]  node[pos=0.17, fill=white, inner sep=1pt] {$2$} (x1);
  				\draw[-Stealth] (p\x) edge[myRed]  node[pos=0.17, fill=white, inner sep=1pt] {$-3$} (x2);
  			}
  			\draw[-Stealth] (p3) edge[black]  node[pos=0.17, fill=white, inner sep=1pt] {$1$} (x1);
  			\draw[-Stealth] (p3) edge[myRed]  node[pos=0.17, fill=white, inner sep=1pt] {$-2$} (x2);
  			
  			\foreach \x / \b in {1/right} {
  				\draw[-Stealth] (x\x) edge[bend \b=10, black]  node[pos=0.25, fill=white, inner sep=1pt] {$1$} (x2);
  				\draw[-Stealth] (x2) edge[bend \b=10, myRed]  node[pos=0.25, fill=white, inner sep=1pt] {$-1$} (x\x);
  			}
  			
  			\node[nodeU, right = \xs-2ex  of p1] (pi) {};
  			\node[nodeU, left = \xs-2ex  of p2] (pj) {};
  			\draw[Stealth-Stealth] (pi) edge[bend right=17, gray] node[below, inner sep=1pt] {\begin{tabular}{c}$1$ if adjacent, \\ $0$ otherwise \end{tabular}} (pj);
  			\end{scriptsize}
  			\end{tikzpicture}
  			\caption{Preference graph for \cref{thm:ESA_delta_clique+path-cycle+matching} for path-graphs, where $v_{\hat{n}} = s$.}
  			\label{fig:ESA_delta_path}
  		\end{figure}
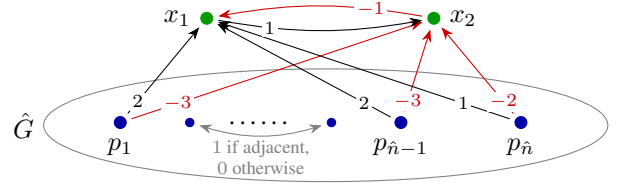

      The seat graph $G$ consists of a single path with $|V(\hat{G})|+2$ 
      vertices.
      This completes the construction of the \ESA instance $I$ which can clearly be done in polynomial time. 
      Since we can assume that $\hat{G}$ is cubic, it holds that $\maxoutdeg = 5$. 
      
      Next, we show the correctness of our reduction. 
      Let $P$ be a Hamiltonian path of $\hat{G}$ ending in $s$.
      To obtain an \esArr of $I$ we first assign $x_2$ to one of the endpoints of the path. 
      Then, we continue with assigning $x_1$ next to $x_2$. 
      The remaining vertex-agents are assigned to the remaining path according to the order in $P$ in such a way that~$p_s$ is assigned to the other endpoint of the seat graph.
      In this arrangement all agents (except $x_2$ and $p_s$) have their maximum possible utility and are therefore never part of an exchange-blocking pair. 
      Since $p_s$ does not envy $x_2$, we obtain that this arrangement is exchange-stable.
      
      For the backward direction, let $\sigma$ be an exchange-stable arrangement of $I$. 
      We observe that $\sigma$ satisfies the following:
      \begin{claim}\label{claim:ESA_delta_path}
        Agents $x_1, x_2$ are assigned consecutively and agents $x_2, p_s$ are assigned to the endpoints of the path in an \esArr.
      \end{claim}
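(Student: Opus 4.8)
The plan is to derive the claim from three structural observations about an arbitrary exchange-stable arrangement $\sigma$ of the constructed instance, using throughout that the seat graph is a single path on $|V(\hat{G})|+2$ vertices, so $\sigma$ assigns every agent to a seat and there are no isolated vertices. Two design features drive everything. First, $x_1$'s only nonzero preference is the positive $\sat[x_1](x_2)$ and $x_2$'s only nonzero preference is the negative $\sat[x_2](x_1)$, so $\util[x_1]{\sigma}$ and $\util[x_2]{\sigma}$ depend only on whether $x_1$ and $x_2$ are adjacent. Second, every vertex-agent $p_i$ likes $x_1$ and dislikes $x_2$ (its only negative preference), the dislike being strong enough that any seat adjacent to $x_2$ gives a vertex-agent negative utility, while $\sat[p_i](x_1)$ is strictly larger for $i\neq s$ than $\sat[p_s](x_1)$, the latter tuned to equal the value of a preference towards a graph-neighbor.

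First I would show $x_1$ is adjacent to $x_2$. If not, then $\util[x_1]{\sigma}=0$, and I would pick the vertex-agent $p$ sitting on a neighbor of $\sigma(x_2)$ — it exists, and by triangle-freeness of a path is not also next to $\sigma(x_1)$ — so $\util[p]{\sigma}<0$. The swap-arrangement $\swap{x_1}{p}$ then strictly helps both: $x_1$ becomes adjacent to $x_2$ and so gains $\sat[x_1](x_2)>0$, while $p$ leaves the vicinity of $x_2$, landing on $\sigma(x_1)$ with no negative contribution, so $\util[p]{\swap{x_1}{p}}\ge 0>\util[p]{\sigma}$. Hence $\{x_1,p\}$ exchange-blocks, a contradiction.

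Next I would show $x_2$ sits at a path endpoint. Assume it is internal; by the previous step its neighbors are $x_1$ and a vertex-agent $p$, and $\util[x_2]{\sigma}<0$. The primary move is $\swap{x_2}{p}$: afterwards $x_2$ sits on $p$'s old seat (not adjacent to $\sigma(x_1)$, by triangle-freeness), so its utility rises to $\ge 0$, while $p$ sits between $x_1$ and $x_2$, with utility $\sat[p](x_1)+\sat[p](x_2)<0$ but still strictly above $\util[p]{\sigma}$ whenever $p\neq p_s$. The obstacle is the residual case where $p=p_s$ and $p_s$'s other neighbor is a graph-neighbor of $s$, so that $\swap{x_2}{p_s}$ merely ties for $p_s$. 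There I would instead relocate $x_2$ onto a seat far from $\sigma(x_1)$ by swapping it with a third agent — a path-endpoint $z\neq p_s$, which has $\util[z]{\sigma}\le 1$ but would receive $\sat[z](x_1)\ge 2$ next to $x_1$; and, if no such endpoint exists because $x_1$ is itself an endpoint, a suitable middle vertex-agent on $p_s$'s side, splitting on whether some middle agent already has utility $\le 1$ or all of them have utility $2$ (so consecutive middle agents are graph-neighbors, and the one next to $p$ can be moved beside $x_1$ for utility $2+1=3$). Each sub-case yields an exchange-blocking pair with $x_2$. I expect this residual analysis to be the crux of the proof: $p_s$ is deliberately indifferent between a seat next to $x_1$ and a seat next to a graph-neighbor, so it cannot be displaced by $x_2$ directly, forcing the argument to route through a third agent.

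Finally, with the path now reading $x_2, x_1, r_1, \dots, r_{\hat{n}}$ and $r_{\hat{n}}$ at the far endpoint, I would show $r_{\hat{n}}=p_s$. If not, $\util[r_{\hat{n}}]{\sigma}=\sat[r_{\hat{n}}](r_{\hat{n}-1})\le 1$, while $\swap{x_2}{r_{\hat{n}}}$ moves $x_2$ to the far endpoint (utility up to $0$) and $r_{\hat{n}}$ beside $x_1$, where $\util[r_{\hat{n}}]{\swap{x_2}{r_{\hat{n}}}}=\sat[r_{\hat{n}}](x_1)\ge 2>\util[r_{\hat{n}}]{\sigma}$; so $\{x_2,r_{\hat{n}}\}$ exchange-blocks — contradiction, hence $r_{\hat{n}}=p_s$ (the tiny cases $\hat{n}\le 2$ I would check directly). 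This step also explains the one-unit gap in $\sat[p_s](x_1)$: it makes $\util[p_s]{\swap{x_2}{p_s}}=\sat[p_s](x_1)$ only tie, rather than exceed, $p_s$'s utility at the far endpoint when its neighbor there is a graph-neighbor, so $\{x_2,p_s\}$ need not block and the intended exchange-stable arrangement survives. Combining the three observations, $x_1$ and $x_2$ are consecutive and $x_2,p_s$ are the two endpoints, which is the claim.
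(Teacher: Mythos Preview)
Your three-step decomposition (first $x_1$ adjacent to $x_2$, then $x_2$ at an endpoint, then $p_s$ at the other endpoint) and the candidate blocking pairs match the paper's proof exactly.

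You are, however, more careful than the paper in step~2. The paper simply asserts that if $x_2$ is internal with vertex-agent neighbor $p$, then $(p,x_2)$ is exchange-blocking. But with $\sat[p_s](x_1)$ equal to the graph-neighbor preference value --- which, as you correctly infer, is what the construction needs for the forward direction (otherwise $p_s$ would envy $x_2$ in the intended arrangement and $(p_s,x_2)$ would block) --- the case $p=p_s$ with a graph-neighbor on $p_s$'s other side gives only a tie for $p_s$, so $(p_s,x_2)$ is not blocking. The residual case you flag is therefore a genuine gap in the paper's argument. (The paper's text and figure in fact disagree on the preference values here; the figure's values are the ones that make the reduction correct, and you are working with those.)

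Your fix is correct but can be simplified. In the residual case the path reads $\ldots,q,p_s,x_2,x_1,\ldots$ with $q$ a vertex-agent, so the endpoint $z$ on the $q$-side is always a vertex-agent $\ne p_s$ whose sole neighbor is itself a vertex-agent; hence $\util[z]{\sigma}\le 1$ and $z$ is not adjacent to $x_1$. Then $\swap{z}{x_2}$ gives $z$ utility at least $\sat[z](x_1)\ge 2$ and $x_2$ utility $0$, so $(z,x_2)$ blocks. This single endpoint always works; the further split on ``$x_1$ itself an endpoint'' and on middle agents is unnecessary. One small slip in step~1: triangle-freeness of a path does not prevent $p$ from being adjacent to both $x_1$ and $x_2$ (they could sit at distance two), but one checks that even then $\util[p]{\sigma}\le -1$ and the swap with $x_1$ still strictly helps both, so your conclusion stands.
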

      \begin{proof}[Proof of \cref{claim:ESA_delta_path}]
        \renewcommand{\qedsymbol}{$\diamond$}
        Suppose $x_1$ is not assigned next to~$x_2$, i.e., there is an agent $p \notin \{x_1, x_2\}$ with $\sigma(p) \in \Neigh{G}(\sigma(x_2))$. 
        Then, $\util[x_1]{\sigma} = 0$ and $\util[p]{\sigma} \leq -1$. 
        Since $\util[p]{\swap{x_1}{p}} \geq 0$, $(p, x_1)$ is an exchange-blocking pair.
        
        Agent $x_2$ has to be assigned to one endpoint of the path, since otherwise there is an agent $p \notin 
        \{x_1, x_2\}$ which is a neighbor of $x_2$. 
        Then, $\util[p]{\sigma} \leq -2$ and $\util[x_2]{\sigma} = -1$, i.e., $(p, x_2)$ is an exchange-blocking pair.
        
        Agent $p_s$ has to be assigned to the other endpoint of the path, since otherwise there is an agent $p \neq 
        p_s$ with $\util[p]{\sigma} \leq 1$ and $\sigma(x_1) \notin \Neigh{G}(\sigma(p))$, i.e., $(p, x_2)$ is an 
        exchange-blocking pair.
      \end{proof}
      
      From this claim it follows that the arrangement on the paths has the following form: $x_2, x_1, p_1, \ldots, p_{\hat{n}} = p_s$.	
      Moreover, we can conclude that $x_1$ is envy-free, but $x_2$ envies every vertex-agent $p_2, \ldots,p_{{\hat{n}}-1}, p_s$.
      Since $\sigma$ is exchange-stable, we can conclude that $\util[p_i]{\sigma} = 2$ for each $2 \leq i \leq {\hat{n}}-1$, i.e., $p_1, \ldots, p_n$ is a Hamiltonian path in~$\hat{G}$.
      
      \paragraph*{Matching-graphs.} 
      We provide a polynomial-time reduction from the \NPc\ problem \ESRthree~\cite{chen21}. 
      \decprob{\ESRthree}
      {Preference profile $\mathcal{P}$ with preferences of bounded length three.}
      {Does $\mathcal{P}$ admit an exchange-stable and perfect matching?}
      
      Let $\mathcal{P} = (\hat{V}, (\succ_i)_{i\in \hat{V}})$ be an instance of \ESRthree. 
      First, we create two special agents $x_1, x_2$. 
      Then, for each $v_i \in \hat{V}$ we create one agent $p_i$. 
      For each $v_i \in \hat{V}$, the preferences are defined as follows (see \cref{fig:ESA_delta_matching} for the corresponding preference graph):
      \begin{compactitem}[--]
        \item $\sat[p_i](x_1) = -1$, $\sat[p_i](x_2) = 1$,
        \item for each $v_j \in \hat{V}_i$, set $\sat[p_i](p_j) = |\{v \in \hat{V}_i\setminus\{v_j\}\colon v_j \succ_i v\}|+1$,
        \item $\sat[x_1](x_2) = -1$, $\sat[x_2](x_1) = 1$.
        \item The non-mentioned preferences are set to zero.
      \end{compactitem}
  		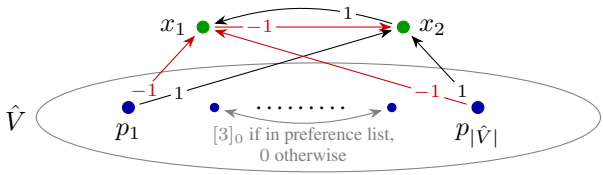
\begin{figure}[t]
  			\centering
  			\begin{tikzpicture}[>=stealth', shorten <= 2pt, shorten >= 2pt]
  			\def \xs {9ex}
  			\def \ys {8ex}
  			\def \ss {2.5ex}
  			
  			\node[nodeW] (x1) {};
  			\node[nodeW, right = 1.75*\xs of x1] (x2) {};
  			\node[nodeU, below left = 0.75*\ys and \xs/2+1ex  of x1] (p1) {};
  			\node[nodeU, below right = 0.75*\ys and \xs/2+1ex  of x2] (p3) {};
  			\begin{scriptsize}
  			\node[nodeU, right = \xs of p1] (p2) {};
  			\node[nodeU, left = \xs of p3] (pi) {};
  			\end{scriptsize}
  			\path (pi) -- node[auto=false]{\ldots\ldots\ldots} (p2);

  			\foreach \x / \pos in {x/left,p/below} {
  				\node[\pos = 0pt of \x1] {$\x_1$};
  			}
  			\node[right = 0pt of x2] {$x_2$};
  			\node[below = 0pt of p3] {$p_{|\hat{V}|}$};

  			\node[ellipse, draw = gray, minimum width = 7.5cm, minimum height = 1.45cm, label={left:$\hat{V}$}] (ell) at (\xs+0.7ex,-0.95*\ys) {};

  			\begin{scriptsize}
  			\foreach \x in {1,3} {
  				\draw[-Stealth] (p\x) edge[myRed]  node[pos=0.17, fill=white, inner sep=1pt] {$-1$} (x1);
  				\draw[-Stealth] (p\x) edge[black]  node[pos=0.17, fill=white, inner sep=1pt] {$1$} (x2);
  			}
  			
  			\foreach \x / \b in {1/right} {
  				\draw[-Stealth] (x\x) edge[bend \b=0, myRed]  node[pos=0.25, fill=white, inner sep=1pt] {$-1$} (x2);
  				\draw[-Stealth] (x2) edge[bend \b=17, black]  node[pos=0.25, fill=white, inner sep=1pt] {$1$} (x\x);
  			}
  			
  			\draw[Stealth-Stealth] (pi) edge[bend left=16, gray] node[below, inner sep=1pt] {\begin{tabular}{c}$[3]_0$ if in preference list, \\ $0$ otherwise \end{tabular}} (p2);
  			\end{scriptsize}
  			\end{tikzpicture}
  			\caption{Preference graph for \cref{thm:ESA_delta_clique+path-cycle+matching} for matching-graphs.}
  			\label{fig:ESA_delta_matching}
  		\end{figure}
      The seat graph $G$ consists of $(|\hat{V}|+2)/2$ disjoint edges.       
      This completes the construction of the \ESA instance $I$ which can clearly be done in polynomial time. 
      Since the length of each preference list is bounded by three, it holds that $\maxoutdeg = 5$. 
      
      Before we prove the correctness, we observe the following necessary condition for an \esArr:
      \begin{claim}\label{claim:ESA_delta_matching}
        In every \esArr agent $x_1$ is matched to $x_2$.
      \end{claim}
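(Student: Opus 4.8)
The plan is to argue by contradiction. Suppose $\sigma$ is an \esArr\ of $I$ in which $x_1$ is \emph{not} matched to $x_2$. Since the seat graph consists of $(|\hat{V}|+2)/2$ disjoint edges, it is a perfect matching on the $|\hat{V}|+2$ agents, so every agent is non-isolated with exactly one neighbor. In particular $x_1$ is matched to some $p$-agent $p_i$, and since $p_i$ is then taken and $x_2\neq x_1$, agent $x_2$ is matched to some $p$-agent $p_j$ with $j\neq i$. (If $\hat{V}=\emptyset$ the statement is trivially true, so we may assume $p_i$ and $p_j$ exist.)

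Next I would exhibit $(x_2, p_i)$ as an exchange-blocking pair, contradicting exchange stability; this requires showing that both agents strictly improve under $\swap{x_2}{p_i}$. For $x_2$: currently $\util[x_2]{\sigma}=\sat[x_2](p_j)=0$, whereas after the swap $x_2$ becomes adjacent to $x_1$, so $\util[x_2]{\swap{x_2}{p_i}}=\sat[x_2](x_1)=1>0$, i.e.\ $x_2$ envies $p_i$. For $p_i$: currently $\util[p_i]{\sigma}=\sat[p_i](x_1)=-1$, whereas after the swap $p_i$ becomes adjacent to $p_j$, so $\util[p_i]{\swap{x_2}{p_i}}=\sat[p_i](p_j)$. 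The key observation is that every $p$-agent has a non-negative preference towards every other $p$-agent (the list values are at least $1$ and all unmentioned preferences are $0$), hence $\sat[p_i](p_j)\ge 0>-1=\util[p_i]{\sigma}$, i.e.\ $p_i$ envies $x_2$. Therefore $(x_2, p_i)$ is an exchange-blocking pair, the desired contradiction, and so $x_1$ must be matched to $x_2$.

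I do not expect a real obstacle here; the only point needing care is checking that \emph{both} agents of the swap strictly benefit. This rests on two structural facts that were deliberately built into the reduction: $x_1$ is the unique agent toward which the $p$-agents (and $x_2$) hold a negative preference, and a $p$-agent's utility can never fall below $-1$ because its preferences over other $p$-agents are non-negative. Once these are noted, the blocking pair is immediate, and this claim can then be used (in the argument following it) to force $x_1$ and $x_2$ onto a common edge and thereby reduce the existence of an \esArr\ of $I$ to the existence of an exchange-stable perfect matching of $\mathcal{P}$.
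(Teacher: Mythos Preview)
Your proof is correct and takes essentially the same approach as the paper: assume $x_1$ is matched to some $p_i$ and $x_2$ to some $p_j$, then observe that $\util[p_i]{\sigma}=-1$, $\util[x_2]{\sigma}=0$, and since $\sat[p_i](p_j)\ge 0$ and $\sat[x_2](x_1)=1$, the pair $(p_i,x_2)$ is exchange-blocking. The paper's version is terser but the argument is identical.
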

      \begin{proof}[Proof of 
        \cref{claim:ESA_delta_matching}]
        \renewcommand{\qedsymbol}{$\diamond$}
        Suppose that $x_1$ is matched to $p \neq x_2$ and $x_2$ is matched to $p' \neq x_1$. Then, $\util[p]{\sigma} = 
        -1$ and $\util[x_2]{\sigma} = 0$.
        Because $\sat[p](p') \geq 0$ and $\sat[x_2](x_1) = 1$ it follows that $(p, x_2)$ is an exchange-blocking pair. 
      \end{proof}
      Hence, in every \esArr agent~$x_1$ envies the agents assigned to any other matching-edge.
      This means that $x_2$ is not in an exchange-blocking pair if the utility of each original agent $p_i$ is positive, i.e., $p_i$ is matched to an agent in his preference list. 
      Moreover, agents $p_1, \ldots, p_{|\hat{V}|}$ are exchange-stable in~$I$ if and only if this is an exchange-stable matching of~$\mathcal{P}$, which completes this proof. 	
      
      \paragraph*{Stars-graphs.} This follows from the previous case since a matching-graph is also a stars-graph. 
    \end{proof}
  }
  
  \noindent For the combined parameters $(\nonisolated, \maxoutdeg)$, \ESA becomes \fptf\ using the same idea as \cref{thm:MUA_k+delta}. 
  
  \newcommand{\esakdeltafpt}{%
    \ESA is \fpt\ wrt.\ $\nonisolated+\maxoutdeg$. %
  }
\statementarxiv{theorem}{thm:ESA_k+delta}{\esakdeltafpt}
  \appendixproofwithstatement{thm:ESA_k+delta}{\esakdeltafpt}{
    \begin{proof}
      The idea is to obtain a polynomial-size problem kernel using the same algorithm as \cref{thm:MUA_k+delta}. We state it here for completeness.
      First, we can assume that the seat graph has isolated vertices since otherwise $\nonisolated = |\agents|$ and we have a linear kernel.
      Therefore, $\egal \leq 0$ for every arrangement~$\sigma$.
      
      Next, we claim that in polynomial time, we either find $\nonisolated$ agents which have zero preference towards each other (i.e., they are independent in the preference graph), or obtain a problem kernel.

      Before we proceed with the algorithm, we observe that in every digraph with maximum out-degree $\maxoutdeg$, there is always a vertex with in-degree bounded by $\maxoutdeg$. Hence, there is a vertex with sum of in- and out-degrees at most $2\maxoutdeg$.

      Now, we select in each step an agent $p$ with minimum in-degree in the preference graph, put him to our solution. %
      After that we delete all in- and out-neighbors of $p$. 
      If, after $\nonisolated$ steps, we can find a set~$S$ of $\nonisolated$ ``independent'' agents (they do not have arcs towards each other), then we can assign them arbitrarily to the non-isolated vertices.
      Since the preference between every pair of agents in $S$ is zero, the utility of each of these agents is also zero and no agent in~$S$ envies an isolated agent. 
      Hence, there is no exchange-blocking pair between an isolated and non-isolated agent. 
      Moreover, since these $\nonisolated$ agents are independent, they do not envy each other, i.e., this arrangement is exchange-stable.
      
      However, if we cannot find $\nonisolated$ independent agents, then the instance has at most $\nonisolated(1+2\maxoutdeg)$ agents since in each step we deleted at most $1+2\maxoutdeg$ agents.
      It is straightforward that the approach above can be done in polynomial time.
    \end{proof}
  }
\ifarxiv
	\section{Additional Complexity Results for \EFA\ and \ESA} \label{sec:complexity-EFA-ESA}	
	\newE{
		In this section we give further complexity results for finding an envy-free resp.\ exchange-stable arrangement.
		
		Despite the tractability result for clique-graphs in \cref{thm:EFA_clique_nonneg+symm}, \EFA remains \NPh\ even for binary and symmetric preferences. 
		Hardness holds for simple graph classes as path- or cycle-graphs as the next theorem shows. 
		
		\begin{theorem} \label{thm:EFA_NP_path-cycle_bin+symm}
			\EFA is \NPh\ even for binary and symmetric preferences and a single path or cycle as seat graph. 
		\end{theorem}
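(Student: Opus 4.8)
The plan is to reduce from \HamPath\ (for the path seat graph) and \HamCycle\ (for the cycle seat graph), both of which remain \NPh, where we may assume the input graph~$\hat G$ is connected. The central idea is to \emph{subdivide every edge of~$\hat G$}: this replaces a naive one-agent-per-vertex encoding (which fails, because a vertex adjacent to two vertices lying at distance two in a Hamiltonian path creates envy) by one in which any two agents representing adjacent vertices are forced to sit at distance at least three on the seat path/cycle, killing such ``shortcut-envy''.

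Concretely, for each vertex $v\in V(\hat G)$ create a \emph{vertex-agent}~$p_v$, and for each edge $e=\{v,w\}\in E(\hat G)$ create two \emph{edge-agents} $e^1,e^2$ with $\sat[e^1](e^2)=\sat[e^2](e^1)=1$, $\sat[e^1](p_v)=\sat[p_v](e^1)=1$, and $\sat[e^2](p_w)=\sat[p_w](e^2)=1$; all other preferences are~$0$, so the preferences are binary and symmetric. I would additionally introduce a small number of \emph{buffer-agents} with all-zero preferences, and let the seat graph be a single path (resp.\ cycle) on $|V(\hat G)|+2|E(\hat G)|$ vertices plus the buffers. For the forward direction, given a Hamiltonian path (resp.\ cycle) $v_1,\dots,v_n$, seat the agents in the order $p_{v_1},a_1^1,a_1^2,p_{v_2},a_2^1,a_2^2,\dots$, where $a_i^1,a_i^2$ are the edge-agents of $\{v_i,v_{i+1}\}$ oriented so that $a_i^1$ neighbours $p_{v_i}$ and $a_i^2$ neighbours $p_{v_{i+1}}$, and then stash the edge-agents of all unused edges, padding consecutive stashed pairs and the endpoints with buffer-agents. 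Every agent on this ``spine'' then attains its maximum utility~$2$ and hence never envies; the stashed edge-agents have utility~$1$ from their partner and, thanks to the padding and a consistent orientation of stashed pairs, no strictly better seat exists; buffer-agents have utility~$0$ and no liked agent, and the padding ensures nobody envies a buffer; for the path case the two spine endpoints have utility~$1$ and the same padding argument applies.

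For the backward direction I would first record the easy observation that on a path or cycle every agent with at least one liked agent must, in an envy-free arrangement, have utility at least~$1$ (otherwise it envies a seat adjacent to one of its liked agents, the same kind of argument as \cref{lem:EFA_non-neg_cond}). Since a vertex-agent likes only edge-agents and an edge-agent likes only its partner and one vertex-agent, the ``liked-adjacency'' subgraph induced on the seat graph is a disjoint union of paths (or the whole cycle) whose internal vertices have utility~$2$. A case analysis—using precisely the distance-$\ge 3$ property of the subdivision to forbid shortcut-envy—would then force each ``active'' edge-pair $e^1,e^2$ to sit contiguously between $p_v$ and $p_w$, and force each utility-$2$ vertex-agent to be flanked by two active edge-pairs. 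Counting agents against seats yields that exactly $n$ edges are active and that the vertex-agents, read along the spine, describe a Hamiltonian path (resp.\ cycle) of~$\hat G$.

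The main obstacle is exactly this last step: showing that \emph{every} envy-free arrangement, not merely the intended one, is this rigid. The delicate cases are the stashed edge-agents of unused edges (utility~$1$, hence not automatically envy-stable) and arrangements in which edge-pairs interleave or an edge-pair is split with its two agents far apart; ruling these out is where the choice of buffer-agents and of the pair-orientations in the construction is used, in the spirit of the gadget analyses in \cref{thm:EFA_delta_clique+path-cycle+stars_symm,thm:ESA_delta_clique+path-cycle+matching}.
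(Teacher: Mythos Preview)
Your backward direction does not go through, and the gap is not just a matter of case analysis: the subdivision construction is simply not rigid enough. Take $\hat G=K_{1,3}$ with centre $c$ and leaves $\ell_1,\ell_2,\ell_3$. Here $|V|+2|E|=10$ and $m=n-1$, so there are no stashed pairs and (per your description) no buffers are needed. The arrangement
\[
p_{\ell_1},\,e_1^2,\,e_1^1,\,p_c,\,e_2^1,\,e_2^2,\,p_{\ell_2},\,e_3^1,\,e_3^2,\,p_{\ell_3}
\]
on the $10$-vertex path is envy-free: every agent attains its maximum achievable utility, and in particular $e_3^1$ (the only utility-$1$ agent that likes two agents) cannot improve, since no seat has both $p_c$ and $e_3^2$ as neighbours. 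Yet $K_{1,3}$ has no Hamiltonian path. Adding any number of buffer agents does not help---append them after $p_{\ell_3}$ and the arrangement stays envy-free. The underlying problem is that your argument silently assumes every vertex-agent reaches utility~$2$; but a vertex-agent $p_v$ only needs utility~$1$ to be envy-free whenever its liked edge-agents are scattered so that no seat has two of them as neighbours. That lets an entire edge-gadget ``hang off'' the spine without creating envy, and no amount of local distance-$\ge 3$ reasoning recovers this. Restricting to minimum-degree-$2$ or cubic $\hat G$ does not obviously repair the argument either, since nothing forces the liked edge-agents of a vertex to be close to one another.

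For comparison, the paper's proof is entirely different and avoids this trap by engineering \emph{global} rigidity rather than local gadgetry. It introduces $2\hat n$ ``universal'' agents $x_j$ each liking \emph{all} vertex-agents and all $y_\ell$, and $2\hat n-1$ agents $y_\ell$ each liking all $x_j$. A pigeonhole-style counting argument on seats shows that in any envy-free arrangement every $x_j$ and every $y_\ell$ has utility~$2$; this forces a single contiguous alternating block $x,y,x,\ldots,x$, and then any vertex-agent with utility below~$2$ would envy a $y$-seat. Rigidity is thus obtained in one stroke from the universal agents, which is exactly the missing global forcing in your approach.
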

		\begin{proof}
			\textbf{Path-graph.} 
			We prove this via a polynomial-time reduction from \HamPath which is \NPh\ even if one endpoint is specified~\cite{garey1979}. 
			
			Let $(\hat{G},s)$ with $s \in V(\hat{G})$ be an instance of \HamPath. 
			Without loss of generality, assume~$s=v_1$. 
			We define~$\hat{n} \coloneqq |V(\hat{G})|$ and create an instance of \EFA in the following way:
			For each vertex $v_i \in V(\hat{G})$ we create one \myemph{vertex-agent}~$p_i$. 
			In addition to that, we create $4 \hat{n} +1$ special agents, i.e., 
			\begin{align*}
				\agents = &\{p_1, \ldots, p_{\hat{n}}\} \cup \{x_1, \ldots, x_{2\hat{n}} \} \cup \{y_1, \ldots, y_{2\hat{n}-1}\} \\
				&\cup \{z_1, z_2\}.
			\end{align*} 
			
			Since we will have symmetric preferences, for each pair of agents we only specify one value.
			We define the preferences in the following way (see \cref{fig:EFA_NP_path_bin+symm} for the corresponding preference graph), where~$v_i, v_{i'} \in V(\hat{G})$, $j \in [2\hat{n}]$, and $\ell \in [2\hat{n}-1]$: 
			\begin{compactitem}
				\item $\sat[p_i](p_{i'}) = 1$ if and only if $\{v_i, v_{i'}\} \in E(\hat{G})$,
				\item $\sat[p_i](x_j) = \sat[x_j](y_\ell) =1$,
				\item $\sat[x_1](z_1)=1$,
				\item $\sat[p_1](z_2)=1$. 
				\item The non-mentioned preferences are set to zero. 
			\end{compactitem}
			
			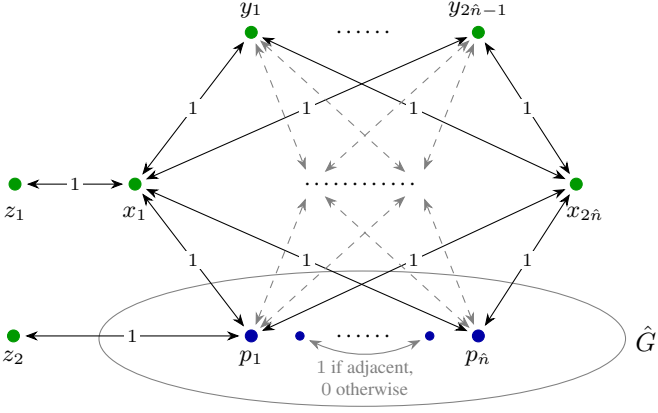
\begin{figure}[t]
				\centering
				\begin{tikzpicture}[>=stealth', shorten <= 2pt, shorten >= 2pt]
					\def \xs {9ex}
					\def \ys {12ex}
					\def \ss {2.5ex}
					
					\node[nodeW] (x1) {};
					\node[nodeW, right = 4*\xs  of x1] (x2) {};
					\node[right = 2*\xs of x1] (x3) {};
					\node[right = 1.5*\xs of x1] (x4) {};
					\node[right = 2.5*\xs of x1] (x5) {};
					
					\node[nodeU, below right = \ys and \xs  of x1] (p1) {};
					\node[nodeU, right = 2*\xs  of p1] (p2) {};

					\node[nodeW, above right = \ys and \xs  of x1] (y1) {};
					\node[nodeW, right = 2*\xs  of y1] (y2) {};
					
					\node[nodeW, left =\xs  of x1] (z1) {};
					\node[nodeW, left =2.1*\xs  of p1] (z2) {};

					\begin{footnotesize}
						\node[below = 3pt of x1] {$x_1$};
						\node[below = 3pt of x2] {$~~~x_{2\hat{n}}$};
						
						\node[above = 0pt of y1] {$y_1$};
						\node[above = 0pt of y2] {$y_{2\hat{n}-1}$};
						
						\node[below = 0pt of p1] {$p_{1}$};
						\node[below = 0pt of p2] {$p_{\hat{n}}$};
						
						\node[below = 3pt of z1] {$z_1$};
						\node[below = 0pt of z2] {$z_{2}$};
						
						\path (p1) -- node[auto=false]{\ldots\ldots} (p2);
						\path (y1) -- node[auto=false]{\ldots\ldots} (y2);
						\path (x1) -- node[auto=false]{~~\ldots\ldots\ldots\ldots} (x2);
					\end{footnotesize}

					\node[ellipse, draw = gray, minimum width = 7cm, minimum height = 1.8cm, label={right:$\hat{G}$}] (ell) at (2*\xs+1.1ex,-\ys-1ex) {};

					\begin{scriptsize}
						\foreach \x / \y in {x/y, x/p} {
							\foreach \i in {1,2} {
								\foreach \j in {1,2} {
									\draw[Stealth-Stealth] (\x\i) edge[black] node[pos=0.5, fill=white, inner sep=1pt] {$1$} (\y\j);
								}
							}
						}
						\foreach \i in {1,2} {
							\foreach \j in {4,5} {
								\draw[Stealth-Stealth] (y\i) edge[dashed, gray] (x\j);
								\draw[Stealth-Stealth] (p\i) edge[dashed, gray] (x\j);
							}
						}
						
						\draw[Stealth-Stealth] (x1) edge[black] node[pos=0.5, fill=white, inner sep=1pt] {$1$} (z1);
						\draw[Stealth-Stealth] (p1) edge[black] node[pos=0.5, fill=white, inner sep=1pt] {$1$} (z2);
						
						\node[nodeU, right = \xs/2  of p1] (pi) {};
						\node[nodeU, left = \xs/2  of p2] (pj) {};
						\draw[Stealth-Stealth] (pi) edge[bend right=25, gray] node[below, inner sep=2pt] {\begin{tabular}{c}$1$ if adjacent, \\ $0$ otherwise \end{tabular}} (pj);
					\end{scriptsize}
				\end{tikzpicture}
				\caption{Preferences defined in the proof for \cref{thm:EFA_NP_path-cycle_bin+symm} for path-graphs, where vertex-agents are colored blue, and special agents are colored green. Dashed arcs have weight~$1$. }
				\label{fig:EFA_NP_path_bin+symm}
			\end{figure}
			
			The seat graph~$G$ consists of a single path with $5\hat{n}+1$ vertices. 
			This completes the construction of an \EFA instance with binary and symmetric preferences, which can clearly be done in polynomial time. 
			
			Next, we show the correctness of our reduction, i.e., there is a Hamiltonian path in~$\hat{G}$ ending in~$s$ if and only if there is an \efArr of~$I$. 		
			
			For the forward direction, let $v_t-v_1$ be a Hamiltonian path of $(\hat{G},s)$ where $v_t$ is the second endpoint of the path.
			To obtain an envy-free arrangement of~$I$ we assign the agents to the seat graph in the following order:
			\begin{align*}
				z_1, x_1,y_1, x_2, y_2, \ldots, x_{2\hat{n}-1}, y_{2\hat{n}-1}, x_{2\hat{n}}, p_t - p_1, z_2. 
			\end{align*} 
			The term $p_t - p_1$ means that we are assigning the agents $p_1, \ldots, p_{\hat{n}}$ according to their order in the Hamiltonian path~{$v_t-v_1$}. 
			Observe that each agent (except~$z_1$ and~$z_2$) has utility two and agents~$z_1$ and~$z_2$ have utility one.
			Since each agent has his maximum possible utility, this arrangement is envy-free. 
			
			For the backward direction, let~$\sigma$ be an \efArr of~$I$. 
			We can observe the following. 
			
			\begin{claim} \label{claim:EFA_NP_path}
				Every \efArr~$\sigma$ fulfills the following:
				\begin{compactenum}[(i)]
					\item \label{itm:EFA_NP_path-cycle_bin-symm1} The utility of each agent in~$\sigma$ is positive.
					
					\item \label{itm:EFA_NP_path-cycle_bin-symm2} For each $i \in [2\hat{n}]$, it holds that $\util[x_i]{\sigma}=2$.
					
					\item \label{itm:EFA_NP_path-cycle_bin-symm3} For each $i \in [2\hat{n}-1]$, it holds that $\util[y_i]{\sigma}=2$. 
				\end{compactenum}
			\end{claim}
			\begin{proof}[Proof of \cref{claim:EFA_NP_path}]
				\renewcommand{\qedsymbol}{$\diamond$}
				Let~$\sigma$ be an \efArr. 
				\begin{compactenum}[(i)]
					\item This has to hold because the out-degree of each agent in the preference graph is at least one and the preferences are non-negative. 
					
					\item By the previous statement we know that the utility of each agent is positive. 
					Suppose towards a contradiction, that $\util[x_i]{\sigma}=1$ for some $i \in [2\hat{n}]$. 
					We define the set $Q:=\{p_1,\ldots,p_{\hat{n}},y_1, \ldots, y_{2\hat{n}-1}\}$.
					
					Since $\sigma$ is envy-free, there are no two agents $q_1, q_2 \in Q$ which share a common neighbor in~$\sigma$, i.e., $N_G(\sigma(q_1))\cap N_G(\sigma(q_2)) = \emptyset$, as otherwise~$x_i$ would prefer the seat of this common neighbor because this gives him utility two.
					This means, the seats of two agents in~$Q$ are either adjacent or have at least two other seats in between them.
					Moreover, the seats of no three agents in~$Q$ are adjacent. 
					Therefore, we can partition~$Q$ into two disjoint subsets
					\begin{align*}
						&Q_1 := \{q_1 \in Q~|~\forall q_2 \in Q: \sigma(q_2) \notin N_G(\sigma(q_1))\}, \\
						&Q_2 := \{q_1 \in Q~|~\exists q_2 \in Q: \sigma(q_2) \in N_G(\sigma(q_1))\},
					\end{align*}
					where $Q_1$ is the set of agents, which have no agent in~$Q$ as a neighbor in~$\sigma$.  
					The set~$Q_2$ describes the set of agents, which have exactly one neighbor of~$Q$ in~$\sigma$. 
					Note that this neighbor is obviously also contained in~$Q_2$.
					Hence, the set~$Q_2$ can be seen as pairs of agents.
					Clearly, it holds $Q_1 \dot{\cup} Q_2 = Q$, which implies $|Q_1|+|Q_2|=3\hat{n}-1$.
					
					We now use the fact that the seats of two agents in~$Q$ are either adjacent or have at least two other seats in between them. 
					The adjacent agents are the ones in~$Q_2$, i.e., we have $|Q_2|/2$ adjacent pairs. 
					Moreover, $|Q_1|$ agents have no neighbor in~$Q$.
					Since we always need to assign at least two agents from $\agents \setminus Q$ between each two agents in~$Q_1$ and each two pairs in~$Q_2$, we obtain that the minimum number of agents needed from $\agents \setminus Q$ is
					\begin{align*}
						2 \left(|Q_1|+\frac{|Q_2|}{2} -1 \right) = 2|Q_1|+|Q_2|-2 \geq 3\hat{n}-3.
					\end{align*}
					However, there are only $|\agents \setminus Q|=2\hat{n}+2$ agents. 
					Hence, for $\hat{n}>5$ this is a contradiction to our assumption that~$\sigma$ is envy-free. 
					
					\item Again, suppose $\util[y_i]{\sigma}=1$ for some $i \in [2\hat{n}-1]$ in~$\sigma$. 
					Similar to the previous claim we will show using a counting argument that this leads to a contradiction. 
					
					Since $\sigma$ is envy-free, there are no two distinct agents~$x_j, x_\ell$ with $j,\ell \in [2\hat{n}]$ which share a neighbor in $\sigma$, i.e., $N_G(\sigma(x_j))\cap N_G(\sigma(x_\ell)) = \emptyset$. 
					From statement~\ref{itm:EFA_NP_path-cycle_bin-symm2} we know that $\util[x_j]{\sigma}=2$ for each $j \in [2\hat{n}]$. 
					Using arguments similar to the proof of \ref{itm:EFA_NP_path-cycle_bin-symm2} we obtain that the minimum number of agents needed to be assigned between $x_1,\ldots, x_{2\hat{n}}$ is at least $2(2\hat{n}-1)+2 = 4\hat{n}$. 
					However, there are only $|\agents \setminus \{x_1,\ldots, x_{2\hat{n}}\}|=3\hat{n}+1$ possible agents, which is a contradiction. 
					\qedhere
				\end{compactenum}
			\end{proof}
			
			Using this claim we show now that if there is an \efArr~$\sigma$ of~$I$, then~$\hat{G}$ contains a Hamiltonian path ending at~$s$. 
			
			By \cref{claim:EFA_NP_path}(\ref{itm:EFA_NP_path-cycle_bin-symm1}), we know that $\sigma(z_1)$ and $\sigma(x_1)$ as well as~$\sigma(z_2)$ and $\sigma(p_1)$ are adjacent in~$G$, otherwise~$z_1$ (resp.~$z_2$) is envious. 
			Moreover, from \cref{claim:EFA_NP_path}(\ref{itm:EFA_NP_path-cycle_bin-symm3}) we obtain that the utility of each $y_1,\ldots, y_{2\hat{n}-1}$ is two. 
			Hence, each of these agents is assigned between two agents of $x_1, \ldots, x_{2\hat{n}}$. 
			However, this is only possible, if the arrangement of these agents form an alternating sequence starting or ending with~$x_1$. 
			From this we can conclude that $\util[p_i]{\sigma}=2$, otherwise~$p_i$ envies each $y_1, \ldots, y_{2\hat{n}-1}$. 			
			Hence, agents~$z_1$ and~$z_2$ have to be assigned to the two endpoints of the seat graph. 
			This means, $\sigma(p_1),\ldots, \sigma(p_n)$ are connected in~$\sigma$ and form a Hamiltonian path in~$\hat{G}$, which completes the proof.
			
			\paragraph*{Cycle-graph.} 
			Now, we turn to a cycle-graph as seat graph and modify the reduction above. 
			For this case we do not need a specified endpoint in the \HamPath instance. 
			Therefore, we also do not create the special agents~$z_1, z_2$ as in the previous reduction. 
			The preferences for the remaining agents are defined exactly as for the path-graph. 
			The seat graph $G$ consists of a single cycle with $5\hat{n}-1$ vertices. 
			
			The proof of correctness of this reduction can be done analogously to the previous case.
		\end{proof}
		
		Bodlaender et al.\cite{bodlaender2020tech} proved that \EFA is \NPh, even if the preference graph is a directed acyclic graph and the seat graph is a tree. 
		We strengthen their hardness result:
		
		\begin{theorem}\label{thm:EFA_NP_matching_bin-DAG}
			\EFA is \NPh\ even for a matching-graph and for binary preferences, where the preference graph is a directed acyclic graph.
		\end{theorem}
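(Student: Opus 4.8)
The plan is to reduce from \Clique. Given a \Clique instance $(\hat G, h)$ with $\hat n \coloneqq |V(\hat G)|$, I would build an \EFA instance whose preference graph is acyclic, whose preferences are binary, and whose seat graph is a matching-graph. The reduction reuses the overall "vertex gadget / edge gadget" layout of \cref{thm:EFA_k_matching+clique+path-cycle_bin} and \cref{fig:EFA_k_matching_bin}, but the vertex gadgets must be rebuilt so that the preference graph becomes a DAG — the symmetric (bidirectional) arcs used there are exactly what makes that construction cyclic. For each vertex $v_i$ I would create a vertex gadget, an acyclic gadget on $V = \Theta(\hat n^2)$ agents; for each edge $e_\ell = \{v_i, v_j\}$ an edge gadget on a constant number $E$ of agents, with the edge-agents placed as descendants (in the preference DAG) of the vertex gadgets of both $v_i$ and $v_j$. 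The seat graph is a matching-graph with $\nonisolated \coloneqq hV + \binom{h}{2}E$ non-isolated vertices and enough isolated vertices for the remaining agents, with $V$ and $E$ chosen large relative to $\hat n + |E(\hat G)|$ and $V \gg E$.

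The backbone of both directions is \cref{lem:EFA_non-neg_cond}: since all preferences are non-negative, an isolated agent forces all of its positive out-neighbours to be isolated; contrapositively, being non-isolated propagates from an agent to all of its ancestors in the preference DAG. Combined with the matching constraint — a matched agent that has a positive out-neighbour among the non-isolated agents must itself be matched to a positive out-neighbour, else it envies that out-neighbour's partner — this lets me chain dependencies through a vertex gadget: "agent $p$ would have to be matched to $q$, but $q$ is already taken, so $p$ must be matched to $q$'s successor, which is therefore non-isolated too." The design goal, formalized as an all-or-nothing lemma, is that in every envy-free arrangement each vertex gadget is either entirely non-isolated — in which case its internal matching is essentially forced — or contributes only $O(\hat n + |E(\hat G)|)$ non-isolated "junk" agents, and analogously for edge gadgets.

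For the forward direction, a size-$h$ clique $C$ of $\hat G$ yields an envy-free arrangement: place the $h$ vertex gadgets of $C$ and the $\binom{h}{2}$ edge gadgets of the edges inside $C$ on the matching edges using the intended internal matchings, and send all other agents to isolated vertices. Every non-isolated agent then sits next to a most-preferred neighbour (hence does not envy), and every isolated agent has all its positive out-neighbours isolated by construction, so by the converse of \cref{lem:EFA_non-neg_cond} it is envy-free too. For the backward direction, let $\sigma$ be an envy-free arrangement and $W$ the set of non-isolated agents, so $|W| = \nonisolated$. By the all-or-nothing lemma, $W$ is the union of $k_1$ complete vertex gadgets, $k_2$ complete edge gadgets, and $O(\hat n + |E(\hat G)|)$ further agents; since $|W| = hV + \binom{h}{2}E$ with $V \gg E \gg \hat n + |E(\hat G)|$, a counting argument forces $k_1 = h$, $k_2 = \binom{h}{2}$, and no leftover agents. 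Finally, \cref{lem:EFA_non-neg_cond} together with the descendant placement of the edge-agents forces both endpoint vertex gadgets of every selected edge to lie in $W$; hence the $\binom{h}{2}$ selected edges all lie among the $h$ selected vertices, which is possible only if those vertices form a clique in $\hat G$.

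I expect the main obstacle to be the design and verification of the acyclic all-or-nothing vertex gadget with binary preferences. The difficulty is intrinsic: \cref{lem:EFA_non-neg_cond} propagates non-isolatedness only toward ancestors, never toward descendants, so the "forward" half of the all-or-nothing behaviour must be extracted entirely from envy-freeness of the matching — all while ensuring that the full gadget still admits an envy-free internal matching and that only a bounded number of "source" agents of a gadget can ever be used in isolation. Getting this cascade right, and the accompanying case analysis ruling out every partial use of a gadget, will be the bulk of the proof; the counting step and the two directions of the reduction are then routine.
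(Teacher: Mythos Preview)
Your plan differs substantially from the paper's, and the part you flag as the main obstacle --- constructing a binary, acyclic all-or-nothing vertex gadget for matching-graphs --- is exactly what the paper sidesteps. The paper does not reduce from \Clique; it reduces from \ExactCover and uses no all-or-nothing gadget at all. Its key device is a single special agent~$x_1$ with an out-arc to every one of the $5m$ set-agents (and to a dummy~$x_2$), giving it out-degree $5m+1$, strictly larger than the $5m-5\hat n$ isolated seats. By pigeonhole some out-neighbour of~$x_1$ is non-isolated, so by \cref{lem:EFA_non-neg_cond} $x_1$ itself must be non-isolated; since every element-agent~$p_i$ has an arc to~$x_1$, the same observation then forces \emph{all} element-agents to be non-isolated. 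Each~$p_i$ must now be matched to a set-agent $q_j^z$ containing it (matching~$p_i$ to~$x_1$ would leave $x_1$ with utility~$0$ while it still has non-isolated out-neighbours), and every used set~$S_j$ drags in an auxiliary pair $v_j,w_j$ via one more application of \cref{lem:EFA_non-neg_cond}; a one-line count then caps the number of used sets at~$\hat n$.

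What this buys is that the whole argument runs purely on backward propagation through \cref{lem:EFA_non-neg_cond} --- no ``forward'' cascade toward descendants is ever needed. Your approach may be completable, but the gap you identify is real and not merely cosmetic: in a DAG with binary preferences on a matching-graph, natural chain-style gadgets readily admit partial activations that leak $\Theta(V)$ non-isolated agents (any suffix of a cascade can sit on matching edges without violating envy-freeness, since sources with only isolated out-neighbours are content with utility~$0$), and that would break your counting step. The paper's lesson is that switching source problems --- from one where you want an all-or-nothing \emph{choice} per vertex to one where you want \emph{every} element forced on --- lets a single high-out-degree agent replace the entire gadget machinery you are trying to engineer.
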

		\begin{proof}
			We show hardness via a polynomial time reduction from \ExactCover~\cite{garey1979}. 
			
			Let $\hat{I} = (U,(S_j)_{j\in [m]})$ be an instance of \ExactCover and $U = \{u_1, \ldots, u_{3 \hat{n}}\}$. 
			Without loss of generality, assume that $m \geq \hat{n}$. 
			We construct an instance~$I$ of \EFA as follows: 
			For each element $u_i \in U$ we create one \myemph{element-agent} named~$p_i$. 
			For each subset $S_j, j\in [m]$, we create five \myemph{set-agents} $q_j^1, q_j^2, q_j^3, v_j, w_j$. 
			Finally, we add two special agents named~$x_1$ and~$x_2$. 
			This means the set of agents consists of the following $3\hat{n}+5m+2$ agents:
			\begin{align*}
				\agents = &\{p_1, \ldots, p_{3\hat{n}}\} \cup \{x_1, x_2\} \\
				& \cup \{ q_1^1, q_1^2, q_1^3, v_1, w_1, \ldots, q_m^1, q_m^2, q_m^3, v_m, w_m\}. 
			\end{align*}				
			We define the binary preferences as follows, where $i \in [3\hat{n}]$ and each $j \in [m]$ (see \cref{fig:EFA_edges-iv_DAG} for the corresponding preference graph):
			\begin{compactitem}
				\item If $u_i \in S_j$, then set $\sat[p_i](q_j^z) = 1$ for each $z \in [3]$ and set $\sat[p_i](x_1)=1$. 
				
				\item Set $\sat[v_j](q_j^z) = \sat[v_j](w_j) = 1$ for each $z \in [3]$.
				
				\item Set $\sat[x_1](p) = 1$ for each $p \in \{q_j^1, q_j^2, q_j^3, v_j, w_j\}$. 
				
				\item The non-mentioned preferences are set to zero. 
			\end{compactitem}
			
			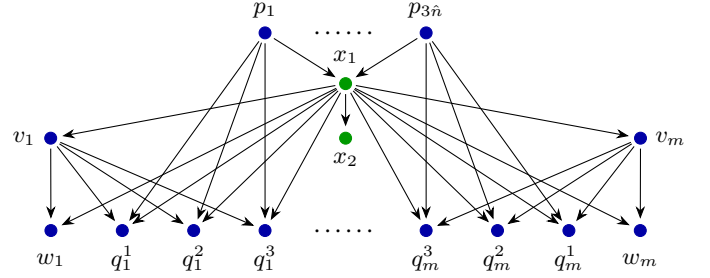
\begin{figure}[t]
				\centering
				\begin{tikzpicture}[>=stealth', shorten <= 1.5pt, shorten >= 2pt]
					\def \xs {6ex}
					\def \ys {7ex}
					
					\node[nodeW] (x1) {};
					\node[nodeW, below = 0.5*\ys of x1] (x2) {};
					
					\foreach \i / \pos in {1/left, 3n/right}{
						\node[nodeU, above \pos = 0.5*\ys and \xs of x1] (e\i) {};
					}
					\path (e1) -- node[auto=false]{\ldots\ldots} (e3n);
					
					\foreach \i / \pos in {1/left, m/right}{
						\node[nodeU, \pos = 3.95*\xs of x2] (u\i) {};
						\node[nodeU, below \pos = 1*\ys and 4*\xs of x2] (w\i) {};
						\node[nodeU, below \pos = 1*\ys and 3*\xs of x2] (s\i1) {};
						\node[nodeU, below \pos = 1*\ys and 2*\xs of x2] (s\i2) {};
						\node[nodeU, below \pos = 1*\ys and 1*\xs of x2] (s\i3) {};
					}
					\path (s13) -- node[auto=false]{\ldots\ldots} (sm3);

					\begin{footnotesize}
						\node[above = 2pt of x1] {$x_1$};
						\node[below = 0pt of x2] {$x_2$};
						
						\foreach \i / \ii / \j / \pos in {1/1/1/left, 3n/3\hat{n}/m/right}{
							\node[above = 0pt of e\i] {$p_{\ii}$};
							\node[\pos = 0pt of u\j] {$v_{\j}$};
							\node[below = 3pt of w\j] {$w_{\j}$};
							
							\foreach \x in {1, 2, 3} {
								\node[below = 0pt of s\j\x] {$q_{\j}^{\x}$};
							}
						}
					\end{footnotesize}

					\begin{scriptsize}
						\draw[-Stealth] (x1) edge (x2);
						\foreach \i / \j in {1/1, 3n/m}{
							\draw[-Stealth] (e\i) edge (x1);
							\draw[-Stealth] (x1) edge (u\j);
							\draw[-Stealth] (x1) edge (w\j);
							\draw[-Stealth] (u\j) edge (w\j);
							
							\foreach \x in {1,2,3} {
								\draw[-Stealth] (u\j) edge (s\j\x);
								\draw[-Stealth] (e\i) edge (s\j\x);
								\draw[-Stealth] (x1) edge (s\j\x);
							}
						}
					\end{scriptsize}
				\end{tikzpicture}
				\caption{Preference defined in the proof for \cref{thm:EFA_NP_matching_bin-DAG}. Since the agents have binary preferences, the weight of each edge is~$1$. }
				\label{fig:EFA_edges-iv_DAG}
			\end{figure}
			
			The seat graph~$G$ consists of $4\hat{n}+1$ disjoint edges and $5m-5\hat{n}$ isolated vertices.
			This completes the construction of an \EFA instance~$I$ which can clearly be done in polynomial time. 		
			Obviously, the preferences are binary. 
			To see that the preference graph contains no directed cycles we give a linear ordering of the agents:
			\begin{align*}
				p_1, \ldots, p_{3\hat{n}}, x_1, x_2, v_1, w_1, q_1^1, q_1^2, q_1^3, \ldots, v_m, w_m, q_1^m, q_1^m, q_1^m. 
			\end{align*}
			It can be easily verified that this is a topological ordering of the preference graph (see also \cref{fig:EFA_edges-iv_DAG}).
			
			To prove the correctness, let $J \subseteq [m]$ be an exact cover of~$\hat{I}$.
			Since we know that each element $u_i \in U$ is covered by exactly one set $S_j, j \in J$, we can assign each element-agent~$p_i$ together with one of the set-agents $q_j^1,q_j^2$, or~$q_j^3$ to the endpoints of the same edge. 
			Here, we can choose arbitrarily between the three set-agents. 
			Since each set contains exactly three elements and $|J|=\hat{n}$ we can match all~$3\hat{n}$ elements with a set-agent which contains it. 				
			Moreover, for each $j \in J$ we assign agent~$u_j$ and~$w_j$ to the endpoints of the same edge. 				
			Furthermore, we assign agent~$x_1$ and~$x_2$ to the endpoints of the last edge. This means we have assigned $8\hat{n}+2$ agents to the endpoints of the $4\hat{n}+1$ edges. 
			All remaining agents are assigned to the isolated vertices. 
			
			This arrangement is envy-free because of the following:
			\begin{compactitem}
				\item Agents $q_j^1, q_j^2, q_j^3, w_j$ with $ j \in [m]$, and $x_2$ are always envy-free since they have zero preference towards every agent. Therefore, their utility is always zero. 
				
				\item Agents $x_1, p_1, \ldots, p_{3\hat{n}}$ and $v_j, j \in J$ have their maximum utility of one. Therefore, they are also envy-free. 
				
				\item Agents $u_j, j \in [m]\setminus J$ have $\util[v_j]{\sigma}=0$. But the agents towards which he has a positive preference, are all isolated.
			\end{compactitem}		
			
			Before we prove the backward direction, we define 
			\begin{align*}
				J \coloneqq \{ j \in [m]\colon \exists (i,\ell) \in [3\hat{n}] \times [3] \text{ such that } \\
				\{\sigma(p_i), \sigma(q_j^\ell )\} \in E(G) \}.
			\end{align*}
			We want to show that~$J$ is a set cover of~$\hat{I}$. 
			First, we observe the following:
			\begin{claim} \label{claim:EFA_NP_matching_bin}
				Every \efArr~$\sigma$ of $I$ satisfies:
				\begin{compactenum}[(i)]
					\item \label{itm:EFA_NP_DAG1} Agent~$x_1$ is not isolated.
					
					\item \label{itm:EFA_NP_DAG2} Each element-agent $p_i$ is matched to a set-agent~$q_j$ which ``contains'' it, i.e., $u_i \in S_j$. 
					
					\item \label{itm:EFA_NP_DAG3} $|J| = \hat{n}$ .
				\end{compactenum}
			\end{claim}
			\begin{proof}[Proof of \cref{claim:EFA_NP_matching_bin}]
				\renewcommand{\qedsymbol}{$\diamond$}
				Let~$\prefgraph$ be the preference graph and~$\sigma$ an \efArr of~$I$. 
				\begin{compactenum}[(i)]
					\item Since $|\Nout{\prefgraph}(x_1)|=5m+1>5m-5\hat{n}$, there is at least one agent $p \in \Nout{\prefgraph}(x_1)$, which is not isolated.
					This means, if $\degr{G}(\sigma(x_1))=0$, then $x_1$ envies the agent, which is assigned to the same edge as $p$. 
					Therefore, it holds that $\degr{G}(\sigma(x_1))=1$. 
					
					\item Let $p_i$ with $i \in [3\hat{n}]$ be an element-agent. 
					
					By Statement~(\ref{itm:EFA_NP_DAG1}) we know that~$x_1$ is not isolated in~$\sigma$. 
					Since it holds $\sat[p_i](x_1)=1$ for all $i \in [3\hat{n}]$, agent~$p_i$ has to be matched to an agent, towards which he has a positive utility, i.e., to some agent in
					\begin{align*}
						\Neigh{\prefgraph}(\sigma(p_i)) \subseteq \{\sigma(q_j^1),\sigma(q_j^2), \sigma(q_j^3)~|~u_i\in S_j\}\cup \{x_1\}.
					\end{align*}
					
					However, if~$p_i$ and~$x_1$ are matched, then $\util[x_1]{\sigma}=0$, which implies that~$x_1$ is envious since by the previous reasoning at least one set-agent has to be non-isolated. 
					Therefore,~$x_1$ and~$p_i$ cannot be assigned to the endpoints of the same edge, so~$p_i$ is matched to a set-agent which contains it. 
					
					\item By Statement~(\ref{itm:EFA_NP_DAG2}) and the fact that~$|U|=3\hat{n}$ it follows that $|J|\geq \hat{n}$. 
					
					Furthermore, for all $j\in J$, agent~$v_j$ has to be matched and get utility one, since it holds that $\sat[v_j](q_j^\ell)$ for each $\ell \in [3]$. 
					Moreover, for each distinct $j,j' \in J$, no two agents~$v_j$ and~$v_{j'}$ can be matched to each other in an \efArr. 
					Combining this with the previous two statements we obtain that the following has to be satisfied:
					\begin{align*}
						2|U|+ |\{x_1\}| + 2|J| \le 2\cdot(4\hat{n}+1) \Leftrightarrow 2|J| \le 2\hat{n}+1.
					\end{align*}
                                        Hence, it holds that $|J|\leq \hat{n}$, which shows $|J|= \hat{n}$. \qedhere
				\end{compactenum}
			\end{proof}
			
		\noindent	\cref{claim:EFA_NP_matching_bin}(\ref{itm:EFA_NP_DAG2})--(\ref{itm:EFA_NP_DAG3}) imply that~$J$ is an exact cover of~$\hat{I}$.
		\end{proof}
	
		\ESA also remains intractable for cycle-graphs even for non-negative preferences.
			
		\begin{theorem}\label{thm:ESA_NP_cycle_nonneg}
			\ESA is \NPh\ even for non-negative preferences and a single cycle as seat graph.
		\end{theorem}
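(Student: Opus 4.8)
The plan is to mimic the single-cycle \ESA\ reduction from the proof of \cref{thm:ESA_delta_clique+path-cycle+matching} (the cycle-graph case), which reduces from \HamPath\ and crucially uses a \emph{pariah} agent $x_2$ with \emph{negative} preferences ($-2$ towards the vertex-agents and $-1$ towards $x_1,x_3$), and to re-engineer that gadget using only non-negative preferences. So I would again reduce from \HamPath, which is \NPh\ on cubic graphs even when one endpoint $s$ of the path is specified~\cite{garey1979}. Given $(\hat G,s)$, create one \emph{vertex-agent} $p_i$ for every $v_i\in V(\hat G)$ with $\sat[p_i](p_j)=\sat[p_j](p_i)=1$ whenever $\{v_i,v_j\}\in E(\hat G)$, together with a constant number of \emph{anchor agents} $x_1,x_2,x_3$ (plus a handful of padding agents, see below). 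The seat graph is a single cycle with $|V(\hat G)|+O(1)$ vertices. The design goal is that any \esArr\ forces the anchor block $x_1\text{--}x_2\text{--}x_3$ to be contiguous with $x_2$ in the middle, so that the rest of the cycle is a path $p_{\pi(1)}\text{--}\ldots\text{--}p_{\pi(n)}$ on the vertex-agents, and then forces every $p_{\pi(j)}$ to have utility $2$, which pins $\pi$ down to a Hamiltonian path with $s$ at one end.

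The heart of the matter is to replace the two jobs the negative values did. The ``home'' seat of $x_2$ (between $x_1$ and $x_3$) should be worth exactly the maximum vertex-agent utility, so set $\sat[p_i](x_1)=\sat[p_i](x_3)=1$ and $\sat[x_1](p_i)=\sat[x_3](p_i)=0$; then a vertex-agent values the home seat at $2$, and it is indifferent to it iff it already has utility $2$. To make $x_2$ restless I keep $\sat[x_2](x_1)=\sat[x_2](x_3)=0$ but add $\sat[x_2](p_i)=1$ for all $i$ and $\sat[p_i](x_2)=0$, so $x_2$ strictly prefers a seat flanked by two vertex-agents to its home seat. The upshot is the same pivotal property as before: in a stable arrangement $x_2$ forms an exchange-blocking pair with any vertex-agent that sits between two vertex-agents and has utility $<2$. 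The anchor contiguity is enforced by the standard argument (an agent whose only positive preference is for $x_2$ must be seated next to $x_2$, otherwise it forms a blocking pair with whoever currently occupies $x_2$'s side), but this is exactly where one needs the padding agents: whereas with negatives the vertex-agent next to $x_2$ was automatically unhappy and hence movable, with non-negative preferences one must guarantee that the agent occupying the seat we want to vacate strictly benefits from being bumped, so I would surround $x_1$ and $x_3$ with a short ``follower'' ladder of dummy agents making the seats adjacent to the anchor block genuinely bad for vertex-agents yet attractive for the anchors to reclaim. Conversely, a Hamiltonian $s$-path gives the arrangement $p_{\pi(n)}\text{--}x_1\text{--}x_2\text{--}x_3\text{--}p_{\pi(1)}\text{--}\ldots$ in which every agent except $x_2$ attains its maximum utility and $x_2$ has no profitable swap partner, hence an \esArr.

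I expect the main obstacle to be this second job — proving the anchor block cannot be broken apart and, more generally, ruling out every spurious exchange-blocking pair (in particular, showing no vertex-agent ever wants a dummy's seat and vice versa, and that a vertex-agent of utility $2$ never wants to swap with anyone). As in the earlier proofs, this should be packaged as a separate claim establishing the structural properties of an \esArr\ (contiguity of $x_1,x_2,x_3$; no dummy is ever displaced; every vertex-agent reaches utility $2$), and the case analysis is somewhat delicate precisely because non-negative preferences make ``an unhappy agent'' weaker than ``a blocking pair.'' Finally, the single-path statement \cref{cor:ESA_NP_path} should follow from the same instance by cutting the anchor block at $x_2$ and placing $x_1$ and $x_3$ at the two ends of a path.
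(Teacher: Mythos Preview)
Your framework is right and matches the paper's: reduce from \HamPath, use three anchors $x_1,x_2,x_3$ on a single cycle, make $x_2$ a restless pariah who envies every vertex-agent, and conclude that stability forces each vertex-agent to sit between two $\hat G$-neighbors. Where you diverge is in how to make the seat next to $x_2$ unattractive without negative values, and your fix (padding/follower agents plus a delicate case analysis you yourself flag as the main obstacle) is unnecessarily heavy.

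The paper's trick is much cleaner and you should adopt it: instead of $0/1$ preferences between vertex-agents, use $2/3$. Concretely, set $\sat[p_i](p_j)=3$ if $\{v_i,v_j\}\in E(\hat G)$ and $\sat[p_i](p_j)=2$ otherwise; set $\sat[p_i](x_1)=\sat[p_i](x_3)=3$ and $\sat[p_i](x_2)=0$; set $\sat[x_1](x_2)=\sat[x_3](x_2)=1$ and $\sat[x_2](p_i)=1$, everything else $0$. The point is that every non-$x_2$ agent is now worth at least $2$ to every vertex-agent, while $x_2$ is worth $0$. So if $x_1$ is not next to $x_2$ and $p$ is a vertex-agent next to $x_2$, then $\util[p]{\sigma}\le 3$ but $\util[p]{\swap{p}{x_1}}\ge 2+2=4$, and $x_1$ gains as well; this is exactly the step your $0/1$ scheme cannot deliver without padding. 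Once $x_1,x_2,x_3$ are forced consecutive, $x_2$ has utility $0$ and envies every vertex-agent, while a vertex-agent envies $x_2$'s seat (worth $3+3=6$) iff its current utility is below $6$, i.e., iff some neighbor is not a $\hat G$-edge. No dummies, no specified endpoint $s$, no extra case analysis.

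Two minor corrections: the paper reduces from plain \HamPath\ (no endpoint needed), and \cref{cor:ESA_NP_path} is not obtained by cutting this cycle construction but is read off directly from the path-graph case of \cref{thm:ESA_delta_clique+path-cycle+matching}.
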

		\begin{proof}
			We provide a polynomial-time reduction from the \NPc\ problem \HamPath\cite{garey1979}.
			Let $\hat{G}$ be an instance of \HamPath. 		
			We start with creating three special agents named $x_1, x_2$, and~$x_3$.
			For each vertex $v_i \in V(\hat{G})$, we create one \myemph{vertex-agent} named~$p_i$ and we define the preferences as follows (see \cref{fig:ESA_NP_cycle_nonneg} for the corresponding preference graph):
			\begin{compactitem}
				\item If $\{v_i,v_j\} \in E(\hat{G})$, then set $\sat[p_i](p_j) = \sat[p_j](p_i) = 3$; otherwise set the preference $\sat[p_i](p_j) = \sat[p_j](p_i) = 2$. 
				\item Set $\sat[p_i](x_1) = \sat[p_i](x_3) = 3$.
				\item Set $\sat[x_2](x_1) = \sat[x_3](x_1) = 1$.
				\item The non-mentioned preferences are set to zero.
			\end{compactitem}
			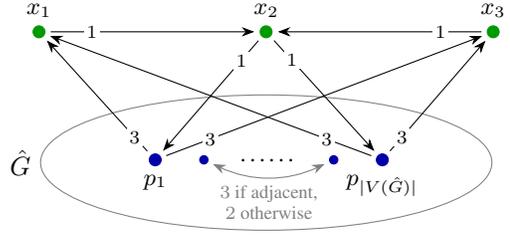
\begin{figure}[t]
				\centering
				\begin{tikzpicture}[>=stealth', shorten <= 2pt, shorten >= 2pt]
					\def \xs {9ex}
					\def \ys {10ex}
					\def \ss {2.5ex}
					
					\node[nodeW] (x1) {};
					\foreach \x / \y in {1/2, 2/3} {
						\node[nodeW, right = 2*\xs  of x\x] (x\y) {};
					}
					\node[nodeU, below right = \ys and \xs  of x1] (p1) {};
					\node[nodeU, right = 2*\xs  of p1] (p2) {};

					\begin{footnotesize}
						\foreach \x in {1, 2, 3} {
							\node[above = 0pt of x\x] {$x_{\x}$};
						}
						\node[below = 0pt of p1] {$p_{1}$};
						\node[below = 0pt of p2] {$p_{|V(\hat{G})|}$};
						\path (p1) -- node[auto=false]{\ldots\ldots} (p2);
					\end{footnotesize}

					\node[ellipse, draw = gray, minimum width = 6cm, minimum height = 1.8cm, label={left:$\hat{G}$}] (ell) at (2*\xs+1.1ex,-\ys-1ex) {};

					\begin{scriptsize}
						\foreach \x in {1,2} {
							\draw[-Stealth] (p\x) edge[black]  node[pos=0.15, fill=white, inner sep=1pt] {$3$} (x1);
							\draw[-Stealth] (p\x) edge[black]  node[pos=0.15, fill=white, inner sep=1pt] {$3$} (x3);
							\draw[-Stealth] (x2) edge[black]  node[pos=0.2, fill=white, inner sep=1pt] {$1$} (p\x);
						}
						
						\foreach \x / \b in {1/right, 3/left} {
							\draw[-Stealth] (x\x) edge[bend \b=0, black]  node[pos=0.2, fill=white, inner sep=1pt] {$1$} (x2);
						}
						
						\node[nodeU, right = \xs/2  of p1] (pi) {};
						\node[nodeU, left = \xs/2  of p2] (pj) {};					\draw[Stealth-Stealth] (pi) edge[bend right=25, gray] node[below, inner sep=3pt, align=center]{$3$ if adjacent, \\ $2$ otherwise} (pj);
					\end{scriptsize}
				\end{tikzpicture}
				\caption{Preferences defined in the proof for \cref{thm:ESA_NP_cycle_nonneg}.}
				\label{fig:ESA_NP_cycle_nonneg}
			\end{figure}
			The seat graph~$G$ consists of a single cycle with $|V(\hat{G})|+3$ vertices.			
			This completes the construction of an \ESA instance~$I$ with non-negative preferences, which can clearly be done in polynomial time. 
			
			Next, we show the correctness of our reduction. 
			Let~$\pathgraph$ be a Hamiltonian path of~$\hat{G}$.
			To obtain an \esArr of~$I$ we first assign~$x_1$ to an arbitrary seat in the seat graph. 
			Then, we continue with assigning~$x_2$ next to~$x_1$ and~$x_3$ next to~$x_2$.
			The remaining vertex-agents are assigned to the remaining seats according to the order in~$\pathgraph$.
			In this arrangement all agents (except~$x_2$) have their maximum possible utility and are therefore never part of an exchange-blocking pair.
			Hence, $x_2$ does not have a partner to build an exchange-blocking pair, which means this arrangement is exchange-stable.
			
			For the backward direction, let~$\sigma$ be an \esArr of~$I$. 
			We observe that~$\sigma$ satisfies the following:
			\begin{claim}\label{claim:ESA_NP_cycle}
				Agents $x_1, x_2$, and~$x_3$ are assigned consecutively. 
			\end{claim}
			\begin{proof}[Proof of \cref{claim:ESA_NP_cycle}]
				\renewcommand{\qedsymbol}{$\diamond$}
				Suppose towards a contradiction, that~$x_1$ is not assigned next to~$x_2$, i.e., there is an agent $p \notin \{x_1, x_2, x_3\}$ with $\sigma(p) \in \Neigh{G}(\sigma(x_2))$. 
				Then, it holds $\util[x_1]{\sigma} = 0$ and $\util[p]{\sigma} \leq 3$. 
				Since $\util[p]{\swap{x_1}{p}} \geq 4$, the pair $(p, x_1)$ is exchange-blocking. 
				
				With an analogous argument we can see that $x_3$ also has to be a neighbor of $x_2$. 
				Hence, agents $x_1, x_2$, and~$x_3$ have to be assigned consecutively in every \esArr. 
			\end{proof}
			
			From \cref{claim:ESA_NP_cycle} it follows that~$x_1$ and~$x_3$ are envy-free, but~$x_2$ envies every vertex-agent.
			Since~$\sigma$ is exchange-stable, we can conclude that $\util[p_i]{\sigma} = 6$, i.e., there is a Hamiltonian Path in $\hat{G}$. 
		\end{proof}
	
	From \cref{thm:ESA_delta_clique+path-cycle+matching} we can directly infer the following result: 
	
	\begin{corollary}\label{cor:ESA_NP_path}
		\ESA remains \NPh\ even for a single path as seat graph.
	\end{corollary}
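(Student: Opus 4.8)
The plan is to observe that no new reduction is needed: the path-graph case in the proof of \cref{thm:ESA_delta_clique+path-cycle+matching} already produces a seat graph that \emph{is} a single path, with no isolated vertices at all. So the first step is to recall that construction. There, one reduces from \HamPath restricted to cubic graphs with one specified endpoint $s$ (which remains \NPh), creates two special agents $x_1, x_2$ and one vertex-agent $p_i$ for each $v_i \in V(\hat{G})$, and sets the seat graph to be a single path on exactly $|V(\hat{G})|+2$ vertices. Since $|V(\hat{G})|+2$ equals the number of agents and the path is connected, every vertex of the seat graph is non-isolated; the seat graph is literally a path-graph with zero isolated vertices.

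Next I would simply invoke the correctness argument already established in that proof: $\hat{G}$ has a Hamiltonian path ending at $s$ if and only if the constructed instance admits an \esArr, with the key structural fact (Claim on consecutive assignment of $x_1,x_2$ and $p_s$ at the endpoints) doing all the work. Nothing in that argument relies on the presence of isolated vertices, so it transfers verbatim. Hence \ESA\ is \NPh\ even when the seat graph is a single path, which is exactly the statement of \cref{cor:ESA_NP_path}.

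Since the whole content is an immediate specialization of an already-proved theorem, there is essentially no obstacle; the only point to verify carefully is that the path-graph gadget in \cref{thm:ESA_delta_clique+path-cycle+matching} indeed instantiates the seat graph with $|V(\hat{G})|+2$ vertices and therefore contains no isolated seats — which it does — so that ``path-graph'' can be sharpened to ``single path.'' I would phrase the proof as a one-line deduction: \emph{the reduction establishing the path-graph case of \cref{thm:ESA_delta_clique+path-cycle+matching} uses a seat graph consisting of a single path and no isolated vertices, hence the claimed hardness.}
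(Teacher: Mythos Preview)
Your proposal is correct and matches the paper's own approach: the paper simply states that the corollary follows directly from \cref{thm:ESA_delta_clique+path-cycle+matching}, and your verification that the path-graph reduction there uses exactly $|V(\hat{G})|+2$ agents on a path of $|V(\hat{G})|+2$ vertices (hence no isolated seats) is precisely the observation that justifies this one-line deduction.
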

	}
\fi 

	\section{Conclusion}\label{sec:conclude}
	We obtained a complete complexity picture for \MWA, \MUA, and \EFA,
	and left some open questions for \ESA\ (see \cref{tbl:overview-results}).
	Among these open questions, it would be interesting to know whether the \Wh{ness} result for stars-graphs can be extended to the case with path/cycle-graphs.
	Another research direction would be to look for arrangements that maximize welfare and are also envy-free or exchange stable. 
	In particular, is it \fpt\  wrt.\ $\nonisolated+\Delta^+$ to find an arrangement that maximizes welfare among the exchange-stable arrangements or maximize the minimum utility among the envy-free arrangements?

  \clearpage
  \section*{Acknowledgements}
  This work and Jiehua Chen have been funded by the Vienna Science and Technology Fund (WWTF) [10.47379/ VRG18012].
  
  \bibliographystyle{named}
  \bibliography{references}

  \iflong
  \clearpage
  
  \begin{table}[t!]
    \centering
    \Large \textbf{\appendixtitle}
  \end{table}
  \bigskip

  \appendix
  \appendixtext
  \fi
\end{document}